    \pgfplotsset{compat=newest}
\algnewcommand{\TRUE}{\textbf{true}}
\algnewcommand{\FALSE}{\textbf{false}}
\newcommand{\wit}{\textit{wit}}
\newcommand{\overarrow}{\overrightarrow}
\newcommand{\Exists}[1]{\exists\,#1.\:}
\newcommand{\Forall}[1]{\forall\,#1.\:}
\newcommand{\A}{\ensuremath{\mathcal{A}}}
\newcommand{\B}{\ensuremath{\mathcal{B}}}
\newcommand{\sA}{\ensuremath{\mathbb{A}}}
\newcommand{\T}{\ensuremath{\mathcal{T}}}
\newcommand{\vars}{\textit{vars}}
\newcommand{\N}{\ensuremath{\mathbb{N}_{\omega}}}
\newcommand{\qf}[1]{\ensuremath{QF(#1)}}
\newcommand{\minmod}{\ensuremath{\mathrm{minmod}}}
\newcommand{\Tgeqn}{\ensuremath{\T_{\geq n}}}
\newcommand{\Tinfty}{\ensuremath{\T_{\infty}}}
\newcommand{\Tleqn}{\ensuremath{\T_{\leq n}}}
\newcommand{\NEQ}{\ensuremath{\neq}}
\newcommand{\NNEQ}[1]{\ensuremath{\neq(#1_{1},\ldots,#1_{n})}}
\newcommand{\NNNEQ}[2]{\ensuremath{\neq(#1_{1},\ldots,#1_{#2})}}
\newcommand{\distinct}[1]{\NNEQ{x}}
\renewcommand{\int}[2]{\mathcal{#1}/\mathcal{#2}}
\newcommand{\Tle}{\ensuremath{\T_{\leq}}}
\newcommand{\Tgr}{\ensuremath{\T_{\geq}}}
\newcommand{\Sp}{\ensuremath{\Sigma_{P}}}
\newcommand{\Spn}{\ensuremath{\Sigma_{P}^{\mathbb{N}}}}
\newcommand{\reference}{}
\newcommand{\No}{\ensuremath{\mathbb{N}^{*}}}
\newcommand{\Inf}{\ensuremath{\aleph_{0}}}
\newcommand{\spec}{\ensuremath{\textit{Spec}}}
\newcommand{\dom}[1]{\ensuremath{dom(#1)}}
\newcommand{\quagen}{$\mathfrak{F}$-QG}
\newcommand{\coquagen}{co-$\mathfrak{F}$-QG}
\newcommand{\Th}[1]{\ensuremath{\T_{\langle\scriptscriptstyle #1\rangle}}}
\newcommand{\algofont}[1]{\textnormal{\textsc \selectfont\sffamily  #1}}
\newcommand{\procform}{\algofont{ProcessFormula}}
\newcommand{\procnum}{\algofont{ProcessNumber}}
\newcommand{\Teq}{\ensuremath{\T_{\scriptscriptstyle\textbf{eq}}}}
\newcommand{\class}[1]{\ensuremath{\mathfrak{T}_{\text{#1}}}}
\newcommand{\Testsi}{\ensuremath{\T_{>n}^{P}}}
\newcommand{\Testcfs}{\ensuremath{\T_{=}^{P}}}
\newcommand{\Testsm}{\ensuremath{\T^{n}_{m}}}
\newcommand{\Testg}{\ensuremath{\T_{\leq}^{S}}}
\newcommand{\Teqn}{\ensuremath{\T_{=n}}}
\newcommand{\Sigmasigma}{\ensuremath{Sig^{P}_{\Sigma}}}
\newcommand{\Gal}{\ensuremath{G}}
\newcommand{\unc}{\ensuremath{U}}
\tikzset{
    rotated halfcircle/.style={%
        mark=halfcircle*,
        mark color=black,
        fill=red,
        every mark/.append style={rotate=#1}
    }
}
\newcolumntype{P}[1]{>{\centering\arraybackslash}p{#1}}
\Crefname{theorem}{Theorem}{Theorems}
\Crefname{lemma}{Lemma}{Lemmas}
\Crefname{corollary}{Corollary}{Corollaries}
\Crefname{example}{Example}{Examples}
\Crefname{proposition}{Proposition}{Propositions}
\Crefname{definition}{Definition}{Definitions}
\Crefname{conjecture}{Conjecture}{Conjectures}
\newtheorem{remark}{Remark}[section]
\let\tp\texorpdfstring
\begin{document}

\title{Characterizing Sets of Theories That Can Be Disjointly Combined}

\author{Benjamin Przybocki}
\orcid{0009-0007-5489-1733}
\affiliation{%
  \institution{Carnegie Mellon University}
  \city{Pittsburgh}
  \country{USA}
}
\email{bprzyboc@andrew.cmu.edu}

\author{Guilherme V. Toledo}
\orcid{0000-0002-6539-398X}
\affiliation{%
  \institution{Bar-Ilan University}
  \city{Ramat Gan}
  \country{Israel}
}
\email{guivtoledo@gmail.com}

\author{Yoni Zohar}
\orcid{0000-0002-2972-6695}
\affiliation{%
  \institution{Bar-Ilan University}
  \city{Ramat Gan}
  \country{Israel}
}
\email{yoni206@gmail.com}


\begin{abstract}
    We study properties that allow first-order theories to be disjointly combined, including stable infiniteness, shininess, strong politeness, and gentleness. Specifically, we describe a Galois connection between sets of decidable theories, which picks out the largest set of decidable theories that can be combined with a given set of decidable theories. Using this, we exactly characterize the sets of decidable theories that can be combined with those satisfying well-known theory combination properties. This strengthens previous results and answers in the negative several long-standing open questions about the possibility of improving existing theory combination methods to apply to larger sets of theories. Additionally, the Galois connection gives rise to a complete lattice of theory combination properties, which allows one to generate new theory combination methods by taking meets and joins of elements of this lattice. We provide examples of this process, introducing new combination theorems. We situate both new and old combination methods within this lattice.
\end{abstract}

\begin{CCSXML}
<ccs2012>
   <concept>
       <concept_id>10003752.10003790.10003794</concept_id>
       <concept_desc>Theory of computation~Automated reasoning</concept_desc>
       <concept_significance>500</concept_significance>
       </concept>
   <concept>
       <concept_id>10003752.10003790.10002990</concept_id>
       <concept_desc>Theory of computation~Logic and verification</concept_desc>
       <concept_significance>500</concept_significance>
       </concept>
   <concept>
       <concept_id>10011007.10011074.10011099.10011692</concept_id>
       <concept_desc>Software and its engineering~Formal software verification</concept_desc>
       <concept_significance>300</concept_significance>
       </concept>
 </ccs2012>
\end{CCSXML}

\ccsdesc[500]{Theory of computation~Automated reasoning}
\ccsdesc[500]{Theory of computation~Logic and verification}
\ccsdesc[300]{Software and its engineering~Formal software verification}

\keywords{theory combination, satisfiability modulo theories, Nelson--Oppen, first-order logic, decidability}

\maketitle

\section{Introduction}

Given decidable first-order theories $\T_1$ and $\T_2$ with disjoint signatures, when is their combination $\T_1 \sqcup \T_2$ decidable? This question has spawned an entire subfield of logic called \emph{theory combination} (see \cite{DBLP:conf/birthday/BonacinaFRT19} for a survey). Theory combination goes back to Nelson and Oppen's early work on decision procedures~\cite{NelsonOppen} and has since become a central topic in Satisfiability Modulo Theories (SMT)~\cite{BSST21}, a field with applications ranging from hardware and software verification to automated theorem proving. Since SMT solvers often need to reason about multiple theories at once, they make essential use of theory combination methods.

The applicability of SMT solvers in software verification stems from their support for multiple theories, and in particular, their combinations.
Important examples include
the theories of 
bit-vectors~\cite{DBLP:series/txtcs/KroeningS16}, floating points~\cite{brain2019building}, and
strings~\cite{chen2022solving}, which are used to model primitive
types in programming languages; 
as well as the theories of 
uninterpreted functions~\cite{DBLP:series/txtcs/KroeningS16},
data types~\cite{kovacs2017coming,barrett2007abstract}, and
arrays~\cite{de2009generalized}, which
are often used to model container types.
While reasoning about certain programs only requires
a single theory (e.g., purely numerical functions or generic functions on lists),
software verification often relies
on combinations of the aforementioned theories,
as well as other theories (e.g., for reasoning about lists of machine integers).

Despite the importance of theory combination for SMT solving and the serious attention it has received from researchers, our theoretical understanding of theory combination remains unsatisfactory. Several conditions on $\T_1$ and $\T_2$ that are \emph{sufficient} for their combination to be decidable have been identified. On the other hand, some examples are known of decidable theories whose combination is undecidable, which have been used to show that certain conditions do not suffice for theory combination~\cite{Bonacina,CADE30}. However, prior to the present work, and to the best of our knowledge, no necessary and sufficient condition for a theory to be combinable with some set of theories has been proven. In particular, it has remained open whether existing theory combination methods can be improved to apply to larger sets of theories. In this paper, we provide a general framework for studying theory combination properties, which allows us to prove necessary and sufficient conditions for a theory to be combinable with every theory satisfying certain properties. Our results apply to all of the well-known disjoint theory combination methods: Nelson--Oppen~\cite{NelsonOppen}, shiny~\cite{shiny}, polite~\cite{polite,JB10-LPAR}, and gentle~\cite{gentle}.\footnote{One combination method we do not discuss is Shostak's method~\cite{shostak}, which can be viewed as a refinement of the Nelson--Oppen method that is 
less generally applicable~\cite{DBLP:conf/frocos/BarrettDS02,shostak-light}.} Our techniques also facilitate the discovery of new combination methods, which we demonstrate by introducing several new methods.

To motivate and contextualize our results, we start with a non-exhaustive history of theory combination
and situate in it some of the implications of our results.

\subsection{History of Theory Combination}

Theory combination as a field began in 1979 when Nelson and Oppen~\cite{NelsonOppen} proposed their combination method, which was proved correct by Oppen~\cite{OppenSI} subject to a technical condition he called \emph{stable infiniteness}. Specifically, the Nelson--Oppen theorem states that two disjoint and decidable theories can be combined so long as both are stably infinite.

The Nelson--Oppen method was the state-of-the-art for two decades and continues to be widely used, but the requirement of stable infiniteness has proven to be a hindrance in some applications, as many practical theories are not stably infinite (e.g., the theory of fixed-width bit-vectors). To address this limitation, in 2003, Tinelli and Zarba~\cite{shiny} proved a new combination theorem using a property they called \emph{shininess}. Shininess is a stronger property than stable infiniteness, but shiny theories can be combined with any other decidable theory, including those that are not stably infinite.

The ability to combine a theory with any other decidable theory is very convenient, so researchers naturally began looking for other sets of theories with this property. In 2005, Ranise, Ringeissen, and Zarba~\cite{polite} introduced \emph{polite theory combination}. They claimed that so-called \emph{polite} theories can also be combined with any other decidable theory. But, in 2010, Jovanovi{\'c} and Barrett~\cite{JB10-LPAR} discovered a mistake in the proof of their combination theorem. They were however able to recover the result by imposing a stronger property than politeness, what later came to be called \emph{strong politeness}~\cite{CasalRasga2}. Strongly polite theory combination is very practical and is used in the SMT solver cvc5~\cite{cvc5}.

There were still several loose ends to tie up regarding politeness. First, it was unknown whether strong politeness was actually strictly stronger than politeness; that is, whether there are any theories that are polite but not strongly polite. A negative answer to this question would have been desirable, 
because establishing politeness of a theory is easier than establishing strong politeness.
There were some promising initial results by Casal and Rasga~\cite{CasalRasga} in 2013 and later by Sheng et al.~\cite{polite-alg} in 2020 proving conditions under which politeness implies strong politeness. But then, in 2021, Sheng et al.~\cite{sheng-si-polite} found an example of a polite theory that is not strongly polite. Still, one could have hoped that the polite combination criterion might still be correct, albeit with a different proof. Alas, we recently dashed this hope by finding a pair of decidable theories, one of which is polite, whose combination is undecidable~\cite{CADE30}. This refuted the polite combination criterion, but it left open other possibilities that could provide other possible applications of politeness. Even if polite theories cannot be combined with \emph{every} decidable theory (as originally claimed), is there some set of theories with which they can be combined (besides the stably infinite theories, given that politeness implies stable infiniteness)? Or is there some property intermediate in strength between politeness and strong politeness that suffices for combinability with every decidable theory? We answer both questions negatively.

The other loose end was investigating the relations between strong politeness and other theory combination properties. In 2018, Casal and Rasga~\cite{CasalRasga2} showed, perhaps surprisingly, that shininess and strong politeness coincide for decidable theories. In 2023, Toledo, Zohar, and Barrett~\cite{CADE} proved that theories satisfying a property seemingly weaker than strong politeness can be combined with any other decidable theory; however, they conjectured that their property is in fact equivalent to strong politeness, which was later confirmed in 2024 by Przybocki et al.~\cite{nounicorns}. These equivalences are nontrivial, and it has been somewhat mysterious why every attempt to improve the shiny combination theorem has resulted in a property equivalent to shininess. Casal and Rasga~\cite{CasalRasga2} expressed a desire to find ``a class of theories strictly containing the shiny/strongly polite theories for which there exists an indiscriminate Nelson--Oppen method, in the sense that they can be combined with an arbitrary theory with a decidable quantifier-free satisfiability problem.'' We prove that there is no such class.

Another combination method intended to address the limitations of the Nelson--Oppen method is \emph{gentle theory combination}, introduced by Fontaine~\cite{gentle} in 2009. Gentleness is weaker than shininess and incomparable with stable infiniteness. Fontaine proved that gentle theories can be combined with any theory satisfying a mild technical condition. Recently, \cite{CADE30} proved that gentle theories can be combined with any theory having \emph{computable finite spectra}, which is weaker than the condition Fontaine gave. Most theories of practical interest have computable finite spectra, but it is a theoretically interesting question whether this condition can be weakened even further. We prove that it cannot.

All of the above combination theorems that came after Nelson--Oppen require one theory to satisfy a property strictly stronger than or incomparable with stable infiniteness. Over four decades after Nelson and Oppen's seminal paper, we still have no combination method that is strictly more general than theirs, and not for a lack of interest or effort. In 1996, Tinelli and Harandi~\cite{tinelli-new} conjectured ``that there might exist weaker requirements on the component theories which are sufficient for the [Nelson--Oppen] procedure's correctness''. We finally disprove this conjecture, assuming we interpret it as stating that the requirements for \emph{both} component theories can be strictly weaker than stable infiniteness.

\subsection{Our Contributions}

First, in \Cref{sec-galois}, we introduce a Galois connection between sets of decidable theories,
in which every set of decidable theories corresponds to the largest set of decidable theories with which they can be combined. This Galois connection forms the core of our abstract framework for studying theory combination. It induces a complete lattice of sets of decidable theories, each of which is involved in some combination theorem. It also induces a closure operator on sets of decidable theories, which can be seen as a tool to improve a combination theorem to a version that applies to as many theories as possible.

Second, in \Cref{sec-known}, we show that well-known disjoint theory combination methods cannot be further improved. Specifically, we show that shiny theories are the largest set of decidable theories that can be combined with all decidable theories (\Cref{thm-shiny-tight}), stably infinite theories are the largest set of decidable theories that can be combined with all stably infinite theories (\Cref{thm-f-si}), and gentle theories are the largest set of decidable theories that can be combined with all theories with computable finite spectra (\Cref{thm-f-cfs}) and vice versa (\Cref{thm-f-gentle}). This shows that the following combination theorems cannot be improved to apply to larger sets of theories: the shiny combination theorem~\cite{shiny}, the strongly polite combination theorem~\cite{JB10-LPAR}, the Nelson--Oppen combination theorem~\cite{OppenSI}, and the gentle combination theorem~\cite{gentle} after incorporating an improvement from~\cite{CADE30}. 
Our proof of \Cref{thm-f-si} also shows that the original polite combination theorem~\cite{polite} is unsalvageable: the largest set of theories with which polite theories can be combined is the set of stably infinite theories. This concludes a line of research about the possibility of repairing this claimed theorem~\cite{CasalRasga,polite-alg,sheng-si-polite,CADE30}. 
As a byproduct of our proofs, we also obtain results about ``maximally difficult'' theories from the standpoint of theory combination (e.g., \Cref{thm-shiny-uniform}), which may be of independent interest.

One of the virtues of our abstract framework is that it facilitates the discovery of new combination theorems, which we demonstrate in \Cref{sec-new},
where we also prove their sharpness.
Specifically, \Cref{sec-sm-cs} sharpens a combination theorem from \cite{CADE30} using our technique. 
In \Cref{sec-cs}, we show that by taking meets and joins of elements of the lattice induced by the Galois connection, we can use existing combination theorems to discover a new combination theorem (\Cref{thm-cs}), which is sharp. This method turns out to be an improvement of a combination theorem from \cite{Bonacina}. In \Cref{sec-n-decidable}, we show yet another way to use our abstract framework to discover new combination theorems: start with a natural set of decidable theories and take its closure with respect to the Galois connection. We use this to discover a countably infinite family of combination theorems, which are sharp by construction and substantially improve a family of combination theorems from \cite{manna-zarba-survey}. In \Cref{sec-quasi-gentle}, we go further by exhibiting an uncountably infinite family of combination theorems, each of which is sharp. These combination theorems can be seen as natural variants of gentleness, and their cardinality demonstrates how rich the space of combination theorems is.

Finally, in \Cref{sec-lattice}, we situate all of the above combination theorems within the lattice induced by the Galois connection, allowing one to visualize the relationships between these theorems. We also make a few remarks about the properties of this lattice.

All of our results deal with one-sorted theories.
We leave their generalization to many-sorted theories for future work.

To keep the paper concise and easy to read, numerous technical proofs are deferred to the appendix.

\subsection{Related Work}

To the best of our knowledge, only two papers have studied \emph{negative} results in theory combination, that is, results demonstrating that theories cannot be combined in certain cases. Since negative results are necessary for characterizing when theories can be combined, these papers have a close affinity to our work.

In 2006, Bonacina et al.~\cite{Bonacina} gave an example of a pair of decidable theories, only one of which is stably infinite, whose combination is undecidable. This is notable because it shows that the Nelson--Oppen theorem cannot be weakened to only assume that one of the component theories is stably infinite.

Recently, in 2025, we extended the previous work to prove that other combination methods cannot be extended in certain ways~\cite{CADE30}. That is, for each well-known combination method (namely, Nelson--Oppen, shiny, strongly polite, and gentle), we constructed a pair of decidable theories meeting all but one of the requirements for the method such that their combination is undecidable. This shows that none of the requirements for these theory combination methods can be omitted, but it does not tell us whether the requirements can be weakened. Indeed, we showed in the same paper that one of the requirements for gentle theory combination can be weakened.

The present paper can be seen as a continuation of this line of work given that we make essential use of (variants of) the theories constructed in those papers. The main innovation of the present paper is to realize that these theories can be used as what we call \emph{test theories} to deduce properties of theories with which they can be combined. The idea is that, given an arbitrary decidable theory $\T_1$, if we assume it can be combined with a test theory $\T_2$, then that forces $\T_1$ to satisfy some desired property. Probing theories with enough carefully constructed test theories allows us to prove upper bounds on the set of decidable theories that can be combined with a given set of theories. If we have a matching lower bound coming from a theory combination method, this gives a sharp characterization.

\section{Preliminaries}
We go over the relevant concepts from first-order logic and theory combination in \Cref{sec:FOL,sec:spec,sec:thcomb}.
A special result in theory combination that we use throughout the paper, which we call
{\em Fontaine's lemma}, is described in \Cref{sec:fontainelemma}.
Finally, since we arrange combination methods in a lattice
induced by a Galois connection, we review
basic concepts of lattice theory in \Cref{sec:lattice}.

In this paper, $\mathbb{N}$ is the set of natural numbers (including 0), $\N = \mathbb{N}\cup\{\aleph_{0}\}$, and $\No = \mathbb{N}\setminus\{0\}$. Let $|X|$ denote the cardinality of the set $X$, let $\aleph_{0}=|\mathbb{N}|$, and say that $S\subseteq \No$ is \emph{cofinite} if $\No\setminus S$ is finite. A set $S\subseteq\N$ is \emph{decidable} when there exists an algorithm that takes an $n\in\N$ and returns whether or not $n\in S$. Given $m,n\in \N$, $[m,n]$ denotes the set of elements $m\leq i\leq n$, which may include $\aleph_{0}$; $[n]$ is an abbreviation for $[1,n]$. $\vars(\varphi)$ is the set of free variables in $\varphi$.
Given a finite set of variables $V$ and an equivalence relation $E$ on it (if $x$ and $y$ are related according to $E$ we write $xEy$, and otherwise $x\cancel{E}y$), the \emph{arrangement} $\delta_{V}$ induced by $E$ on $V$ is a conjunction of the literals $x=y$, if $xEy$, or $\neg(x=y)$, if $x\cancel{E}y$.

\subsection{First-Order Logic}
\label{sec:FOL}
A \emph{signature} $\Sigma$ is a collection of symbols for functions and predicates, each coupled with an arity, having at least a binary predicate $=$ to be called equality.
A signature is \emph{empty} when it has no functions and no predicates other than equality;
two signatures are called \emph{disjoint} if they only share equality.
All signatures in this paper are assumed to be countable, i.e. have at most $\aleph_{0}$ symbols.
We define terms, atomic formulas, literals, cubes (conjunctions of literals), formulas, and sentences (formulas with no free variables) in the standard way.
$QF(\Sigma)$ is the set of quantifier-free formulas over the signature $\Sigma$.

A $\Sigma$-\emph{structure} $\sA$ is a nonempty set $\dom{\sA}$ (called the \emph{domain} of the structure) equipped with functions $f^{\sA}:\dom{\sA}^{m}\rightarrow\dom{\sA}$ and predicates $P^{\sA}\subseteq \dom{\sA}^{n}$, for all $f$ and $P$ in $\Sigma$ (where $m$ is the \emph{arity} of $f$ and $n$ that of $P$).
A $\Sigma$-\emph{interpretation} $\A$ is a $\Sigma$-structure where we have one value $x^{\A}\in\dom{\A}$ for every variable $x$; we sometimes call interpretations \emph{models}.
The value assigned to a term $\tau$ in $\A$ is
defined as usual, and is denoted by $\tau^{\A}$;
if $\Gamma$ is a set of terms, $\Gamma^{\A}=\{\tau^{\A} : \tau\in\Gamma\}$.
When $\A$ satisfies a formula $\varphi$ we may write $\A\vDash\varphi$.
Some formulas dealing with cardinalities are found in \Cref{card-formulas}: $\A$ satisfies $\psi_{\geq n}$, $\psi_{\leq n}$, or $\psi_{=n}$ if $\dom{\A}$ has at least, at most, or exactly $n$ elements.

\begin{figure}[t]
\begin{mdframed}
\begin{equation*}
\begin{aligned}
    \neq(x_{1},\ldots,x_{n}) \coloneqq \bigwedge_{i=1}^{n-1}\bigwedge_{j=i+1}^{n}\neg(x_{i}=x_{j})\\
    \\\vspace{-4mm}
    \psi_{\geq n} \coloneqq \Exists{{x_1,\ldots, x_n}}\NNEQ{x}
\end{aligned}
\quad
\begin{aligned}
\psi_{\leq n} \coloneqq \Exists{x_1,\ldots,x_n}\Forall{y}\bigvee_{i=1}^{n}y=x_{i}\\
\\\vspace{-4mm}
\psi_{=n} \coloneqq \psi_{\geq n}\wedge\psi_{\leq n}
\end{aligned}
\end{equation*}
\end{mdframed}
\caption{Cardinality formulas}
\label{card-formulas}
\end{figure}

A \emph{$\Sigma$-theory} $\T$ is a set of sentences over $\Sigma$, which are to be thought of as axioms for the theory. We refer to interpretations satisfying $\T$ as \emph{$\T$-interpretations}. A formula is said to be $\T$-\emph{satisfiable} when there is a $\T$-interpretation where it is satisfied, and a set of formulas is \emph{$\T$-satisfiable} if there is a $\T$-interpretation that satisfies all formulas in the set simultaneously;
two formulas are $\T$-\emph{equivalent} when a $\T$-interpretation satisfies one if and only if it satisfies the other.

$\T$ is \emph{decidable} when there is an algorithm that takes a quantifier-free formula and returns whether it is $\T$-satisfiable or not. 

Given two signatures $\Sigma_1$ and $\Sigma_2$, let $\Sigma_1 \sqcup \Sigma_2$ be their disjoint union; that is, $\Sigma_1 \sqcup \Sigma_2$ is the signature containing all symbols of $\Sigma_1$ and $\Sigma_2$, renaming the elements of $\Sigma_1$ and $\Sigma_2$ if necessary to make the signatures disjoint. Then, for decidable theories $\T_1$ and $\T_2$ defined over signatures $\Sigma_1$ and $\Sigma_2$ respectively, let $\T_1 \sqcup \T_2$ be the $\Sigma_1 \sqcup \Sigma_2$-theory axiomatized by the union of $\T_1$ and $\T_2$ after renaming elements of the signatures.
We say that $\T_1$ and $\T_2$ are \emph{combinable} if $\T_1 \sqcup \T_2$ is decidable.

\subsection{Spectra, L{\"{o}}wenheim--Skolem, and Compactness}
\label{sec:spec}
The \emph{spectrum} of a formula $\varphi$ in a theory $\T$, denoted $\spec_\T(\varphi)\subseteq \N$, is the set of $n\in\N$ such that there is a $\T$-interpretation $\A$ with $|\dom{\A}|=n$ and $\A\vDash\varphi$. The restriction to $\N$ as opposed to the class of all cardinals is justified by the L{\"{o}}wenheim--Skolem theorem:
\begin{theorem}[{\cite[Theorems~2.3.4 and 2.3.7]{marker2002}}] \label{LowenheimSkolem}
    Let $\Delta$ be a set of formulas over a countable signature, and let $\kappa \ge \aleph_0$. Then, $\Delta$ is satisfied by an interpretation of size $\kappa$ if and only if it is satisfied by an interpretation of size $\aleph_0$.
\end{theorem}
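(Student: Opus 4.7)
The plan is to establish the biconditional by proving both directions independently, both of which follow from classical Löwenheim--Skolem machinery specialized to countable signatures. The forward direction reduces to the downward Löwenheim--Skolem theorem (with the case $\kappa = \aleph_0$ being trivial), and the backward direction uses compactness to inflate to cardinality at least $\kappa$, followed by a second appeal to the downward theorem to land at cardinality exactly $\kappa$.

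For the forward direction, suppose $\A \vDash \Delta$ with $|\dom{\A}| = \kappa$. I would build a countable elementary substructure $\A_0 \preceq \A$ by the usual closure argument. Start with a countable seed $X_0 \subseteq \dom{\A}$ containing $x^{\A}$ for every variable $x$ appearing in some formula of $\Delta$; this is countable because formulas over a countable signature use only countably many variables. At each stage, extend $X_n$ to $X_{n+1}$ by adjoining, for every $\Sigma$-formula $\varphi(y,\bar{x})$ and every tuple $\bar{a}$ from $X_n$ with $\A \vDash \Exists{y}\varphi(y,\bar{a})$, some fixed witness $b \in \dom{\A}$ satisfying $\A \vDash \varphi(b,\bar{a})$. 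Because $\Sigma$ is countable, only countably many formula-tuple pairs are involved at each stage, so each $X_n$ stays countable, and so does $X = \bigcup_{n} X_n$. The Tarski--Vaught test then certifies that $X$ is the universe of an elementary substructure $\A_0$ of $\A$; by construction $\A_0 \vDash \Delta$ and $|\dom{\A_0}| = \aleph_0$.

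For the backward direction, start with $\B \vDash \Delta$ of size $\aleph_0$. Expand the signature by fresh constant symbols $\{c_\alpha : \alpha < \kappa\}$ and let $\Delta' = \Delta \cup \{\neg(c_\alpha = c_\beta) : \alpha < \beta < \kappa\}$. Any finite subset of $\Delta'$ mentions only finitely many new constants and is satisfiable in $\B$ by interpreting these constants as distinct elements (possible because $|\dom{\B}| = \aleph_0$). Compactness then yields a model $\C$ of $\Delta'$, which has $|\dom{\C}| \geq \kappa$. A refined downward Löwenheim--Skolem argument produces an elementary substructure of $\C$ of cardinality exactly $\kappa$ containing all the new constants; forgetting the added constants gives the desired model of $\Delta$ of size $\kappa$.

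The main obstacle is the bookkeeping in the closure construction for the forward direction: one must coherently enumerate formula-tuple pairs across stages, choose witnesses, and verify the Tarski--Vaught criterion on the limit. A secondary subtlety is attaining the cardinality \emph{exactly} $\kappa$ in the backward direction; compactness alone only gives ``at least $\kappa$,'' which is why the second appeal to downward Löwenheim--Skolem is necessary.
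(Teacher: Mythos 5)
The paper does not prove this statement at all; it is cited directly from Marker's textbook (Theorems~2.3.4 and 2.3.7), so there is no in-paper argument to compare against. Your proof sketch is the standard textbook route: downward L\"{o}wenheim--Skolem via the Tarski--Vaught closure construction for the forward direction, and upward via compactness (fresh constants with pairwise disequalities) followed by a downward step to hit exactly $\kappa$. Both directions are sound as outlined. One small point worth making explicit in the forward direction: the closure construction by itself only guarantees $|\dom{\A_0}| \le \aleph_0$, and you should add that $\A_0$ is infinite because $\A$ is infinite and each sentence $\psi_{\ge n}$ transfers down by elementarity, so indeed $|\dom{\A_0}| = \aleph_0$. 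A second, very minor bookkeeping point: since $\Delta$ is a set of \emph{formulas} rather than sentences, you need to carry the variable assignment along --- you do include $x^{\A}$ for every $x \in \vars(\Delta)$ in $X_0$ and then set $x^{\A_0} = x^{\A}$ for those $x$ (and arbitrarily inside $\dom{\A_0}$ for the rest), which is exactly what makes $\A_0 \vDash \Delta$ follow from elementarity. With those two clarifications the argument is complete and matches what one finds in Marker.
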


Another standard fact we will use is the compactness theorem:
\begin{theorem}[{\cite[Theorem~2.1.4]{marker2002}}] \label{compactness}
    Let $\Delta$ be a set of formulas over a countable signature. Then $\Delta$ is satisfiable if and only if every finite subset of $\Delta$ is satisfiable.
\end{theorem}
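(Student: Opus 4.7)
The plan is to prove the nontrivial direction: from finite satisfiability of $\Delta$, exhibit a single interpretation satisfying all of $\Delta$. I would follow the classical Henkin term-model construction, which is particularly clean for countable signatures since only an $\omega$-enumeration is needed. First I would expand the signature by adjoining a countably infinite set of fresh constant symbols $\{c_n : n \in \mathbb{N}\}$ to serve as Henkin witnesses; the expanded signature is still countable, and finite satisfiability of $\Delta$ is preserved since every formula originally in $\Delta$ uses none of the new constants, so any finite subset may be satisfied in its original model after interpreting the new constants arbitrarily.

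Next, fix an enumeration $\varphi_0, \varphi_1, \ldots$ of all sentences of the expanded signature and build an ascending chain $\Delta = \Delta_0 \subseteq \Delta_1 \subseteq \cdots$ of finitely satisfiable sets such that (i) for each $n$, exactly one of $\varphi_n, \neg\varphi_n$ lies in $\Delta_{n+1}$, and (ii) whenever $\varphi_n$ has the form $\exists x.\psi(x)$ and is placed into $\Delta_{n+1}$, a Henkin instance $\psi(c)$ is also added, for a constant $c$ that has not yet appeared in the construction. The union $\Delta^\ast = \bigcup_n \Delta_n$ is then a maximal, finitely satisfiable, Henkin-complete extension of $\Delta$.

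Finally, I would construct the canonical model $\A$ whose domain is the set of equivalence classes of closed terms modulo $t\sim s \iff (t=s)\in\Delta^\ast$, with function and predicate symbols interpreted in the obvious way induced by $\Delta^\ast$. A routine induction on formula complexity (using maximality for $\neg$ and $\vee$, and the Henkin-witness clause for $\exists$) establishes the truth lemma: a sentence $\varphi$ holds in $\A$ iff $\varphi\in\Delta^\ast$; since $\Delta\subseteq\Delta^\ast$, this yields $\A\vDash\Delta$. The main technical obstacle is preservation of finite satisfiability at the existential-witnessing step: if $\Delta_n\cup\{\exists x.\psi(x)\}$ is finitely satisfiable and $c$ is a brand-new constant, one must show $\Delta_n\cup\{\exists x.\psi(x),\psi(c)\}$ is also finitely satisfiable. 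This is handled by taking any finite subset $F$, dropping $\psi(c)$, using finite satisfiability to find a model in which $\exists x.\psi(x)$ is witnessed by some element $a$, and reinterpreting $c$ as $a$ --- legitimate precisely because $c$ is fresh and therefore unconstrained by the rest of $F$. Everything else is bookkeeping.
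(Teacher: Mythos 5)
Your proof is correct and is essentially the standard Henkin term-model construction, which is precisely the argument given in the source the paper cites (Marker, Theorem 2.1.4); the paper itself offers no independent proof. The only minor loose end is that the paper states compactness for \emph{formulas} with free variables while your construction is phrased for sentences, but this is resolved by the routine step of replacing each free variable occurring in $\Delta$ with a fresh constant symbol before running the Henkin argument.
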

The compactness theorem applied to the set $\{\psi_{\ge n} : n \in \No\}$ has the following consequence, which we will use repeatedly:
\begin{corollary}
    If $n \in \spec_\T(\varphi)$ for infinitely many $n \in \No$, then $\aleph_0 \in \spec_\T(\varphi)$.
\end{corollary}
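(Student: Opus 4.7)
The plan is a direct application of the compactness theorem (\Cref{compactness}) together with the L\"owenheim--Skolem theorem (\Cref{LowenheimSkolem}), as hinted at by the surrounding text. Consider the set of sentences
\[
\Delta \;=\; \T \,\cup\, \{\varphi\} \,\cup\, \{\psi_{\ge n} : n \in \No\},
\]
over the (countable) signature of $\T$, augmented by the finitely many free variables of $\varphi$ treated as fresh constants so that $\varphi$ becomes a sentence. The goal is to show $\Delta$ has a model of cardinality $\aleph_0$; such a model witnesses $\aleph_0 \in \spec_\T(\varphi)$.

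First I would verify that every finite subset $\Delta_0 \subseteq \Delta$ is satisfiable. Any such $\Delta_0$ mentions only finitely many of the cardinality formulas $\psi_{\ge n}$, so there is some largest index $N$ with $\psi_{\ge N} \in \Delta_0$ (taking $N=1$ if no cardinality formula appears). By hypothesis, $\spec_\T(\varphi)$ contains infinitely many elements of $\No$, so I can choose some $n \in \spec_\T(\varphi) \cap \No$ with $n \ge N$. By definition of the spectrum, there is a $\T$-interpretation $\A$ of size $n$ with $\A \vDash \varphi$, and since $n \ge N \ge k$ for every $\psi_{\ge k} \in \Delta_0$, this $\A$ also satisfies every cardinality formula in $\Delta_0$. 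Hence $\A \vDash \Delta_0$.

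Next, by \Cref{compactness}, $\Delta$ itself has a model $\A^*$. Since $\A^* \vDash \psi_{\ge n}$ for every $n \in \No$, the domain $\dom{\A^*}$ is infinite. Applying \Cref{LowenheimSkolem} to $\Delta$ with $\kappa = \aleph_0$ yields a model of $\Delta$ of cardinality $\aleph_0$. This model is a $\T$-interpretation satisfying $\varphi$, so $\aleph_0 \in \spec_\T(\varphi)$, as desired.

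There is no real obstacle to this argument; the only subtlety is the bookkeeping step of treating the free variables of $\varphi$ as constants so that compactness can be applied to a set of sentences, and ensuring that the L\"owenheim--Skolem application is legitimate (which it is, since the theory $\T$ is defined over a countable signature and only finitely many extra constants are added).
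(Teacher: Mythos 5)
Your proof is correct and follows essentially the same route the paper has in mind — the corollary is presented as a direct consequence of applying compactness (\Cref{compactness}) to the family $\{\psi_{\ge n} : n \in \No\}$ together with $\T \cup \{\varphi\}$, then passing to a countable model via \Cref{LowenheimSkolem}, which is exactly your argument. The only remark is that the paper's compactness and L\"owenheim--Skolem statements are already phrased for sets of \emph{formulas} (with free-variable assignments built into the notion of interpretation), so the bookkeeping step of replacing free variables with fresh constants, while harmless, is not actually needed under the paper's conventions.
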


\subsection{Theory Combination Properties}
\label{sec:thcomb}
We now survey various properties of first-order theories,
and provide useful notations for the ones we focus on.
The notations are summarized in \Cref{tab:notation-prop-th}.

\begin{table}[t]
\caption{Notations for sets of theories. Each set only contains \emph{decidable} theories with the stated properties.}
\begin{tabular}{|c|c|}\hline
Symbol & Description \\\hline
$\class{}$ & --- \\\hline
$\class{ID}$ &  Infinitely decidable\\\hline
$\class{CFS}$ &  Computable finite spectra\\\hline
$\class{CS}$ &  Computable spectra\\\hline
$\class{gentle}$ &  Gentle\\\hline
$\class{SI}$ &  Stably infinite \\\hline
$\class{SM+CS}$ & Smooth + computable spectra \\\hline
$\class{shiny}$ &  Shiny\\\hline
\end{tabular}
\label{tab:notation-prop-th}
\end{table}

The set of decidable theories (over countable signatures) is denoted $\class{}$.\footnote{Technically, this definition of $\class{}$ makes it a proper class rather than a set, since we have imposed no restriction on which symbols are allowed to be used in a signature. 
We can avoid this problem by requiring that every signature only contain symbols drawn from some fixed set of allowed symbols.}

$\T$ is \emph{infinitely decidable} when there is an algorithm that takes a quantifier-free formula and returns whether $\aleph_{0}\in \spec_\T(\varphi)$ \cite{CADE30};
the set of theories that are both decidable and infinitely decidable
is denoted $\class{ID}$.
$\T$ has \emph{computable finite spectra} when there is an algorithm that takes a quantifier-free formula and an $n\in\No$ and returns whether $n\in\spec_\T(\varphi)$ \cite{CADE30};
the set of decidable theories with computable finite spectra is denoted $\class{CFS}$.
A theory has \emph{computable spectra} (a notion first defined in the present paper) when it is infinitely decidable and has computable finite spectra;
the set of decidable theories with computable spectra is denoted $\class{CS}$, and of course $\class{CS}=\class{CFS}\cap\class{ID}$.

$\T$ is \emph{gentle} when there is an algorithm that takes a quantifier-free formula $\varphi$ and returns a pair $(S,b)$ where: 
$S\subset \No$ is finite, and $b$ is a Boolean;
if $b$ is true, $\spec_\T(\varphi)=S$;
and if $b$ is false $\spec_\T(\varphi)=\N \setminus S$ \cite{gentle}.
The set of decidable gentle theories is denoted $\class{gentle}$.\footnote{In fact, every gentle theory is decidable, since we can compute whether $\spec_\T(\varphi) = \emptyset$.}
The \emph{minimal model function} of a decidable theory $\T$ is the function $\minmod_{\T}:QF(\Sigma)\rightarrow \N$ such that $\minmod_{\T}(\varphi)=\min\spec_\T(\varphi)$ whenever $\varphi$ is $\T$-satisfiable.\footnote{When $\T$ has the finite model property, this definition becomes the same as the original one found in \cite{shiny}.}

A theory $\T$ is said to be \emph{stably infinite} when, for every quantifier-free $\T$-satisfiable formula $\varphi$, there is a $\T$-interpretation $\A$ with $|\dom{\A}|\geq\aleph_{0}$ and $\A\vDash\varphi$ \cite{NelsonOppen};
equivalently (\Cref{LowenheimSkolem}): $\T$ is stably infinite when $\aleph_{0}\in\spec_\T(\varphi)$, for all $\T$-satisfiable quantifier-free formulas $\varphi$.
The set of decidable stably infinite theories is denoted $\class{SI}$.
A theory is said to have the \emph{finite model property} when, for every quantifier-free $\T$-satisfiable formula $\varphi$, there is a $\T$-interpretation $\A$ with $|\dom{\A}|<\aleph_{0}$ and $\A\vDash\varphi$;
to put it differently, if $\varphi$ is $\T$-satisfiable then $\spec_\T(\varphi)\cap\No\neq\emptyset$.
A theory $\T$ is \emph{smooth} when for every quantifier-free formula $\varphi$, $\T$-interpretation $\A$ that satisfies $\varphi$, and cardinal $m\geq|\dom{\A}|$, there exists a $\T$-interpretation $\B$ with $|\dom{\B}|=m$ and $\B\vDash\varphi$;
equivalently (using \Cref{LowenheimSkolem}): 
if $n\in\spec_\T(\varphi)$ and $m \geq n$, then $m\in\spec_\T(\varphi)$.

The set of decidable smooth theories with computable spectra is denoted $\class{SM+CS}$.\footnote{Notice that a decidable smooth theory is immediately infinitely decidable, so $\class{SM+CS}$ can be equivalently defined as the set of decidable smooth theories with computable \emph{finite} spectra. Also, every smooth theory $\T$ with computable spectra is decidable, since $\varphi$ is $\T$-satisfiable if and only if $\aleph_0 \in \spec_\T(\varphi)$.}
A theory is \emph{shiny} when it is smooth, has the finite model property, and has a computable minimal model function \cite{shiny};
the set of decidable shiny theories is denoted $\class{shiny}$.\footnote{It turns out that every shiny theory is decidable, a fact that was proven in~\cite{FroCoS2025}.}

\subsection{Fontaine's Lemma}
\label{sec:fontainelemma}
The possibility of disjointly combining two theories is captured by a result that we call \emph{Fontaine's lemma}, since a version of it was proven by Fontaine in \cite[Corollary~1]{gentle}.\footnote{A similar result was proven earlier by Tinelli and Harandi~\cite[Proposition~3.1]{tinelli-new}, and the essential idea goes back to Robinson~\cite{robinson-consistency}.}
\begin{restatable}[Fontaine's lemma]{lemma}{lemfontaine} \label{lem-fontaine}
    Let $\T_1$ and $\T_2$ be theories over the signatures $\Sigma_1$ and $\Sigma_2$, respectively. Then, the disjoint combination $\T_1 \sqcup \T_2$ is decidable if and only if the following problem is decidable: given conjunctions of literals $\varphi_1$ and $\varphi_2$ over $\Sigma_1$ and $\Sigma_2$ respectively, determine whether $\spec_{\T_1}(\varphi_1) \cap \spec_{\T_2}(\varphi_2) = \emptyset$.
\end{restatable}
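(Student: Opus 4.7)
The plan is to prove each direction via a reduction built around the following joint-consistency observation (a disjoint variant of Robinson-style joint consistency): for any quantifier-free formulas $\varphi_1 \in QF(\Sigma_1)$ and $\varphi_2 \in QF(\Sigma_2)$ with disjoint sets of variables, $\varphi_1 \wedge \varphi_2$ is $\T_1 \sqcup \T_2$-satisfiable if and only if $\spec_{\T_1}(\varphi_1) \cap \spec_{\T_2}(\varphi_2) \neq \emptyset$. One direction is immediate: a $\T_1 \sqcup \T_2$-interpretation $\A$ satisfying $\varphi_1 \wedge \varphi_2$ restricts to $\T_i$-interpretations of $\varphi_i$ of cardinality $|\dom{\A}|$, which by \Cref{LowenheimSkolem} may be assumed to lie in $\N$. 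For the converse, given $\T_i$-interpretations $\A_i \vDash \varphi_i$ of common size $n \in \N$, I would take any bijection $f : \dom{\A_2} \to \dom{\A_1}$, transport the $\Sigma_2$-structure of $\A_2$ onto the carrier of $\A_1$, and reassign the $\varphi_2$-variables along $f$---a step that is legitimate precisely because the two variable sets are disjoint.

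For the forward direction, given cubes $\varphi_1, \varphi_2$, I would rename their variables apart (harmless, since spectra depend only on the quantifier-free structure of a formula) and apply the decision procedure for $\T_1 \sqcup \T_2$ to $\varphi_1 \wedge \varphi_2$; the observation then answers the spectrum intersection question.

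For the backward direction, I would use a Nelson--Oppen-style purification and arrangement reduction. Given a quantifier-free $\varphi$ over $\Sigma_1 \sqcup \Sigma_2$, the first step is to convert it to DNF and handle each cube separately. Each cube is purified by abstracting alien subterms with fresh variables, producing an equisatisfiable $\varphi_1 \wedge \varphi_2$ with $\varphi_i$ a cube over $\Sigma_i$ and a finite shared set $V$ of variables. The crux is then that $\varphi_1 \wedge \varphi_2$ is $\T_1 \sqcup \T_2$-satisfiable if and only if there exists an arrangement $\delta_V$ of $V$ with $\spec_{\T_1}(\varphi_1 \wedge \delta_V) \cap \spec_{\T_2}(\varphi_2 \wedge \delta_V) \neq \emptyset$. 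Since arrangements of a finite set are finite in number, enumerating them and invoking the spectrum-intersection oracle on each pair yields a decision procedure.

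The main obstacle will be the careful verification of the bijection-and-transport construction in the arrangement case: I must first fix $f$ on the image $V^{\A_2}$ so that it matches $V^{\A_1}$ (possible because both interpretations satisfy the same $\delta_V$, so the induced equivalence classes of $V$ coincide and have the same number of distinct representatives), and only then extend $f$ to a bijection on the full domains. The combined $\A$ built from $\A_1$ together with the transported $\Sigma_2$-structure of $\A_2$ satisfies $\varphi_1$ directly and $\varphi_2$ by preservation of quantifier-free formulas under isomorphism on the $\Sigma_2$-reduct, and it satisfies $\T_1 \sqcup \T_2$ because each component theory's axioms depend only on one reduct. The remaining steps---DNF conversion, purification, and enumeration of arrangements---are routine.
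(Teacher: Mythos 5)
Your proof is correct and tracks the paper's approach closely: the paper proves only the ``only if'' direction (deferring to Fontaine for the other), routing through the theory-level joint-consistency result of Tinelli--Harandi (\Cref{lem-tinelli}) and lifting it to cubes by absorbing the existential closure of each $\varphi_i$ into $\T_i$, whereas your bijection-and-transport argument establishes the same formula-level claim directly --- same idea, differently packaged. Your backward direction (DNF, purify, guess an arrangement, query the spectrum oracle) reconstructs Fontaine's original Nelson--Oppen-style argument, which the paper does not re-prove but does encode as \Cref{alg-fontaine}; the care you flag about first fixing the bijection on $V^{\A_2}$ so that it agrees with $V^{\A_1}$ is exactly the right detail to attend to.
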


In \cite[Corollary~1]{gentle}, Fontaine only stated and proved the ``if'' direction of this lemma; we state and prove the ``only if'' direction for completeness.

In one sense, Fontaine's lemma characterizes when theory combination is possible. But the lemma is seldom applied directly to a pair of theories to prove that their combination is decidable. This is because devising an algorithm to check whether $\spec_{\T_1}(\varphi_1) \cap \spec_{\T_2}(\varphi_2) = \emptyset$ is a difficult problem in general and requires a separate, ad hoc analysis for each pair of theories we wish to combine. Thus, the field of theory combination is chiefly concerned with which \emph{sets} of theories can be combined (equivalently, which properties of theories suffice for combinability), where Fontaine's lemma can be used as a tool to that end.

It is interesting to note that Fontaine's lemma does not require $\T_1$ and $\T_2$ to be decidable. And indeed, there are pairs of undecidable theories whose combination is decidable (e.g., if their combination is inconsistent). Nevertheless, we restrict our attention to decidable theories in this paper, as is typical in theory combination.

\subsection{Lattice Theory}
\label{sec:lattice}

A partially ordered set, or poset, $(A,\leq)$ is said to be a \emph{lattice} when the supremum (denoted $a\vee b$,
also called the {\em join} of $a$ and $b$) and infimum (denoted $a\wedge b$ and called the {\em meet}) of any two of its elements $a$ and $b$ exists; a lattice is \emph{complete} if the supremum and infimum of every subset $S\subseteq A$ exists.
An \emph{isomorphism} between posets $(A,\leq_{A})$ and $(B,\leq_{B})$ is a bijective function $f:A\rightarrow B$ such that $a\leq_{A}b$ if and only if $f(a)\leq_{B}f(b)$.
If $(A,\leq)$ is a poset, we define its \emph{dual} $(A,\leq^{\prime})$, over the same set $A$, by making $a\leq^{\prime}b$ if and only if $b\leq a$;
if $(A,\leq)$ is also a lattice then the supremum in its dual is its infimum and vice versa.

Given posets $(A,\le_A)$ and $(B,\le_B)$, a function $f:A\rightarrow B$ is said to be \emph{antitone} if $a\leq b$ implies $f(b)\leq f(a)$ for all $a,b\in A$.
If $f$ is a function from $(A,\leq)$ into itself, it is said to be an \emph{involution} if $f\circ f$ is the identity on $A$.
Given antitone functions $f : A \rightarrow B$ and $g : B \rightarrow A$, we say that $(f,g)$ is an \emph{antitone Galois connection} if
\[
    b \le_B f(a) \Longleftrightarrow a \le_A g(b)
\]
for all $a \in A$ and $b \in B$. In the sequel, we will use basic properties of Galois connections, which can be found, for example, in \cite{galois} and \cite[Chapter~V]{lattice-thy}.

A \emph{closure operator} on a poset $(A,\leq)$ is a function $Cl: A\rightarrow A$ such that for all $a \in A$:
$(i)$ $a\leq Cl(a)$;
$(ii)$ $a\leq b$ implies $Cl(a)\leq Cl(b)$;
and $(iii)$ $Cl(Cl(a))=Cl(a)$.
Given a closure operator, an element $a$ is said to be \emph{closed} when $Cl(a)=a$.

A \emph{chain} in a poset $(A,\leq)$ is a subset $X\subseteq A$ such that, for all $a,b\in X$, either $a\leq b$ or $b\leq a$;
in an \emph{antichain} $Y\subseteq A$, for any two $a,b\in Y$, neither $a\leq b$ nor $b\leq a$ unless $a=b$.
The \emph{height} of a poset is the supremum of the cardinalities of its chains;
the \emph{width} is the supremum of the cardinalities of its antichains.

\section{The Galois Connection} \label{sec-galois}
In this section we introduce a key ingredient for establishing the results of this paper.
Namely, we define a Galois connection between
sets of theories.
This Galois connection not only allows us to concisely and clearly describe sharpness of combination theorems
in terms of closures in \Cref{sec-known},
but will also give rise to 
the discovery of 
new combination theorems
in \Cref{sec-new}.

Antitone Galois connections formalize an intuitive concept. They are relationships between objects whose sizes vary inversely. The usual example from logic is that of formulas and models:
if you increase a set of formulas, the collection of models that satisfy them shrinks;
and if you increase a collection of models, the set of formulas satisfied by all of them shrinks.
We observe that theory combination also has an underlying Galois connection, as requiring more properties of the first component theory allows one more freedom when choosing the second.
For example, stably infinite theories are combinable with other stably infinite theories (by Nelson--Oppen);
but if we want one theory to be less than stably infinite, say only decidable, then we need more from the other, namely being shiny. The Galois connection we describe formalizes this trade-off.
 
Formally, let $\mathbb{P}(\class{})$ be the power set of $\class{}$, which forms a complete lattice ordered by the $\subseteq$ relation. We define a function $\Gal : \mathbb{P}(\class{}) \to \mathbb{P}(\class{})$ as follows. Given $X \subseteq \class{}$, let $\Gal(X)$ be the set of decidable theories that are combinable with every theory in $X$; that is, 
\[
    \Gal(X)=\{\T\in\class{} : \Forall{\T^{\prime}\in X}\T\sqcup\T^{\prime}\in\class{}\}.
\]

\begin{proposition}
$\Gal$ is antitone and for all $X, Y \subseteq \class{}$, we have
\[
    X \subseteq \Gal(Y) \Longleftrightarrow Y \subseteq \Gal(X).
\]
Hence, $(\Gal,\Gal)$ is an antitone Galois connection. 
\end{proposition}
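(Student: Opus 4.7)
The plan is to unwind the definition of $\Gal$ and invoke the symmetry of the disjoint combination operation. Specifically, since $\T_1 \sqcup \T_2$ and $\T_2 \sqcup \T_1$ differ only in how the (already disjoint) signatures are labelled, one is decidable if and only if the other is. Thus combinability of decidable theories is a symmetric binary relation on $\class{}$, and this symmetry is the only nontrivial ingredient in the argument.

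First, I would handle antitonicity directly. Suppose $X \subseteq Y \subseteq \class{}$ and let $\T \in \Gal(Y)$. By definition, $\T \sqcup \T'$ is decidable for every $\T' \in Y$, and in particular for every $\T' \in X$, so $\T \in \Gal(X)$. This gives $\Gal(Y) \subseteq \Gal(X)$, so $\Gal$ is antitone.

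Next, for the biconditional, I would observe that both sides expand to the same assertion. Indeed, $X \subseteq \Gal(Y)$ unfolds to: for every $\T \in X$ and every $\T' \in Y$, $\T \sqcup \T' \in \class{}$. By the symmetry of $\sqcup$ noted above, this is equivalent to: for every $\T' \in Y$ and every $\T \in X$, $\T' \sqcup \T \in \class{}$, which is precisely $Y \subseteq \Gal(X)$. Combining this with antitonicity shows that $(\Gal,\Gal)$ is an antitone Galois connection in the sense defined in \Cref{sec:lattice}.

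There is no real obstacle: the entire content of the proposition is bookkeeping around the definition of $\Gal$ plus the commutativity (up to renaming) of disjoint signature union. The only thing worth making explicit in a formal write-up is that, because the signatures $\Sigma_1$ and $\Sigma_2$ of $\T_1$ and $\T_2$ share only equality, renaming the symbols of one copy does not affect the decidability of the combined theory, so the predicate ``$\T_1 \sqcup \T_2 \in \class{}$'' is genuinely symmetric in its two arguments.
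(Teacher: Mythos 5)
Your proof is correct and takes essentially the same approach as the paper's: the antitonicity argument is identical, and the biconditional in both cases reduces to the symmetry of the combinability relation. You are slightly more explicit than the paper in observing that ``$\T_1 \sqcup \T_2 \in \class{}$ iff $\T_2 \sqcup \T_1 \in \class{}$'' rests on the commutativity (up to renaming) of disjoint signature union, whereas the paper proves one implication and appeals to ``symmetry'' to get the converse; making this step explicit is a minor but legitimate improvement in rigor.
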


\begin{proof}
    First, suppose $X\subseteq Y$, and let $\T\in\Gal(Y)$. Then, for each $\T'\in Y$, $\T\sqcup\T'$ is decidable. Since $X\subseteq Y$, $\T\sqcup\T'$ is decidable for each $\T'\in X$, and thus $\T\in\Gal(X)$. This proves that $\Gal(Y)\subseteq\Gal(X)$.

    Suppose now that $X\subseteq \Gal(Y)$ and $\T\in Y$.
    For each $\T'\in X$, we have that $\T\sqcup\T'$ is decidable, as $X\subseteq\Gal(Y)$. Thus, $\T\in\Gal(X)$, and so $Y\subseteq \Gal(X)$. This proves that $X \subseteq \Gal(Y) \Longrightarrow Y \subseteq \Gal(X)$, and the converse follows by symmetry.
\end{proof}

From the fact that $(G,G)$ is an antitone Galois connection, many nice properties immediately follow. The following is an instantiation of a general fact about Galois connections \cite[p.~496]{galois}.

\begin{proposition}
For all $X, Y \subseteq \class{}$, we have
\begin{itemize}
    \item $X \subseteq \Gal(\Gal(X))$;
    \item $X \subseteq Y \Longrightarrow \Gal(\Gal(X)) \subseteq \Gal(\Gal(Y))$; and
    \item $\Gal(\Gal(\Gal(X))) = \Gal(X)$.
\end{itemize}
In particular, $Cl \coloneqq \Gal \circ \Gal$ is a closure operator.
\end{proposition}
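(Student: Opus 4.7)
The plan is to derive all three bullets purely formally from the two facts that the preceding proposition establishes: that $\Gal$ is antitone, and that $X \subseteq \Gal(Y) \Longleftrightarrow Y \subseteq \Gal(X)$ for all $X, Y \subseteq \class{}$. No properties of decidability or disjoint combination of theories are needed — the statement is a general consequence of having an antitone Galois connection $(f,g)$ with $f = g$, so the work is just a short chase through the biconditional and antitonicity.

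For the first bullet, I would substitute $Y := \Gal(X)$ into the Galois biconditional. The right-hand side becomes $\Gal(X) \subseteq \Gal(X)$, which holds trivially, and hence the left-hand side $X \subseteq \Gal(\Gal(X))$ follows. For the second bullet, assume $X \subseteq Y$. Antitonicity of $\Gal$ gives $\Gal(Y) \subseteq \Gal(X)$, and applying antitonicity once more yields $\Gal(\Gal(X)) \subseteq \Gal(\Gal(Y))$.

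For the third bullet, one inclusion is obtained by applying the first bullet with $\Gal(X)$ in place of $X$, giving $\Gal(X) \subseteq \Gal(\Gal(\Gal(X)))$. For the reverse inclusion, I would apply antitonicity of $\Gal$ to the inclusion $X \subseteq \Gal(\Gal(X))$ from the first bullet, which gives $\Gal(\Gal(\Gal(X))) \subseteq \Gal(X)$. Combining the two directions gives equality.

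Finally, to conclude that $Cl \coloneqq \Gal \circ \Gal$ is a closure operator, extensivity $X \subseteq Cl(X)$ is exactly the first bullet, monotonicity is the second bullet, and idempotence $Cl(Cl(X)) = Cl(X)$ unfolds to $\Gal(\Gal(\Gal(\Gal(X)))) = \Gal(\Gal(X))$, which is the third bullet applied with $\Gal(X)$ in place of $X$. There is no real obstacle here — the argument is a mechanical unfolding of the Galois biconditional and antitonicity; the only care needed is to cite the just-proven properties of $\Gal$ rather than re-examining its definition in terms of combinability.
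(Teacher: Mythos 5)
Your proof is correct and is exactly the standard derivation of these properties from the Galois biconditional and antitonicity; the paper does not spell out a proof here but simply cites it as an instantiation of a general fact about Galois connections, and your argument is the one that reference would supply. Every step checks out, including the observation that idempotence of $Cl$ is the third bullet applied to $\Gal(X)$.
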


Let $\mathcal{C}$ be the set of closed elements of $\mathbb{P}(\class{})$ (equivalently, the image of $\Gal$). The Knaster--Tarski theorem \cite{knaster-fixed-point,tarski-fixed-point} implies the following.

\begin{proposition}
    $\mathcal{C}$ is a complete lattice with meets and joins given by
    \[
    \bigwedge_{i \in I} X_i = \bigcap_{i \in I} X_i \qquad \bigvee_{i \in I} X_i = Cl\left( \bigcup_{i \in I} X_i \right).
    \]
\end{proposition}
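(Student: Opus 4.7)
The plan is to work directly from the three defining properties of the closure operator $Cl = \Gal \circ \Gal$ established in the previous proposition—extensivity, monotonicity, and idempotence—which renders the Knaster--Tarski citation essentially a packaging of a short verification. The two things to check are: for every family $\{X_i\}_{i \in I}$ of closed sets, that $\bigcap_{i \in I} X_i$ and $Cl(\bigcup_{i \in I} X_i)$ really are the infimum and supremum of the family in $\mathcal{C}$; completeness then follows because $I$ is allowed to be arbitrary.

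First I would handle meets. Letting $Y = \bigcap_{i \in I} X_i$, monotonicity of $Cl$ gives $Cl(Y) \subseteq Cl(X_i) = X_i$ for each $i$, hence $Cl(Y) \subseteq Y$; extensivity gives the reverse inclusion, so $Y \in \mathcal{C}$. Since $Y$ is already the infimum of $\{X_i\}_{i \in I}$ in the ambient lattice $\mathbb{P}(\class{})$ and lies in $\mathcal{C}$, it is the infimum in $\mathcal{C}$ as well.

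For joins, I would set $Z = Cl(\bigcup_{i \in I} X_i)$. Idempotence immediately yields $Z \in \mathcal{C}$. For each $i$, extensivity gives $X_i \subseteq \bigcup_j X_j \subseteq Z$, so $Z$ is an upper bound. If $W \in \mathcal{C}$ is any other upper bound, then $\bigcup_i X_i \subseteq W$, and applying monotonicity of $Cl$ together with $Cl(W) = W$ yields $Z \subseteq W$, confirming $Z$ is the least upper bound.

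There is no serious obstacle in this argument; every step is mechanical given the closure-operator axioms already in hand. The only subtlety worth flagging is that the join in $\mathcal{C}$ is \emph{not} inherited from the ambient lattice $\mathbb{P}(\class{})$—the set-theoretic union of closed sets need not be closed—so passing through $Cl$ is essential, in contrast with meets, which coincide with ambient intersections. This asymmetry is the real content of the statement, and it is what the verification above is designed to expose.
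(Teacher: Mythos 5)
Your proof is correct, and it is essentially the detailed verification that the paper's one-line appeal to the Knaster--Tarski theorem is meant to invoke: the paper cites the fixed-point theorem without spelling out the argument, while you have unpacked the standard consequence for closure operators directly from extensivity, monotonicity, and idempotence. The observation you flag at the end---that meets in $\mathcal{C}$ coincide with ambient intersections while joins do not, which is why passing through $Cl$ is essential---is exactly the right thing to emphasize, and your verification cleanly exposes that asymmetry. Nothing is missing.
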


Note that $\Gal$ is an antitone involution on $\mathcal{C}$, which makes $\mathcal{C}$ isomorphic to its dual. In particular, for all $X, Y \in \mathcal{C}$, we have $\Gal(X \vee Y) = \Gal(X) \wedge \Gal(Y)$ and $\Gal(X \wedge Y) = \Gal(X) \vee \Gal(Y)$.

Concretely, if $X \subseteq \class{}$, then there is a combination theorem for the pair of sets $(X,\Gal(X))$:
every theory in $X$ is combinable with every theory from
$\Gal(X)$. If $X$ is closed, then this combination theorem is \emph{sharp}: we cannot enlarge $X$ to a larger set of theories that are all combinable with those in $\Gal(X)$.
This fact is used in \Cref{sec-known}.
Otherwise, it can be sharpened to a new combination theorem for $(Cl(X),\Gal(X))$, which applies to more pairs of theories. Another way to discover new combination theorems goes through the meet and join operators.
We provide examples for both techniques in \Cref{sec-new}.

\begin{remark}
    We defined our Galois connection over sets of decidable theories. One could define a similar Galois connection over all theories, since there are pairs of undecidable theories whose combination is decidable. We choose to focus on decidable theories, both because this is more relevant in practice and because this assumption is theoretically convenient.
\end{remark}

The results to be found throughout the rest of the paper may be summarized in the Hasse diagram found in \Cref{fig:lattice}:
we establish the sharpness of known combination methods (\Cref{sec-known}), find new ones (\Cref{sec-new}), and describe how they relate to one another (\Cref{sec-lattice}).

\begin{figure}[h!]
    \centering
    \adjustbox{scale=1.0,center}{
    \begin{tikzcd}[sep=small,column sep=0.5em]
& & & \mathfrak{T} (\ref{sec:shinydec}) \arrow[dash]{ddr}{} \arrow[dash]{dl}{} & & &\\%
& & \class{$n$-decidable} (\ref{sec-n-decidable}) \arrow[dash]{dl}{} & & & &\\
\phantom{O}\arrow[bend right=90,swap,red,<->]{dddd}{\Gal} & \class{CFS} (\ref{sec:gentlespec})\arrow[dash]{ddrr}{}\arrow[dash]{d}{} & & & \class{ID} (\ref{sec-sm-cs})\arrow[dash]{ddr}{}\arrow[dash]{ddl}{} & &\\
& \class{\coquagen} (\ref{sec-quasi-gentle})\arrow[dash]{dd}{} & & & & &\\
\phantom{O}\arrow[dash,dashed]{r}{}& \phantom{O}\arrow[dashed,dash]{rr}{} & & \class{CS} (\ref{sec-cs})\arrow[dash]{ddr}{}\arrow[dash]{ddll}{}\arrow[dashed,dash]{rr}{} & & \class{SI} (\ref{sec:stablelimit})\arrow[dash]{ddl}{}\arrow[dash,dashed]{r}{} &\phantom{O}\\
& \class{\quagen{}} (\ref{sec-quasi-gentle})\arrow[dash]{d}{} & & & & &\\
\phantom{O} & \class{gentle} (\ref{sec:gentlespec})\arrow[dash]{dr}{} & & & \class{SM+CS} (\ref{sec-sm-cs})\arrow[dash]{ddl}{} & &\\
& & \class{$n$-shiny} (\ref{sec-n-decidable})\arrow[dash]{dr}{} & & & &\\
& & & \class{shiny} (\ref{sec:shinydec}) & & &\\%
\end{tikzcd}}
    \caption{The lattice of theory combination properties we will examine:
    next to each property one finds the section which studies it; 
    the structure of the diagram is explained in \Cref{sec-lattice}, where a precise statement of its correctness is also given.
    The Galois connection reflects the lattice over the dashed line.}
    \label{fig:lattice}
\end{figure}
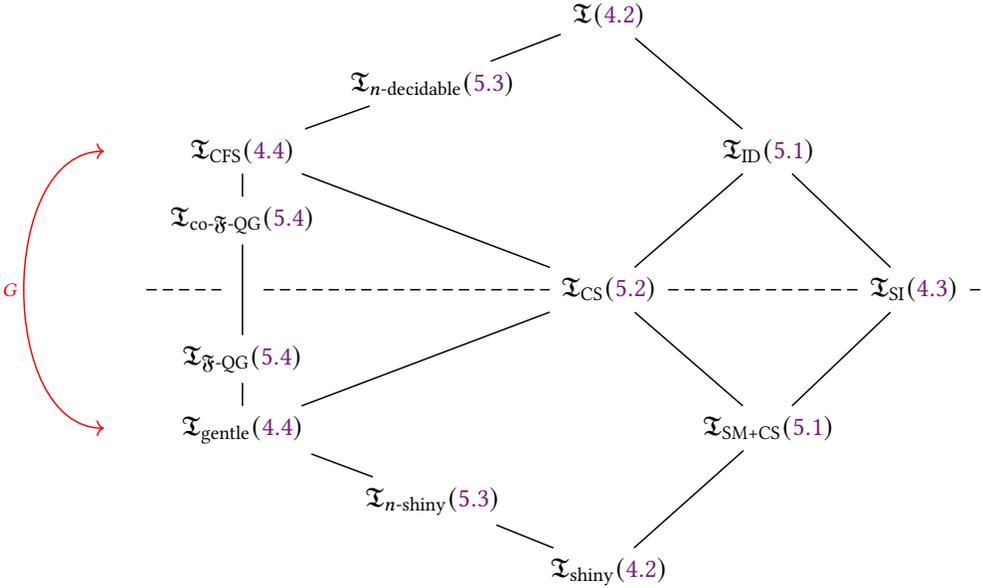

\section{The Sharpness of Known Combination Theorems} \label{sec-known}

In this section, we show that well-known theory combination methods cannot be improved, by investigating the requirements that they impose on theories.
For each requirement that arises from a combination theorem, we precisely identify the set of theories that can be combined with theories that admit these requirements.
This provides both positive and negative boundaries for all well-known combination methods.

Our technique can be sketched as follows:
for each set $X$ of theories, we identify a restricted set of theories
such that any other theory can be combined with all
theories from $X$ if and only if it can be combined with this set. We call these theories {\em test theories}.
We then prove properties of this restricted set of theories
in order to get the desired sharpness results.

Specifically: in \Cref{sec:sometests} we introduce 
test theories. 
Then, in \Cref{sec:shinydec}, we prove that the shiny combination theorem is sharp; that is, shiny theories are the only decidable theories that can be combined with all decidable theories. Next, in \Cref{sec:stablelimit}, we prove that the Nelson--Oppen combination theorem is sharp; that is, stably infinite theories are the only decidable theories that can be combined with all decidable stably infinite theories. Finally, in \Cref{sec:gentlespec}, we prove that the gentle combination theorem as improved by \cite{CADE30} is sharp; that is, gentle theories are the only decidable theories that can be combined with all decidable theories with computable finite spectra (and vice versa).

\subsection{Test Theories}
\label{sec:sometests}
Our sharpness proofs will make use of several specific theories that we define here. The high-level strategy of each of our proofs is the same: assuming that a theory $\T$ can be combined with every theory in some set $\class{X}$, we choose a particular $\T' \in \class{X}$; then, we argue that the assumption that $\T \sqcup \T'$ is decidable forces $\T$ to satisfy some desired property. We call such a theory $\T'$ a \emph{test theory}.
The core of all the proofs of this section is the careful design
of the test theories, so that they correctly encode the required properties of theories.
Notice that the axiomatizations of these test theories all impose some restriction over which cardinalities are possible for a model. This is because Fontaine's lemma (\Cref{lem-fontaine}) tells us that the decidability of a combination of theories is determined by a decidability condition related to the theories' spectra. For the same reason, several of the axiomatizations make use of undecidable sets.

We start by defining the signatures of our test theories
in \Cref{tab:signatures}.
$\Sigma_1$ is the empty signature.
$\Spn$ has infinitely many nullary predicates.
Finally, given a countable signature $\Sigma$, $\Sigmasigma$ is defined with a nullary predicate $P_{\varphi,n}$ for each quantifier-free formula $\varphi$ on $\Sigma$ and each $n\in\No$ (and no other predicates or functions).
Notice that, because $\Sigma$ is countable, so is $QF(\Sigma)$ and thus $\Sigmasigma$.

\begin{table}
\centering
\caption{Signatures}
\renewcommand{\arraystretch}{1.15}
\centering
\begin{tabular}{c|c|c}
Sig. & Functions & Predicates\\
\hline
$\Sigma_{1}$ & $\emptyset$ & $\emptyset$\\
$\Spn$ & $\emptyset$ & $\{P_{n} : n\in \No\}$\\
$\Sigmasigma$ & $\emptyset$ & $\{P_{\varphi,n} : \varphi\in QF(\Sigma), n\in\No\}$
\end{tabular}
\label{tab:signatures}
\centering
\end{table}

Let $F:\No\rightarrow\No\cup\{\Inf\}$ be a function such that:
$(i)$ $\{(m,n) \in {\No}^2 : F(m)\geq n\}$ is decidable; and $(ii)$ $\{n : F(n)=\aleph_{0}\}$ is undecidable.
Such functions do exist: 
one example takes an enumeration of Turing machines and returns, for the $n$th Turing machine, the number of steps it takes to halt (or $\aleph_0$ if it does not halt). Also, let $\unc \subset \No$ be an undecidable set.

$\Testg$ is a $\Spn$-theory 
parameterized by an infinite decidable set $S\subseteq\No$.
In it,
the validity of $P_{n}$ implies a model has at most $F(n)$ elements, whenever $F(n)$ is a finite quantity; 
we demand as well that the cardinality of its finite models must belong to a decidable set $S$. 
We sometimes work with $\Tle^{\No}$, the special case where $S = \No$. For example, this theory is used when testing whether a theory has the finite model property, as there is an infinite interpretation that satisfies $P_{n}$ if and only if $F(n)=\aleph_{0}$, an undecidable problem. 

$\Testcfs$ is also a $\Spn$-theory, 
where the validity of $P_{n}$ implies a model has exactly $n$ elements:
it is designed to test if a given $n\in\No$ is in the spectrum of a formula.
The similar $\Sigma_{1}$-theory $\Teqn$ has in turn only interpretations of cardinality $n$ and serves a similar purpose.

Given $m < n$, the theory $\Testsm$, defined over $\Spn$ as well, has only models of cardinalities $m$ and $n$;
and, when $P_{k}$ is satisfied, for an $k \in \unc$, then it has only models of cardinality $n$.
This allows one to test for smoothness, since a formula on a smooth theory with a model of size $m$ must have models of every size $n>m$.
Yet another $\Spn$-theory is $\Testsi$, where the validity of a $P_{m}$, for $m\in\unc$, implies the model has more than $n$ elements:
it tests for whether $\aleph_{0}$ is in the spectrum of a formula, since otherwise the formula has a maximal finite model whose cardinality we can use as our $n$.
The $\Sigma_{1}$-theory $\Tinfty$ contains only infinite interpretations, and it is used to test if a formula has infinite models.

\begin{table}
\centering
\caption{Test theories. 
$F:\No\rightarrow\No\cup\{\Inf\}$ is a function such that:
$(i)$ $\{(m,n) \in {\No}^2 : F(m)\geq n\}$ is decidable; and $(ii)$ $\{n : F(n)=\aleph_{0}\}$ is undecidable.
$\unc \subset \No$ is an undecidable set.
}
\renewcommand{\arraystretch}{1.5}
\centering
\begin{tabular}{c|c|c}
Sig. & Theory & Axiomatization\\
\hline
$\Spn$ &  $\T_{>n}^{P}$ & $\{P_{m}\rightarrow \psi_{\geq n+1} : m\in \unc\}\cup\{P_{i}\rightarrow\neg P_{j} : i\neq j\}$\\
$\Spn$ & $\T_{=}^{P}$ & $\{P_{n}\rightarrow\psi_{=n} : n\in\No\}$\\
$\Spn$ & $\T^{n}_{m}$ & $\{\psi_{=m}\vee\psi_{=n}\}\cup\{P_{k}\rightarrow\psi_{=n} : k\in \unc\}\cup\{P_{i}\rightarrow\neg P_{j} : i\neq j\}$\\
$\Spn$ & $\T_{\leq}^{S}$ & $\{P_{n}\rightarrow\psi_{\leq F(n)} : F(n)\in\mathbb{N}\}\cup\{\neg\psi_{=n} : n\notin S\}\cup\{P_{i}\rightarrow\neg P_{j} : i\neq j\}$\\
$\Sigmasigma$ & $Th_{\T}$ & \Cref{def:theorytheory}\\
$\Sigma_{1}$ & $\Tinfty$ & $\{\psi_{\geq n} : n\in\No\}$\\
$\Sigma_{1}$ & $\T_{=n}$ & $\{\psi_{=n}\}$
\end{tabular}
\label{tab:testtheories}
\centering
\end{table}

One more test theory, $Th_{\T}$, is defined below, separately, given its intricacy:
it is used in the study of gentleness to compute the complement of the spectrum of a formula when said spectrum is cofinite.
It takes a $\Sigma$-theory $\T$ and considers, for each of its quantifier-free formulas $\varphi$, the set $\No\setminus\spec_{\T}(\varphi)$, the finite cardinalities that a model of $\varphi$ does not have.
Let us enumerate this set: 
$s_{\varphi,1}$, $s_{\varphi,2}$ and so on;
if this set is finite, i.e. the spectrum of $\varphi$ is cofinite, then we make $s_{\varphi,n}=\aleph_{0}$ for all elements in the enumeration after the last actual one.
Then, if $P_{\varphi,n}$ is true in an interpretation $\A$ of $Th_{\T}$, $\dom{\A}$ has cardinality $s_{\varphi,n}$, which may be infinite;
that is, $|\dom{\A}|$ is either the $n$th number not in $\spec_{\T}(\varphi)$, or infinite if there is none.
This way, notice that the union of the spectra of $P_{\varphi,i}$, for $i\in\No$, is exactly $(\N\setminus\spec_{\T}(\varphi))\cup\{\aleph_{0}\}$.

\begin{definition}\label{def:theorytheory}
Let $\T$ be an arbitrary $\Sigma$-theory with computable finite spectra. Given a $\Sigma$-formula $\varphi$, let $S_\varphi = \No \setminus \spec_\T(\varphi)$. 
For each $i\in\No$, let 
$s_{\varphi,i}$ be its $i$th element in increasing order,
or $\Inf$ if $S_\varphi$ has fewer than $i$ elements. Now, let $Th_\T$ be the $\Sigmasigma$-theory axiomatized by
\begin{align*}
    &\{P_{\varphi,n} \rightarrow \psi_{=s_{\varphi,n}} : \varphi \in \qf{\Sigma},\; n \in \No,\; s_{\varphi,n} < \Inf\} \cup \\
    &\{P_{\varphi,n} \rightarrow \psi_{>m} : \varphi \in \qf{\Sigma},\; m,n \in \No,\; s_{\varphi,n} = \Inf\} \cup \\
    &\{P_{\varphi,n} \rightarrow \lnot P_{\varphi',n'} : (\varphi,n) \neq (\varphi',n')\}.
\end{align*}
\end{definition}

\begin{example}
    Consider, as a toy example, the $\Sigma_{Q}$-theory $\T$ axiomatized by $\{Q\rightarrow\psi_{=4}\}\cup\{\neg Q\rightarrow\psi_{\geq 3}\}$, where $\Sigma_{Q}$ is the signature with a single nullary predicate $Q$.
    This way, $\No\setminus\spec_{\T}(Q)=\No\setminus\{4\}$, and $\No\setminus\spec_{\T}(\neg Q)=\{1,2\}$;
    so $s_{Q,1}=1$, $s_{Q,2}=2$ and $s_{Q,3}=3$, but $s_{Q,4}=5$, $s_{Q,5}=6$ and so on;
    and $s_{\neg Q,1}=1$ and $s_{\neg Q,2}=2$, but $s_{\neg Q,3}=s_{\neg Q,4}=\cdots=\aleph_{0}$.
    So any model of $Th_{\T}$ satisfying $P_{Q,2}$ has cardinality $2$, and any satisfying $P_{Q,4}$ has cardinality $5$ (and such models do exist in both cases).
    Any model of $Th_{\T}$ satisfying $P_{\neg Q,2}$ has cardinality $2$, but any models satisfying $P_{\neg Q,4}$ must have infinitely many elements, and again such models exist in both cases.
\end{example}

The following theorem states some properties that our test theories satisfy. We will make essential use of these in our proofs.

\begin{theorem}\label{theo:testtheories}
    If a check mark appears where the column of a test theory $\T$ intersects the row of a set $\class{X}$ in \Cref{tab:testtheoprop}, then $\T \in \class{X}$.
\end{theorem}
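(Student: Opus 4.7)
The plan is to work through the test theories one by one, computing the spectrum of an arbitrary cube in each theory and then reading off the relevant decidability, spectral, and model-theoretic properties. The core observation is that every test theory besides $\Teqn$ is built from nullary predicates, so a cube reduces to a Boolean selection of which $P_\bullet$ are positive and which are negated, together with a variable arrangement that only imposes a trivial lower bound on the model's cardinality. The axiom schemas $P_i \to \lnot P_j$ present in each of these theories force at most one positive $P_\bullet$ per satisfiable cube, after which the spectrum is dictated by the remaining axioms applied to that single positive literal (if any).

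First I would dispatch the two simplest cases. For $\Teqn$, the spectrum of every satisfiable quantifier-free formula is exactly $\{n\}$, subject to the trivial check that $n$ exceeds the number of required-distinct variables, which immediately places $\Teqn$ in the sets requiring the finite model property such as $\class{shiny}$, $\class{gentle}$, and $\class{CFS}$. For $\Tinfty$, the spectrum of every satisfiable formula is exactly $\{\aleph_0\}$, giving $\class{SI}$, $\class{ID}$, and $\class{SM+CS}$.

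Next, for $\Testcfs$, $\Testsi$, $\Testsm$, and $\Testg$, I would analyze the cases $|I|=0$ and $|I|=1$ for the set $I$ of positive predicate indices in a cube. In each case the spectrum takes a simple shape, and when $|I|=1$ it is membership of the positive index in $\unc$ (or the value of $F$) that selects between shapes. The central point is that satisfiability is always decidable because, for each Boolean specification of a cube, one can exhibit a witness model whose existence does not depend on this undecidable information: an infinite model works for $\Testsi$, a model of size $n$ works for $\Testsm$, and for $\Testg$ one exploits the decidable predicate $F(m)\ge n$ together with the decidability of $S$ to find a suitable finite or infinite witness. From these computations the remaining entries follow by direct verification, e.g., stable infiniteness of $\Testsi$ since $\aleph_0$ always lies in the spectrum of a satisfiable cube, gentleness of $\Testcfs$ since the spectrum is always either a singleton or cofinite and which case applies is computable, and so on.

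The hardest case is $Th_\T$, whose signature is indexed by pairs $(\varphi,n)$ with $\varphi\in\qf{\Sigma}$ and whose axioms link $P_{\varphi,n}$ to the cardinality $s_{\varphi,n}$, the $n$th element of $\No\setminus\spec_{\T}(\varphi)$ (or $\aleph_0$ when fewer than $n$ such elements exist). Under the hypothesis $\T\in\class{CFS}$, finite membership in $\spec_{\T}(\varphi)$ is decidable, and together with the hypothesis that $\T\in\class{CS}$ (when present) we can compute each $s_{\varphi,n}$ or certify it equals $\aleph_0$. The spectrum of any cube in $Th_\T$ then collapses to one of four shapes, namely a finite singleton, $\{\aleph_0\}$, a cofinite tail of $\N$, or all of $\N$, and which shape arises is computable from the cube; this yields the table's entries for $Th_\T$. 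The main obstacle will be handling the cofinite shape cleanly so as to read off gentleness and computability of the spectrum simultaneously, which is the technical heart of the proof and will reuse bookkeeping analogous to the arguments developed later for gentle combination in \Cref{sec:gentlespec}.
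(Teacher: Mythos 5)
Your overall plan—work theory by theory, reduce to cubes, use the $P_i\to\lnot P_j$ axioms to force at most one positive $P$-literal, then read off the spectrum—matches the paper's approach. But there are two genuine gaps and one stray error.

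First, the table's row for $\class{polite}$ has a check under $\Testsi$, and you never address it. Stable infiniteness alone does not give politeness: politeness also requires finite witnessability, and the paper devotes a nontrivial argument to exhibiting a witness function for $\Testsi$ (pad $\varphi$ with $n+1$ fresh dummy equalities $x_i=x_i$ and then build a model whose domain is exactly the interpretation of those variables, expanding the valuation surjectively). This is a required, substantive step that your outline omits entirely.

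Second, your treatment of $Th_{\T}$ is incorrect. The definition of $Th_{\T}$ assumes only $\T\in\class{CFS}$; there is no optional hypothesis $\T\in\class{CS}$, and even granting $\T\in\class{CS}$ you cannot in general "compute each $s_{\varphi,n}$ or certify it equals $\aleph_0$." Determining whether $\No\setminus\spec_\T(\varphi)$ has at least $n$ elements requires knowing whether $\spec_\T(\varphi)$ is cofinite, which is not decidable from computable (even full) spectra—consider a $\T$ with $\spec_\T(\varphi)$ equal to, say, the odd numbers together with $\aleph_0$; then $\aleph_0\in\spec_\T(\varphi)$ yet the complement is infinite, so no query answered by $\class{CS}$ membership resolves the cofiniteness question. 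The entire point of the $Th_\T$ construction is that $s_{\varphi,n}$ need not be computable. The paper's actual argument is a bounded search: when a cube contains $P_{\varphi,n}$, you only need to decide, for each finite $m$ below $\minmod_{\Teq}$ of the cube's equality part, whether $m=s_{\varphi,n}$ (decidable from CFS since it only asks whether $m$ is the $n$th missing finite value); if no such $m<\minmod_{\Teq}(\cdot)$ exists, then $s_{\varphi,n}$ is at least $\minmod_{\Teq}(\cdot)$ whether finite or infinite, and the cube is satisfiable. Your "four spectrum shapes" claim is roughly right, but your claim that the shape is computable is false; only CFS-style finite membership queries are decidable, which is all the theorem needs. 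Your aside about reading off gentleness of $Th_\T$ is also beside the point—the table claims only $\class{}$ and $\class{CFS}$ for $Th_\T$.

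Finally, a minor error: you assert $\Teqn\in\class{shiny}$. It is not; all its models have size exactly $n$, so it is not smooth. The table only claims $\Teqn\in\class{}$, so this doesn't affect the theorem, but the statement is false as written (its $n$-shininess, a different property, is what is proven in \Cref{lem-n-shiny-test}).
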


We omit $\class{CS}$ from \Cref{tab:testtheoprop}, since $\class{CS} = \class{CFS} \cap \class{ID}$.

In all the proofs that follow, whenever we claim that a theory from \Cref{tab:testtheories} has a property, it is justified by \Cref{theo:testtheories}.

\subsection{Combination of Shiny Theories With Decidable Theories}
\label{sec:shinydec}

Tinelli and Zarba~\cite{shiny} proved that shiny theories are combinable with every decidable theory. In terms of our Galois connection, this means $\class{}\subseteq\Gal(\class{shiny})$ and $\class{shiny}\subseteq\Gal({\class{}})$; since $\Gal(\class{shiny})\subseteq \class{}$ by definition, we also get $\Gal(\class{shiny})=\class{}$. We prove that shiny theories are the \emph{only} decidable theories that can be combined with every decidable theory; that is, $\Gal(\mathfrak{T}) \subseteq \class{shiny}$, and therefore $\Gal(\mathfrak{T}) = \class{shiny}$ in virtue of the shiny combination theorem. In particular, $\class{shiny}$ is closed with respect to our Galois connection.

\begin{theorem} \label{thm-shiny-tight}
    If $\T$ is a decidable theory that can be combined with every decidable theory, then $\T$ is shiny, i.e. $\Gal(\class{})\subseteq\class{shiny}$.
\end{theorem}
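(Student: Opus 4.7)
The plan is to apply the test-theory technique sketched in \Cref{sec:sometests}. Assuming $\T \in \Gal(\mathfrak{T})$, I combine $\T$ with three (families of) decidable theories from \Cref{tab:testtheories} and derive, via Fontaine's lemma (\Cref{lem-fontaine}), that $\T$ is smooth, has the finite model property (FMP), and has a computable minimal model function---the three defining clauses of shininess. \Cref{theo:testtheories} supplies the decidability and spectrum facts about these test theories that I need.

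\emph{Smoothness.} Suppose $\T$ is not smooth. Putting a quantifier-free witness into DNF, we get a cube $\varphi$ with $m \in \spec_\T(\varphi)$ and $n \notin \spec_\T(\varphi)$ for some $n > m$ in $\N$; if $n = \aleph_0$, compactness forces $\spec_\T(\varphi)\cap\No$ to be bounded, so we may replace $n$ by $\max(\spec_\T(\varphi)\cap\No)+1$ and assume $m,n\in\No$. Now combine $\T$ with $\Testsm$: its axiomatization yields $\spec_{\Testsm}(P_k)=\{n\}$ when $k\in\unc$ and $\spec_{\Testsm}(P_k)=\{m,n\}$ otherwise, so $\spec_\T(\varphi)\cap\spec_{\Testsm}(P_k)$ is nonempty exactly when $k\notin\unc$. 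By Fontaine's lemma, decidability of $\T\sqcup\Testsm$ would decide $\unc$---contradiction.

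\emph{FMP.} If FMP fails then some $\T$-satisfiable cube $\varphi$ has $\spec_\T(\varphi)=\{\aleph_0\}$; combining $\T$ with $\Tle^{\No}$ and noting that $\aleph_0\in\spec_{\Tle^{\No}}(P_n)$ iff $F(n)=\aleph_0$ would produce, by Fontaine's lemma, a decision procedure for the undecidable set $\{n:F(n)=\aleph_0\}$. \emph{Computable minimal model function.} Granted FMP, combine $\T$ with $\Testcfs$; since $\spec_{\Testcfs}(P_n)=\{n\}$, Fontaine's lemma yields a decision procedure for ``$n\in\spec_\T(\varphi)$'', and iterating $n=1,2,\ldots$ until a hit (guaranteed by FMP) computes $\minmod_\T(\varphi)$. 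The main obstacle here is conceptual rather than technical: one must design test theories that are simultaneously decidable (so they belong to $\mathfrak{T}$ and the hypothesis $\T\in\Gal(\mathfrak{T})$ applies) and whose spectra encode $\unc$ and the graph of $F$ in exactly the shape required to probe the targeted property of $\T$---a design already carried out in \Cref{tab:testtheories}, which reduces the theorem to the short reductions above.
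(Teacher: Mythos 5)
Your proof is correct and follows essentially the same strategy as the paper's: probe $\T$ with the test theories $\Tle^{\No}$, $\Testcfs$, and $\Testsm$ from \Cref{tab:testtheories} to establish FMP, computability of the minimal model function, and smoothness, in each case deriving a contradiction with undecidability of $\unc$ or of $\{n : F(n)=\aleph_0\}$. The only cosmetic differences are the order of the three sub-arguments, the explicit DNF reduction to a cube (which the paper sidesteps by arguing directly about $\T\sqcup\T'$-satisfiability of $\varphi\wedge P_k$ for arbitrary quantifier-free $\varphi$), and your somewhat more detailed unpacking of the compactness step used to bring $n$ into $\No$.
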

\begin{proof}
    Suppose $\T$ can be combined with every decidable theory.

    First, we show that $\T$ has the finite model property. Suppose not, and let $\varphi$ be a formula such that $\spec_\T(\varphi) = \{\Inf\}$. Since $\Tle^{\No}$ is decidable\reference, $\T \sqcup \Tle^{\No}$ is decidable. But $\varphi \land P_n$ is $\T \sqcup \Tle^{\No}$-satisfiable if and only if $F(n) = \Inf$, which is undecidable by our assumption on $F$. This contradicts the decidability of $\T \sqcup \Tle^{\No}$.

    Second, we show that $\T$ has a computable minimal model function. Given a formula $\varphi$, we know that it has some finite $\T$-interpretation by the previous paragraph. Since $\Testcfs$ is decidable\reference, $\T \sqcup \Testcfs$ is decidable. Hence, we can compute the size of the minimal $\T$-interpretation of $\varphi$ by searching for the smallest $n \in \No$ such that $\varphi \land P_n$ is $\T \sqcup \Testcfs$-satisfiable.

    Finally, it remains to show that $\T$ is smooth. If not, then for some formula $\varphi$, we have $m \in \spec_\T(\varphi)$ and $n \notin \spec_\T(\varphi)$ for some $m,n \in \N$ with $m < n$. In fact, by compactness, we can take $m,n \in \No$. Since $\Testsm$ is decidable\reference, $\T \sqcup \Testsm$ is decidable. But $\varphi \land P_{k}$ is $\T \sqcup \Testsm$-satisfiable if and only if $k \notin \unc$, which is undecidable since $\unc$ is undecidable. This contradicts our assumption that $\T$ can be combined with every decidable theory.
\end{proof}

The proof shows that a decidable theory $\T$ is shiny if and only if it can be combined with the test theories $\{\Tle^{\No},\Testcfs\} \cup \{\Testsm : m,n \in \No,\; m<n\}$; that is, $\class{shiny} = \Gal(\{\Tle^{\No},\Testcfs\} \cup \{\Testsm : m,n \in \No,\; m<n\})$. We can improve this characterization to only use a \emph{single} test theory.
\begin{restatable}{theorem}{thmshinyuniform} \label{thm-shiny-uniform}
    There is a decidable theory $\T'$ such that, for every decidable theory $\T$, if $\T$ can be combined with $\T'$ then $\T$ is shiny (and therefore combinable with every decidable theory).
\end{restatable}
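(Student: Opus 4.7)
The plan is to bundle the three separate tests used in the proof of \Cref{thm-shiny-tight}---for the finite model property, for computability of the minimal model function, and for smoothness---into a single decidable theory $\T'$. The key idea is to use three disjoint families of nullary predicates, one per test, together with a global mutual-exclusion axiom ensuring that at most one predicate is true in any model, so that the three families' axioms never interfere with one another.

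Concretely, I would take $\Sigma'$ to consist of the nullary predicates $P^{\mathrm{FM}}_m$, $P^{\mathrm{MM}}_n$, and $P^{\mathrm{SM}}_{m,n,k}$ (with $m<n$ in the last family, all indices ranging over $\No$), and axiomatize $\T'$ by: $P^{\mathrm{FM}}_m\rightarrow\psi_{\leq F(m)}$ whenever $F(m)\in\mathbb{N}$; $P^{\mathrm{MM}}_n\rightarrow\psi_{=n}$; $P^{\mathrm{SM}}_{m,n,k}\rightarrow(\psi_{=m}\vee\psi_{=n})$; $P^{\mathrm{SM}}_{m,n,k}\rightarrow\psi_{=n}$ whenever $k\in\unc$; and $P\rightarrow\neg P'$ for every pair of distinct predicates $P,P'$ in $\Sigma'$. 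These are essentially the axioms of $\Tle^{\No}$, $\Testcfs$, and all of the $\Testsm$ theories, relabelled and fused by mutual exclusion.

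The first step is to verify that $\T'$ itself is decidable, despite its axiomatization openly mentioning the undecidable sets $\unc$ and $\{m:F(m)=\aleph_0\}$. A quantifier-free $\Sigma'$-formula $\varphi$ mentions only finitely many predicates and variables, and by mutual exclusion any satisfying model asserts at most one predicate true; so $\T'$-satisfiability of $\varphi$ reduces to a finite case split over which predicate (if any) is asserted, checking in each case propositional consistency together with feasibility of the induced cardinality constraint against the number $c$ of equivalence classes forced among the free variables of $\varphi$. In the $\mathrm{FM}$ case feasibility amounts to the comparison $F(m)\geq c$, decidable by hypothesis on $F$; in the $\mathrm{SM}$ case the cardinality $n$ is always permitted (regardless of whether $k\in\unc$), so feasibility amounts to $c\leq n$, circumventing the undecidability of $\unc$. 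This is the main obstacle, and these two observations neutralize it.

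The second step assumes $\T\sqcup\T'$ is decidable and replays the three sub-arguments of \Cref{thm-shiny-tight} with each family in turn: the $\mathrm{FM}$ family forces $\T$ to have the finite model property, since otherwise $\varphi\wedge P^{\mathrm{FM}}_m$ for a witness $\varphi$ with $\spec_\T(\varphi)=\{\aleph_0\}$ would decide $F(m)=\aleph_0$; the $\mathrm{MM}$ family makes the minimal model function computable, by searching for the smallest $n$ with $\varphi\wedge P^{\mathrm{MM}}_n$ satisfiable (which terminates thanks to the finite model property already established); and the $\mathrm{SM}$ family forces smoothness, since otherwise $\varphi\wedge P^{\mathrm{SM}}_{m,n,k}$ for a non-smoothness witness $(\varphi,m,n)$ would decide $k\in\unc$. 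These together give that $\T$ is shiny, whence it is combinable with every decidable theory by the shiny combination theorem.
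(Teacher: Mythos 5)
Your construction is essentially the same as the paper's: both build $\T'$ by relabelling and fusing the three test theories $\Tle^{\No}$, $\Testcfs$, and $\Testsm$ using disjoint families of nullary predicates with cross-family mutual exclusion, argue decidability by a per-predicate case split that sidesteps the undecidability of $\unc$ and $\{m : F(m)=\aleph_0\}$, and then replay the three sub-arguments from the proof of \Cref{thm-shiny-tight}. The only cosmetic difference is that you impose mutual exclusion globally across all predicate pairs, whereas the paper omits it within the $P_n$ family (it is already forced by the $\psi_{=n}$ constraints); this does not affect the argument.
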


The theory $\T'$ is maximally difficult from the standpoint of theory combination, in the sense that combinability with $\T'$ implies combinability with every other decidable theory. This is analogous to the phenomenon of complete problems in computability theory and complexity theory (i.e., completeness in the sense of $\mathsf{NP}$-completeness). Thus, we say that $\T'$ is \emph{$\class{}$-complete} with the following meaning.

\begin{definition} 
A theory $\T$ is said to be \emph{$\class{X}$-complete} when $\class{X} = Cl(\{\T\})$.
\end{definition}

When a $\class{X}$-complete theory $\T$ exists, we can think of $\class{X}$ as being ``generated'' by the singleton $\{\T\}$. Perhaps surprisingly, most, but not all, of the combination properties we study turn out to have a corresponding $\class{X}$-complete theory. 

Note that the results of this section also apply to strongly polite theories, because Casal and Rasga~\cite{CasalRasga2} showed that $\class{shiny} = \class{SP}$, where $\class{SP}$ is the set of decidable strongly polite theories. Incidentally, given the combination theorem for strongly polite theories~\cite{JB10-LPAR}, \Cref{thm-shiny-tight} gives a new proof that $\class{SP} \subseteq \class{shiny}$.

\begin{table}
\centering
\caption{Test theories' membership}
\renewcommand{\arraystretch}{1.15}
\setlength{\tabcolsep}{10pt}
\centering
\begin{tabular}{c|ccccccccc}
 & $\Testsi$ & $\Testcfs$ & $\Testsm$ & $\Testg$ &  $\Tinfty$ & $Th_{\T}$ & $\Teqn$\\
\hline
$\class{}$ & \checkmark & \checkmark & \checkmark & \checkmark & \checkmark & \checkmark & \checkmark \\
$\class{CFS}$ & & \checkmark & & \checkmark & \checkmark & \checkmark & \\
$\class{ID}$ & & \checkmark & \checkmark & & \checkmark & & \\
$\class{SI}$ & \checkmark & & & & & & \\
$\class{gentle}$ & & \checkmark & & & & & \\
$\class{SM+CS}$ & & & & & \checkmark & & \\
$\class{polite}$ & \checkmark & & & & & &
\end{tabular}
\label{tab:testtheoprop}
\centering
\end{table}

\subsection{Combination of Stably Infinite Theories}
\label{sec:stablelimit}
Oppen~\cite{OppenSI} proved that any two decidable stably infinite theories are combinable; that is, $\class{SI} \subseteq \Gal(\class{SI})$. We prove that $\Gal(\class{SI}) = \class{SI}$, which implies that $\class{SI}$ is closed.
If a theory is not stably infinite, then it cannot be combined with every decidable stably infinite theory.

\begin{theorem} \label{thm-f-si}
    If $\T$ is a decidable theory that can be combined with every decidable stably infinite theory, then $\T$ is stably infinite, i.e. $\Gal(\class{SI})\subseteq\class{SI}$.
\end{theorem}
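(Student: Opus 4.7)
The plan is to prove the contrapositive: if $\T$ is a decidable theory that is \emph{not} stably infinite, then there is some decidable stably infinite theory $\T'$ such that $\T \sqcup \T'$ is undecidable. Since $\T$ is not stably infinite, there is a quantifier-free $\T$-satisfiable formula $\varphi$ with $\aleph_0 \notin \spec_\T(\varphi)$. By the contrapositive of the compactness corollary stated in \Cref{sec:spec}, $\spec_\T(\varphi)$ contains only finitely many finite cardinalities, so there exists $n \in \No$ with $\spec_\T(\varphi) \subseteq [1,n]$.

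Given this $n$, I would use the test theory $\Testsi = \T_{>n}^{P}$ from \Cref{tab:testtheories}, which is axiomatized over $\Spn$ by $\{P_{m}\rightarrow\psi_{\geq n+1} : m\in\unc\}\cup\{P_{i}\rightarrow\neg P_{j} : i\neq j\}$. According to \Cref{theo:testtheories} (the checkmark for $\Testsi$ in the row of $\class{SI}$ in \Cref{tab:testtheoprop}), $\Testsi$ is a decidable stably infinite theory, so by our assumption $\T \sqcup \Testsi$ would be decidable.

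Next, I would show that deciding $\T \sqcup \Testsi$-satisfiability of $\varphi \wedge P_{m}$ decides membership in the undecidable set $\unc$, yielding the contradiction. By Fontaine's lemma (\Cref{lem-fontaine}), $\varphi \wedge P_{m}$ is $\T \sqcup \Testsi$-satisfiable iff $\spec_{\T}(\varphi) \cap \spec_{\Testsi}(P_m) \neq \emptyset$. Now a routine inspection of the axioms of $\Testsi$ gives $\spec_{\Testsi}(P_{m}) = [1,\aleph_0]$ when $m \notin \unc$ (the predicate $P_m$ imposes no cardinality constraint and a singleton model satisfying only $P_m$ is always admissible, so small cardinalities are possible, and larger ones are obtained by stable infiniteness), while $\spec_{\Testsi}(P_{m}) = [n+1,\aleph_0]$ when $m \in \unc$. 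Since $\varphi$ is $\T$-satisfiable we have $\spec_\T(\varphi) \neq \emptyset$, and since $\spec_\T(\varphi) \subseteq [1,n]$, the intersection is nonempty exactly when $m \notin \unc$. Thus an algorithm for $\T \sqcup \Testsi$-satisfiability would decide $\unc$, contradicting its undecidability.

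The main subtlety is the spectrum computation for $\Testsi$, especially verifying that $\spec_{\Testsi}(P_m) = [1,\aleph_0]$ for $m \notin \unc$; one must simply exhibit, for each finite $k$, a $\Testsi$-interpretation of size $k$ where only $P_m$ holds (the axioms in $\Testsi$ impose no constraint on such an interpretation since $m \notin \unc$). The rest is bookkeeping.
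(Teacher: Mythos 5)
Your proof is correct and follows essentially the same route as the paper's: take a $\T$-satisfiable $\varphi$ witnessing the failure of stable infiniteness, bound its spectrum by compactness, and use $\Testsi$ to reduce membership in the undecidable set $\unc$ to $\T \sqcup \Testsi$-satisfiability of $\varphi \land P_m$. The paper argues the satisfiability equivalence directly rather than routing through Fontaine's lemma and an explicit spectrum computation for $\Testsi$, but this is a difference in exposition, not in substance.
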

\begin{proof}
    Suppose $\T$ can be combined with every decidable stably infinite theory. 
    If $\T$ is not stably infinite, then there is a $\T$-satisfiable formula $\varphi$ such that $\Inf \notin \spec_\T(\varphi)$. 
    By compactness, there is a maximal $n \in \spec_\T(\varphi)$. 
    Then, the formula $\varphi \land P_{k}$ is $\T \sqcup \Testsi$-satisfiable if and only if $k \notin \unc$, which is undecidable: for the converse, notice that if $k \notin \unc$, then the axioms of $\Testsi$ are trivially satisfied by a model in which $P_{k}$ is true and $P_{k'}$ is false for $k'\neq k$. 
    But $\Testsi$ is decidable and stably infinite\reference, so $\T \sqcup \Testsi$ is decidable by our assumption on $\T$, a contradiction.
\end{proof}

The proof shows that a decidable theory $\T$ is stably infinite if and only if it can be combined with the test theories $\{\Testsi : n \in \No\}$; that is $\class{SI} = \Gal(\{\Testsi : n \in \No\})$. We can again improve this characterization to only use a single test theory, exhibiting a $\class{SI}$-complete theory.
\begin{restatable}{theorem}{thmsicomplete} \label{thm-si-complete}
    There is a decidable stably infinite theory $\T'$ such that, for every decidable theory $\T$, if $\T$ can be combined with $\T'$ then $\T$ is stably infinite.
\end{restatable}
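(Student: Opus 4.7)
The plan is to collapse the family of test theories $\{\Testsi : n \in \No\}$ used in the proof of \Cref{thm-f-si} into a single decidable stably infinite theory $\T'$. Concretely, I would take a signature with a doubly indexed family of nullary predicates $\{P_{m,n} : m, n \in \No\}$, and axiomatize $\T'$ by
\[
\{P_{m,n} \rightarrow \psi_{\geq n+1} : m \in \unc,\; n \in \No\} \cup \{P_{m,n} \rightarrow \neg P_{m',n'} : (m,n) \neq (m',n')\}.
\]
Intuitively, the second index $n$ plays the role of the parameter that previously distinguished different members of the family, while the first index $m$ serves as the undecidable ``probe'' that appeared in each $\Testsi$.

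The first step is to verify that $\T'$ is decidable and stably infinite. The argument mirrors that for $\Testsi$: given a quantifier-free formula $\varphi$ over $\T'$'s signature, I would reduce $\T'$-satisfiability to propositional satisfiability of $\varphi$ (treating the finitely many $P_{m,n}$'s appearing in $\varphi$ as Boolean variables) conjoined with the mutual-exclusion constraints. Any propositionally-consistent truth assignment can be realized in an interpretation of arbitrarily large (in particular infinite) cardinality, so the conditional lower bounds $\psi_{\geq n+1}$ are automatically met, regardless of whether a given $m$ is in $\unc$. This simultaneously shows $\T'$ is stably infinite, since every $\T'$-satisfiable formula admits an infinite model.

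Next, I would adapt the argument of \Cref{thm-f-si}. Suppose for contradiction that $\T$ is decidable and combinable with $\T'$ but is not stably infinite. Then there is a $\T$-satisfiable formula $\varphi$ with $\Inf \notin \spec_\T(\varphi)$; by compactness, $\spec_\T(\varphi)$ has a maximum element $n \in \No$. For each $m \in \No$, by Fontaine's lemma the formula $\varphi \wedge P_{m,n}$ is $\T \sqcup \T'$-satisfiable if and only if $\spec_\T(\varphi) \cap \spec_{\T'}(P_{m,n}) \neq \emptyset$. If $m \in \unc$, then $\spec_{\T'}(P_{m,n}) \subseteq [n+1, \Inf]$ is disjoint from $\spec_\T(\varphi) \subseteq [1,n]$; if $m \notin \unc$, only mutual exclusion constrains $P_{m,n}$, so $\spec_{\T'}(P_{m,n}) = [1, \Inf]$ and the intersection is nonempty. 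Hence a decision procedure for $\T \sqcup \T'$ would decide membership in $\unc$, a contradiction.

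The main obstacle is being careful about the decidability of $\T'$: because the axiom set depends on $\unc$ and so cannot be effectively enumerated, one must verify that quantifier-free satisfiability in $\T'$ does not require knowing $\unc$. This reduces to the observation that the cardinality axioms only ever impose \emph{lower} bounds, which are vacuously met in any sufficiently large interpretation, so propositional consistency with mutual exclusion is the only obstruction to $\T'$-satisfiability.
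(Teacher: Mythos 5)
Your proposal is correct and takes essentially the same approach as the paper: both collapse the family $\{\Testsi : n \in \No\}$ into one theory by double-indexing the nullary predicates (the paper writes $P_{n,k}$ where you write $P_{m,n}$, but this is only a notational transposition of the two indices), keeping the mutual-exclusion axioms and the conditional cardinality lower bounds, and then rerunning the argument of Theorem~\ref{thm-f-si} verbatim. One small point to tighten: when you reduce $\T'$-satisfiability to ``propositional satisfiability with mutual exclusion,'' you should also separately check $\Teq$-satisfiability of the non-predicate (equality/disequality) part of the cube, as the paper's appendix does for $\Testsi$; otherwise the sketch is accurate.
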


Since $\Testsi$ is polite, the proof of \Cref{thm-f-si} shows that $\Gal(\class{polite}) = \class{SI}$, where $\class{polite}$ is the set of decidable polite theories. Hence, $Cl(\class{polite}) = \class{SI}$. Intuitively, this means that, for the purpose of theory combination, politeness is no more valuable than stable infiniteness.

\subsection{Combination of Gentle Theories With Theories That Have Computable Finite Spectra}
\label{sec:gentlespec}

Fontaine~\cite{gentle} proved that gentle theories are combinable with theories satisfying a disjunction of three properties. 
This was recently improved in \cite{CADE30}, where we proved that gentle theories are combinable with every decidable theory having computable finite spectra; again in terms of our Galois connection, that means $\class{CFS} \subseteq \Gal(\class{gentle})$ and $\class{gentle}\subseteq\Gal(\class{CFS})$. 
We prove that $\Gal(\class{gentle}) = \class{CFS}$ and that $\Gal(\class{CFS}) = \class{gentle}$, which together imply that $\class{CFS}$ and $\class{gentle}$ are closed.
If a theory is not gentle, then it cannot be combined with every decidable theory with computable finite spectra (and vice versa).

First, we prove that $\Gal(\class{gentle}) = \class{CFS}$ with the following theorem.
\begin{theorem} \label{thm-f-gentle}
    If $\T$ is a decidable theory that can be combined with every gentle theory, then $\T$ has computable finite spectra, i.e. $\Gal(\class{gentle})\subseteq \class{CFS}$.
\end{theorem}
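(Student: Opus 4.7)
The plan is to exhibit a single decidable gentle test theory whose combination with $\T$ encodes spectrum membership. The natural candidate is $\Testcfs$ from \Cref{tab:testtheories}, whose axioms $\{P_{n}\rightarrow \psi_{=n} : n\in\No\}$ force any model that satisfies $P_{n}$ to have exactly $n$ elements. By \Cref{theo:testtheories} (cf.\ \Cref{tab:testtheoprop}), $\Testcfs$ is a decidable gentle theory, so by assumption on $\T$ the combination $\T \sqcup \Testcfs$ is decidable.

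The key observation is then: for every quantifier-free $\Sigma$-formula $\varphi$ and every $n \in \No$, we have $n \in \spec_{\T}(\varphi)$ if and only if $\varphi \wedge P_{n}$ is $\T \sqcup \Testcfs$-satisfiable. For the forward direction, given a $\T$-interpretation $\A \vDash \varphi$ of size $n$, expand it to a $\Sigma \sqcup \Spn$-interpretation by interpreting $P_{n}$ as true and every other $P_{m}$ as false; all axioms $P_{m} \rightarrow \psi_{=m}$ of $\Testcfs$ are then satisfied (trivially for $m\neq n$, and because $|\dom{\A}|=n$ for $m=n$). Conversely, any $\T \sqcup \Testcfs$-model of $\varphi \wedge P_{n}$ has exactly $n$ elements (because $P_{n} \rightarrow \psi_{=n}$) and, restricted to $\Sigma$, is a $\T$-model of $\varphi$, so $n \in \spec_{\T}(\varphi)$.

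Combining these two points gives the desired algorithm: on input $(\varphi,n)$, run the decision procedure for $\T \sqcup \Testcfs$ on $\varphi \wedge P_{n}$ and return its answer. Hence $\T$ has computable finite spectra, i.e.\ $\T \in \class{CFS}$.

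I do not expect a real obstacle here, since the bulk of the work is already packaged in \Cref{theo:testtheories}: everything reduces to (i) confirming that $\Testcfs$ is decidable and gentle and (ii) verifying the equivalence between $n \in \spec_{\T}(\varphi)$ and $\T \sqcup \Testcfs$-satisfiability of $\varphi \wedge P_{n}$. The only subtle point is to make sure the expansion of a $\T$-model to a $\T\sqcup\Testcfs$-model is consistent with \emph{all} axioms of $\Testcfs$ simultaneously; this is immediate because setting all but one $P_{m}$ to false makes every implication $P_{m}\rightarrow\psi_{=m}$ vacuous, and the one we make true corresponds exactly to the size of the chosen model.
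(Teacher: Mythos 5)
Your proof is correct and follows exactly the same approach as the paper: test against $\Testcfs$, use its decidability and gentleness from \Cref{theo:testtheories}, and observe that $n \in \spec_\T(\varphi)$ iff $\varphi \land P_n$ is $\T \sqcup \Testcfs$-satisfiable. The only difference is that you spell out the two directions of that equivalence in detail, whereas the paper asserts it without comment.
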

\begin{proof}
    Suppose $\T$ can be combined with every gentle theory. Let $\varphi$ be a quantifier-free formula and $k \in \No$. Then, we have $k \in \spec_\T(\varphi)$ if and only if $\varphi \land P_k$ is $\T \sqcup \Testcfs$-satisfiable. Since $\Testcfs$ is gentle\reference, $\T \sqcup \Testcfs$ is decidable, so we have an algorithm for checking whether $k \in \spec_\T(\varphi)$, as desired.
\end{proof}

The proof shows that a decidable theory $\T$ has computable finite spectra if and only if it can be combined with the test theory $\Testcfs$. Thus, $\Testcfs$ is $\class{gentle}$-complete.

Next, we prove that $\Gal(\class{CFS}) = \class{gentle}$.
\begin{theorem} \label{thm-f-cfs}
    If $\T$ is a decidable theory that can be combined with every decidable theory with computable finite spectra, then $\T$ is gentle, i.e. $\Gal(\class{CFS})\subseteq\class{gentle}$.
\end{theorem}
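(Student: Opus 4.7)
The plan is to first exploit the hypothesis $\T \in \Gal(\class{CFS})$ to derive strong computability properties of $\T$, then to establish the structural shape that gentleness demands for $\spec_\T(\varphi)$, and finally to build the gentle algorithm.

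First I would show that $\T \in \class{CS}$. Since $\Testcfs, \Tinfty \in \class{CFS}$, Fontaine's lemma applied to $\T \sqcup \Testcfs$ and to $\T \sqcup \Tinfty$ yields, respectively, algorithms deciding $k \in \spec_\T(\varphi)$ for each $k \in \No$ and $\aleph_0 \in \spec_\T(\varphi)$. Next I would argue that for every quantifier-free $\varphi$, $\spec_\T(\varphi)$ is either a finite subset of $\No$ (with $\aleph_0 \notin \spec_\T(\varphi)$) or of the form $(\No \setminus S') \cup \{\aleph_0\}$ for some finite $S' \subset \No$. Suppose not; combining compactness with the computability of finite spectra, the only failing possibilities are $\spec_\T(\varphi) = S' \cup \{\aleph_0\}$ for some finite nonempty $S' \subset \No$ whose complement in $\No$ is infinite, or $\spec_\T(\varphi) = E \cup \{\aleph_0\}$ with both $E$ and $\No \setminus E$ infinite. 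In either case let $S \coloneqq \No \setminus (\spec_\T(\varphi) \cap \No)$, which is an infinite decidable subset of $\No$ by step 1. Instantiating $\Testg \in \class{CFS}$ with this $S$, a direct computation shows $\spec_\T(\varphi) \cap \spec_\Testg(P_n) = \emptyset$ exactly when $F(n) < \aleph_0$, so the decidability of $\T \sqcup \Testg$ would make $\{n \in \No : F(n) = \aleph_0\}$ decidable, contradicting the defining property of $F$.

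Finally I would construct the gentle algorithm. Given a QF formula $\varphi$, first compute $b \coloneqq (\aleph_0 \notin \spec_\T(\varphi))$ using step 1. If $b$ is false (cofinite case), I would appeal to the decidability of $\T \sqcup Th_{\T}$: since $\spec_{Th_{\T}}(P_{\varphi,n}) = \{s_{\varphi,n}\}$, Fontaine's lemma lets us decide whether $s_{\varphi,n} \in \spec_\T(\varphi)$, and the answer is affirmative precisely when $n > |S'|$ (as then $s_{\varphi,n} = \aleph_0 \in \spec_\T(\varphi)$); the least such $n_0$ gives $|S'| = n_0 - 1$, and $S' = \{s_{\varphi,1}, \ldots, s_{\varphi,n_0-1}\}$ is then computable via step 1, so we output $(S', \mathsf{false})$. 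If $b$ is true (finite case), I would introduce the auxiliary theory $\T_\star$ over the signature $\{Q_k\}_{k \in \No}$ of nullary predicates, axiomatized by $\{Q_k \to \psi_{\geq k+1} : k \in \No\}$; a routine verification shows $\T_\star \in \class{CFS}$, hence $\T \sqcup \T_\star$ is decidable. Since $\spec_{\T_\star}(Q_k) = \{m \in \No : m \geq k+1\} \cup \{\aleph_0\}$ and $\aleph_0 \notin \spec_\T(\varphi)$, the Fontaine check $\spec_\T(\varphi) \cap \spec_{\T_\star}(Q_k) = \emptyset$ is equivalent to $\max \spec_\T(\varphi) \leq k$; the least such $k$ is $\max \spec_\T(\varphi)$, which lets us explicitly enumerate $\spec_\T(\varphi)$ using step 1 and output $(\spec_\T(\varphi), \mathsf{true})$.

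The hard part will be the finite case of the algorithm: the standard test theories from \Cref{tab:testtheories} supply membership queries but no effective upper bound on $\max \spec_\T(\varphi)$. The role of the custom auxiliary theory $\T_\star$ is precisely to unlock such a bound, converting the full $\Gal(\class{CFS})$ hypothesis into a uniform-in-$k$ decision procedure for $\max \spec_\T(\varphi) \leq k$.
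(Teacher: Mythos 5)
Your proof is correct and follows the same high-level strategy as the paper: use $\Testcfs$ and $\Tinfty$ to establish $\T\in\class{CS}$, use $\Testg$ with $S=\No\setminus(\spec_\T(\varphi)\cap\No)$ to rule out spectra that are neither finite nor cofinite, and use $Th_\T$ to compute the size of the complement in the cofinite case. The reasoning in each of these steps matches the paper essentially line by line.

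The one genuine divergence is the finite case, which you flag as ``the hard part'' and for which you introduce the auxiliary theory $\T_\star$. Your $\T_\star$ construction does work (it is indeed decidable with computable finite spectra, and $\spec_\T(\varphi)\cap\spec_{\T_\star}(Q_k)=\emptyset$ iff $\max\spec_\T(\varphi)\le k$ given $\aleph_0\notin\spec_\T(\varphi)$), but it is a detour. The paper observes that no test theory is needed here at all: once you know $\aleph_0\notin\spec_\T(\varphi)$, you can find the upper bound by directly searching for the least $n\in\No$ such that $\varphi\land\neq(x_1,\dots,x_n)$ is $\T$-unsatisfiable, using only the decidability of $\T$ itself. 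Your closing remark that ``the standard test theories supply membership queries but no effective upper bound'' reveals a small conceptual gap: you do not need a test theory for that bound; membership queries plus compactness plus direct $\T$-satisfiability checks already suffice. So the finite case is in fact the easiest step, not the hardest.
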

\begin{proof}
    Suppose $\T$ can be combined with every decidable theory with computable finite spectra. Let $\varphi$ be a quantifier-free formula.

    First, we show that $\T$ has computable spectra. Since $\Testcfs$ is decidable and has computable finite spectra\reference, $\T \sqcup \Testcfs$ is decidable. Hence, given $k \in \No$, we can compute whether $k \in \spec_\T(\varphi)$ by checking if $\varphi \land P_k$ is $\T \sqcup \Testcfs$-satisfiable. Thus, $\T$ has computable finite spectra. Further, since $\T_{\infty}$ is decidable and has computable finite spectra\reference, $\T \sqcup \T_{\infty}$ is decidable. Thus, we can compute whether $\Inf \in \spec_\T(\varphi)$ by checking if $\varphi$ is $\T \sqcup \T_{\infty}$-satisfiable.

    Suppose that $\Inf \notin \spec_\T(\varphi)$. Then, by compactness, $\spec_\T(\varphi)$ is a finite subset of $\No$. It remains to show how to compute an explicit representation of this finite set. Since $\T$ has computable finite spectra, it suffices to compute an upper bound on $\spec_\T(\varphi)$. We can do this by searching for the smallest $n \in \No$ such that $\varphi \land \neq(x_1,\dots,x_n)$ is $\T$-unsatisfiable (where the variables $x_i$ are fresh).

    The other case is when $\Inf \in \spec_\T(\varphi)$. We first need to show that $\spec_\T(\varphi)$ is cofinite. Suppose not. Then, $S \coloneqq \No \setminus \spec_\T(\varphi)$ is infinite. Since $\T$ has computable finite spectra, $S$ is a decidable set. Then, $\Testg$ is decidable and has computable finite spectra\reference, so $\T \sqcup \Testg$ is decidable. But $\varphi \land P_n$ is $\T \sqcup \Testg$-satisfiable if and only if $F(n) = \Inf$, which is undecidable, a contradiction.

    We have shown that $\spec_\T(\varphi)$ is cofinite. It remains to actually compute the spectrum, by which we mean compute an explicit representation of its complement. Since $\T$ has computable finite spectra, it suffices to show that, given $\varphi$, we can compute the cardinality of $S \coloneqq \No \setminus \spec_\T(\varphi)$. Since $Th_{\T}$ is decidable and has computable finite spectra\reference, $\T \sqcup Th_{\T}$ is decidable. Now, $\varphi \land P_{\varphi,n}$ is $\T \sqcup Th_{\T}$-satisfiable if and only if $n > |S|$. Thus, by testing formulas of this form for satisfiability, we can compute the cardinality of $S$.
\end{proof}

In the above proof, some of the test theories with which we combine $\T$ (namely, $\Testg$ and $Th_{\T}$) depend on our choice of $\T$. This is no accident. Indeed, there is no $\class{CFS}$-complete theory:
\begin{restatable}{theorem}{thmgentlenotuniform} \label{thm-gentle-not-uniform}
    There is no decidable theory $\T'$ with computable finite spectra such that for every decidable theory $\T$, if $\T$ can be combined with $\T'$, then $\T$ is gentle.
\end{restatable}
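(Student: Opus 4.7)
The plan is to show that no candidate $\T'\in\class{CFS}$ can be $\class{CFS}$-complete, by constructing, for each such $\T'$, a decidable non-gentle theory $\T$ with $\T\sqcup\T'$ decidable. Such a $\T$ witnesses $\Gal(\{\T'\})\supsetneq\class{gentle}$, whence $Cl(\{\T'\})=\Gal(\Gal(\{\T'\}))\subsetneq\Gal(\class{gentle})=\class{CFS}$, contradicting the definition of $\class{CFS}$-complete.

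I would split on whether $\T'\in\class{CS}$. If $\T'$ is also infinitely decidable, I take $\T=\Tinfty$. Then $\Tinfty$ is decidable and, for every $\Tinfty$-satisfiable quantifier-free formula $\varphi$ over $\Sigma_{1}$, $\spec_{\Tinfty}(\varphi)=\{\aleph_{0}\}$, which admits no representation as $S$ or $\N\setminus S$ for finite $S\subset\No$, so $\Tinfty$ is not gentle. For combinability, I would invoke Fontaine's lemma: $\Tinfty\sqcup\T'$ is decidable iff emptiness of $\spec_{\Tinfty}(\varphi_{1})\cap\spec_{\T'}(\varphi_{2})$ is decidable, and this intersection is empty iff either $\varphi_{1}$ is $\Tinfty$-unsatisfiable (decidable) or $\aleph_{0}\notin\spec_{\T'}(\varphi_{2})$ (decidable since $\T'\in\class{ID}$).

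If instead $\T'\in\class{CFS}\setminus\class{CS}$, then $\T'$ itself is not gentle and has some quantifier-free $\varphi^{*}$ for which $\aleph_{0}\in\spec_{\T'}(\varphi^{*})$ is undecidable. Here I would construct $\T$ tailored to $\T'$: take a fresh nullary-predicate signature and axiomatize $\T$ so that it has the finite model property (spectra inside $\No$), with axioms invoking the undecidable set $\unc$ to install non-computable cardinality bounds, while arranging those bounds to be ``orthogonal'' to $\T'$'s spectra. The intended payoff is that the intersection problem for $\T\sqcup\T'$ reduces to finitely many CFS-queries on $\T'$ (decidable), while the exact finite spectrum of each $\T$-formula remains uncomputable, so $\T$ fails gentleness.

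The main obstacle is this second case. I must simultaneously force $\T$ to be non-gentle---which intrinsically requires uncomputable information in its spectra---and keep $\T\sqcup\T'$ decidable, which requires that uncomputable information not interact with $\T'$'s undecidable $\aleph_{0}$-spectra. Balancing these competing demands, by embedding $\T$'s uncomputability in a ``direction'' that $\T'$'s CFS algorithm cannot probe, is the technical heart of the proof.
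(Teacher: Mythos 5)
Your Case 1 (where $\T'$ is also infinitely decidable, so $\T = \Tinfty$ does the job) is correct, though it is the easy part. The genuine gap is Case 2, which you explicitly acknowledge you have not solved: you describe desiderata (``install non-computable cardinality bounds \ldots orthogonal to $\T'$'s spectra'') without giving a construction or even a plausible mechanism, and this case is precisely where the theorem's content lies. Moreover, the guiding intuition you state---that making $\T$ non-gentle ``intrinsically requires uncomputable information in its spectra''---is false, and this misconception is blocking you: a theory can have perfectly \emph{computable} spectra and still fail gentleness, simply by having some spectrum that is infinite but not cofinite (as a subset of $\N$).

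The paper's proof exploits exactly that observation and needs no case split. For arbitrary $\T' \in \class{CFS}$, it takes $\T = \T_S$ over the empty signature, axiomatized by $\{\lnot\psi_{=n} : n \notin S\}$, so that $\spec_{\T_S}(\top) = S \cup \{\aleph_0\}$. It then constructs a \emph{decidable} set $S \subseteq \No$ by a priority/diagonalization algorithm that alternates between processing $\Sigma$-formulas and processing natural numbers. The algorithm guarantees (i) $S$ is not cofinite (each round skips at least one number), so $\T_S$ is not gentle; and (ii) for each $\Sigma$-formula $\varphi$, it computably and irrevocably classifies whether $\spec_{\T'}(\varphi) \cap (S \cup \{\aleph_0\}) = \emptyset$ by the time $\varphi$ is processed, fulfilling ``promises'' to satisfy by adding future numbers to $S$ when $\T'$ has arbitrarily large finite models of $\varphi$. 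By Fontaine's lemma, (ii) makes $\T_S \sqcup \T'$ decidable. Note that $\T_S$ never has the finite model property ($\aleph_0$ is always in the spectrum), which is opposite to the FMP design you proposed; and the only ``difficult'' spectrum of $\T_S$ is that of $\top$, whose shape (infinite, not cofinite) rather than its uncomputability is what defeats gentleness. To complete your proof you would need to supply this (or an equivalent) construction; your proposal as written reduces the theorem to an unresolved subproblem that is the whole of the difficulty.
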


\section{New Combination Theorems} \label{sec-new}

While \Cref{sec-known} studied the sharpness of existing combination theorems from the literature, in this section
we propose new combination theorems, allowing for new
ways to perform theory combination.
The new theorems are found by utilizing our Galois connection $\Gal$, its closure operator $Cl$, and its induced lattice of theory combination properties $\mathcal{C}$, as defined in \Cref{sec-galois}.
In \Cref{sec-sm-cs}, we provide a combination theorem that is based on smooth theories with computable spectra and show
that such theories can be combined with all infinitely 
decidable theories.
Then, in \Cref{sec-cs}, we prove a new symmetric combination theorem, based
on computable spectra, namely that every two decidable theories
with computable spectra can be combined.
For each of these combination methods, we also prove sharpness, in the spirit of \Cref{sec-known}.
In \Cref{sec-n-decidable} we take this technique a step further, and provide, at once, countably many distinct 
and sharp combination theorems, based on a
generalization of shiny theories.
Finally, in \Cref{sec-quasi-gentle}, we present a larger set of sharp combination
theorems, which is uncountable.
These combination theorems are based on a variant of gentleness.
All these new results are, in one way or another, motivated by the Galois connection $G$:
\Cref{sec-sm-cs,sec-n-decidable} rely on the closure operator $Cl$ being applied to some set of theories which, by itself, is not closed;
\Cref{sec-cs} on the algebraic structure of the lattice $\mathcal{C}$ induced by $G$, specifically taking meets and joints of known sets of theories;
and \Cref{sec-quasi-gentle} is motivated by a desire to demonstrate the structural richness of $\mathcal{C}$ by exhibiting uncountably many combination theorems.

\subsection{Combination of SM+CS Theories With Infinitely Decidable Theories} \label{sec-sm-cs}

In \cite{CADE30}, we proved that decidable smooth theories with a computable minimal model function (call this set of theories $\class{SM+MM}$) are combinable with every decidable theory that is infinitely decidable. 
It turns out that this combination theorem is not sharp, something our techniques helped us discover. Indeed, $\class{SM+MM}$ is not closed with respect to the Galois connection (although $\class{ID}$ turns out to be), which means that there is an improved combination theorem between $\class{ID}$ and $G(\class{ID}) = Cl(\class{SM+MM})$. We show that $Cl(\class{SM+MM}) = \class{SM+CS}$. To that end, we first prove a new combination theorem improving the one from \cite{CADE30} and then apply our test theories to show that the combination theorem is sharp.

\begin{theorem} \label{thm-sm-cs}
    Let $\T_1$ and $\T_2$ be decidable theories. Suppose that $\T_1$ is smooth and has computable spectra and that $\T_2$ is infinitely decidable. Then, $\T_1 \sqcup \T_2$ is decidable, meaning $\class{ID}\subseteq\Gal(\class{SM+CS})$ and $\class{SM+CS}\subseteq\Gal(\class{ID})$.
\end{theorem}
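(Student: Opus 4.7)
The plan is to invoke Fontaine's lemma (\Cref{lem-fontaine}) and reduce the problem to giving an algorithm that, on input conjunctions of literals $\varphi_1$ over $\Sigma_1$ and $\varphi_2$ over $\Sigma_2$, decides whether $\spec_{\T_1}(\varphi_1) \cap \spec_{\T_2}(\varphi_2) = \emptyset$. Both inclusions $\class{ID} \subseteq \Gal(\class{SM+CS})$ and $\class{SM+CS} \subseteq \Gal(\class{ID})$ then follow symmetrically from decidability of $\T_1 \sqcup \T_2$, since disjoint combination is commutative.

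The key structural observation is that smoothness, together with the L\"owenheim--Skolem theorem, forces $\spec_{\T_1}(\varphi_1)$ to be an upward-closed subset of $\N$: it is either empty, or of the form $\{m \in \N : m \ge n_1\}$ for some $n_1 \in \N$ (allowing $n_1 = \aleph_0$, which corresponds to the spectrum $\{\aleph_0\}$). In particular, if $\spec_{\T_1}(\varphi_1)$ is nonempty then it contains $\aleph_0$. This ``tail'' shape is what makes the intersection question tractable even though we have very little computational access to $\spec_{\T_2}(\varphi_2)$.

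Concretely, I would proceed in four steps. \emph{Step 1:} Use the computable spectra of $\T_1$ (in particular, its infinite decidability) to test whether $\aleph_0 \in \spec_{\T_1}(\varphi_1)$; if not, the spectrum is empty and the intersection is empty. \emph{Step 2:} Use the infinite decidability of $\T_2$ to test whether $\aleph_0 \in \spec_{\T_2}(\varphi_2)$; if so, $\aleph_0$ lies in both spectra and the intersection is nonempty. \emph{Step 3:} Otherwise $\aleph_0 \notin \spec_{\T_2}(\varphi_2)$, so by compactness $\spec_{\T_2}(\varphi_2) \subseteq \No$ is finite; using the decidability of $\T_2$ on the quantifier-free formulas $\varphi_2 \wedge \neq(x_1,\ldots,x_k)$ for successive $k$, locate the maximum element $N$ of this spectrum (or declare it empty, in which case the intersection is empty). \emph{Step 4:} Use the computable finite spectra of $\T_1$ to test whether $N \in \spec_{\T_1}(\varphi_1)$, and return ``nonempty'' iff it does. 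The tail shape from Step~0 justifies this last step: if $N \in \spec_{\T_1}(\varphi_1)$, then $N$ witnesses nonempty intersection; and if $N \notin \spec_{\T_1}(\varphi_1)$, then since $\spec_{\T_1}(\varphi_1)$ is a tail, every element of $\spec_{\T_2}(\varphi_2) \subseteq \{1,\ldots,N\}$ lies strictly below the tail's minimum, so the intersection is empty.

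The subtle point I expect to be the main obstacle is handling the case in which $\spec_{\T_1}(\varphi_1) = \{\aleph_0\}$: here we cannot compute the ``minimum finite cardinality'' of the $\T_1$-spectrum by naive search, since none exists. The critical trick is that we never actually need that minimum; reducing everything to the single concrete test ``$N \in \spec_{\T_1}(\varphi_1)$'' for a finite $N$ obtained from $\T_2$ avoids any unbounded search on the $\T_1$-side and relies only on $\T_1$ having computable \emph{finite} spectra. Once that reduction is in place, the remaining work (verifying termination of the search in Step~3 via compactness, and checking correctness in each branch) is routine.
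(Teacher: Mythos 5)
Your proof is correct and follows essentially the same approach as the paper's: reduce to Fontaine's lemma, use smoothness to note that $\spec_{\T_1}(\varphi_1)$ is an upward-closed tail, branch on whether $\aleph_0 \in \spec_{\T_2}(\varphi_2)$, and in the finite case locate the maximal element of $\spec_{\T_2}(\varphi_2)$ by unbounded search and test membership in $\spec_{\T_1}(\varphi_1)$. The only differences are cosmetic (you test $\aleph_0 \in \spec_{\T_1}(\varphi_1)$ rather than $\T_1$-satisfiability of $\varphi_1$, and you explicitly handle the case $\spec_{\T_2}(\varphi_2) = \emptyset$, which the paper's prose glosses over but its pseudocode covers).
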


In light of Fontaine's lemma, the following implies \Cref{thm-sm-cs}:
\begin{lemma}
    Let $\T_1$ and $\T_2$ be decidable theories over signatures $\Sigma_1$ and $\Sigma_2$ respectively. Suppose that $\T_1$ is smooth and has computable spectra and that $\T_2$ is infinitely decidable. Then, it is decidable whether $\spec_{\T_1}(\varphi_1) \cap \spec_{\T_2}(\varphi_2) = \emptyset$, where $\varphi_1$ and $\varphi_2$ are conjunctions of literals over the signatures $\Sigma_1$ and $\Sigma_2$ respectively.
\end{lemma}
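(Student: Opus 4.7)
The plan is to give a decision procedure for testing whether $\spec_{\T_1}(\varphi_1) \cap \spec_{\T_2}(\varphi_2) = \emptyset$, which suffices by Fontaine's lemma. The strategy is to handle the cardinal $\aleph_0$ separately and then use compactness to reduce the remaining case to a bounded finite search.

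First, I would test whether $\aleph_0 \in \spec_{\T_1}(\varphi_1)$; this is possible because $\T_1$ has computable spectra. By the equivalent formulation of smoothness recorded in \Cref{sec:thcomb}, if $\varphi_1$ is $\T_1$-satisfiable then every cardinal $m \geq \minmod_{\T_1}(\varphi_1)$ belongs to $\spec_{\T_1}(\varphi_1)$, so in particular $\aleph_0 \in \spec_{\T_1}(\varphi_1)$. Hence $\aleph_0 \notin \spec_{\T_1}(\varphi_1)$ is equivalent to $\spec_{\T_1}(\varphi_1) = \emptyset$, in which case the intersection is trivially empty and I return ``empty''. Otherwise, $\aleph_0 \in \spec_{\T_1}(\varphi_1)$, and I would test whether $\aleph_0 \in \spec_{\T_2}(\varphi_2)$ using the infinite decidability of $\T_2$; if so, $\aleph_0$ lies in the intersection and I return ``nonempty''.

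The remaining case is $\aleph_0 \in \spec_{\T_1}(\varphi_1)$ but $\aleph_0 \notin \spec_{\T_2}(\varphi_2)$. By the compactness corollary from \Cref{sec:spec}, the latter condition forces $\spec_{\T_2}(\varphi_2)$ to be a finite (hence bounded) subset of $\No$. I would compute an upper bound $N$ by successively testing $\T_2$-satisfiability of $\varphi_2 \wedge \psi_{\geq n}$ for $n=1,2,\ldots$, using decidability of $\T_2$; this loop terminates precisely because the spectrum is finite. For each $k \in \{1,\ldots,N\}$ I then ask, using computable spectra of $\T_1$ and decidability of $\T_2$ applied to $\varphi_1 \wedge \psi_{=k}$ and $\varphi_2 \wedge \psi_{=k}$, whether $k$ lies in both spectra, and return ``nonempty'' iff some such $k$ is found.

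The only subtle step is the first one: leveraging smoothness to convert a satisfiability query on $\T_1$ into a computable membership query at $\aleph_0$. Once that reduction is in hand, everything else is a routine orchestration of the available decidability hypotheses, with compactness providing the finite bound needed to make the tail of the algorithm terminate.
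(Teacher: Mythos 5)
Your opening moves are sound but slightly roundabout (testing $\aleph_0 \in \spec_{\T_1}(\varphi_1)$ via infinite decidability is just an indirect way of testing $\T_1$-satisfiability, which $\T_1$'s decidability gives directly), and the handling of the case $\aleph_0 \in \spec_{\T_2}(\varphi_2)$ is correct. However, the final loop has a genuine gap: you propose to check, for each $k \le N$, whether $k \in \spec_{\T_2}(\varphi_2)$ by ``decidability of $\T_2$ applied to $\varphi_2 \wedge \psi_{=k}$.'' Decidability of $\T_2$ only gives an algorithm for $\T_2$-satisfiability of \emph{quantifier-free} formulas, and $\psi_{=k}$ is not quantifier-free; more importantly, there is no quantifier-free way to express an upper bound on the model size, so under the stated hypotheses (decidable plus infinitely decidable, but \emph{not} computable finite spectra) membership of a finite $k$ in $\spec_{\T_2}(\varphi_2)$ is simply not decidable. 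A related smaller issue is your use of $\psi_{\ge n}$ to find the bound $N$; the quantifier-free surrogate $\varphi_2 \wedge \neq(x_1,\dots,x_n)$ should be used instead, which works because deciding ``has a model of size at least $n$'' only requires a lower-bound constraint.

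The gap is repairable, and the repair is exactly where the paper goes: after computing $N = \max \spec_{\T_2}(\varphi_2)$ via the distinctness formula, there is no need to probe $\spec_{\T_2}$ at each point. Since $\T_1$ is smooth, $\spec_{\T_1}(\varphi_1)$ is upward-closed, so the intersection is nonempty if and only if $N \in \spec_{\T_1}(\varphi_1)$: if $N \in \spec_{\T_1}(\varphi_1)$ then $N$ is in both; if $N \notin \spec_{\T_1}(\varphi_1)$ then no $m \le N$ is in $\spec_{\T_1}(\varphi_1)$ either, and $\spec_{\T_2}(\varphi_2) \subseteq [1,N]$. That single membership query at the finite cardinality $N$ is covered by $\T_1$'s computable spectra. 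Your proposal does exploit compactness and smoothness in the right spirit, but it places the membership burden on $\T_2$, where the hypotheses do not support it, rather than on $\T_1$, where they do.
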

\begin{proof}
    Let $\varphi_1$ and $\varphi_2$ be conjunctions of literals over the signatures $\Sigma_1$ and $\Sigma_2$ respectively. We describe our decision procedure as follows. If $\varphi_1$ is $\T_1$-unsatisfiable, then $\spec_{\T_1}(\varphi_1) \cap \spec_{\T_2}(\varphi_2) = \emptyset$, so suppose $\varphi_1$ is $\T_1$-satisfiable.

    If $\Inf \in \spec_{\T_2}(\varphi_2)$ (which we can check since $\T_2$ is infinitely decidable), then $\Inf \in \spec_{\T_1}(\varphi_1) \cap \spec_{\T_2}(\varphi_2)$, since $\T_1$ is smooth. Otherwise, by compactness, there is a maximal $k \in \spec_{\T_2}(\varphi_2)$. We can compute this $k$ by checking the $\T_2$-satisfiability of $\varphi_2 \land \neq(x_1,\dots,x_n)$ (where the variables $x_i$ are fresh) for increasing values of $n$. Then, since $\T_1$ is smooth, we have $\spec_{\T_1}(\varphi_1) \cap \spec_{\T_2}(\varphi_2) = \emptyset$ if and only if $k \notin \spec_{\T_1}(\varphi_1)$, which we can check since $\T_1$ has computable spectra.
\end{proof}

Since every decidable smooth theory with a computable minimal model function has computable spectra (each nonempty spectrum is of the form $\{m \in \N : m \ge k\}$, where $k \in \N$ can be computed), \Cref{thm-sm-cs} is indeed a strengthening of the result from \cite{CADE30} mentioned at the start of this subsection. To show that it is a \emph{strict} improvement, the following example shows that there is a smooth theory with computable spectra whose minimal model function is not computable.
\begin{example} \label{ex-cm-cs}
    Let $\Tgr$ be the theory axiomatized by $\{P_{n}\rightarrow \psi_{\geq F(n)} : n\in\No,\; F(n)\in\No\} \cup \{P_n \rightarrow \psi_{\geq m} : m,n \in \No,\; F(n) = \aleph_0\}$, where $F$ is the function defined in \Cref{sec:sometests}. 
    Then, $\Tgr$ is smooth and has computable spectra, but its minimal model function is not computable.
\end{example}

Now that we've proved that $\class{ID} \subseteq \Gal(\class{SM+CS})$, we prove the converse.
\begin{theorem} \label{thm-f-sm-cs}
    If $\T$ is a decidable theory that can be combined with every smooth theory with computable spectra, then $\T$ is infinitely decidable, i.e. $\Gal(\class{SM+CS})\subseteq \class{ID}$.
\end{theorem}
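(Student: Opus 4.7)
The plan is to apply the same test-theory strategy used throughout \Cref{sec-known}: pick a specific $\T' \in \class{SM+CS}$ so that decidability of $\T \sqcup \T'$ is tantamount to deciding whether $\aleph_0 \in \spec_\T(\varphi)$ for arbitrary quantifier-free $\varphi$. The natural candidate is $\Tinfty$, the $\Sigma_1$-theory axiomatized by $\{\psi_{\geq n} : n \in \No\}$, whose models are exactly the infinite sets. As recorded in \Cref{tab:testtheoprop}, $\Tinfty$ is decidable, smooth (any domain of cardinality $\geq$ the size of a given model is still infinite, hence a $\Tinfty$-model), and has computable spectra (every $\Tinfty$-satisfiable quantifier-free $\Sigma_1$-formula has spectrum exactly $\{\aleph_0\}$, and $\Tinfty$-satisfiability of such a formula reduces to checking that its induced arrangement on its variables is consistent). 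Thus $\Tinfty \in \class{SM+CS}$.

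By the hypothesis on $\T$, the combination $\T \sqcup \Tinfty$ is decidable. Given a quantifier-free $\Sigma$-formula $\varphi$, I would then observe that a $(\T \sqcup \Tinfty)$-interpretation is simply a single structure whose domain is simultaneously a $\T$-model and a $\Tinfty$-model, i.e., a $\T$-model with an infinite domain. Hence $\varphi$ is $(\T \sqcup \Tinfty)$-satisfiable if and only if there exists an infinite $\T$-interpretation of $\varphi$, which by \Cref{LowenheimSkolem} holds if and only if $\aleph_0 \in \spec_\T(\varphi)$. The decision procedure for $\T \sqcup \Tinfty$ therefore yields a decision procedure for membership of $\aleph_0$ in $\spec_\T(\varphi)$, witnessing that $\T$ is infinitely decidable.

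The main (and essentially only) obstacle is choosing the right test theory; once $\Tinfty$ is identified and verified to be in $\class{SM+CS}$, the reduction from $(\T \sqcup \Tinfty)$-satisfiability of $\varphi$ to infinite $\T$-satisfiability of $\varphi$ is immediate from the disjointness of the signatures. Combined with \Cref{thm-sm-cs}, this establishes $\Gal(\class{SM+CS}) = \class{ID}$ and, together with the analogous direction, shows that $\class{ID}$ and $\class{SM+CS}$ form a sharp pair under the Galois connection $\Gal$.
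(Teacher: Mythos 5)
Your proof is correct and takes essentially the same approach as the paper's: pick $\Tinfty$ as the test theory, verify $\Tinfty \in \class{SM+CS}$, and observe that $\T \sqcup \Tinfty$-satisfiability of $\varphi$ is equivalent to $\aleph_0 \in \spec_\T(\varphi)$. The only difference is that you spell out in slightly more detail (via L\"owenheim--Skolem and the structure of disjoint combinations over the empty signature $\Sigma_1$) why that equivalence holds, which the paper treats as immediate.
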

\begin{proof}
    Suppose $\T$ can be combined with every smooth theory with computable spectra. Since $\T_\infty$ is a smooth theory with computable spectra (\Cref{theo:testtheories}), $\T \sqcup \T_\infty$ is decidable. But $\varphi$ is $\T \sqcup \T_\infty$-satisfiable if and only if $\Inf \in \spec_\T(\varphi)$, so $\T$ is infinitely decidable.
\end{proof}

The proof shows that a decidable theory $\T$ is infinitely decidable if and only if it can be combined with the test theory $\T_\infty$. Thus, $\T_\infty$ is $\class{SM+CS}$-complete.

\begin{theorem} \label{thm-f-id}
    If $\T$ is a decidable theory that can be combined with every decidable theory that is infinitely decidable, then $\T$ is smooth and has computable spectra, i.e. $\Gal(\class{ID})\subseteq \class{SM+CS}$.
    
\end{theorem}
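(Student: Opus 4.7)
The plan is to mirror the sharpness proofs from \Cref{sec-known}, especially the proof of \Cref{thm-shiny-tight}, by probing $\T$ with carefully chosen test theories from \Cref{tab:testtheories} that all lie in $\class{ID}$. Specifically, I would use $\Testcfs$ to force computable finite spectra, $\Tinfty$ to force infinite decidability, and the family $\Testsm$ (for $m<n$ in $\No$) to force smoothness. The $\class{ID}$-membership of each of these test theories is certified by \Cref{theo:testtheories} (see the relevant column of \Cref{tab:testtheoprop}), so the hypothesis on $\T$ guarantees that each of the combinations $\T \sqcup \Testcfs$, $\T \sqcup \Tinfty$, and $\T \sqcup \Testsm$ is decidable.

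For the spectra half, I would argue exactly as in the proofs of \Cref{thm-f-cfs} and \Cref{thm-f-sm-cs}: the axioms of $\Testcfs$ force any model satisfying $P_n$ to have exactly $n$ elements, so $n \in \spec_\T(\varphi)$ if and only if $\varphi \wedge P_n$ is $\T \sqcup \Testcfs$-satisfiable; similarly $\Tinfty$ has only infinite models, so $\varphi$ is $\T \sqcup \Tinfty$-satisfiable if and only if $\Inf \in \spec_\T(\varphi)$. Combined, these yield computable spectra.

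For smoothness, I would argue by contradiction, reusing the template from the last paragraph of the proof of \Cref{thm-shiny-tight}. If $\T$ fails smoothness, there exist $\varphi$ and cardinals $m < n$ in $\N$ with $m \in \spec_\T(\varphi)$ and $n \notin \spec_\T(\varphi)$. A standard compactness reduction lets us assume $m, n \in \No$: either $n$ is already finite, or $n = \Inf$, in which case the corollary of \Cref{compactness} applied to the contrapositive gives that $\spec_\T(\varphi) \cap \No$ is finite and nonempty (it contains $m$), so taking its maximum $k$ and replacing $(m,n)$ by $(k, k+1)$ yields finite cardinalities. Now $\Testsm$ is infinitely decidable for free, since its models can only have cardinalities $m$ or $n$ (hence $\Inf$ is never in any spectrum). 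And $\varphi \wedge P_k$ is $\T \sqcup \Testsm$-satisfiable if and only if $k \notin \unc$: when $k \in \unc$, the axiom $P_k \rightarrow \psi_{=n}$ forces cardinality $n$, which is excluded because $n \notin \spec_\T(\varphi)$; when $k \notin \unc$, any cardinality-$m$ $\T$-model of $\varphi$ combined with the assignment making $P_k$ true and all other $P_j$ false witnesses satisfiability. This contradicts the decidability of $\T \sqcup \Testsm$ and the undecidability of $\unc$.

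I do not expect a significant obstacle: the proof is essentially dual to the shiny sharpness proof but with the test theories restricted to lie in $\class{ID}$, and \Cref{tab:testtheoprop} already records that the three test theories we need satisfy that restriction. The only place where some care is required is the compactness reduction in the smoothness step to ensure $m, n \in \No$ before invoking the decidability of $\T \sqcup \Testsm$.
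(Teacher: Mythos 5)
Your proposal is correct and follows essentially the same route as the paper: both probe $\T$ with $\Testcfs$ and the family $\Testsm$, both argue smoothness by a compactness reduction followed by the undecidability of $\unc$, and both justify $\Testsm \in \class{ID}$ by noting it has no infinite models. The only real difference is that the paper does not introduce $\Tinfty$ as a test theory: it first proves smoothness and then observes that a decidable smooth theory is automatically infinitely decidable (since for such a theory $\varphi$ is $\T$-satisfiable iff $\aleph_0 \in \spec_\T(\varphi)$), which spares one test theory and leaves only computable finite spectra to establish via $\Testcfs$.
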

\begin{proof}
    Suppose $\T$ can be combined with every decidable theory that is infinitely decidable.

    First, we show that $\T$ is smooth. If not, then for some formula $\varphi$, we have $m \in \spec_\T(\varphi)$ and $n \notin \spec_\T(\varphi)$ for some $m,n \in \N$ with $m < n$. In fact, by compactness, we can take $m,n \in \No$. Then, $\varphi \land P_{k}$ is $\T \sqcup \Testsm$-satisfiable if and only if $k \notin \unc$, which is undecidable since $\unc$ is undecidable. This contradicts our assumption that $\T$ can be combined with every decidable theory that is infinitely decidable, given that $\Testsm$ is infinitely decidable (\Cref{theo:testtheories}).

    Next, we show that $\T$ has computable spectra. Since $\T$ is decidable and smooth, it is infinitely decidable, so it remains to show that $\T$ has computable finite spectra. Let $\varphi$ be a quantifier-free formula and $k \in \No$. Then, we have $k \in \spec_\T(\varphi)$ if and only if $\varphi \land P_k$ is $\T \sqcup \Testcfs$-satisfiable. Since $\Testcfs$ is decidable and infinitely decidable (\Cref{theo:testtheories}), $\T \sqcup \Testcfs$ is decidable, so we have an algorithm for checking whether $k \in \spec_\T(\varphi)$, as desired.
\end{proof}

The proof shows that a decidable theory $\T$ is smooth and has computable spectra if and only if it can be combined with the test theories $\{\Testcfs\} \cup \{\Testsm : m,n \in \No,\; m<n\}$; that is, $\class{SM+CS} = \Gal(\{\Testcfs\} \cup \{\Testsm : m,n \in \No,\; m<n\})$. We can improve this characterization to only use a single test theory, exhibiting a $\class{ID}$-complete theory.
\begin{restatable}{theorem}{thmidcomplete} \label{thm-id-complete}
    There is a decidable theory $\T'$ that is infinitely decidable and such that for every decidable theory $\T$, if $\T$ can be combined with $\T'$, then $\T$ is smooth and has computable spectra.
\end{restatable}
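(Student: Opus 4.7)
The plan is to build a single theory $\T'$ that simultaneously mimics $\Testcfs$ and every $\Testsm$, using mutually exclusive ``selector'' predicates to keep the sub-axiomatizations from interfering. I would let $\Sigma'$ consist of nullary predicates $\{P_n : n \in \No\}$ and $\{Q_{m,n,k} : m,n,k \in \No,\; m < n\}$, and axiomatize $\T'$ by: (i) any two distinct selectors are mutually exclusive; (ii) $P_n \to \psi_{=n}$ for each $n$; (iii) $Q_{m,n,k} \to (\psi_{=m} \vee \psi_{=n})$ for each $m < n$ and each $k$; and (iv) $Q_{m,n,k} \to \psi_{=n}$ for each $m < n$ and each $k \in \unc$. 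The key design choice is indexing the $Q$-axioms by $(m,n)$ so that different smoothness probes do not clash.

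First I would verify that $\T'$ is decidable and infinitely decidable. A quantifier-free $\varphi$ over $\Sigma'$ forces at most one selector true, since selectors are mutually exclusive. If $\varphi$ forces no selector, any sufficiently large domain satisfies it; if it forces some $P_n$, one checks satisfiability on a size-$n$ domain; and if it forces some $Q_{m,n,k}$, it suffices to check satisfiability in a size-$n$ model, because $n > m$ makes any size-$m$ satisfying assignment extend to size $n$ and every size-$n$ model vacuously validates axiom (iv). This sidesteps the undecidable test ``$k \in \unc$''. Infinite decidability is immediate: any infinite model must set all selectors to false, so $\aleph_0 \in \spec_{\T'}(\varphi)$ iff the formula obtained by forcing every selector false is satisfiable, which is a decidable SAT-style check on equalities.

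For the main claim, I would assume $\T$ is decidable and $\T \sqcup \T'$ is decidable, and then mirror the argument of \Cref{thm-f-id} using $\T'$ in place of the family of test theories. The $P_n$ predicates witness computable finite spectra: $n \in \spec_\T(\varphi)$ iff $\varphi \wedge P_n$ is $\T \sqcup \T'$-satisfiable. The $Q_{m,n,k}$ predicates witness smoothness: if $\T$ were not smooth, compactness would provide $m < n$ in $\No$ and $\varphi$ with $m \in \spec_\T(\varphi)$ but $n \notin \spec_\T(\varphi)$, and then $\varphi \wedge Q_{m,n,k}$ would be $\T \sqcup \T'$-satisfiable iff $k \notin \unc$, reducing the undecidable $\unc$ problem to $\T \sqcup \T'$-satisfiability. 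Hence $\T$ is smooth and has computable finite spectra; since a decidable smooth theory is automatically infinitely decidable (any nonempty spectrum contains $\aleph_0$ by smoothness), $\T$ has computable spectra.

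The main obstacle will be the decidability verification for $\T'$: its axioms mention the undecidable set $\unc$, so one must show that $\T'$ is nevertheless decidable. The key observation---inherited from the decidability argument for $\Testsm$ via \Cref{theo:testtheories}---is that every size-$n$ interpretation vacuously satisfies axiom (iv) regardless of whether $k \in \unc$, and quantifier-free satisfiability is monotone in domain size; hence satisfiability in the presence of a $Q_{m,n,k}$-selector always reduces to satisfiability in size $n$, avoiding any appeal to $\unc$.
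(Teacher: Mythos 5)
Your proposal is correct and is essentially the paper's own proof: it builds exactly the same theory $\T'$ (selectors $P_n$ for $\Testcfs$, selectors $Q_{m,n,k}$ for the $\Testsm$ family, with mutual exclusion), uses the same decidability argument that a $Q_{m,n,k}$-selector reduces satisfiability to a size-$n$ check and thus avoids querying $\unc$, and then replays the reduction from the proof of \Cref{thm-f-id}. The only cosmetic difference is that you make the $P_n$'s explicitly mutually exclusive, whereas the paper leaves this implicit (it already follows from $P_i \to \psi_{=i}$); this has no effect on the argument.
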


We have shown that $\class{SM+CS}$ and $\class{ID}$ are closed. Since $\T_{\infty}$ is smooth and has a computable minimal model function  by \Cref{theo:testtheories}, the proof of \Cref{thm-f-sm-cs} shows that $\Gal(\class{SM+MM}) = \class{ID}$. Hence, $Cl(\class{SM+MM}) = \class{SM+CS}$ (but also $Cl(\class{ID})=\class{ID}$), giving a precise sense in which \Cref{thm-sm-cs} is the sharp version of the weaker combination theorem from \cite{CADE30}.

\subsection{Combination of Theories That Have Computable Spectra} \label{sec-cs}

We prove a new combination theorem stating that any two decidable theories with computable spectra are combinable; that is $\class{CS} \subseteq \Gal(\class{CS})$. Then, we prove that, in fact, $\Gal(\class{CS}) = \class{CS}$. Thus, in addition to the Nelson--Oppen theorem, we have another symmetric combination theorem (i.e., a combination theorem imposing the same requirement on both component theories).

We start by giving some motivation to show how one could have discovered this combination theorem using our abstract framework. The idea is that by taking meets and joins of closed sets of theories, we can discover new combination theorems. By definition, we have $\class{CS} = \class{CFS} \wedge \class{ID}$, so we know that $\class{CS}$ is closed. Thus, purely by abstract considerations, we know that there is a new combination theorem involving $\class{CS}$. We have
\[
    \Gal(\class{CS}) = \Gal(\class{CFS} \wedge \class{ID}) = \Gal(\class{CFS}) \vee \Gal(\class{ID}) = \class{gentle} \vee \class{SM+CS}.
\]
Determining the join of two sets of theories is not totally straightforward, but a modicum of guesswork can lead us to the right answer. We are looking for a closed set of theories containing both $\class{gentle}$ and $\class{SM+CS}$. Since $\class{CS}$ satisfies those criteria, it is only natural to conjecture that $\class{gentle} \vee \class{SM+CS} = \class{CS}$. And indeed, this conjecture is correct, as we now proceed to show.

\begin{theorem} \label{thm-cs}
    Let $\T_1$ and $\T_2$ be decidable theories with computable spectra. Then, $\T_1 \sqcup \T_2$ is decidable, meaning $\class{CS} \subseteq \Gal(\class{CS})$.
\end{theorem}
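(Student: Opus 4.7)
The plan is to invoke Fontaine's lemma (\Cref{lem-fontaine}) to reduce the decidability of $\T_1 \sqcup \T_2$ to deciding, given conjunctions of literals $\varphi_1, \varphi_2$ over the respective signatures, whether $\spec_{\T_1}(\varphi_1) \cap \spec_{\T_2}(\varphi_2) = \emptyset$. Both theories being in $\class{CS}$ gives us exactly the two ingredients we need: infinite decidability lets us test membership of $\aleph_0$ in each spectrum, and computable finite spectra lets us test membership of any $n \in \No$.

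The decision procedure proceeds as follows. First, using infinite decidability, check whether $\aleph_0 \in \spec_{\T_1}(\varphi_1)$ and $\aleph_0 \in \spec_{\T_2}(\varphi_2)$; if both hold, then the intersection contains $\aleph_0$ and is nonempty. Otherwise, without loss of generality, $\aleph_0 \notin \spec_{\T_1}(\varphi_1)$. By compactness (specifically, the corollary stated in \Cref{sec:spec}), $\spec_{\T_1}(\varphi_1)$ must then be a finite subset of $\No$. We can compute an upper bound $N$ on this set by searching for the smallest $n \in \No$ such that $\varphi_1 \land {\neq}(x_1,\dots,x_{n+1})$ is $\T_1$-unsatisfiable (using decidability of $\T_1$, with fresh variables $x_i$); such an $n$ exists precisely because the spectrum is finite. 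Finally, for each $k \in [1,N]$, use computable finite spectra for both theories to check whether $k \in \spec_{\T_1}(\varphi_1) \cap \spec_{\T_2}(\varphi_2)$. Return that the intersection is empty if and only if no such $k$ is found.

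The main obstacle is essentially conceptual rather than technical: one must recognize that the asymmetric case analysis is exhaustive. If neither spectrum contains $\aleph_0$, the same bound computation works on either side, so one can simply branch on whichever check for $\aleph_0$ fails first. The handling of the edge case where $\varphi_1$ is $\T_1$-unsatisfiable (so $\spec_{\T_1}(\varphi_1) = \emptyset$) falls out naturally: the search for $N$ terminates at $n = 0$, and the loop over $k \in [1,0]$ is vacuous, correctly returning that the intersection is empty. Given the ease of the algorithm, the real conceptual payoff is the motivation sketched above: \Cref{thm-cs} is not found by ad hoc cleverness but is forced by the identity $\Gal(\class{CS}) = \Gal(\class{CFS}) \vee \Gal(\class{ID}) = \class{gentle} \vee \class{SM+CS}$, whose right-hand side we now know equals $\class{CS}$ itself.
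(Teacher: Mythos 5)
Your proposal is correct and follows essentially the same route as the paper: both reduce to Fontaine's lemma, first test $\aleph_0$-membership via infinite decidability, then use compactness to conclude WLOG that $\spec_{\T_1}(\varphi_1)$ is finite, compute an explicit upper bound via successive ${\neq}(x_1,\dots,x_n)$ satisfiability queries, and finish by enumerating the finitely many candidates against both spectra using computable finite spectra. The only cosmetic difference is that the paper treats the $\T_1$-unsatisfiable case as an explicit branch rather than letting it fall out of the bound search, but the algorithms are effectively the same.
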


In light of Fontaine's lemma, the following implies \Cref{thm-cs}:
\begin{lemma}
    Let $\T_1$ and $\T_2$ be decidable theories over signatures $\Sigma_1$ and $\Sigma_2$ respectively. Suppose that $\T_1$ and $\T_2$ have computable spectra. Then, it is decidable whether $\spec_{\T_1}(\varphi_1) \cap \spec_{\T_2}(\varphi_2) = \emptyset$, where $\varphi_1$ and $\varphi_2$ are conjunctions of literals over the signatures $\Sigma_1$ and $\Sigma_2$ respectively.
\end{lemma}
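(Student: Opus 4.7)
The plan is to reduce the emptiness problem to a case analysis on whether $\aleph_0$ lies in each spectrum, exploiting that ``computable spectra'' means both infinite decidability and computable finite spectra. Recall that if a spectrum does not contain $\aleph_0$, then by compactness (applied to $\{\varphi \wedge \neq(x_1,\ldots,x_n) : n \in \No\}$, as in the corollary in \Cref{sec:spec}) it is a finite subset of $\No$, and this gives us leverage to explicitly compute it.

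The decision procedure would proceed as follows. First, using infinite decidability of $\T_1$ and $\T_2$, test whether $\aleph_0 \in \spec_{\T_1}(\varphi_1)$ and whether $\aleph_0 \in \spec_{\T_2}(\varphi_2)$. If both are affirmative, report that the intersection is nonempty (since $\aleph_0$ belongs to both). Otherwise, at least one spectrum, say $\spec_{\T_i}(\varphi_i)$, omits $\aleph_0$. In that case, compute an upper bound $N \in \No$ by searching for the smallest $n$ such that $\varphi_i \wedge \neq(x_1,\ldots,x_n)$ is $\T_i$-unsatisfiable, which is guaranteed to exist by the compactness argument above and can be found using the decidability of $\T_i$. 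Then enumerate $\spec_{\T_i}(\varphi_i) \subseteq \{1,\ldots,N-1\}$ explicitly using the computable finite spectra of $\T_i$, and for each element $n$ in this finite set, use the computable finite spectra of $\T_j$ (for $j \ne i$) to check whether $n \in \spec_{\T_j}(\varphi_j)$. The intersection is empty precisely if all these checks fail.

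The correctness of this procedure relies on two observations: (i) if both spectra contain $\aleph_0$, the intersection is trivially nonempty; and (ii) if one spectrum, say $\spec_{\T_i}(\varphi_i)$, is a finite subset of $\No$, then the intersection is nonempty if and only if some element of $\spec_{\T_i}(\varphi_i)$ belongs to $\spec_{\T_j}(\varphi_j)$, which reduces the problem to finitely many queries answerable by computable finite spectra of $\T_j$. Each individual query is decidable by hypothesis, and only finitely many queries are asked.

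I expect no deep obstacle here, since the hypothesis of computable spectra is essentially tailor-made to make every relevant cardinality query algorithmic. The main subtlety is recognizing that one cannot, in general, compute an upper bound on a finite spectrum from computable finite spectra alone (since without knowing membership of $\aleph_0$ one cannot tell when to stop); this is precisely why infinite decidability is needed to dispatch the $\aleph_0$ case first, after which the compactness-based search for $N$ becomes effective.
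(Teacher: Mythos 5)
Your proposal is correct and follows essentially the same decision procedure as the paper's proof: first dispatch the $\aleph_0$ case by infinite decidability, then, when one spectrum omits $\aleph_0$, use compactness plus decidability to find an upper bound $N$, enumerate that finite spectrum via computable finite spectra, and test each member against the other theory. The only cosmetic difference is that the paper explicitly handles $\varphi_1$ being $\T_1$-unsatisfiable as a separate case, whereas in your version that is subsumed by the search (it yields $N=1$ and hence an empty finite spectrum).
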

\begin{proof}
    Let $\varphi_1$ and $\varphi_2$ be conjunctions of literals over the signatures $\Sigma_1$ and $\Sigma_2$ respectively. We describe our decision procedure as follows. First, check if $\Inf \in \spec_{\T_1}(\varphi_1) \cap \spec_{\T_2}(\varphi_2)$ (which we can do since both theories are infinitely decidable). If so, then $\spec_{\T_1}(\varphi_1) \cap \spec_{\T_2}(\varphi_2) \neq \emptyset$. Otherwise, without loss of generality, $\Inf \notin \spec_{\T_1}(\varphi_1)$. If $\varphi_1$ is $\T_1$-unsatisfiable, then $\spec_{\T_1}(\varphi_1) \cap \spec_{\T_2}(\varphi_2) = \emptyset$; otherwise, by compactness, there is a maximal $k \in \spec_{\T_1}(\varphi_1)$. We can compute this $k$ by checking the $\T_1$-satisfiability of $\varphi_1 \land \neq(x_1,\dots,x_n)$ (where the variables $x_i$ are fresh) for increasing values of $n$. Then, since $\T_1$ has computable spectra, we can compute an explicit representation of the finite set $\spec_{\T_1}(\varphi_1)$. And since $\T_2$ has computable spectra, we can check whether $m \in \spec_{\T_2}(\varphi_2)$ for any of the finitely many elements $m \in \spec_{\T_1}(\varphi_1)$.
\end{proof}

Now that we've proved that $\class{CS} \subseteq \Gal(\class{CS})$, we prove that, in fact, $\Gal(\class{CS}) = \class{CS}$.
\begin{theorem} \label{thm-f-cs}
If $\T$ is a decidable theory that can be combined with every decidable theory with computable spectra, then $\T$ has computable spectra, i.e. $\Gal(\class{CS})\subseteq \class{CS}$, and so $Cl(\class{CS})=\class{CS}$.
\end{theorem}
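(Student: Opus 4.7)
The plan is to mimic the test-theory strategy used throughout \Cref{sec-known,sec-sm-cs}, decomposing the target property $\class{CS} = \class{CFS} \cap \class{ID}$ into its two constituents and hitting each one with an appropriate test theory that is itself in $\class{CS}$. So, assuming $\T$ can be combined with every decidable theory having computable spectra, I need to verify two things: $\T$ is infinitely decidable, and $\T$ has computable finite spectra.

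For infinite decidability, I would use $\T_\infty$. By \Cref{theo:testtheories} (i.e., \Cref{tab:testtheoprop}), $\T_\infty \in \class{SM+CS}$, and since smooth decidable theories are automatically infinitely decidable, we have $\class{SM+CS} \subseteq \class{CS}$, so $\T_\infty \in \class{CS}$. Hence $\T \sqcup \T_\infty$ is decidable by hypothesis. Now for any quantifier-free $\varphi$ over the signature of $\T$, the formula $\varphi$ is $\T \sqcup \T_\infty$-satisfiable if and only if $\aleph_0 \in \spec_\T(\varphi)$ (this is the same observation used in the proof of \Cref{thm-f-sm-cs}). Thus $\T$ is infinitely decidable.

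For computable finite spectra, I would use $\Testcfs$. By \Cref{tab:testtheoprop}, $\Testcfs$ lies in both $\class{CFS}$ and $\class{ID}$, hence in $\class{CS}$. By hypothesis, $\T \sqcup \Testcfs$ is decidable. Then given any quantifier-free $\varphi$ over the signature of $\T$ and any $k \in \No$, we have $k \in \spec_\T(\varphi)$ if and only if $\varphi \wedge P_k$ is $\T \sqcup \Testcfs$-satisfiable, exactly as in the proofs of \Cref{thm-f-gentle} and \Cref{thm-f-id}. This yields a decision procedure for membership in $\spec_\T(\varphi)$ for $k \in \No$.

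Combining the two conclusions gives $\T \in \class{CFS} \cap \class{ID} = \class{CS}$, which is the desired inclusion $\Gal(\class{CS}) \subseteq \class{CS}$; together with $\class{CS} \subseteq \Gal(\class{CS})$ from \Cref{thm-cs}, this yields $Cl(\class{CS}) = \class{CS}$. There is no serious obstacle here: the work has already been done in producing the test theories $\T_\infty$ and $\Testcfs$ and verifying their memberships in \Cref{tab:testtheoprop}; the only thing worth noting is the small observation that $\class{SM+CS} \subseteq \class{CS}$, which is why $\T_\infty$ qualifies as a test theory within $\class{CS}$ even though its primary role earlier was as a witness for $\class{SM+CS}$.
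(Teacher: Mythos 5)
Your proof is correct and follows essentially the same route as the paper's: both use $\Testcfs$ to establish computable finite spectra and $\T_\infty$ to establish infinite decidability, verify that each test theory lies in $\class{CS}$ via \Cref{theo:testtheories}, and appeal to the same satisfiability equivalences. The only cosmetic difference is the order of the two steps and your detour through $\class{SM+CS}$ to place $\T_\infty$ in $\class{CS}$, whereas the paper reads its CFS and ID memberships directly off \Cref{tab:testtheoprop}.
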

\begin{proof}
    Suppose $\T$ can be combined with every decidable theory with computable spectra. Since $\Testcfs$ is decidable and has computable spectra by \Cref{theo:testtheories}, $\T \sqcup \Testcfs$ is decidable. Hence, given $k \in \No$, we can compute whether $k \in \spec_\T(\varphi)$ by checking if $\varphi \land P_k$ is $\T \sqcup \Testcfs$-satisfiable. Thus, $\T$ has computable finite spectra. Further, since $\T_{\infty}$ is decidable and has computable spectra  by \Cref{theo:testtheories}, $\T \sqcup \T_{\infty}$ is decidable. Thus, we can compute whether $\Inf \in \spec_\T(\varphi)$ by checking if $\varphi$ is $\T \sqcup \T_{\infty}$-satisfiable.
\end{proof}

The proof shows that a decidable theory $\T$ has computable spectra if and only if it can be combined with the test theories $\Testcfs$ and $\T_{\infty}$. We can improve this characterization to only use a single test theory, exhibiting a $\class{CS}$-complete theory.
\begin{restatable}{theorem}{thmcscomplete} \label{thm-cs-complete}
    There is a decidable theory $\T'$ with computable spectra such that for every decidable theory $\T$, if $\T$ can be combined with $\T'$, then $\T$ has computable spectra.
\end{restatable}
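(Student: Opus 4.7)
The plan is to merge the two test theories $\Testcfs$ and $\Tinfty$ used in the proof of \Cref{thm-f-cs} into a single theory $\T'$. Let $\Sigma'$ be the signature obtained from $\Spn$ by adding one fresh nullary predicate $P_{\infty}$, and let $\T'$ be the $\Sigma'$-theory axiomatized by
\[
\{P_n \rightarrow \psi_{=n} : n \in \No\} \cup \{P_{\infty} \rightarrow \psi_{\geq m} : m \in \No\} \cup \{P_i \rightarrow \neg P_j : i \neq j\},
\]
where $i,j$ range over $\No \cup \{\infty\}$. Intuitively, $P_n$ tests whether $n \in \spec_\T(\varphi)$ for finite $n$, while $P_{\infty}$ tests whether $\Inf \in \spec_\T(\varphi)$.

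First, I would verify that $\T'$ is decidable and has computable spectra. Because $\Sigma'$ contains only nullary predicates, every quantifier-free $\Sigma'$-formula is a Boolean combination of the symbols in $\{P_n : n \in \No\} \cup \{P_{\infty}\}$, so after conversion to DNF one can analyze each disjunct separately. A disjunct containing two distinct positive literals is refuted by the disjointness axioms and contributes nothing; a disjunct positively asserting exactly one $P_n$ with $n \in \No$ (and consistently negating other predicates) has spectrum $\{n\}$; a disjunct positively asserting only $P_{\infty}$ has spectrum $\{\Inf\}$ (the axioms force infinitude, and by L{\"o}wenheim--Skolem the only $\N$-cardinality attained is $\Inf$); and a disjunct with no positive predicate literal has spectrum $\N$, as all negations can be satisfied by interpreting the relevant predicates as false, leaving the cardinality unconstrained. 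The spectrum of the whole formula is the union of these, which is effectively computable; and satisfiability reduces to emptiness of this spectrum, so $\T' \in \class{CS}$.

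Second, given a decidable $\T$ that can be combined with $\T'$, I would argue exactly as in the proof of \Cref{thm-f-cs}: for any quantifier-free $\T$-formula $\varphi$ and any $k \in \No$, we have $k \in \spec_\T(\varphi)$ iff $\varphi \wedge P_k$ is $\T \sqcup \T'$-satisfiable, and $\Inf \in \spec_\T(\varphi)$ iff $\varphi \wedge P_{\infty}$ is $\T \sqcup \T'$-satisfiable. Both are decidable by the assumption that $\T \sqcup \T'$ is decidable, so $\T$ has computable spectra. The main obstacle I anticipate is the spectrum analysis for $\T'$: although the signature is simple, the argument must cover arbitrary Boolean combinations, in particular confirming that a disjunct with only negative literals genuinely has spectrum $\N$ and that adding $P_{\infty}$ to the signature does not interfere with the behavior of the original $P_n$ predicates. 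This is routine, but it is the single place where more care is required than a direct translation of the two-test-theory proof into a single-test-theory proof.
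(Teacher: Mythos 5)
Your construction of $\T'$ is essentially identical to the paper's: the paper also forms a single theory with predicates $P_n$ for finite $n$ (each forcing $\psi_{=n}$) together with an extra predicate (called $P_{\aleph_0}$ there, $P_\infty$ in yours) forcing arbitrarily large cardinality, and then runs the argument of \Cref{thm-f-cs} verbatim. Your added disjointness axioms $P_i \rightarrow \neg P_j$ are redundant (they already follow from $\psi_{=n} \wedge \psi_{=m}$ being unsatisfiable for $n \neq m$, and from $\psi_{=n}$ contradicting ``cardinality $\geq m$ for all $m$''), but harmless.

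There is, however, a genuine gap in your verification that $\T' \in \class{CS}$. You assert that ``every quantifier-free $\Sigma'$-formula is a Boolean combination of the symbols in $\{P_n : n \in \No\}\cup\{P_\infty\}$,'' and then read off the spectra of disjuncts accordingly, e.g.\ that a disjunct with no positive $P$-literal has spectrum $\N$. This is false: quantifier-free $\Sigma'$-formulas also contain equality atoms between variables. For instance, $\neg(x=y)$ contains no positive $P$-literal but has spectrum $[2,\aleph_0]$, not $\N$; similarly $P_1 \wedge \neg(x=y)$ is $\T'$-unsatisfiable even though your rule would assign it spectrum $\{1\}$. The correct statement is that each disjunct decomposes into a $P$-part and a pure-equality part $\varphi'$, and the spectrum is the interval determined by the $P$-part intersected with $[\minmod_{\Teq}(\varphi'),\aleph_0]$. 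This is exactly what the paper's appendix proofs for $\Testcfs$ and $\T_\infty$ do, and the paper relies on those; your argument needs the same repair. The fix is routine, and once made, the rest of your proof goes through unchanged.
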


In \cite{Bonacina}, the authors proved a similar combination theorem, which states that decidable theories $\T_1$ and $\T_2$ can be combined provided that, for each $i \in \{1,2\}$,
\begin{enumerate}
    \item $\T_i$ is over a finite signature $\Sigma_i$;
    \item $\T_i$ is infinitely decidable; and
    \item it is decidable whether a finite $\Sigma_i$-interpretation is a model of $\T_i$.
\end{enumerate}
These requirements imply that $\T_i$ has computable spectra. Indeed, we have infinite decidability by assumption, and we have computable finite spectra because, given a quantifier-free $\varphi$ and $k \in \No$, we can check if $k \in \spec_{\T_i}(\varphi)$ by enumerating the finitely many $\Sigma_i$-interpretations of size $k$ and checking whether some interpretation is a model of both $\T_i$ and $\varphi$. Thus, the computable spectra combination theorem is strictly more general than the one from \cite{Bonacina}.

\subsection{Combination of $n$-Shiny Theories With $n$-Decidable Theories} \label{sec-n-decidable}

In \cite[Theorem~7]{manna-zarba-survey}, Manna and Zarba proved that, for every fixed $n \in \No$, any two decidable theories all of whose models have size $n$ are combinable. This combination theorem is not sharp; for example, it is subsumed by the gentle combination theorem and the computable spectra combination theorem. Nevertheless, using our abstract framework, we can use this combination theorem as a means to discover a new, sharp combination theorem.

Let $\mathfrak{T}_n$ be the set of decidable theories all of whose models have size $n \in \No$. The idea is that we can discover a new combination theorem by determining $Cl(\mathfrak{T}_n)$. By Fontaine's lemma, $\Gal(\mathfrak{T}_n)$ is the set of decidable theories $\T$ such that there is an algorithm that takes a quantifier-free formula $\varphi$ and returns whether $n \in \spec_\T(\varphi)$; call such theories \emph{$n$-decidable} and denote the set of decidable theories that are $n$-decidable by $\class{$n$-decidable}$ (this property is a natural variant of infinite decidability). Thus, we know that there is a new combination theorem involving $\class{$n$-decidable}$. It remains to determine $Cl(\mathfrak{T}_n) = \Gal(\class{$n$-decidable})$ (notice that $\class{$n$-decidable}$ is closed, since $\class{$n$-decidable}=\Gal(\mathfrak{T}_{n})$). 
At this point, we consult our table of test theories, looking for ones that are $n$-decidable, and we use these test theories to deduce properties of $\Gal(\class{$n$-decidable})$; it may be helpful to take inspiration from the proof of \Cref{thm-shiny-tight}. It turns out that $\Gal(\class{$n$-decidable})$ is the set of decidable theories $\T$ such that there is an algorithm that takes a $\T$-satisfiable quantifier-free formula $\varphi$ and returns a pair $(t,k)$, where $t \in \{0,1,2\}$ and $k \in \No$, such that:
\begin{itemize}
    \item if $t = 0$, then $\spec_\T(\varphi) = \{n\}$;
    \item if $t = 1$, then $\spec_\T(\varphi) = \{n\} \cup \{m \in \N : m \ge k\}$; and
    \item if $t = 2$, then $\spec_\T(\varphi) = \{m \in \N : m \ge k\}$.
\end{itemize}
Call such theories \emph{$n$-shiny} and denote the set of decidable $n$-shiny theories by $\class{$n$-shiny}$.

\begin{theorem} \label{thm-n-decidable}
    Let $\T_1$ and $\T_2$ be decidable theories. Suppose that $\T_1$ is $n$-decidable and $\T_2$ is $n$-shiny. Then $\T_1 \sqcup \T_2$ is decidable, meaning  $\class{$n$-shiny}\subseteq\Gal(\class{$n$-decidable})$ and $\class{$n$-decidable}\subseteq\Gal(\class{$n$-shiny})$.
\end{theorem}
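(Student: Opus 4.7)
The plan is to apply Fontaine's lemma (\Cref{lem-fontaine}) and reduce the decidability of $\T_1 \sqcup \T_2$ to the problem of deciding, for conjunctions of literals $\varphi_1$ and $\varphi_2$ over the respective signatures, whether $\spec_{\T_1}(\varphi_1) \cap \spec_{\T_2}(\varphi_2) = \emptyset$. The decision procedure will be driven by the $n$-shininess of $\T_2$: since $\T_2$ is decidable, we can first check whether $\varphi_2$ is $\T_2$-satisfiable; if not, the intersection is empty. Otherwise, we invoke the $n$-shininess algorithm on $\varphi_2$ to obtain a pair $(t,k)$ describing $\spec_{\T_2}(\varphi_2)$ as one of three sets: $\{n\}$, $\{n\} \cup \{m \in \N : m \ge k\}$, or $\{m \in \N : m \ge k\}$.

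In each case, it then suffices to decide whether $\spec_{\T_1}(\varphi_1)$ meets the resulting set. For $t=0$, this amounts to checking whether $n \in \spec_{\T_1}(\varphi_1)$, which is decidable by $n$-decidability of $\T_1$. For $t=2$, we need to check whether $\spec_{\T_1}(\varphi_1)$ contains some $m \ge k$; this is equivalent to the $\T_1$-satisfiability of the formula $\varphi_1 \wedge \NNEQ{x}$ (with $k$ fresh distinct variables), which is decidable since $\T_1$ itself is decidable. For $t=1$, we combine the two previous checks: the intersection is nonempty iff either $n \in \spec_{\T_1}(\varphi_1)$ or $\varphi_1 \wedge \NNEQ{x}$ is $\T_1$-satisfiable. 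In all cases we have reduced the question to finitely many decidable tests.

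The main obstacle, and the reason $n$-decidability is precisely the right hypothesis on $\T_1$, is the case $t = 1$: we have to probe $\spec_{\T_1}(\varphi_1)$ at the single isolated value $n$ as well as on the cofinite tail $\{m \ge k\}$, using only decidability and $n$-decidability of $\T_1$. The isolated point is handled by $n$-decidability, while the tail check is handled by observing that existence of a $\T_1$-model of $\varphi_1$ of cardinality at least $k$ can be witnessed by plain $\T_1$-satisfiability of $\varphi_1 \wedge \NNEQ{x}$, so no information about $\spec_{\T_1}(\varphi_1)$ beyond $n$-decidability is needed. Assembling these checks yields a decision procedure for the intersection-emptiness problem, and Fontaine's lemma delivers the decidability of $\T_1 \sqcup \T_2$, establishing the two claimed inclusions $\class{$n$-shiny} \subseteq \Gal(\class{$n$-decidable})$ and $\class{$n$-decidable} \subseteq \Gal(\class{$n$-shiny})$ simultaneously.
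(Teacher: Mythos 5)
Your proposal is correct and follows essentially the same approach as the paper: reduce via Fontaine's lemma to deciding spectrum-intersection emptiness, then case-split on the $n$-shiny description $(t,k)$ of $\spec_{\T_2}(\varphi_2)$, using $n$-decidability of $\T_1$ for the isolated point $\{n\}$ and $\T_1$-satisfiability of $\varphi_1 \wedge \neq(x_1,\dots,x_k)$ for the tail $\{m \ge k\}$.
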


In light of Fontaine's lemma, the following implies \Cref{thm-n-decidable}:
\begin{lemma}
    Let $\T_1$ and $\T_2$ be decidable theories over signatures $\Sigma_1$ and $\Sigma_2$ respectively. Suppose that $\T_1$ is $n$-decidable and that $\T_2$ is $n$-shiny for some $n \in \No$. Then, it is decidable whether $\spec_{\T_1}(\varphi_1) \cap \spec_{\T_2}(\varphi_2) = \emptyset$, where $\varphi_1$ and $\varphi_2$ are conjunctions of literals over the signatures $\Sigma_1$ and $\Sigma_2$ respectively.
\end{lemma}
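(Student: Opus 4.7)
The plan is to exploit the very restricted shape of $\spec_{\T_2}(\varphi_2)$ that $n$-shininess of $\T_2$ forces. First I would check whether $\varphi_2$ is $\T_2$-satisfiable (possible since $\T_2$ is decidable); if not, the intersection is trivially empty. Otherwise, I would run the $n$-shininess algorithm on $\varphi_2$ to obtain a pair $(t,k)$ with $t \in \{0,1,2\}$ and $k \in \No$, so that $\spec_{\T_2}(\varphi_2)$ is one of the three explicit sets $\{n\}$, $\{n\} \cup \{m \in \N : m \ge k\}$, or $\{m \in \N : m \ge k\}$. The decidability of the intersection problem then reduces to deciding two primitive questions about $\spec_{\T_1}(\varphi_1)$: whether it contains $n$, and whether it contains some $m \ge k$ (possibly $m = \aleph_0$).

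The first primitive question is handled directly by the $n$-decidability of $\T_1$, which by definition supplies an algorithm for testing $n \in \spec_{\T_1}(\varphi_1)$. The second primitive question can be reduced to the ordinary decidability of $\T_1$ by observing that $\spec_{\T_1}(\varphi_1)$ contains some $m \ge k$ if and only if the quantifier-free formula $\varphi_1 \wedge \NNEQ{x}_1\ldots x_k$, obtained by conjoining fresh-variable disequalities (using the $\neq(\cdot)$ abbreviation from \Cref{card-formulas}), is $\T_1$-satisfiable; here I implicitly appeal to compactness to know that if arbitrarily large finite cardinalities belong to the spectrum then $\aleph_0$ does too. Putting these together, each of the three cases on $t$ reduces to a finite Boolean combination of decidable queries on $\T_1$, so the intersection problem is decidable.

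The main obstacle is essentially conceptual rather than technical: one must ensure that the information $(t,k)$ supplied by the $n$-shininess algorithm is sufficient to answer the intersection question using only the $n$-decidability of $\T_1$, and not full infinite decidability or finite-spectrum computability. The key insight making this work is that the ``tail'' $\{m \in \N : m \ge k\}$ appearing in cases $t=1$ and $t=2$ is exactly the kind of upward-closed set for which ordinary decidability of $\T_1$, through the quantifier-free encoding $\NNEQ{x}_1\ldots x_k$, already suffices. I would close the proof by spelling out the three cases on $t$, in each one combining the appropriate primitive queries and returning the correct answer about emptiness of $\spec_{\T_1}(\varphi_1) \cap \spec_{\T_2}(\varphi_2)$.
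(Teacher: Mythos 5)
Your proposal is correct and follows essentially the same decision procedure as the paper: first check $\T_2$-satisfiability of $\varphi_2$, then use the $n$-shininess algorithm to compute $(t,k)$, then reduce the intersection question to (a) testing $n\in\spec_{\T_1}(\varphi_1)$ via $n$-decidability and (b) testing $\T_1$-satisfiability of $\varphi_1\land\neq(x_1,\dots,x_k)$ to probe the upward-closed tail. The aside about compactness is not actually needed — the restriction of spectra to $\N$ is already handled by L\"owenheim--Skolem in the definition, and the satisfiability check directly witnesses some $m\ge k$ in the spectrum — but this does not affect the validity of the argument.
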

\begin{proof}
    Let $\varphi_1$ and $\varphi_2$ be conjunctions of literals over the signatures $\Sigma_1$ and $\Sigma_2$ respectively. We describe our decision procedure as follows. If $\varphi_2$ is $\T_2$-unsatisfiable, then $\spec_{\T_1}(\varphi_1) \cap \spec_{\T_2}(\varphi_2) = \emptyset$, so assume $\varphi_2$ is $\T_2$-satisfiable.

    We have $\spec_{\T_2}(\varphi_2) = \{n\}$, $\spec_{\T_2}(\varphi_2) = \{n\} \cup \{m \in \N : m \ge k\}$, or $\spec_{\T_2}(\varphi_2) = \{m \in \N : m \ge k\}$ for some $k \in \No$ (and we can compute $k$ and which case holds). If the first or second case holds, first check if $n \in \spec_{\T_1}(\varphi_1)$ (which we can do since $\T_1$ is $n$-decidable); if so, we have $\spec_{\T_1}(\varphi_1) \cap \spec_{\T_2}(\varphi_2) \neq \emptyset$. Otherwise, if the second or third case holds, check if $\varphi_1 \land \neq(x_1,\dots,x_k)$ is $\T_1$-satisfiable (where the variables $x_i$ are fresh); if so, we have $\spec_{\T_1}(\varphi_1) \cap \spec_{\T_2}(\varphi_2) \neq \emptyset$. If neither check passes, we have $\spec_{\T_1}(\varphi_1) \cap \spec_{\T_2}(\varphi_2) = \emptyset$.
\end{proof}

\begin{example}
 Let $n \in \No$, and denote the theory of fixed-width bit-vectors of length $n$ by $\T_n^{B}$.
 This is an important fragment of the SMT-LIB theory of bit-vectors~\cite{SMTLIB}. 
 The theory $\T_n^{B}$
 is $2^n$-shiny, as all models of all formulas that are satisfiable in it have size $2^n$.
 By \Cref{thm-n-decidable},
 its combination with any $2^n$-decidable theory is decidable. 
 This improves the decidability results for combinations of bit-vectors with other theories, obtainable from the existing shiny and gentle combination theorems.
\end{example}

For our sharpness results, we need to state two more properties of our test theories.
\begin{lemma} \label{lem-n-shiny-test}
    \begin{enumerate}
        \item If $k,k' \in \No$, $k < k'$, and $k \neq n$, then $\T^{k'}_{k}$ is $n$-decidable.
        \item $\Teqn$ is $n$-shiny.
    \end{enumerate}
\end{lemma}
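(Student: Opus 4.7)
The plan is to handle the two items separately, leaning only on the axiomatizations from \Cref{tab:testtheories} together with the decidability of the test theories recorded in \Cref{theo:testtheories}.

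For item (1), my goal is to produce an algorithm that, given a quantifier-free $\Spn$-formula $\varphi$, decides whether $n \in \spec_{\T^{k'}_{k}}(\varphi)$. The axiom $\psi_{=k}\vee\psi_{=k'}$ forces every model of $\T^{k'}_{k}$ to have cardinality in $\{k,k'\}$, so if $n\notin\{k,k'\}$ we immediately return false. The hypothesis $k\neq n$ rules out $n=k$, leaving only the subcase $n=k'$. For this I would reduce to ordinary $\T^{k'}_{k}$-satisfiability of the formula $\varphi \wedge \neq(x_1,\ldots,x_{k'})$, where the $x_i$ are fresh variables. The added conjunct forces at least $k'$ distinct elements in the domain; combined with the disjunctive cardinality axiom and the hypothesis $k < k'$, this pins the cardinality to exactly $k'$. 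Hence the modified formula is $\T^{k'}_{k}$-satisfiable precisely when $k' \in \spec_{\T^{k'}_{k}}(\varphi)$, and decidability then follows from the decidability of $\T^{k'}_{k}$ itself.

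For item (2), the axiomatization $\Teqn = \{\psi_{=n}\}$ forces every $\Teqn$-interpretation to have cardinality exactly $n$. Consequently, whenever $\varphi$ is $\Teqn$-satisfiable, $\spec_{\Teqn}(\varphi) = \{n\}$, which fits the $n$-shiny template with $t=0$. The required algorithm is almost trivial: on input $\varphi$ (assumed $\Teqn$-satisfiable) it simply outputs $(0,1)$, the second coordinate being irrelevant when $t=0$. Decidability of $\Teqn$ could be invoked beforehand if one prefers to verify the satisfiability hypothesis explicitly.

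Neither step poses any real obstacle. The only point requiring minimal care is the reduction in (1): one must confirm that adding $\neq(x_1,\ldots,x_{k'})$ truly captures ``cardinality equals $k'$'' inside the theory, which is immediate from the disjunction $\psi_{=k}\vee\psi_{=k'}$ together with $k < k'$ (note $k'\geq 2$ since $k\in\No$). Everything else is bookkeeping.
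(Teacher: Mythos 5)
Your proof is correct and follows essentially the same route as the paper. For item (1), both proofs split on whether $n\in\{k,k'\}$ and then reduce the case $n=k'$ to ordinary $\T^{k'}_{k}$-satisfiability; the paper's version is marginally slicker in that it observes every $\T^{k'}_{k}$-satisfiable formula already has a model of size $k'$ (so no fresh disequalities are needed), whereas you pin the cardinality with $\neq(x_1,\dots,x_{k'})$, but this is an inessential variation. Item (2) is identical.
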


\begin{theorem} \label{thm-f-n-decidable}
    If $\T$ is a decidable theory that can be combined with every decidable theory that is $n$-decidable, then $\T$ is $n$-shiny, meaning $\Gal(\class{$n$-decidable}) \subseteq \class{$n$-shiny}$.
\end{theorem}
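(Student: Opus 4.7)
The plan follows the sharpness-proof template of the paper (notably \Cref{thm-shiny-tight}, \Cref{thm-f-cs}, and \Cref{thm-f-id}): assume $\T \in \Gal(\class{$n$-decidable})$ and probe $\T$ with a carefully selected family of $n$-decidable test theories, using Fontaine's lemma (\Cref{lem-fontaine}) to extract from the decidability of each combination $\T \sqcup \T'$ enough information about $\spec_\T(\varphi)$ to synthesize the required $(t,k)$ certificate.

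First I would assemble ``pointwise'' tests. The theories $\Teqn$, the $\Sigma_1$-theories of exact size $m$ for $m \in \No \setminus \{n\}$, and $\Tinfty$ are all $n$-decidable (the first because $n$ is always in the spectrum of a satisfiable formula, the others because $n$ is never in the spectrum). Combining $\T$ with each of these gives a procedure that, for every fixed $m \in \N$, decides whether $m \in \spec_\T(\varphi)$. Next, I would exploit the family $\T^{k'}_{k}$ of \Cref{lem-n-shiny-test} to establish the key structural fact: if $k \in \No$, $k \neq n$, $k < k'$, and $k \in \spec_\T(\varphi)$, then $k' \in \spec_\T(\varphi)$. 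The argument mirrors the $\Testsm$ step in the proof of \Cref{thm-shiny-tight}: otherwise $\varphi \wedge P_j$ would be $\T \sqcup \T^{k'}_{k}$-satisfiable exactly when $j \notin \unc$, contradicting decidability. By compactness this forces $\aleph_0 \in \spec_\T(\varphi)$ together with a cofinite tail of $\No$ in the spectrum as soon as any element of $\No \setminus \{n\}$ lies in $\spec_\T(\varphi)$.

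The third step is to eliminate the two residual spectra that the structural step alone leaves open, namely $\{\aleph_0\}$ and $\{n, \aleph_0\}$. For this I would use the instance of $\Testg$ obtained by taking $S = \No \setminus \{n\}$, which is $n$-decidable (no model has size $n$) and decidable (since $S$ is infinite and decidable), and whose $P_j$-spectrum contains $\aleph_0$ exactly when $F(j) = \aleph_0$ and never contains $n$. Under either bad scenario, satisfiability of $\varphi \wedge P_j$ in the combination would reduce to the undecidable problem $F(j) = \aleph_0$, a contradiction. The $n$-shininess algorithm then falls out: using the pointwise tests, check whether $\aleph_0 \in \spec_\T(\varphi)$; if not, return $(0,0)$, certifying $\spec_\T(\varphi) = \{n\}$; otherwise search for the least $k_0 \in \No \setminus \{n\}$ with $k_0 \in \spec_\T(\varphi)$, which must exist by the elimination just carried out, and return $(1, k_0)$ when $n \in \spec_\T(\varphi)$ and $n < k_0$, or $(2, k_0)$ in every other case.

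The main technical hurdle will be the last step: the chosen $\Testg$-instance must be verified to be both decidable and $n$-decidable, and its spectra must be shown to really encode $\{j : F(j) = \aleph_0\}$ in the appropriate way. Without it, the spectra $\{\aleph_0\}$ and $\{n, \aleph_0\}$ remain consistent with the structural constraints, and the search for $k_0$ would diverge.
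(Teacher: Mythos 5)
Your proof is correct and follows the paper's sharpness template: probe $\T$ with $n$-decidable test theories, derive the structural constraint on $\spec_\T(\varphi)$ via $\T^{k'}_{k}$, and eliminate the residual spectra $\{\aleph_0\}$ and $\{n,\aleph_0\}$. The only differences from the paper are in the choice of test theories for the bookkeeping steps: the paper gets computable finite spectra via a single test $\Testcfs$ and rules out the residual cases by applying $\Tle^{\No}$ to the modified formula $\varphi \land \neq(x_1,\dots,x_{n+1})$, whereas you use a family of pointwise exact-cardinality tests plus $\Tinfty$, and the instance $\Tle^{\No\setminus\{n\}}$ applied to $\varphi$ directly; both are valid deployments of the same toolbox (the appendix lemma covers $\Testg$ for any infinite decidable $S$). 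One trivial slip: the certificate $(t,k)$ requires $k \in \No$, so for $t=0$ you should return, say, $(0,1)$ rather than $(0,0)$.
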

\begin{proof}
    Suppose $\T$ can be combined with every $n$-decidable theory.

    First, we show that $\T$ has computable finite spectra. Since $\Testcfs$ is decidable and has computable finite spectra  (\Cref{theo:testtheories}), it is $n$-decidable, $\T \sqcup \Testcfs$ is decidable. Hence, given $k \in \No$, we can compute whether $k \in \spec_\T(\varphi)$ by checking if $\varphi \land P_k$ is $\T \sqcup \Testcfs$-satisfiable. Thus, $\T$ has computable finite spectra.

    Now, it suffices to show that for every $\T$-satisfiable quantifier-free formula $\varphi$, we either have $\spec_{\T}(\varphi) = \{n\}$, $\spec_{\T}(\varphi) = \{n\} \cup \{m \in \N : m \ge k\}$, or $\spec_{\T}(\varphi) = \{m \in \N : m \ge k\}$ for some $k \in \No$; the fact that $\T$ has computable finite spectra implies that we can actually compute $k$ and which case holds. To show that each spectrum has the desired form, it in turn suffices to show that for every quantifier-free formula $\varphi$: (i) if $\spec_\T(\varphi) \setminus \{n\} \neq \emptyset$, then $(\spec_\T(\varphi) \setminus \{n\}) \cap \No \neq \emptyset$; and (ii) if $k \in \spec_\T(\varphi) \setminus \{n\}$, then $k' \in \spec_\T(\varphi)$ for all $k' \ge k$.
    
    First, we show that if $\spec_\T(\varphi) \setminus \{n\} \neq \emptyset$, then $(\spec_\T(\varphi) \setminus \{n\}) \cap \No \neq \emptyset$. Suppose not. Then, $\spec_\T(\varphi \land \neq(x_1,\dots,x_{n+1})) = \{\Inf\}$ (where the variables $x_i$ are fresh). Hence, $\varphi \land \neq(x_1,\dots,x_{n+1}) \land P_m$ is $\T \sqcup \Tle^{\No}$-satisfiable if and only if $F(m) = \Inf$, which is undecidable by our assumption on $F$. This is a contradiction, given that $\Tle^{\No}$ has computable finite spectra  (\Cref{theo:testtheories}) and is therefore $n$-decidable.

    Second, we show that if $k \in \spec_\T(\varphi) \setminus \{n\}$, then $k' \in \spec_\T(\varphi)$ for all $k' \ge k$. If not, then we have $k \in \spec_\T(\varphi) \setminus \{n\}$ and $k' \notin \spec_\T(\varphi)$ for some $k,k' \in \N$ with $k < k'$. In fact, by compactness, we can take $k,k' \in \No$. Then, $\varphi \land P_m$ is $\T \sqcup \T^{k'}_{k}$-satisfiable if and only if $m \notin \unc$, which is undecidable since $\unc$ is undecidable. This is a contradiction, given that $\T^{k'}_{k}$ is $n$-decidable for $k \neq n$  (\Cref{lem-n-shiny-test}).
\end{proof}

The proof shows that a decidable theory $\T$ is $n$-shiny if and only if it can be combined with the test theories $\{\Tle^{\No},\Testcfs\} \cup \{\T^{k'}_{k} : k,k' \in \No,\; k<k',\; k \neq n\}$. We can improve this characterization to only use a single test theory, exhibiting a $\class{$n$-decidable}$-complete theory.
\begin{restatable}{theorem}{thmndecidablecomplete} \label{thm-n-decidable-complete}
    There is a decidable theory $\T'$ that is $n$-decidable and such that for every decidable theory $\T$, if $\T$ can be combined with $\T'$, then $\T$ is $n$-shiny.
\end{restatable}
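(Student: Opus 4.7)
The plan is to construct a single ``universal'' test theory $\T'$ that bundles together all the test theories used in the proof of \Cref{thm-f-n-decidable}, namely $\Testcfs$, $\Tle^{\No}$, and the family $\{\T^{k'}_k : k,k' \in \No,\; k<k',\; k \neq n\}$. This follows the template of the earlier completeness results in this paper (\Cref{thm-shiny-uniform}, \Cref{thm-si-complete}, \Cref{thm-id-complete}, and \Cref{thm-cs-complete}). Concretely, I would take the disjoint union of the signatures of the component test theories and add fresh nullary ``mode'' predicates---one per component---with axioms stipulating that at most one mode predicate is true, and that whenever a mode predicate is true, the axioms of the corresponding test theory apply to its associated predicates while the other test predicates are forced to be false.

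Next, I would verify that $\T'$ is decidable and $n$-decidable. Decidability reduces to the decidability of each sub-theory via a case analysis on the active mode. For $n$-decidability, the critical observation is that each sub-theory's behaviour on models of size exactly $n$ is uniformly decidable: in the $\Testcfs$-mode, the predicate $P_m$ is consistent with a size-$n$ model only when $m=n$; in the $\Tle^{\No}$-mode, $P_m$ is consistent with a size-$n$ model iff $F(m) \geq n$, which is decidable by condition $(i)$ on $F$; and in the $\T^{k'}_{k}$-mode (only present for $k \neq n$), consistency with size $n$ forces $n = k'$, after which the axioms involving $\unc$ become vacuous. Then, to show that any decidable $\T$ combinable with $\T'$ is $n$-shiny, I would replay each of the three sub-arguments in the proof of \Cref{thm-f-n-decidable}, conjoining the relevant mode predicate to the query formula. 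Decidability of $\T \sqcup \T'$ thereby supplies, in each case, exactly the decision procedure that the original proof obtained by combining $\T$ with the individual test theory.

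The main obstacle I expect is the careful verification that $\T'$ remains $n$-decidable despite having infinitely many modes and axiom schemes. The subtle point is that the undecidable content of the individual test theories (involving $F$ and $\unc$) is only triggered at cardinalities away from $n$; once we restrict to size-$n$ models, every mode behaves in a uniformly decidable way. Organizing this argument so that the case-split terminates---in particular, handling the infinite family $\{\T^{k'}_k\}$ uniformly rather than mode-by-mode---should be the main technical work.
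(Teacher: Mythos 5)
Your proposal is correct and takes essentially the same approach as the paper: the paper likewise fuses $\Testcfs$, $\Tle^{\No}$, and the family $\{\T^{k'}_k : k < k',\, k \neq n\}$ into a single theory (using predicate families $P_m$, $Q_m$, $R_{i,j,k}$ with mutual-exclusion axioms playing the role of your explicit mode predicates) and verifies decidability and $n$-decidability by case analysis on which family has a positive literal, exactly as you outline. Your observation that the undecidable content of each sub-theory is only triggered away from cardinality $n$ is the same key point the paper relies on via its appendix lemma for $\T^{k'}_k$.
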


Finally, we prove that $\Gal(\class{$n$-shiny}) = \class{$n$-decidable}$ (half of this was proven in \Cref{thm-n-decidable}).
\begin{theorem}
    If $\T$ is a decidable theory that can be combined with every decidable $n$-shiny theory, then $\T$ is $n$-decidable.
    This means $\Gal(\class{$n$-shiny})\subseteq \class{$n$-decidable}$, and thus $Cl(\class{$n$-shiny})=\class{$n$-shiny}$ and $Cl(\class{$n$-decidable})=\class{$n$-decidable}$.
\end{theorem}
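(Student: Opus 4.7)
The plan is to mimic the pattern used in earlier sharpness proofs of this section and in \Cref{sec-known}: identify a single $n$-shiny test theory whose combination with $\T$ forces $\T$ to decide whether $n$ lies in the spectrum of an arbitrary quantifier-free formula. The obvious candidate is $\Teqn$, the $\Sigma_1$-theory axiomatized by $\{\psi_{=n}\}$, which by \Cref{lem-n-shiny-test} is $n$-shiny (every $\Teqn$-model has exactly $n$ elements, so $\spec_{\Teqn}(\top) = \{n\}$).

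Concretely, I would assume $\T \in \Gal(\class{$n$-shiny})$, so $\T \sqcup \Teqn$ is decidable. Given a quantifier-free $\T$-formula $\varphi$, view it as a formula over the combined signature. By Fontaine's lemma (applied to $\varphi$ on the $\T$-side and the trivial formula $\top$ on the $\Teqn$-side), $\varphi$ is $\T \sqcup \Teqn$-satisfiable if and only if $\spec_\T(\varphi) \cap \spec_{\Teqn}(\top) \neq \emptyset$, which is equivalent to $n \in \spec_\T(\varphi)$. Hence deciding $\T \sqcup \Teqn$-satisfiability of $\varphi$ decides whether $n \in \spec_\T(\varphi)$, so $\T$ is $n$-decidable. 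This establishes $\Gal(\class{$n$-shiny}) \subseteq \class{$n$-decidable}$.

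For the closure claims, I would combine this inclusion with \Cref{thm-n-decidable}, which gives $\class{$n$-decidable} \subseteq \Gal(\class{$n$-shiny})$, to conclude $\Gal(\class{$n$-shiny}) = \class{$n$-decidable}$. Dually, \Cref{thm-n-decidable} gives $\class{$n$-shiny} \subseteq \Gal(\class{$n$-decidable})$ while \Cref{thm-f-n-decidable} gives the reverse inclusion, so $\Gal(\class{$n$-decidable}) = \class{$n$-shiny}$. Applying $\Gal$ twice then yields $Cl(\class{$n$-shiny}) = \Gal(\Gal(\class{$n$-shiny})) = \Gal(\class{$n$-decidable}) = \class{$n$-shiny}$, and symmetrically $Cl(\class{$n$-decidable}) = \class{$n$-decidable}$.

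There is essentially no main obstacle here: the work is in having previously identified and verified that $\Teqn$ is $n$-shiny (a single-axiom theory with a fixed finite spectrum makes the $n$-shiny checklist trivial), and in having Fontaine's lemma available. The only minor subtlety to double-check is the case where $\varphi$ is $\T$-unsatisfiable, which is already decidable from decidability of $\T$ and is consistent with $n \notin \spec_\T(\varphi)$; this fits cleanly into the argument because $\Teqn$-satisfiability of $\top$ is uncontroversial. Thus one test theory suffices for sharpness, paralleling the $\class{X}$-complete phenomena observed throughout the paper.
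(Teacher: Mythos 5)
Your proposal is correct and follows essentially the same approach as the paper: both use $\Teqn$ as the single test theory and observe that $\T \sqcup \Teqn$-satisfiability of $\varphi$ is equivalent to $n \in \spec_\T(\varphi)$, then derive the closure claims from \Cref{thm-n-decidable} and \Cref{thm-f-n-decidable}. The only cosmetic difference is that you route the key equivalence through Fontaine's lemma explicitly, whereas the paper states it directly (it follows immediately since every $\Teqn$-model has domain of size exactly $n$).
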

\begin{proof}
    Suppose $\T$ can be combined with every decidable $n$-shiny theory. Since $\Teqn$ is a decidable $n$-shiny theory  (\Cref{lem-n-shiny-test}), $\T \sqcup \Teqn$ is decidable. But $\varphi$ is $\T \sqcup \Teqn$-satisfiable if and only if $n \in \spec_\T(\varphi)$, so $\T$ is $n$-decidable.
\end{proof}

The proof shows that a decidable theory $\T$ is $n$-decidable if and only if it can be combined with the test theory $\Teqn$. Thus, $\Teqn$ is $\class{$n$-shiny}$-complete.

\subsection{Combination of \quagen{} Theories With Co-$\mathfrak{F}$-QG Theories} \label{sec-quasi-gentle}

\Cref{sec-n-decidable} exhibited $\aleph_0$ many combination theorems; now we go further and exhibit $2^{\aleph_0}$ many combination theorems. This isn't just a theoretical curiosity: the corresponding sets of theories are natural generalizations of gentle theories, and for that reason, we call them \emph{quasi-gentle}.
To motivate them, recall gentle theories can be combined with themselves, although 
this was later improved to allow gentle theories to be combined with theories having computable finite spectra.
Taking into account the fact that $\class{gentle}\subseteq\class{CFS}$, one may wonder whether there is a property $X$ such that $\class{gentle}\subseteq\class{X}\subseteq \class{CFS}$ and $\Gal(\class{X})=\class{X}$.
Looking for such a solution leads to the methods to be described below, some of them being the fixed points we were looking for.

We start with some terminology on filters, see \cite{Royal} for a general reference. 
A \emph{filter} $\mathfrak{F}$ on $\mathbb{N}$ is a subset of the power set $\mathbb{P}(\mathbb{N})$ such that:
$(i)$ $\emptyset\notin \mathfrak{F}$ and $\mathbb{N}\in\mathfrak{F}$; 
$(ii)$ if $A,B\in \mathfrak{F}$, then $A\cap B\in \mathfrak{F}$;
and $(iii)$ if $A\in \mathfrak{F}$ and $A\subseteq B$, then $B\in \mathfrak{F}$.
We say that $\mathfrak{F}$ is \emph{free} if the intersection of all of its elements is empty;
this is equivalent to $\mathfrak{F}$ containing all cofinite sets.
All sets in a free filter are infinite.
An ultrafilter $\mathfrak{F}$ is a maximal filter; equivalently, $\mathfrak{F}$ is an ultrafilter if for every $A\in\mathbb{P}(\mathbb{N})$, either $A\in \mathfrak{F}$ or $\mathbb{N}\setminus A\in\mathfrak{F}$.
Free ultrafilters exist assuming the axiom of choice. More generally, we can define filters and ultrafilters on any lattice by substituting $\wedge$ for $\cap$ and $\le$ for $\subseteq$.

\begin{definition}
    Let $\mathfrak{F}$ be a free filter.
    A decidable theory is said to be $\mathfrak{F}$-quasi-gentle, or simply \quagen, when it has computable finite spectra, and the spectrum of every quantifier-free formula is either a finite set of finite cardinalities, or $A\cup\{\aleph_{0}\}$ for an $A\in\mathfrak{F}$. The set of decidable \quagen{} theories is denoted $\class{\quagen}$.

    A decidable theory is said to be co-$\mathfrak{F}$-quasi-gentle, or simply \coquagen, when it has computable finite spectra, and the spectrum of every quantifier-free formula is either a finite set of finite cardinalities, or $A\cup\{\aleph_{0}\}$ for an $A\subseteq\mathbb{N}$ such that $\mathbb{N}\setminus A\notin \mathfrak{F}$. The set of decidable \coquagen{} theories is denoted $\class{\coquagen}$.
\end{definition}

Notice that the spectrum of quantifier-free formulas in a \quagen{} theory can be $A\cup\{\aleph_{0}\}$ for any cofinite set $A$, since $\mathfrak{F}$ contains every cofinite set.
Notice too that if $\mathfrak{F}$ is an ultrafilter then a theory is \quagen{} if and only if it is \coquagen{}:
indeed, $A \in \mathfrak{F}$ if and only if $\mathbb{N}\setminus A \notin \mathfrak{F}$.
This will give us a collection of fixed points for $\Gal$ between $\class{CFS}$ and $\class{gentle}$.

As a concrete example, let $\mathbb{F}$ be the free filter of cofinite sets, usually called the \emph{Fr{\'e}chet filter}~\cite{Jech}. Then, $\mathbb{F}$-QG theories are those with computable finite spectra such that the spectrum of every quantifier-free formula is either a finite set of finite cardinalities or a cofinite set of cardinalities; co-$\mathbb{F}$-QG theories are those with computable finite spectra such that the spectrum of every quantifier-free formula is either a finite set of finite cardinalities or an infinite set of cardinalities. Observe that being $\mathbb{F}$-QG is a natural weakening of being gentle: we require the spectra to be finite or cofinite, but we do not require a computable method to determine which case holds or to output an explicit representation of the spectrum (or its complement). This justifies the name \emph{quasi-gentleness}. Also notice that being co-$\mathbb{F}$-QG is a slight strengthening of having computable finite spectra.

\begin{theorem} \label{thm-qg-combinable}
    Let $\T_1$ and $\T_2$ be decidable theories. Suppose that $\T_1$ is \quagen{} and $\T_2$ is \coquagen{}. Then, $\T_1 \sqcup \T_2$ is decidable, i.e. $\class{\coquagen{}} \subseteq \Gal(\class{\quagen{}})$ and $\class{\quagen}\subseteq\Gal(\class{\coquagen})$.
\end{theorem}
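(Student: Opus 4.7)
The plan is to apply Fontaine's lemma (\Cref{lem-fontaine}), which reduces the problem to deciding, for conjunctions of literals $\varphi_1$ and $\varphi_2$ over the respective signatures, whether $\spec_{\T_1}(\varphi_1) \cap \spec_{\T_2}(\varphi_2) = \emptyset$. By the definitions of \quagen{} and \coquagen{}, $\spec_{\T_1}(\varphi_1)$ is either a finite subset of $\No$ or of the form $A \cup \{\aleph_0\}$ with $A \in \mathfrak{F}$, and $\spec_{\T_2}(\varphi_2)$ is either a finite subset of $\No$ or of the form $B \cup \{\aleph_0\}$ with $\mathbb{N} \setminus B \notin \mathfrak{F}$; both theories also have computable finite spectra.

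The key observation is that when both spectra contain $\aleph_0$, there is always a finite witness to non-disjointness. Indeed, if $A \cap B = \emptyset$ then $A \subseteq \mathbb{N} \setminus B$, and by upward closure of $\mathfrak{F}$ we would get $\mathbb{N} \setminus B \in \mathfrak{F}$, contradicting the defining property of $\T_2$. Hence $A \cap B \neq \emptyset$ in this case. Given this, I would design a dovetailed procedure that loops over $n = 1, 2, 3, \ldots$ and, at each step, performs three checks: $(i)$ use the computable finite spectra of both theories to test whether $n \in \spec_{\T_1}(\varphi_1) \cap \spec_{\T_2}(\varphi_2)$, and if so return ``non-empty''; $(ii)$ check, using decidability of $\T_1$, whether $\varphi_1 \wedge \neq(x_1,\dots,x_n)$ is $\T_1$-unsatisfiable, which certifies $\spec_{\T_1}(\varphi_1) \subseteq [1,n-1]$, after which I enumerate this finite spectrum via computable finite spectra and check each element against $\spec_{\T_2}(\varphi_2)$; and $(iii)$ the symmetric check with the roles of $\T_1$ and $\T_2$ swapped.

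Termination follows by case analysis: if both spectra are finite subsets of $\No$, then $(ii)$ or $(iii)$ eventually fires; if exactly one contains $\aleph_0$, say $\spec_{\T_2}(\varphi_2)$, then $(ii)$ fires at some $n$ and decides the question by checking the finitely many elements of $\spec_{\T_1}(\varphi_1)$ against $\spec_{\T_2}(\varphi_2)$; and if both contain $\aleph_0$, then by the filter observation some element of $A \cap B \subseteq \No$ exists and $(i)$ fires at the least such $n$. The main obstacle is really the filter-theoretic step in the $(A \cup \{\aleph_0\}) \cap (B \cup \{\aleph_0\})$ case: without knowing $A \cap B \neq \emptyset$ one cannot decide intersection via a finite search, since detecting ``$\aleph_0$ is in the spectrum'' is only semidecidable from computable finite spectra alone; the asymmetric filter conditions on $A$ and $B$ are precisely what makes the problem decidable.
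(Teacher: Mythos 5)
Your proof is correct and takes essentially the same approach as the paper: both apply Fontaine's lemma, loop over $n$ checking for a common finite cardinality and for one of the spectra becoming provably finite, and handle the hard case (both spectra infinite) via the same filter-theoretic observation that $A \cap B = \emptyset$ would force $\mathbb{N}\setminus B \in \mathfrak{F}$. The only cosmetic difference is that you re-enumerate the finite spectrum in step $(ii)$/$(iii)$ rather than observing, as the paper does, that the earlier iterations of step $(i)$ have already checked those elements; this is harmless redundancy.
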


In light of Fontaine's lemma, the following implies \Cref{thm-qg-combinable}:
\begin{lemma}
    Let $\T_1$ and $\T_2$ be decidable theories over signatures $\Sigma_1$ and $\Sigma_2$ respectively. Suppose that $\T_1$ is \quagen{} and that $\T_2$ is \coquagen{} for some free filter $\mathfrak{F}$. Then, it is decidable whether $\spec_{\T_1}(\varphi_1) \cap \spec_{\T_2}(\varphi_2) = \emptyset$, where $\varphi_1$ and $\varphi_2$ are conjunctions of literals over the signatures $\Sigma_1$ and $\Sigma_2$ respectively.
\end{lemma}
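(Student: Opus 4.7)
My plan is to give a direct decision procedure by reducing the problem to questions about finite cardinalities only, which can then be attacked by enumeration using decidability of the component theories and their computable finite spectra. The central observation is a short filter-theoretic fact: whenever $\aleph_0$ lies in both spectra, some finite cardinal must lie in both as well.

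First I would establish the following reduction: $\spec_{\T_1}(\varphi_1) \cap \spec_{\T_2}(\varphi_2) = \emptyset$ if and only if $\spec_{\T_1}(\varphi_1) \cap \spec_{\T_2}(\varphi_2) \cap \No = \emptyset$. The nontrivial direction: suppose $\aleph_0$ belongs to both spectra. Then quasi-gentleness gives $\spec_{\T_1}(\varphi_1) = A_1 \cup \{\aleph_0\}$ with $A_1 \in \mathfrak{F}$, while co-quasi-gentleness gives $\spec_{\T_2}(\varphi_2) = A_2 \cup \{\aleph_0\}$ with $\mathbb{N} \setminus A_2 \notin \mathfrak{F}$. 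If $A_1 \cap A_2$ were empty, then $A_1 \subseteq \mathbb{N} \setminus A_2$, and upward closure of $\mathfrak{F}$ would force $\mathbb{N} \setminus A_2 \in \mathfrak{F}$, contradicting co-quasi-gentleness. Hence $A_1 \cap A_2 \neq \emptyset$, exhibiting a common finite cardinal.

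Given this reduction, the decision procedure interleaves three semi-decision threads: (i) for $n = 1, 2, \ldots$, test whether $n$ lies in both $\spec_{\T_1}(\varphi_1)$ and $\spec_{\T_2}(\varphi_2)$ using each theory's computable finite spectra; (ii) for $n = 1, 2, \ldots$, test whether $\varphi_1 \land \neq(x_1, \ldots, x_n)$ is $\T_1$-unsatisfiable; (iii) the analogous test for $\varphi_2$ and $\T_2$. If thread (i) finds a common $n$, return that the intersection is nonempty. If thread (ii) succeeds at $n$, then $\spec_{\T_1}(\varphi_1) \subseteq [1, n-1]$ is finite and computable element-by-element via computable finite spectra; we then check whether any of its elements belongs to $\spec_{\T_2}(\varphi_2)$. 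Thread (iii) is handled symmetrically.

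Termination splits into three cases: if $\spec_{\T_1}(\varphi_1)$ is finite then thread (ii) halts; if $\spec_{\T_2}(\varphi_2)$ is finite then thread (iii) halts; and if neither spectrum is finite then both contain $\aleph_0$, so by the reduction thread (i) must eventually encounter a common element. Correctness in each branch follows immediately from the semantics of the test performed. The only technically interesting step is the filter-theoretic reduction, which although a one-line argument, is the precise place where the pairing of the filter condition on $A_1$ with the complementary condition on $\mathbb{N} \setminus A_2$ pays off; everything else reduces to enumeration and existing decidability assumptions.
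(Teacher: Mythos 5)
Your proof is correct and follows essentially the same approach as the paper: you enumerate finite cardinalities checking for common membership, simultaneously testing whether each spectrum has become exhausted above a threshold, and you guarantee termination in the remaining case (both spectra infinite, hence both containing $\aleph_0$) via the same one-line filter argument showing $A_1 \cap A_2 \neq \emptyset$. The paper implements this as a single loop where the invariant ``all $k < n$ already checked'' makes the ``return empty'' branch immediately correct, whereas you organize it as three interleaved threads and separately isolate the reduction to finite cardinalities; this is a cosmetic reorganization, not a different proof.
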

\begin{proof}
    Take conjunctions of literals $\varphi_{1}$ and $\varphi_{2}$ in, respectively, $\Sigma_{1}$ and $\Sigma_{2}$. 
    For each $n$, starting with $1$, we test whether $\varphi_{1}$ and $\varphi_{2}$ both have models of cardinality $n$:
    if so, we return $ \spec_{\T_{1}}(\varphi_{1})\cap \spec_{\T_{2}}(\varphi_{2})$ is not empty.
    If not, we test whether $\varphi_{1}\wedge\NNNEQ{x}{n}$ and $\varphi_{2}\wedge\NNNEQ{x}{n}$ are, respectively, $\T_{1}$- and $\T_{2}$-satisfiable:
    if not, we return $ \spec_{\T_{1}}(\varphi_{1})\cap \spec_{\T_{2}}(\varphi_{2})$ is empty;
    if so, we proceed to $n+1$.
    This procedure always terminates, for if either $ \spec_{\T_{1}}(\varphi_{1})$ or $ \spec_{\T_{2}}(\varphi_{2})$ is finite there will come a point where $\varphi_{1}\wedge\NNNEQ{x}{n}$ or $\varphi_{2}\wedge\NNNEQ{x}{n}$ are no longer satisfiable in their respective theories;
    and if both $ \spec_{\T_{1}}(\varphi_{1})=A\cup\{\aleph_{0}\}$ for an $A\in\mathfrak{F}$, and $ \spec_{\T_{2}}(\varphi_{2})=B\cup\{\aleph_{0}\}$ for a $B\subseteq \mathbb{N}$ such that $\mathbb{N}\setminus B\notin \mathfrak{F}$, $ \spec_{\T_{1}}(\varphi_{1})\cap \spec_{\T_{2}}(\varphi_{2})\cap\mathbb{N}$ is not empty and the algorithm will eventually find the first shared element.
    Indeed, had we $A\cap B=\emptyset$, this would imply $A\subseteq \mathbb{N}\setminus B$, and given $A\in\mathfrak{F}$ we would find $\mathbb{N}\setminus B\in\mathfrak{F}$, a contradiction.
\end{proof}

For our sharpness results, we need to state more properties of our test theories.
\begin{lemma} \label{lem-qg-test}
    \begin{enumerate}
        \item $\Testcfs$ is $\mathfrak{F}$-QG and co-$\mathfrak{F}$-QG for all free filters $\mathfrak{F}$.
        \item If $S \in \mathfrak{F}$, then $\Testg$ is \quagen{}.
        \item If $\mathbb{N} \setminus S \notin \mathfrak{F}$, then $\Testg$ is \coquagen{}.
    \end{enumerate}
    In particular, $\Tle^{\No}$ is \quagen{} and \coquagen{} for all free filters $\mathfrak{F}$.
\end{lemma}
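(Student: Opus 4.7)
The plan is to analyze the spectra of quantifier-free $\Spn$-formulas in each test theory. Putting $\varphi$ in disjunctive normal form, each consistent disjunct specifies truth values for the finitely many $P_n$'s appearing in $\varphi$ together with an arrangement of its variables into some $c$ equivalence classes, and the cardinalities realizing that disjunct can be read off directly from the axioms.

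For part 1, I would appeal to \Cref{theo:testtheories}: $\Testcfs$ is gentle, so every quantifier-free spectrum is either a finite subset of $\No$ or of the form $(\mathbb{N}\setminus S')\cup\{\aleph_0\}$ with $S'\subset\No$ finite. In the latter case, the cofinite set $\mathbb{N}\setminus S'$ lies in every free filter, while its finite complement $S'$ lies in none, so $\Testcfs$ is both $\mathfrak{F}$-QG and co-$\mathfrak{F}$-QG for every free filter $\mathfrak{F}$. The computable finite spectra requirement is also supplied by \Cref{theo:testtheories}.

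For parts 2 and 3, I would use the axiom $P_i\to\neg P_j$ to observe that each satisfiable DNF disjunct of $\varphi$ has at most one positive predicate $P_{n_j}$, and then split into three sub-cases. Disjuncts with $P_{n_j}$ positive and $F(n_j)$ finite contribute a finite subset of $S$ bounded above by $F(n_j)$; disjuncts with $P_{n_j}$ positive and $F(n_j)=\aleph_0$, as well as disjuncts having no positive $P_n$, each contribute $(S\cap\{m:m\ge c\})\cup\{\aleph_0\}$, where $c$ is the number of equivalence classes of the disjunct. Taking the union over all disjuncts, $\spec_{\Testg}(\varphi)$ is either a finite subset of $\No$ or of the form $A\cup\{\aleph_0\}$ with $A\subseteq S$ and $A\supseteq S\cap\{m:m\ge c\}$ for some $c\in\No$; in particular, the symmetric difference of $A$ and $S$ is finite.

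From this structural description, parts 2 and 3 reduce to two short filter-theoretic observations. For part 2, if $S\in\mathfrak{F}$ then $S\cap\{m:m\ge c\}\in\mathfrak{F}$ (intersection with a cofinite set), and $A$ inherits membership by upward closure. For part 3, I would use the invariance of free-filter membership under finite symmetric difference to deduce that $\mathbb{N}\setminus A\in\mathfrak{F}$ would force $\mathbb{N}\setminus S\in\mathfrak{F}$, contradicting the hypothesis. The $\Tle^{\No}$ corollary is then immediate since $\No$ is cofinite (hence in every free filter) while its complement $\{0\}$ is finite (hence in none). I expect the main obstacle to be executing the case analysis of DNF disjuncts cleanly; the filter manipulations afterwards are routine.
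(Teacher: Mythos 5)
Your proposal is correct and follows essentially the same route as the paper: part~1 reduces to gentleness of $\Testcfs$ together with the lattice inclusions $\class{gentle}\subseteq\class{\quagen}\subseteq\class{\coquagen}$, and parts~2--3 are proved by computing the spectrum of a cube in $\Testg$ as an interval intersected with $S\cup\{\aleph_0\}$, then applying routine filter arguments (intersection with a cofinite set, upward closure, and invariance of free-filter membership under finite symmetric difference). You are, if anything, slightly more explicit than the paper's proof, which argues at the level of cubes and leaves the passage through the DNF union and the finite-symmetric-difference step implicit.
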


We now prove one direction of  $\Gal(\class{\quagen{}}) = \class{\coquagen{}}$, the other having been proven in \Cref{thm-qg-combinable}.

\begin{theorem}
    If $\T$ is a decidable theory that can be combined with every \quagen{} theory, then $\T$ is \coquagen, meaning $\Gal(\class{\quagen{}}) \subseteq \class{\coquagen{}}$.
\end{theorem}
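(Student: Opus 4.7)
The plan follows the template of the proofs of \Cref{thm-f-gentle} and \Cref{thm-f-cfs}: assume $\T$ is a decidable theory combinable with every \quagen{} theory, and use carefully selected test theories to force $\T$ to satisfy the two defining properties of being \coquagen{}. First, since $\Testcfs$ is \quagen{} by \Cref{lem-qg-test}, the combination $\T\sqcup\Testcfs$ is decidable, and the usual argument ($k\in\spec_\T(\varphi)$ iff $\varphi\wedge P_k$ is $\T\sqcup\Testcfs$-satisfiable) shows that $\T$ has computable finite spectra.

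It remains to verify the shape of each spectrum. By the compactness corollary of \Cref{sec:spec}, any spectrum not containing $\aleph_0$ is already a finite subset of $\mathbb{N}$, so the only nontrivial case is a quantifier-free $\varphi$ with $\aleph_0\in\spec_\T(\varphi)$. Write $A=\spec_\T(\varphi)\cap\mathbb{N}$ and suppose, for contradiction, that $\mathbb{N}\setminus A\in\mathfrak{F}$. Set $S=\No\setminus A$: since $\mathfrak{F}$ is free and hence contains every cofinite set, membership in $\mathfrak{F}$ is unaffected by removing the single element $0$, so $S\in\mathfrak{F}$. Moreover, $S$ is decidable because $\T$ has computable finite spectra, so \Cref{lem-qg-test} tells us that $\Testg$ is \quagen{}, and $\T\sqcup\Testg$ is decidable by the assumption on $\T$.

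I will now derive a contradiction by showing that the satisfiability of $\varphi\wedge P_n$ in $\T\sqcup\Testg$ decides the undecidable predicate ``$F(n)=\aleph_0$''. If $F(n)\in\mathbb{N}$, then the axiom $P_n\to\psi_{\leq F(n)}$ together with the axioms $\neg\psi_{=m}$ for $m\notin S$ force $\spec_{\Testg}(P_n)\subseteq S$. If instead $F(n)=\aleph_0$, then no finite upper bound is imposed on $P_n$, and by interpreting $P_n$ as true and all other $P_i$ as false one obtains $\spec_{\Testg}(P_n)=S\cup\{\aleph_0\}$. Since $A\cap S=\emptyset$, Fontaine's lemma yields that $\spec_\T(\varphi)\cap\spec_{\Testg}(P_n)$ is empty when $F(n)\in\mathbb{N}$ and equals $\{\aleph_0\}$ when $F(n)=\aleph_0$; hence a decision procedure for $\T\sqcup\Testg$ would decide $\{n:F(n)=\aleph_0\}$, contradicting the undecidability assumption on $F$.

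The one mildly delicate point I anticipate is the bookkeeping between $\mathbb{N}$, on which $\mathfrak{F}$ lives, and $\No$, to which spectra and the parameter $S$ of $\Testg$ naturally belong; verifying that free filters are invariant under adding or removing the single element $0$ is what makes the translation $\mathbb{N}\setminus A\in\mathfrak{F}\iff S\in\mathfrak{F}$ legitimate and licenses the appeal to \Cref{lem-qg-test}.
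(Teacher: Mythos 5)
Your proof is correct and uses the same test theories ($\Testcfs$ and $\Testg$) and the same high-level strategy as the paper, but with a small organizational streamlining. The paper decomposes by whether $\spec_\T(\varphi)$ is finite or infinite: for the finite case it uses $\Tle^{\No}$ (after intersecting with $\neq(x_1,\ldots,x_{n+1})$ to isolate $\{\aleph_0\}$) to rule out $\aleph_0$ appearing, and for the infinite case it uses $\Testg$ with $S$ depending on $\varphi$. You instead decompose by whether $\aleph_0 \in \spec_\T(\varphi)$, and in the latter case you apply $\Testg$ with $S = \No\setminus A$ directly, observing that the purported membership $\mathbb{N}\setminus A \in \mathfrak{F}$ holds automatically when $A$ is finite (a cofinite set always lies in a free filter), so a single contradiction argument subsumes both of the paper's steps 2 and 3. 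This works because $\Tle^{\No}$ is just $\Testg$ with $S=\No$, so the paper's two $\Testg$-flavored steps collapse into one when you parameterize $S$ by $\varphi$ throughout. The $\mathbb{N}$-versus-$\No$ bookkeeping you flag is sound: since $\No \in \mathfrak{F}$ for a free filter, intersecting with $\No$ (i.e., removing $0$) preserves membership in $\mathfrak{F}$, and upward closure handles the converse.
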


\begin{proof}
    Suppose $\T$ can be combined with every \quagen{} decidable theory, and take a quantifier-free formula $\varphi$.

    We first show that $\T$ has computable finite spectra.
    Indeed, $\varphi\wedge P_{n}$ is $\T\sqcup\Testcfs$-satisfiable if and only if $n\in \spec_{\T}(\varphi)$, and $\Testcfs$ is \quagen{}  by \Cref{lem-qg-test}.

    We then show by contradiction that, if $\spec_{\T}(\varphi)$ is finite, then it cannot contain $\aleph_{0}$.
    Were that not the case, let $n$ be the maximum of $\spec_{\T}(\varphi)\cap\mathbb{N}$, and consider the \quagen{} theory $\Tle^{\No}$ (\Cref{lem-qg-test}):
    $\varphi\wedge\NEQ{(x_{1},\ldots,x_{n+1})}\wedge P_{m}$, for fresh variables $x_{i}$, is then $\T\sqcup\Tle^{\No}$-satisfiable if and only if $F(m)=\aleph_{0}$, which is undecidable by our assumption on $F$. This contradicts the fact that $\T\sqcup\Tle^{\No}$ is decidable.

    Now we prove, again by contradiction, that if $\spec_{\T}(\varphi)$ is infinite, then $S \coloneqq \mathbb{N}\setminus\spec_{\T}(\varphi) \notin \mathfrak{F}$. Suppose $S \in \mathfrak{F}$.
    Then, $S$ must be decidable, since it is the complement of the decidable set $\spec_{\T}(\varphi)$ (given that $\T$ has computable finite spectra).
    Then $\varphi\wedge P_{m}$ is $\T\sqcup\Testg$-satisfiable if, and only if, $F(m)=\aleph_{0}$;
    notice that, of course, $\Testg$ is \quagen{} (\Cref{lem-qg-test}).   
\end{proof}

Next, we prove that $\Gal(\class{\coquagen{}}) = \class{\quagen{}}$, half of which was proven in \Cref{thm-qg-combinable}.

\begin{theorem}
    If $\T$ is a decidable theory that can be combined with every \coquagen{} theory, then $\T$ is \quagen, i.e. $\Gal(\class{\coquagen{}}) \subseteq \class{\quagen{}}$.
\end{theorem}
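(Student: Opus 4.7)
The plan is to mirror the structure of the preceding theorem, using test theories drawn from \Cref{lem-qg-test} to pin down the spectra of $\T$. Given a quantifier-free formula $\varphi$, I need to show that (a) $\T$ has computable finite spectra, and (b) $\spec_{\T}(\varphi)$ is either a finite subset of $\mathbb{N}$ or of the form $A \cup \{\aleph_0\}$ with $A \in \mathfrak{F}$. Part (a) is routine: combine $\T$ with $\Testcfs$, which is \coquagen{} by \Cref{lem-qg-test}, so $\T \sqcup \Testcfs$ is decidable by hypothesis, and then $n \in \spec_{\T}(\varphi)$ iff $\varphi \wedge P_n$ is $\T \sqcup \Testcfs$-satisfiable.

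For part (b), the case $\aleph_0 \notin \spec_{\T}(\varphi)$ is immediate from compactness: the spectrum then has a maximum finite element and is thus a finite subset of $\mathbb{N}$. So I would assume $\aleph_0 \in \spec_{\T}(\varphi)$ and argue by contradiction that $A \coloneqq \spec_{\T}(\varphi) \cap \mathbb{N}$ lies in $\mathfrak{F}$. Suppose $A \notin \mathfrak{F}$. Since $\mathfrak{F}$ is free, it contains every cofinite set, so $A$ cannot be cofinite and $S \coloneqq \mathbb{N} \setminus A$ is infinite; moreover $S$ is decidable by part (a). Now \Cref{lem-qg-test} ensures $\Testg$ is \coquagen{} precisely because $\mathbb{N} \setminus S = A \notin \mathfrak{F}$, so $\T \sqcup \Testg$ is decidable by hypothesis.

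The heart of the argument is then showing that $\varphi \wedge P_m$ is $\T \sqcup \Testg$-satisfiable if and only if $F(m) = \aleph_0$, contradicting the undecidability of $\{m : F(m) = \aleph_0\}$. When $F(m) = \aleph_0$ no axiom of $\Testg$ caps the size of models satisfying $P_m$, so $\aleph_0 \in \spec_{\Testg}(P_m) \cap \spec_{\T}(\varphi)$. When $F(m) \in \mathbb{N}$, the axiom $P_m \to \psi_{\leq F(m)}$ forces $\spec_{\Testg}(P_m)$ to be a finite subset of $S$, whereas $\spec_{\T}(\varphi)$ meets $\mathbb{N}$ exactly in $A$, and $A \cap S = \emptyset$ by construction.

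The main obstacle is calibrating the test theory $\Testg$ to the specific filter $\mathfrak{F}$: the choice $S = \mathbb{N} \setminus A$ must simultaneously be a legitimate parameter (infinite and decidable) and make $\Testg$ \coquagen{} rather than just \quagen{}. Both requirements hinge on the exact condition $A \notin \mathfrak{F}$ together with the freeness of $\mathfrak{F}$, which is the point where the duality between \quagen{} and \coquagen{} carries the argument. Once the test theory is in place, the undecidability reduction is parallel to the one used to establish $\Gal(\class{\quagen{}}) \subseteq \class{\coquagen{}}$.
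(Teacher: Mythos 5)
Your proof is correct and uses the same test-theory strategy as the paper: $\Testcfs$ establishes computable finite spectra, and $\Testg$ drives the reduction to the undecidability of $\{m : F(m)=\aleph_0\}$. The one difference is that you merge into a single argument what the paper handles as two separate contradiction cases --- the paper uses $\Tle^{\No}$ together with extra disequalities $\neq(x_1,\dots,x_{n+1})$ when $\spec_\T(\varphi)$ is a finite set containing $\aleph_0$, and $\Testg$ with $S=\mathbb{N}\setminus\spec_\T(\varphi)$ when the spectrum is infinite but its finite part is not in $\mathfrak{F}$ --- whereas your choice $S=\mathbb{N}\setminus A$ with $A=\spec_\T(\varphi)\cap\mathbb{N}$ already gives $A\cap S=\emptyset$, so no disequalities are needed and both sub-cases are absorbed by the single hypothesis $A\notin\mathfrak{F}$ (which holds automatically when $A$ is finite, since all members of a free filter are infinite). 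This is a slight but genuine streamlining of the paper's case split.
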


\begin{proof}
    The proof is more or less the same as the one for the previous theorem;
    let $\varphi$ be a quantifier-free formula.
    $\spec_{\T}(\varphi)\cap\mathbb{N}$ is decidable, and so $\T$ has computable finite spectra, as $\varphi\wedge P_{n}$ is $\T\sqcup\Testcfs$-satisfiable if and only if $n\in \spec_{\T}(\varphi)$, and $\Testcfs$ is both decidable and \coquagen{} (\Cref{lem-qg-test}).

    By contradiction, if $\spec_{\T}(\varphi)$ is finite and contains $\aleph_{0}$, let $n$ be the maximum of $\spec_{\T}(\varphi)\cap\mathbb{N}$, and consider the \coquagen{} theory $\Tle^{\No}$ (\Cref{lem-qg-test}):
    $\varphi\wedge\NEQ{(x_{1},\ldots,x_{n+1})}\wedge P_{m}$ is then $\T\sqcup\Tle^{\No}$-satisfiable if and only if $F(m)=\aleph_{0}$, leading to a contradiction.

    Similarly, if $\spec_{\T}(\varphi)$ is infinite but not in $\mathfrak{F}$, then $S\coloneqq \mathbb{N}\setminus\spec_{\T}(\varphi)$ is decidable and $\mathbb{N}\setminus S \notin \mathfrak{F}$:
    $\Testg$ is then \coquagen{} by \Cref{lem-qg-test}, and 
    since $\varphi\wedge P_{m}$ is $\T\sqcup\Testg$-satisfiable if and only if $F(m)=\aleph_{0}$, we again reach a contradiction.
\end{proof}

We leave open the question of whether there are any $\class{\quagen{}}$-complete and $\class{\coquagen{}}$-complete theories (for all we know, it is possible that the answer depends on the choice of filter $\mathfrak{F}$).

\subsubsection{The Structure of \quagen{} and Co-$\mathfrak{F}$-QG Theories}

We now prove a few propositions that tell us how the sets of \quagen{} and \coquagen{} theories relate to the lattice $\mathcal{C}$ induced by our Galois connection.

Let $REC$ be the lattice, indeed a Boolean algebra, of decidable subsets of $\mathbb{N}$. Every filter $\mathfrak{F}$ on $\mathbb{N}$ induces a filter on $REC$ by restricting to $\mathfrak{F} \cap REC$. We first show that $\class{(co-)$\mathfrak{F}$-QG}$ is determined by $\mathfrak{F} \cap REC$.

\begin{proposition} \label{prop-computable-equiv}
    If $\mathfrak{F} \cap REC = \mathfrak{G} \cap REC$, then $\class{(co-)$\mathfrak{F}$-QG} = \class{(co-)$\mathfrak{G}$-QG}$.
\end{proposition}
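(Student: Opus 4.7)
The plan is to observe that, for any theory $\T$ with computable finite spectra, the ``finite part'' of the spectrum of each quantifier-free formula $\varphi$---namely $A_\varphi \coloneqq \spec_\T(\varphi) \cap \mathbb{N}$---is automatically a \emph{decidable} subset of $\mathbb{N}$. This is because membership of any positive integer in $A_\varphi$ can be tested using the algorithm witnessing computable finite spectra, and $0 \notin \spec_\T(\varphi)$ since domains of interpretations are nonempty. Once this is in place, the conditions appearing in the definitions of $\mathfrak{F}$-QG and \coquagen{} are seen to be conditions about a decidable set, so they depend only on $\mathfrak{F} \cap REC$.

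For the \quagen{} case, I would pick $\T \in \class{\quagen{}}$, a quantifier-free $\varphi$, and split on the definition. If $\spec_\T(\varphi)$ is a finite subset of $\No$, then $\T$ automatically satisfies the corresponding clause for $\mathfrak{G}$-QG. Otherwise $\spec_\T(\varphi) = A_\varphi \cup \{\aleph_0\}$ with $A_\varphi \in \mathfrak{F}$; since $\T$ has computable finite spectra, $A_\varphi \in REC$, so $A_\varphi \in \mathfrak{F} \cap REC = \mathfrak{G} \cap REC \subseteq \mathfrak{G}$, giving the $\mathfrak{G}$-QG clause. The reverse inclusion is symmetric, so $\class{\quagen{}} = \class{$\mathfrak{G}$-QG}$.

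For the \coquagen{} case the same setup works, only the relevant condition becomes $\mathbb{N} \setminus A_\varphi \notin \mathfrak{F}$. Because $A_\varphi \in REC$ implies $\mathbb{N} \setminus A_\varphi \in REC$, we have the chain of equivalences
\[
    \mathbb{N} \setminus A_\varphi \in \mathfrak{F} \iff \mathbb{N} \setminus A_\varphi \in \mathfrak{F} \cap REC = \mathfrak{G} \cap REC \iff \mathbb{N} \setminus A_\varphi \in \mathfrak{G},
\]
so the \coquagen{} and co-$\mathfrak{G}$-QG conditions on $\varphi$ coincide, yielding $\class{\coquagen{}} = \class{co-$\mathfrak{G}$-QG}$.

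I do not anticipate any real obstacle: the only nontrivial content is the one-line observation that computable finite spectra makes $A_\varphi$ a decidable set, after which everything reduces to the assumed equality $\mathfrak{F} \cap REC = \mathfrak{G} \cap REC$. The proof is essentially a matter of carefully unwinding the two definitions and noting that the only sets of naturals ever tested for membership in the filter are themselves decidable.
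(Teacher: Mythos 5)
Your proposal is correct and takes the same approach as the paper: observe that computable finite spectra force $\spec_\T(\varphi)\cap\mathbb{N}$ to be a decidable set, so the only filter membership tests that arise involve elements of $REC$, and then apply $\mathfrak{F}\cap REC=\mathfrak{G}\cap REC$. Your write-up is slightly more explicit (noting $0\notin\spec_\T(\varphi)$, spelling out the co-quasi-gentle case), but the idea is identical.
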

\begin{proof}
    We prove the proposition for quasi-gentleness, with the proof for co-quasi-gentleness being similar. For any \quagen{} theory $\T$ and quantifier-free formula $\varphi$, we have $\spec_\T(\varphi) \cap \mathbb{N} \in \mathfrak{F}$ if and only if $\spec_\T(\varphi) \cap \mathbb{N} \in \mathfrak{G}$, since $\spec_\T(\varphi) \cap \mathbb{N}$ is decidable. Thus, $\T$ is $\mathfrak{F}$-QG if and only if it is $\mathfrak{G}$-QG.
\end{proof}

Thus, we are justified in restricting our attention to free filters on $REC$. We abuse notation by referring to \quagen{} and \coquagen{} theories even when $\mathfrak{F}$ is a free filter on $REC$.

\begin{proposition} \label{prop-refine}
    Let $\mathfrak{F}$ and $\mathfrak{G}$ be free filters on $REC$. We have $\mathfrak{F} \subseteq \mathfrak{G}$ if and only if $\class{$\mathfrak{F}$-QG} \subseteq \class{$\mathfrak{G}$-QG}$ if and only if $\class{co-$\mathfrak{G}$-QG} \subseteq \class{co-$\mathfrak{F}$-QG}$.
\end{proposition}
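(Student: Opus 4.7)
The plan is to prove the pairwise equivalence of the three conditions (i) $\mathfrak{F}\subseteq\mathfrak{G}$, (ii) $\class{$\mathfrak{F}$-QG}\subseteq\class{$\mathfrak{G}$-QG}$, and (iii) $\class{co-$\mathfrak{G}$-QG}\subseteq\class{co-$\mathfrak{F}$-QG}$. The forward implications (i) $\Rightarrow$ (ii) and (i) $\Rightarrow$ (iii) I would dispatch immediately by unpacking the definitions: if $\T$ is $\mathfrak{F}$-QG then every infinite spectrum $A\cup\{\aleph_0\}$ already has $A\in\mathfrak{F}\subseteq\mathfrak{G}$, and the contrapositive of $\mathfrak{F}\subseteq\mathfrak{G}$ turns the defining condition ``$\mathbb{N}\setminus A\notin\mathfrak{G}$'' of co-$\mathfrak{G}$-QG into the defining condition ``$\mathbb{N}\setminus A\notin\mathfrak{F}$'' of co-$\mathfrak{F}$-QG. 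So the substantive work is to prove (ii) $\Rightarrow$ (i) and (iii) $\Rightarrow$ (i).

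For both converses I would use a single test-theory construction. Given a decidable $A\subseteq\No$ in $\mathfrak{F}$ (we may restrict to $A\subseteq\No$, since replacing $A$ with $A\cap\No\in\mathfrak{F}$ does not affect its membership in $\mathfrak{G}$), define $\T_A$ over the empty signature by the axioms $\{\neg\psi_{=n}:n\in\No\setminus A\}$, so that a model of $\T_A$ is precisely a nonempty set of cardinality in $A\cup\{\aleph_0\}$. Over the empty signature, any satisfiable quantifier-free formula $\varphi$ is characterized up to $\T_A$-equivalence by the minimum number $r$ of equivalence classes an arrangement satisfying $\varphi$ must induce on its variables, so setting $A_r=\{n\in A:n\geq r\}$ gives $\spec_{\T_A}(\varphi)=A_r\cup\{\aleph_0\}$; the $\aleph_0$ term is present by compactness because any element of a free filter is infinite. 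Since $\mathfrak{F}$ contains the cofinite set $\{n\in\mathbb{N}:n\geq r\}$, we get $A_r\in\mathfrak{F}$, so $\T_A$ is $\mathfrak{F}$-QG, and it is clearly decidable with computable finite spectra.

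For (ii) $\Rightarrow$ (i), I would take $A\in\mathfrak{F}$ and apply the hypothesis to $\T_A$ to conclude it is $\mathfrak{G}$-QG; since $\spec_{\T_A}(\top)=A\cup\{\aleph_0\}$ is infinite, the definition of $\mathfrak{G}$-QG forces $A\in\mathfrak{G}$. For (iii) $\Rightarrow$ (i), I would argue by contradiction from $A\in\mathfrak{F}\setminus\mathfrak{G}$: setting $B=\No\setminus A$, form $\T_B$ by the same recipe. Here $B$ must be infinite, since otherwise $A$ would be cofinite in $\mathbb{N}$ and hence automatically in the free filter $\mathfrak{G}$. Every infinite spectrum of $\T_B$ has the form $B_r\cup\{\aleph_0\}$, with complement $A\cup[0,r-1]$ in $\mathbb{N}$; this complement lies outside $\mathfrak{G}$, since if it lay in $\mathfrak{G}$, intersecting with the cofinite set $\{n\in\mathbb{N}:n\geq r\}\in\mathfrak{G}$ would yield $A\cap\{n\in\mathbb{N}:n\geq r\}\in\mathfrak{G}$, and then superset closure would give $A\in\mathfrak{G}$. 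Thus $\T_B$ is co-$\mathfrak{G}$-QG, and by hypothesis also co-$\mathfrak{F}$-QG; but the complement of $\spec_{\T_B}(\top)=B\cup\{\aleph_0\}$ is $A\cup\{0\}\in\mathfrak{F}$, the desired contradiction.

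The main obstacle I expect is the bookkeeping in verifying that the spectra of \emph{every} quantifier-free formula in $\T_A$ (and $\T_B$), not just $\top$, fit the (co-)$\mathfrak{F}$-QG shape. The key simplification is the use of the empty signature, which reduces quantifier-free formulas to arrangements and collapses all satisfiable spectra to the uniform shape $A_r\cup\{\aleph_0\}$, making the required closure properties of $\mathfrak{F}$ essentially automatic.
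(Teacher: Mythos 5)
Your proof is correct and follows essentially the same approach as the paper: same test-theory construction $\{\lnot\psi_{=n} : n\notin S\}$ over the empty signature, same forward-direction unpacking, and same converse via a separating spectrum. The main difference is that you spell out the co-QG case in full (including the $B=\No\setminus A$ construction and the small bookkeeping about $0$) where the paper says only "the proof for co-quasi-gentleness being similar," and you verify the shape of all spectra rather than just $\spec(\top)$; these are useful elaborations but not a different argument.
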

\begin{proof}
    We prove the proposition for quasi-gentleness, with the proof for co-quasi-gentleness being similar.

    Suppose $\mathfrak{F} \subseteq \mathfrak{G}$ and let $\T$ be a $\mathfrak{F}$-QG theory:
    then the spectra of its quantifier-free formulas are either finite sets of finite cardinalities, or $A\cup\{\aleph_{0}\}$ for an $A\in\mathfrak{F}$.
    Since $\mathfrak{F}\subseteq\mathfrak{G}$, the spectra of the quantifier-free formulas of $\T$ are either finite sets of finite cardinalities, or $A\cup\{\aleph_{0}\}$ for an $A\in\mathfrak{G}$, meaning it is a $\mathfrak{G}$-QG theory.

    Conversely, suppose $\mathfrak{F} \not\subseteq \mathfrak{G}$, and let $S \in \mathfrak{F} \setminus \mathfrak{G}$. Then, let $\T$ be the $\Sigma_{1}$-theory axiomatized by $\{\lnot \psi_{=n} : n \notin S\}$. We have $\spec_\T(\top) = S \cup \{\aleph_0\}$, and $\T$ is $\mathfrak{F}$-QG but not $\mathfrak{G}$-QG. Thus, $\class{$\mathfrak{F}$-QG} \not\subseteq \class{$\mathfrak{G}$-QG}$.
\end{proof}

Given a family of sets $\{A_i : i \in I\}$, we say that it has the \emph{strong finite intersection property} if
\[
    \bigcap_{i \in I'} A_i
\]
is infinite for any finite $I' \subseteq I$. If $\{A_i : i \in I\}$ has the strong finite intersection property, we say that the filter \emph{generated} by $\{A_i : i \in I\}$ is the smallest filter containing every $A_i$; this filter is free. If each $A_i$ is decidable, we can similarly generate a free filter on $REC$.

\begin{proposition}
    There is set $\mathcal{F}$ of $2^{\aleph_0}$ free filters on $REC$ such that the sets of theories $\{\class{(co-)$\mathfrak{F}$-QG} : \mathfrak{F} \in \mathcal{F}\}$ form a chain in the lattice $\mathcal{C}$ induced by the Galois connection. Additionally, there is set $\mathcal{F}$ of $2^{\aleph_0}$ free filters on $REC$ such that the sets of theories $\{\class{(co-)$\mathfrak{F}$-QG} : \mathfrak{F} \in \mathcal{F}\}$ form an antichain in $\mathcal{C}$.
\end{proposition}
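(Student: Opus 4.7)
By Proposition~\ref{prop-refine}, the map $\mathfrak{F}\mapsto\class{\quagen}$ is an order-embedding from the poset of free filters on $REC$ (ordered by $\subseteq$) into $\mathcal{C}$, while $\mathfrak{F}\mapsto\class{\coquagen}$ is an order-reversing embedding. So the task reduces to exhibiting $2^{\aleph_0}$ free filters on $REC$ that form a chain (respectively, antichain) under $\subseteq$; applying either embedding then yields the desired chain or antichain in $\mathcal{C}$. The key structural observation is that $REC/\mathrm{Fin}$ is a countable atomless Boolean algebra: it is countable because $REC$ is, and atomless because any infinite decidable set splits into two infinite decidable subsets (e.g., by alternating entries of its increasing enumeration). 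By Tarski's theorem it is isomorphic to the unique countable atomless Boolean algebra, whose Stone space is the Cantor set $C$. Moreover, every proper filter $\mathfrak{G}$ on $REC/\mathrm{Fin}$ lifts to a free filter $\{A\in REC : [A]\in\mathfrak{G}\}$ on $REC$ (free because every cofinite set belongs to the class $[\mathbb{N}]$, which lies in every filter), and this lifting is order-preserving and injective.

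\textbf{Chain.} By Stone duality, proper filters on $REC/\mathrm{Fin}$ correspond in an order-reversing manner to nonempty closed subsets of $C$. Identify $C$ with the standard middle-thirds Cantor set in $[0,1]$ and, for each $t\in C$, set $K_t=\{x\in C : x\le t\}$. Each $K_t$ is a nonempty closed subset of $C$ (containing $t$), and $t<t'$ implies $K_t\subsetneq K_{t'}$ (as $t'\in K_{t'}\setminus K_t$). Thus $\{K_t : t\in C\}$ is a chain of $2^{\aleph_0}$ pairwise distinct closed subsets, which via Stone duality and the lifting step produces a chain of $2^{\aleph_0}$ pairwise distinct free filters on $REC$.

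\textbf{Antichain.} Build a binary tree of infinite decidable subsets of $\mathbb{N}$: set $A_\emptyset=\mathbb{N}$, and for each finite binary string $\sigma$, partition $A_\sigma$ into two infinite decidable sets $A_{\sigma 0}$ and $A_{\sigma 1}$. For $\xi\in 2^\omega$, the nested family $\{A_{\xi|_n}:n\in\mathbb{N}\}$ together with all cofinite decidable sets has the strong finite intersection property, hence generates a free filter; extend it via Zorn's lemma to an ultrafilter $\mathcal{U}_\xi$ on $REC$. If $\xi\ne\eta$ and $n$ is the least index where they disagree, then $A_{\xi|_{n+1}}$ and $A_{\eta|_{n+1}}$ are disjoint, placing disjoint sets in $\mathcal{U}_\xi$ and $\mathcal{U}_\eta$ and forcing them to be distinct. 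Since distinct ultrafilters are pairwise incomparable, $\{\mathcal{U}_\xi : \xi\in 2^\omega\}$ is an antichain of $2^{\aleph_0}$ free filters on $REC$.

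\textbf{Main obstacle.} The conceptual crux lies in the chain construction: identifying $REC/\mathrm{Fin}$ with the countable atomless Boolean algebra and carefully verifying that the chain $\{K_t:t\in C\}$ of closed subsets of the Cantor set really does translate, through Stone duality and the lift $\mathfrak{G}\mapsto\{A : [A]\in\mathfrak{G}\}$, into a chain of $2^{\aleph_0}$ \emph{distinct} free filters on $REC$ whose images under Proposition~\ref{prop-refine} remain distinct in $\mathcal{C}$. Once this Stone-theoretic machinery is in place, the antichain argument via the binary-tree partition and Zorn's lemma is comparatively routine.
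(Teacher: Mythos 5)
Your proof is correct, but it takes a genuinely different route from the paper, and the paper's argument is substantially more economical. Both proofs begin with the same reduction via Proposition~\ref{prop-refine} (reduce to finding a chain/antichain of $2^{\aleph_0}$ free filters on $REC$), but from there they diverge. The paper uses a single, explicit, purely combinatorial construction covering both cases at once: let $A_i$ be the set of naturals whose $i$th bit is $0$; the family $\{A_i : i \in \No\}$ has the strong finite intersection property and consists of decidable sets, and the assignment $S \mapsto \mathfrak{F}_S$ (the free filter on $REC$ generated by $\{A_i : i \in S\}$) is an order-embedding of $\mathbb{P}(\No)$ into the poset of free filters on $REC$. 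Chains and antichains of size $2^{\aleph_0}$ in $\mathbb{P}(\No)$ then transfer in one stroke, with no Stone duality and no appeal to Zorn's lemma. Your proof instead uses two separate arguments: for the chain, you identify $REC/\mathrm{Fin}$ as the countable atomless Boolean algebra, pass to its Stone space (the Cantor set), and pull back a chain of closed down-sets; for the antichain, you build a binary tree of infinite decidable sets and invoke Zorn's lemma to extend each branch to an ultrafilter, using pairwise incomparability of distinct ultrafilters. Your chain argument is conceptually heavier than needed (though it illuminates the Boolean-algebraic structure of $REC/\mathrm{Fin}$, which the paper does not exploit), and your antichain argument's use of Zorn is dispensable: the filters generated by the branches (plus cofinite sets) are already pairwise incomparable, since for $\xi \ne \eta$ the set $A_{\xi|_{n+1}}$ is disjoint from every $A_{\eta|_m}$ with $m > n$, so it cannot lie in the $\eta$-filter. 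Where you describe the Stone-duality step as the conceptual crux, the paper avoids that crux entirely by choosing a family $\{A_i\}$ for which the order-embedding from $\mathbb{P}(\No)$ can be verified by hand.
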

\begin{proof}
    By \Cref{prop-refine}, it suffices to construct a set of $2^{\aleph_0}$ free filters on $REC$ that form a (anti)chain when ordered by inclusion.

    For each $i \in \No$, let $A_i$ be the natural numbers whose $i$th bit is a 0 (starting at the least significant bit); for example, $A_1$ is the even numbers, and $A_2 = \{0,1,4,5,\dots\}$. Then, $\{A_i : i \in \No\}$ has the strong finite intersection property, and each $A_i$ is decidable. Given a set $S \subseteq \mathbb{P}(\No)$, let $\mathfrak{F}_S$ be the free filter on $REC$ generated by $\{A_i : i \in S\}$. Note that if $S, S' \subseteq \mathbb{P}(\No)$, then $S \subseteq S'$ if and only if $\mathfrak{F}_S \subseteq \mathfrak{F}_{S'}$. Thus, given a (anti)chain of size $2^{\aleph_0}$ in $\mathbb{P}(\No)$, we have a (anti)chain of $2^{\aleph_0}$ free filters on $REC$.
\end{proof}

We have shown that $\mathcal{C}$ is not only infinite, but quite large as well: it has height and width at least $2^{\aleph_0}$.

\subsection{Pseudocode} \label{sec-pseudocode}

We provide pseudocode for each of the combination methods introduced in this section. \Cref{alg-fontaine} gives pseudocode implementing Fontaine's lemma, from which all of our combination methods can be derived. Then, \Cref{alg-new-methods} shows how each combination method allows us to fulfill the requirement of Fontaine's lemma.

\begin{algorithm}[ht]
    \caption{Pseudocode for Fontaine's lemma. The function returns whether $\varphi_1 \land \varphi_2$ is $\T_1 \sqcup \T_2$-satisfiable.} \label{alg-fontaine}
    \begin{algorithmic}[1]
        \Require An algorithm to decide whether $\spec_{\T_1}(\varphi_1) \cap \spec_{\T_2}(\varphi_2) = \emptyset$, where $\varphi_1$ and $\varphi_2$ are conjunctions of literals over $\Sigma_1$ and $\Sigma_2$ respectively
        \Function{Fontaine}{$\varphi_1, \varphi_2$}
        \ForAll{arrangements $\delta$ on $\vars(\varphi_1) \cap \vars(\varphi_2)$}
            \If{$\spec_{\T_1}(\varphi_1 \land \delta) \cap \spec_{\T_2}(\varphi_2 \land \delta) \neq \emptyset$}
                \State \Return \TRUE
            \EndIf
        \EndFor
        \State \Return \FALSE
        \EndFunction
    \end{algorithmic}
\end{algorithm}

\begin{algorithm}[p]
    \caption{Pseudocode for the new combination methods. Each function returns whether $\spec_{\T_1}(\varphi_1) \cap \spec_{\T_2}(\varphi_2) \neq \emptyset$, where $\varphi_1$ and $\varphi_2$ are conjunctions of literals over $\Sigma_1$ and $\Sigma_2$ respectively. We assume throughout that $\T_1$ and $\T_2$ are decidable.} \label{alg-new-methods}
    \begin{algorithmic}[1]
        \Require $\T_1$ is smooth and has computable spectra and $\T_2$ is infinitely decidable
        \Function{SM+CS}{$\varphi_1, \varphi_2$}
            \If{$\aleph_0 \in \spec_{\T_2}(\varphi_2)$}
                \State \Return $\varphi_1$ is $\T_1$-satisfiable
            \EndIf
            \State $k \gets 0$
            \While{$\varphi_2 \land \neq(x_{1},\ldots,x_{k+1})$ is $\T_2$-satisfiable} \Comment{$x_{1},\ldots,x_{k+1}$ are fresh}
                \State $k \gets k+1$
            \EndWhile
            \State \Return $k \in \spec_{\T_1}(\varphi_1)$
        \EndFunction

        \Statex

        \Require $\T_1$ and $\T_2$ have computable spectra
        \Function{CS}{$\varphi_1, \varphi_2$}
            \If{$\aleph_0 \in \spec_{\T_1}(\varphi_1) \cap \spec_{\T_2}(\varphi_2)$}
                \State \Return \TRUE
            \ElsIf{$\aleph_0 \notin \spec_{\T_1}(\varphi_1)$}
                \State $i \gets 1$
            \Else \Comment{$\aleph_0 \notin \spec_{\T_2}(\varphi_2)$}
                \State $i \gets 2$
            \EndIf
            \State $k \gets 0$
            \While{$\varphi_i \land \neq(x_{1},\ldots,x_{k+1})$ is $\T_i$-satisfiable} \Comment{$x_{1},\ldots,x_{k+1}$ are fresh}
                \State $k \gets k+1$
            \EndWhile
            \ForAll{$n \in [k]$}
                \If{$n \in \spec_{\T_1}(\varphi_1) \cap \spec_{\T_2}(\varphi_2)$}
                    \State \Return \TRUE
                \EndIf
            \EndFor
            \State \Return \FALSE
        \EndFunction

        \Statex

        \Require $\T_1$ is $n$-shiny and $\T_2$ is $n$-decidable
        \Function{$n$-shiny}{$\varphi_1, \varphi_2$}
            \If{$n \in \spec_{\T_1}(\varphi_1) \cap \spec_{\T_2}(\varphi_2)$}
                \State \Return \TRUE
            \ElsIf{$\varphi_1$ is $\T_1$-satisfiable}
                \State $k \gets \max(\mathbb{N} \setminus \spec_{\T_1}(\varphi_1)) + 1$
                \State \Return $\varphi_2 \land \neq(x_{1},\ldots,x_{k})$ is $\T_2$-satisfiable \Comment{$x_{1},\ldots,x_{k}$ are fresh}
            \Else
                \State \Return \FALSE
            \EndIf
        \EndFunction

        \Statex

        \Require $\T_1$ is $\mathfrak{F}$-quasi-gentle and $\T_2$ is co-$\mathfrak{F}$-quasi-gentle \Comment{$\mathfrak{F}$ is a free filter}
        \Function{quasi-gentle}{$\varphi_1, \varphi_2$}
            \State $n \gets 1$
            \While{$\varphi_i \land \neq(x_{1},\ldots,x_{n})$ is $\T_i$-satisfiable for both $i \in \{1,2\}$} \Comment{$x_{1},\ldots,x_{n}$ are fresh}
                \If{$n \in \spec_{\T_1}(\varphi_1) \cap \spec_{\T_2}(\varphi_2)$}
                    \State \Return \TRUE
                \EndIf
                \State $n \gets n+1$
            \EndWhile
            \State \Return \FALSE
        \EndFunction
    \end{algorithmic}
\end{algorithm}

\section{The Relationships Between Combination Theorems} \label{sec-lattice}

We wrap up by plotting in \Cref{fig:lattice} the Hasse diagram for the lattice of theory combination properties we have examined. We claim that the diagram is correct in the following sense: we have $\class{X} \subseteq \class{Y}$ if and only if there is an upward path from $\class{X}$ to $\class{Y}$. Furthermore, while we sometimes have $\class{\quagen{}} = \class{\coquagen{}}$ (e.g., when $\mathfrak{F}$ is an ultrafilter), for every other pair of sets, the inclusion is strict. The Galois connection $\Gal$ gives rise to a symmetry about the dashed line; for example: $\Gal(\class{gentle}) = \Gal(\class{CFS})$, $\Gal(\class{shiny})=\Gal(\class{})$, and so on.

We start by showing that the lines in the Hasse diagram correspond to inclusions.
\begin{restatable}{proposition}{propinclusion} \label{prop-inclusion}
    For every pair of sets $\class{X}$ and $\class{Y}$ in \Cref{fig:lattice}, if there is an upward path from $\class{X}$ to $\class{Y}$, then $\class{X} \subseteq \class{Y}$.
\end{restatable}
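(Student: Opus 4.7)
The plan is to verify $\class{X} \subseteq \class{Y}$ along each edge of the Hasse diagram and then invoke transitivity of $\subseteq$ for the general claim. Reading off \Cref{fig:lattice}, the edges to check are $\class{$n$-decidable}, \class{ID} \subseteq \mathfrak{T}$; $\class{CFS} \subseteq \class{$n$-decidable}$; $\class{CS}, \class{\coquagen} \subseteq \class{CFS}$; $\class{CS}, \class{SI} \subseteq \class{ID}$; $\class{\quagen} \subseteq \class{\coquagen}$; $\class{SM+CS}, \class{gentle} \subseteq \class{CS}$; $\class{SM+CS} \subseteq \class{SI}$; $\class{gentle} \subseteq \class{\quagen}$; $\class{$n$-shiny} \subseteq \class{gentle}$; and $\class{shiny} \subseteq \class{$n$-shiny}, \class{SM+CS}$.

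Most of these follow directly from the definitions. Every class in the diagram consists of decidable theories (for shiny, $n$-shiny, and gentle this uses the explicit footnotes in \Cref{sec:thcomb} and the definition in \Cref{sec-n-decidable}), giving the inclusions into $\mathfrak{T}$. The identity $\class{CS} = \class{CFS} \cap \class{ID}$ gives $\class{CS} \subseteq \class{CFS}$ and $\class{CS} \subseteq \class{ID}$, while $\class{SM+CS} \subseteq \class{CS}$ and $\class{\coquagen} \subseteq \class{CFS}$ are immediate from the respective definitions. For $\class{CFS} \subseteq \class{$n$-decidable}$, I will note that deciding whether the specific cardinal $n$ lies in $\spec_\T(\varphi)$ is a special case of the algorithm provided by computable finite spectra. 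For $\class{SI} \subseteq \class{ID}$, stable infiniteness reduces the question $\aleph_0 \in \spec_\T(\varphi)$ to decidability of $\T$-satisfiability. For $\class{SM+CS} \subseteq \class{SI}$, smoothness upgrades any model of a satisfiable $\varphi$ into models of all larger cardinalities, including $\aleph_0$ via \Cref{LowenheimSkolem}.

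The edges involving (co-)quasi-gentleness will use elementary filter combinatorics. For $\class{\quagen} \subseteq \class{\coquagen}$, if $A \in \mathfrak{F}$ then $\mathbb{N} \setminus A \notin \mathfrak{F}$, since otherwise $\emptyset = A \cap (\mathbb{N} \setminus A) \in \mathfrak{F}$, contradicting the filter axioms. For $\class{gentle} \subseteq \class{\quagen}$, the gentleness algorithm returns each spectrum either as a finite subset of $\No$ or as $\N \setminus S$ for finite $S$, i.e.\ a cofinite subset of $\mathbb{N}$ together with $\aleph_0$; since every cofinite subset of $\mathbb{N}$ lies in every free filter, this matches the $A \cup \{\aleph_0\}$ shape demanded by the \quagen{} definition. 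The same gentleness algorithm also yields computable spectra, giving $\class{gentle} \subseteq \class{CS}$. For $\class{$n$-shiny} \subseteq \class{gentle}$, each of the three allowed $n$-shiny spectrum forms $\{n\}$, $\{n\} \cup \{m \in \N : m \geq k\}$, and $\{m \in \N : m \geq k\}$ is either finite in $\No$ or cofinite in $\N$, and the computable pair $(t,k)$ lets us read off the pair $(S, b)$ required by gentleness.

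Finally, the shiny edges will rest on the observation that for a shiny theory $\T$ and any $\T$-satisfiable quantifier-free formula $\varphi$, smoothness together with the finite model property gives $\spec_\T(\varphi) = \{m \in \N : m \geq \minmod_\T(\varphi)\}$, with $\minmod_\T(\varphi)$ computable. This immediately yields computable spectra, so $\class{shiny} \subseteq \class{SM+CS}$, and matches the $t = 2$ case of the $n$-shininess definition with $k = \minmod_\T(\varphi)$, so $\class{shiny} \subseteq \class{$n$-shiny}$. The only mildly delicate step across the whole proof is the filter-theoretic argument in the third paragraph; the rest is essentially a bookkeeping exercise over the edges.
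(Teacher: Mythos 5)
Your proposal is correct and takes essentially the same approach as the paper: enumerate the edges of the Hasse diagram, verify each inclusion directly (mostly by unwinding definitions, plus the observation that a free filter contains all cofinite sets, that smoothness implies stable infiniteness, and that shiny spectra have the form $\{m \in \N : m \ge k\}$), and close under transitivity. The edge list you read off matches the paper's, and each per-edge argument parallels the one given there.
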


\begin{table}
\centering
\caption{Example theories. $F:\No\rightarrow\No\cup\{\Inf\}$ is a function such that:
$(i)$ $\{(m,n) \in {\No}^2 : F(m)\geq n\}$ is decidable; and $(ii)$ $\{n : F(n)=\aleph_{0}\}$ is undecidable.
$\unc \subset \No$ is an undecidable set.}
\renewcommand{\arraystretch}{1.5}
\centering
\begin{tabular}{c|c|c}
Sig. & Name & Axiomatization\\\hline
$\Spn$ & $\Th{d}^{n}$ & $\{P_{2k+1}\rightarrow\psi_{\geq n+1} : k\in \unc\}\cup\{P_{2k}\rightarrow \psi_{\leq F(k)} : F(k)\in\mathbb{N}\}\cup\{P_{i}\rightarrow\neg P_{j} : i\neq j\}$\\
$\Spn$ & $\Th{cfs}$ & $\{P_{1}\rightarrow\psi_{\geq k} : k\in\mathbb{N}\}\cup\{P_{k}\rightarrow\psi_{\leq F(k)} : F(k)\in\mathbb{N}, k\geq 2\}\cup\{P_{i}\rightarrow\neg P_{j} : i\neq j\}$\\
$\Spn$ & $\Th{si}$ & $\{P_{n}\rightarrow \psi_{\geq m} : n\in \unc, m\in\No\}\cup\{P_{i}\rightarrow\neg P_{j} : i\neq j\}$\\
$\Sp$ & $\Th{cs}$ & $\{P\rightarrow \psi_{=1}\}\cup\{\neg P\rightarrow \psi_{\geq m} : m\in\mathbb{N}\}$\\
$\Sp$ & $\Th{n-s}^{n}$ & $\{P\rightarrow\psi_{=n}\}\cup\{\neg P\rightarrow\psi_{\geq n}\}$\\
$\Sigma_{1}$ & $\Tleqn$ & $\{\psi_{\leq n}\}$\\
$\Sigma_{1}$ & $\Tgeqn$ & $\{\psi_{\geq n}\}$
\end{tabular}
\label{tab:extheories}
\centering
\end{table}

Next, we demonstrate that the inclusions indicated by \Cref{fig:lattice} are strict, with the exception of $\class{\quagen{}}$ and $\class{\coquagen{}}$, and that there are no more inclusions.

\begin{proposition} \label{prop-strict-inclusion}
    For every pair of sets $\class{X}$ and $\class{Y}$ in \Cref{fig:lattice}, if there is no upward path from $\class{X}$ to $\class{Y}$ and $(\class{X},\class{Y}) \neq (\class{\coquagen{}},\class{\quagen{}})$, then $\class{X} \not\subseteq \class{Y}$.
\end{proposition}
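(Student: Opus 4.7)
The plan is to exhibit, for each ordered pair $(\class{X}, \class{Y})$ lacking an upward path in \Cref{fig:lattice} (excluding the noted exception $(\class{\coquagen},\class{\quagen})$), a witness theory $\T \in \class{X} \setminus \class{Y}$. Since the diagram is finite (treating each quasi-gentle family uniformly for a fixed free filter $\mathfrak{F}$), transitivity reduces this to a manageable list of ``atomic'' non-inclusions: once $\class{X} \not\subseteq \class{Y}$ is established by some witness $\T$, the same $\T$ witnesses $\class{X}' \not\subseteq \class{Y}'$ for every $\class{X}' \supseteq \class{X}$ and $\class{Y}' \subseteq \class{Y}$. The witnesses will be drawn from \Cref{tab:extheories}, whose entries are designed for this purpose, with mnemonic subscripts advertising the intended classification.

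First I would classify each example theory's membership in every relevant class. Positive membership ($\T \in \class{X}$) follows by providing a decision procedure, using the decidability of $\{(m,n) : F(m) \geq n\}$ and the mutual exclusivity of the $P_i$s enforced by the axioms. Negative membership ($\T \notin \class{Y}$) follows by showing that a decision procedure for $\class{Y}$ would decide either $\unc$ or the halting problem $\{n : F(n) = \aleph_0\}$. For instance: $\Th{cfs}$ has computable finite spectra (since $m \in \spec_{\Th{cfs}}(P_k)$ reduces to $F(k) \geq m$, which is decidable) but is not infinitely decidable, because $\aleph_0 \in \spec_{\Th{cfs}}(P_k) \Leftrightarrow F(k) = \aleph_0$; $\Th{si}$ is stably infinite and infinitely decidable but not in $\class{CFS}$; $\Th{cs}$ has computable spectra but is not stably infinite (witnessed by $P$ forcing size $1$); $\Th{n-s}^n$ is $n$-shiny but not smooth (hence not shiny), since $\spec(P) = \{n\}$ cannot be extended upward; and $\Th{d}^n$ is decidable but encodes $\unc$ in a way that defeats $n$-decidability.

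Next I would systematically pair these witnesses against each non-inclusion. The $\class{$n$-decidable}$ boundary is handled by $\Th{d}^n$; the $\class{CFS}$-vs-$\class{ID}$ incomparability by $\Th{cfs}$ (in $\class{CFS} \setminus \class{ID}$) and $\Th{si}$ (in $\class{ID} \setminus \class{CFS}$, hence also $\class{SI} \setminus \class{CS}$); the $\class{CS}$-vs-$\class{SI}$ incomparability by $\Th{cs}$ and $\Th{si}$; the $\class{SM+CS}$-vs-$\class{SI}$ gap by $\Th{cs}$ (and the other direction by any CS-theory that is not smooth, e.g.\ using \Cref{ex-cm-cs}); the $\class{$n$-shiny}$-vs-$\class{shiny}$ gap by $\Th{n-s}^n$; and the ``trivial'' theories $\Tleqn$ and $\Tgeqn$ serve for the remaining shiny/SI boundary cases. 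Finally, the strictness of $\class{gentle} \subsetneq \class{\quagen}$ and $\class{\coquagen} \subsetneq \class{CFS}$ follows the pattern in the proof of \Cref{prop-refine}: for any infinite non-cofinite $S \in \mathfrak{F}$, the $\Sigma_1$-theory axiomatized by $\{\neg\psi_{=n} : n \notin S\}$ lies in $\class{\quagen} \setminus \class{gentle}$, and a dual construction using $S \subseteq \mathbb{N}$ with $\mathbb{N}\setminus S \in \mathfrak{F}$ gives a theory in $\class{CFS} \setminus \class{\coquagen}$.

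The main obstacle will be the bookkeeping: verifying, for each example theory, its membership status in \emph{every} class of the diagram (not just the two immediately compared), so that transitivity correctly propagates to cover all unordered-pair cases. This is especially delicate for the parametrized families $\class{\quagen}$ and $\class{\coquagen}$, whose strict containments depend on properties of the filter $\mathfrak{F}$; we must either exhibit filter-uniform witnesses or handle each filter case separately, while respecting that $\class{\quagen}$ and $\class{\coquagen}$ coincide precisely when $\mathfrak{F}$ is an ultrafilter. The individual verifications themselves, however, follow the by-now-familiar template established in \Cref{sec-known,sec-new}, using the two undecidable sets $\unc$ and $\{n : F(n) = \aleph_0\}$ to block candidate decision procedures.
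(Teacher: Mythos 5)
The overall strategy is the same as the paper's (Table~\ref{tab:extheories} witnesses placed in the Hasse diagram, with transitivity), but your handling of the quasi-gentle boxes has a genuine hole. To place a witness at the $\class{\quagen}$ position you need a theory in $\class{\quagen}$ that is \emph{not} infinitely decidable (this kills $\class{ID}$ and, via the inclusions $\class{gentle}\subseteq\class{CS}\subseteq\class{ID}$ and $\class{SM+CS}\subseteq\class{CS}\subseteq\class{ID}$, everything else not above $\class{\quagen}$). Your candidate $\T_S=\{\neg\psi_{=n}:n\notin S\}$ over the empty signature fails for two reasons. First, any such $\T_S$ \emph{is} infinitely decidable — over the empty signature a satisfiable cube always has an infinite model, so $\aleph_0\in\spec_{\T_S}(\varphi)$ reduces to pure first-order satisfiability — so it can never witness $\class{\quagen}\not\subseteq\class{ID}$ (nor $\not\subseteq\class{CS}$, $\not\subseteq\class{SI}$, $\not\subseteq\class{SM+CS}$). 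Second, when the induced filter $\mathfrak{F}\cap REC$ is the Fréchet filter, there is no infinite non-cofinite decidable $S\in\mathfrak{F}$ at all, so even the weaker separation $\class{\quagen}\setminus\class{gentle}$ has no witness in your scheme.

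The paper avoids both problems by using $\Tle^{\No}$ (Table~\ref{tab:testtheories}) at the $\class{\quagen}$ position and $\Tle^{A}$ at the $\class{\coquagen}$ position. The crucial feature these share, and which your $\Sigma_1$-theories lack, is that they fold in the uncomputable function $F$: the spectrum of $P_n$ is $[\minmod,F(n)]$ — always finite or cofinite, so $\Tle^{\No}$ is \quagen{} for \emph{every} free filter — but whether it is finite or cofinite, i.e.\ whether $F(n)=\aleph_0$, is undecidable, so the theory is neither gentle nor infinitely decidable. In other words, the filter-uniform witness you flag as a desideratum at the end of your sketch is exactly what is needed, and it cannot be produced from empty-signature constructions; you need the parameterized $\Spn$-theories. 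Your remaining classifications ($\Th{cfs}$, $\Th{si}$, $\Th{cs}$, $\Th{d}^n$, $\Th{n-s}^n$, $\T_n^{n+1}$, $\Tinfty$, $\Tleqn$, $\Tgeqn$) are essentially right, with the minor note that for $\Th{si}$ you should verify the stronger fact $\Th{si}\notin\class{$n$-decidable}$ (not merely $\notin\class{CFS}$), which holds because $n\in\spec_{\Th{si}}(P_m)$ iff $m\notin\unc$.
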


For each set $\class{X}$ in \Cref{fig:lattice}, we prove \Cref{prop-strict-inclusion} by constructing a theory $\T$ such that $\T \in \class{X} \setminus \class{Y}$ for all sets $\class{Y}$ such that there is no upward path from $\class{X}$ to $\class{Y}$ and $(\class{X},\class{Y}) \neq (\class{\coquagen{}},\class{\quagen{}})$.

Let $\Sp$ be a signature with no functions symbols and a single nullary predicate $P$. We use the theories given in \Cref{tab:extheories} ($\Tleqn$ and $\Tgeqn$ were defined in \cite{CADE}). 

While we sometimes have $\class{\quagen{}} = \class{\coquagen{}}$, we show a separation for some choices of free filter $\mathfrak{F}$, in particular those for which there is a decidable set $A \subset \mathbb{N}$ such that $A \notin \mathfrak{F}$ and $\mathbb{N} \setminus A \notin \mathfrak{F}$ (i.e., the induced filter on $REC$ is not an ultrafilter). For example, if $\mathfrak{F} = \mathbb{F}$ is the Fr{\'e}chet filter, then we can take $A$ to be the even numbers.

We place our example theories in \Cref{fig:lattice-ex-theories}, where each theory occupies the position from \Cref{fig:lattice} of the strongest property it satisfies.
For example, $\Th{d}^{n}$ is decidable, but lacks
the other properties, $\Tgeqn$ has all the properties, and so on.

$\Th{d}^{n}$ is a $\Spn$-theory, dependent on an $n\in\No$, where the validity of a $P_{k}$ implies:
that the interpretation has at least $n+1$ elements if $k$ is odd;
and at most $F(k/2)$ elements if $k$ is even.
$\Th{cfs}$ is also a $\Spn$-theory, where the validity of $P_{1}$ implies the interpretation is infinite, and the validity of $P_{k}$, for $k\geq 2$, implies the interpretation has at most $F(k)$ elements;
this makes $\Th{cfs}$ very similar to $\Tle^{\No}$, except for the behavior of $P_{1}$.
The last of the new $\Spn$-theories is $\Th{si}$, and in this theory the validity of $P_{k}$ implies the interpretation is infinite if and only if $k\in\unc$;
notice that, if $k\notin\unc$, then $P_{k}$ has models of all finite cardinalities too.

$\Th{cs}$ is a $\Sp$-theory easy to define: if $P$ is true, the interpretation has exactly one element, and if it is false then the interpretation must be infinite.
$\Th{n-s}^{n}$ is a $\Sp$-theory as well, but this time dependent on an $n\in\No$, where the truth of $P$ implies an interpretation has cardinality exactly $n$, and its falsity implies an interpretation has at least $n$ elements.
$\Tleqn$ and $\Tgeqn$, both dependent on $n\in\No$, are $\Sigma_{1}$-theories where an interpretation must have, respectively, at most and at least $n$ elements.

\begin{figure}[h!]
    \centering
    \adjustbox{scale=1.0,center}{
    \begin{tikzcd}[sep=small,column sep=3.0em,row sep=0.35em]
& & \Th{d}^{n} \arrow[dash]{ddr}{} \arrow[dash]{dl}{} & & \\%
& \Th{d}^{n-1} \arrow[dash]{dl}{} & & & \\
\Th{cfs}\arrow[dash]{ddrr}{}\arrow[dash]{d}{} & & & \T_{n}^{n+1}\arrow[dash]{ddr}{}\arrow[dash]{ddl}{} & \\
\Tle^{A}\arrow[dash]{dd}{} & & & &  \\
 & & \Th{cs}\arrow[dash]{ddr}{}\arrow[dash]{ddll}{} & & \Th{si}\arrow[dash]{ddl}{} \\
\Tle^{\No}\arrow[dash]{d}{} & & & & \\
\Tleqn\arrow[dash]{dr}{} & & & \Tinfty\arrow[dash]{ddl}{} & \\
& \Th{n-s}^{n}\arrow[dash]{dr}{} & & & \\
& & \Tgeqn & & \\%
\end{tikzcd}}
    \caption{The example theories used to separate the theory combination properties in \Cref{fig:lattice}}
    \label{fig:lattice-ex-theories}
\end{figure}
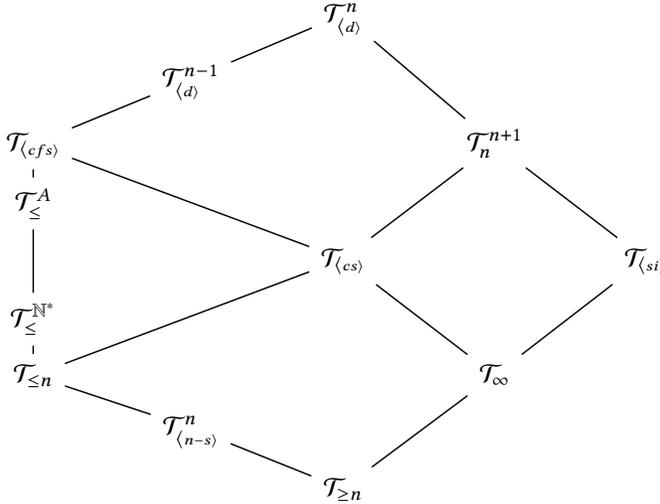

\subsection{Lattice-Theoretic Properties of $\mathcal{C}$}

Recall from \Cref{sec-galois} that $\mathcal{C}$ is the lattice of closed elements of $\mathbb{P}(\mathfrak{T})$ (equivalently, the image of $\Gal$). We have made some basic observations about $\mathcal{C}$: it is complete, it is isomorphic to its dual, and it has height and width at least $2^{\aleph_0}$. Other properties of the lattice remain unknown. For but one example, we do not know if $\mathcal{C}$ is distributive. Given the demonstrated utility of computing meets and joins in $\mathcal{C}$, such algebraic questions seem worthy of study.

\section{Conclusion and Further Directions}

Ever since Nelson and Oppen's seminal paper~\cite{NelsonOppen}, the central theoretical question in theory combination has been when sets of theories can be disjointly combined. We provided a systematic answer to this question by developing a general technique for identifying the largest set of theories that can be disjointly combined with a given set. We demonstrated this technique by characterizing the theories that can be disjointly combined with sets of theories to which existing theory combination methods apply, in each case either showing that the method is sharp or improving the method to make it sharp. Moreover, we showed how to use our framework to easily generate new combination methods.

The potential impact of our work is twofold. First, existing as well as future combination methods 
can be better understood by viewing them through the lens of our Galois connection. Specifically, given a new combination method, we can determine whether it is sharp and compare it to previous combination methods based on its position in the lattice.
Second, this work paves the way to discover new combination methods that could be implemented in SMT solvers for combined theories. Thus, 
when a new ``container theory'' (such as that of uninterpreted functions, arrays, etc.) 
is introduced, its various properties can be evaluated in order to produce
a combination method that allows it to be combined with the largest possible
set of decidable theories.

Our work suggests several avenues for further research. One is to generalize our results to many-sorted logic, which is particularly important for applications to SMT. We are also interested in whether analogous results can be proved for non-disjoint theory combination. There is some work on non-disjoint theory combination (e.g., \cite{unions-non-disjoint,ghilardi-non-disjoint,gentle-non-disjoint,polite-non-disjoint}), although it has been less thoroughly explored than the disjoint case.

Another direction is to develop more efficient combination methods inspired by the methods we have considered, several of which are computationally expensive. Once a problem is known to be decidable, it is often possible to develop tractable algorithms. Indeed, one motivation for (strongly) polite theory combination was to be a more efficient alternative to shiny theory combination. On the theoretical side, one could study which pairs of theories are \emph{efficiently} combinable, meaning roughly that they can be combined without an increase in computational complexity. Oppen~\cite{OppenSI} proved some results along these lines, demonstrating the role that convexity plays in the efficiency of the Nelson--Oppen method.

\section*{Acknowledgments}

Przybocki was supported by the NSF Graduate Research Fellowship Program under Grant No. DGE-2140739.
Zohar and Toledo were supported by BSF grant number 2020704, DARPA 
contract FA875024-2-1001,
and a gift from AWS.

\bibliographystyle{ACM-Reference-Format}
\bibliography{bib}

\appendix

\section{Further Definitions}
$\T$ is \emph{finitely witnessable} when there exists a computable function $\wit:QF(\Sigma)\rightarrow QF(\Sigma)$, called a \emph{witness}, satisfying for every quantifier-free formula $\varphi$:
that $\varphi$ and $\Exists{\overarrow{x}}\wit(\varphi)$ are $\T$-equivalent, where $\overarrow{x}=\vars(\wit(\varphi))\setminus\vars(\varphi)$;
and, if $\wit(\varphi)$ is $\T$-satisfiable, there exists a $\T$-interpretation $\A$ with $\A\vDash\wit(\varphi)$ and $\dom{\A}=\vars(\wit(\varphi))^{\A}$ \cite{polite}.
$\T$ is \emph{strongly finitely witnessable} when it has a witness that, in addition to the above properties, satisfies that, for any finite set of variables $V$, arrangement $\delta_{V}$ on $V$ and quantifier-free formula $\varphi$, if $\wit(\varphi)\wedge\delta_{V}$ is $\T$-satisfiable, then there exists a $\T$-interpretation $\A$ with $\A\vDash\wit(\varphi)\wedge\delta_{V}$ and $\dom{\A}=\vars(\wit(\varphi)\wedge\delta_{V})^{\A}$ \cite{JB10-LPAR}.
A theory is \emph{polite} if it is finitely witnessable and smooth;
it is \emph{strongly polite} if, in addition, it is strongly finitely witnessable.\footnote{Both the original notion from \cite{polite} and its revision in \cite{JB10-LPAR} were called {\em polite} where they were studied. However, since \cite{CasalRasga2} the term {\em polite} refers to the original while the term {\em strongly polite} refers to the revision. We follow this convention in this paper.}

$\Teq$ is the $\Sigma_{1}$-theory with no axioms, which is equivalent to it being axiomatized by all tautologies:
it is the logic of only equalities and disequalities.
\cite{shiny} proves that it is shiny, and of course it is decidable.

\section{The ``Only if'' Direction of Fontaine's Lemma} \label{appendix-fontaine}

We recall the statement of Fontaine's lemma:
\lemfontaine*

We prove the ``only if'' direction of this lemma, since Fontaine~\cite{gentle} only proved the ``if'' direction. We use the following result:
\begin{lemma}[{\cite[Proposition~3.1]{tinelli-new}}] \label{lem-tinelli}
    Let $\T_1$ and $\T_2$ be theories. Then, their disjoint union $\T_1 \sqcup \T_2$ is consistent if and only if $\spec_{\T_1}(\top) \cap \spec_{\T_2}(\top) \neq \emptyset$.
\end{lemma}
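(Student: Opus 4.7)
The plan is to prove both directions directly, using the fact that $\Sigma_{1}$ and $\Sigma_{2}$ share only equality, so that any pair of structures on a common domain can be fused into a single $\Sigma_{1}\sqcup\Sigma_{2}$-structure.

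For the forward direction, suppose $\T_{1}\sqcup\T_{2}$ is consistent and let $\A$ be a $\T_{1}\sqcup\T_{2}$-interpretation. Its $\Sigma_{i}$-reduct $\A\restriction_{\Sigma_{i}}$ is then a $\T_{i}$-interpretation (since the axioms of $\T_{i}$ only mention symbols of $\Sigma_{i}$), and both reducts share the domain $\dom{\A}$. Setting $n=|\dom{\A}|$, we obtain $n\in\spec_{\T_{1}}(\top)\cap\spec_{\T_{2}}(\top)$; if $n$ is uncountable we invoke \Cref{LowenheimSkolem} to pass to a countable elementarily equivalent substructure, landing in $\N$.

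For the converse, suppose $n\in\spec_{\T_{1}}(\top)\cap\spec_{\T_{2}}(\top)$, with witnessing structures $\sA_{1}$ and $\sA_{2}$ of $\T_{1}$ and $\T_{2}$ respectively, each of size $n$. Pick any bijection $h:\dom{\sA_{2}}\to\dom{\sA_{1}}$ and transport the $\Sigma_{2}$-structure of $\sA_{2}$ along $h$ to obtain an isomorphic $\T_{2}$-interpretation $\sA_{2}'$ with $\dom{\sA_{2}'}=\dom{\sA_{1}}$. Now define a $\Sigma_{1}\sqcup\Sigma_{2}$-structure $\sA$ on the common domain $\dom{\sA_{1}}$ by interpreting each $\Sigma_{1}$-symbol as in $\sA_{1}$ and each $\Sigma_{2}$-symbol as in $\sA_{2}'$; equality is interpreted consistently in both (namely as genuine equality). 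Since $\T_{1}$ uses only symbols from $\Sigma_{1}$ and $\T_{2}$ only symbols from $\Sigma_{2}$, the structure $\sA$ is simultaneously a model of $\T_{1}$ and of $\T_{2}$, hence of $\T_{1}\sqcup\T_{2}$.

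The main subtlety, and the only non-routine step, is ensuring that the fused interpretation $\sA$ genuinely satisfies each $\T_{i}$: this hinges on the fact that satisfaction of a $\Sigma_{i}$-sentence in $\sA$ depends only on its $\Sigma_{i}$-reduct, which by construction coincides with $\sA_{1}$ or $\sA_{2}'$. The reason $\Sigma_{1}$ and $\Sigma_{2}$ must be disjoint (other than sharing equality) is precisely to guarantee this compatibility; and the reason equality can be shared is that both structures, having the same underlying domain, automatically agree on its interpretation.
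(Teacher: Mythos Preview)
The paper does not provide its own proof of this lemma; it is quoted from \cite[Proposition~3.1]{tinelli-new} and used as a black box in the appendix proof of the ``only if'' direction of Fontaine's lemma. Your argument is correct and is the standard one: take reducts for the forward direction, and amalgamate two equinumerous structures over a common domain for the converse. One small clarification worth making explicit in the forward direction: when you invoke \Cref{LowenheimSkolem} to handle an uncountable $\A$, you are applying it to the combined $\Sigma_1\sqcup\Sigma_2$-theory (which is over a countable signature), so the resulting countable model still satisfies $\T_1\sqcup\T_2$ and its two reducts share that cardinality; your phrasing already suggests this, but it is the point on which the step depends.
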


Now, we can prove the ``only if'' direction of Fontaine's lemma.
\begin{lemma}
    Let $\T_1$ and $\T_2$ be theories over the signatures $\Sigma_1$ and $\Sigma_2$ respectively. If the disjoint combination $\T_1 \sqcup \T_2$ is decidable, then the following problem is decidable: given conjunctions of literals $\varphi_1$ and $\varphi_2$ over $\Sigma_1$ and $\Sigma_2$ respectively, determine whether $\spec_{\T_1}(\varphi_1) \cap \spec_{\T_2}(\varphi_2) = \emptyset$.
\end{lemma}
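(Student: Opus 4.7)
The plan is to reduce the question ``is $\spec_{\T_1}(\varphi_1) \cap \spec_{\T_2}(\varphi_2) = \emptyset$?'' to a single instance of $\T_1 \sqcup \T_2$-satisfiability for a quantifier-free formula, after which the hypothesis that $\T_1 \sqcup \T_2$ is decidable finishes the job. The first step is an easy sanitization: by renaming, assume $\vars(\varphi_1) \cap \vars(\varphi_2) = \emptyset$, which does not change either spectrum.

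The heart of the argument is the \emph{key claim}: with disjoint variables, $\spec_{\T_1}(\varphi_1) \cap \spec_{\T_2}(\varphi_2) \ne \emptyset$ iff $\varphi_1 \wedge \varphi_2$ is $\T_1 \sqcup \T_2$-satisfiable. I would prove this in two ways (either is fine; I'd probably use the first for brevity). The cleanest route is to invoke \Cref{lem-tinelli} on the augmented theories $\T_1' \coloneqq \T_1 \cup \{\Exists{\overarrow{x}}\varphi_1\}$ and $\T_2' \coloneqq \T_2 \cup \{\Exists{\overarrow{y}}\varphi_2\}$, where $\overarrow{x} = \vars(\varphi_1)$ and $\overarrow{y} = \vars(\varphi_2)$. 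Then $\spec_{\T_i'}(\top) = \spec_{\T_i}(\varphi_i)$, and since $\overarrow{x},\overarrow{y}$ are disjoint, consistency of $\T_1' \sqcup \T_2'$ is equivalent to the $\T_1 \sqcup \T_2$-satisfiability of $\varphi_1 \wedge \varphi_2$.

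The direct route, which I would sketch as backup in case a reader prefers not to chain through Tinelli--Harandi, goes as follows. For $(\Leftarrow)$, any $\T_1 \sqcup \T_2$-model $\A$ of $\varphi_1 \wedge \varphi_2$ has $\Sigma_i$-reduct $\A|_{\Sigma_i}$ that is a $\T_i$-model of $\varphi_i$ of the same cardinality, so $|\dom{\A}|$ lies in both spectra. For $(\Rightarrow)$, given $\kappa \in \spec_{\T_1}(\varphi_1) \cap \spec_{\T_2}(\varphi_2)$ with witnessing models $\A_1,\A_2$ of cardinality $\kappa$, transfer $\A_2$ along a bijection $\dom{\A_2} \to \dom{\A_1}$, then build the $\Sigma_1 \sqcup \Sigma_2$-structure interpreting the $\Sigma_i$-symbols as in $\A_i$; the assignments for $\overarrow{x}$ and $\overarrow{y}$ can be copied from $\A_1$ and $\A_2$ independently precisely because they are disjoint.

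The remaining step is completely routine: $\varphi_1 \wedge \varphi_2$ is a conjunction of literals over $\Sigma_1 \sqcup \Sigma_2$, hence a quantifier-free formula to which the decision procedure for $\T_1 \sqcup \T_2$ applies. I do not anticipate a real obstacle here; the only place care is needed is in spelling out the variable-disjointness reduction and, in the direct version, the step that combines $\A_1$ and $\A_2$ into a single $\Sigma_1 \sqcup \Sigma_2$-structure on a common domain.
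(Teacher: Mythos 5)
Your proposal is correct and takes essentially the same approach as the paper: rename variables to disjointness, pass to the augmented theories $\T_i' = \T_i \cup \{\Exists{\vec{w_i}}\varphi_i\}$, invoke the Tinelli--Harandi lemma (\Cref{lem-tinelli}) to equate nonempty spectrum intersection with consistency of $\T_1' \sqcup \T_2'$, and reduce this in turn to $\T_1 \sqcup \T_2$-satisfiability of $\varphi_1 \wedge \varphi_2$. The ``direct route'' you sketch as a backup is just an unpacking of the proof of \Cref{lem-tinelli} itself and adds nothing new to the argument.
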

\begin{proof}
    Let $\varphi_1$ and $\varphi_2$ be conjunctions of literals. We may assume that the variables in $\varphi_1$ and $\varphi_2$ are disjoint by renaming variables if necessary. For $i \in \{1,2\}$, let $\T'_i$ be the theory axiomatized by $\T_i \cup \{\Exists{\vec{w_i}} \varphi_i\}$, where $\vec{w_i} = \vars(\varphi_i)$. Then, the following are equivalent:
    \begin{itemize}
        \item $\spec_{\T_1}(\varphi_1) \cap \spec_{\T_2}(\varphi_2) \neq \emptyset$;
        \item $\spec_{\T'_1}(\top) \cap \spec_{\T'_2}(\top) \neq \emptyset$;
        \item $\T'_1 \sqcup \T'_2$ is consistent (this equivalence is by \Cref{lem-tinelli});
        \item $\Exists{\vec{w_1}} \varphi_1 \land \Exists{\vec{w_2}} \varphi_2$ is $\T_1 \sqcup \T_2$-satisfiable; and
        \item $\varphi_1 \land \varphi_2$ is $\T_1 \sqcup \T_2$-satisfiable.
    \end{itemize}
    Thus, we can effectively decide whether $\spec_{\T_1}(\varphi_1) \cap \spec_{\T_2}(\varphi_2) \neq \emptyset$ by checking if $\varphi_1 \land \varphi_2$ is $\T_1 \sqcup \T_2$-satisfiable.
\end{proof}

\section{Test Theories}

Here, we prove the assertions about properties of test theories, namely those stated in \Cref{theo:testtheories,lem-n-shiny-test,lem-qg-test}.

\subsection{\tp{Proofs for $\Testsm$}{Proofs for the Theories That Test Smoothness}}\label{sec:testsm}

\begin{lemma}
    $\Testsm$ is decidable and infinitely decidable.
\end{lemma}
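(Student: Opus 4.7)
The plan is to exploit the fact that, although the axiom schema $\{P_k \to \psi_{=n} : k \in \unc\}$ invokes the undecidable set $\unc$, this undecidability never actually needs to be resolved in a satisfiability check. The key observation is that any predicate assignment that would force the model to size $n$ when $k \in \unc$ can always be realized at size $n$ regardless of whether $k \in \unc$, so the decision procedure can always retreat to a size-$n$ model and bypass the problematic axioms entirely.

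Concretely, given a quantifier-free $\Spn$-formula $\varphi$, let $P_{k_1}, \ldots, P_{k_s}$ be the predicates mentioned in $\varphi$ and $x_1, \ldots, x_r$ its variables. I will enumerate the pairs $(\vec{t}, \sim)$, where $\vec{t}$ is a truth assignment to the $P_{k_i}$ and $\sim$ is an equivalence relation on $\{x_1, \ldots, x_r\}$, that make $\varphi$ propositionally true, and for each such pair let $c$ be the number of $\sim$-classes. I will then show that $\varphi$ is $\Testsm$-satisfiable if and only if some such candidate satisfies (i) at most one $t_i$ is true, and (ii) $c \le n$. The ``only if'' direction is immediate by reading $(\vec{t}, \sim)$ off of any $\Testsm$-model of $\varphi$: (i) comes from the axioms $P_i \to \lnot P_j$, and (ii) from the fact that every $\Testsm$-model has size at most $n$. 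For the ``if'' direction, I will build a model of size exactly $n$ in which all $P_k$ not mentioned in $\varphi$ are false, the $P_{k_i}$ are assigned according to $\vec{t}$, and the variables are interpreted so as to witness $\sim$ (possible since $c \le n$). All axioms hold: $\psi_{=m} \vee \psi_{=n}$ by the choice of size, $P_i \to \lnot P_j$ by (i), and every $P_k \to \psi_{=n}$ either vacuously or because the model has size $n$. Since the check inspects only $\varphi$ and the fixed parameters $m, n \in \No$, it never queries $\unc$ and is effectively computable.

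Infinite decidability is then immediate: every $\Testsm$-model has size $m$ or $n$, both finite, so $\aleph_0 \notin \spec_{\Testsm}(\varphi)$ for every $\varphi$, and the algorithm that always returns ``no'' witnesses infinite decidability. The step requiring the most care is the realizability argument for the ``if'' direction, since it is precisely the point at which a naive algorithm might try to exhibit a size-$m$ model instead, and thereby be forced to decide whether some $k_i$ lies in $\unc$; the core observation that makes everything work is that choosing size $n$ is always safe.
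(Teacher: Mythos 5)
Your proof is correct and takes essentially the same approach as the paper: both decision procedures identify a consistency condition that never queries $\unc$, and both hinge on the observation that one may always build a satisfying model of size exactly $n$, thereby satisfying every axiom $P_k \to \psi_{=n}$ regardless of whether $k \in \unc$. The only cosmetic difference is that the paper reduces to $\minmod_{\Teq}$ of the non-$P$ part of a conjunction of literals, while you enumerate truth assignments to predicates and equivalence relations on variables directly; these amount to the same check.
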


\begin{proof}
\begin{enumerate} 
\item Let $\varphi$ be a conjunction of literals;
    a $P$-literal is a literal of the form $P_{k}$ or $\neg P_{k}$;
    $\varphi^{\prime}$ is the conjunction of all literals in $\varphi$ that are not $P$-literals.
    Of course, if $\varphi$ contains both the literals $P_{k}$ and $\neg P_{k}$, or $P_{i}$ and $P_{j}$ for $i\neq j$, it is not $\Testsm$-satisfiable;
    if that does not happen, we state that $\varphi$ is $\Testsm$-satisfiable if and only if $\minmod_{\Teq}(\varphi^{\prime})\leq n$, which implies the theory is decidable.

    One direction is quite clear, so assume $\minmod_{\Teq}(\varphi^{\prime})\leq n$, and take a $\Teq$-interpretation $\A$ that satisfies $\varphi^{\prime}$ with $|\dom{\A}|=n$ (if $\minmod_{\Teq}(\varphi^{\prime})<n$ we must use the fact $\Teq$ is smooth).
    We define a $\Spn$-interpretation $\B$ starting from $\A$ by making $P_{k}^{\A}$ true if and only if $P_{k}$ is a literal of $\varphi$ (we are assuming there is at most one), meaning $\B$ satisfies $\varphi$;
    furthermore, $|\dom{\B}|=n$, meaning both $\psi_{=m}\vee\psi_{=n}$ and $P_{k}\rightarrow\psi_{=n}$, for $k\in \unc$, are satisfied in $\B$, and thus is a $\Testsm$-interpretation.

    \item $\Testsm$ is clearly infinitely decidable as it contains no infinite interpretations. \qedhere
    \end{enumerate}
\end{proof}

\begin{lemma}
    If $k,k' \in \No$, $k < k'$, and $k \neq n$, then $\T^{k'}_{k}$ is $n$-decidable.
\end{lemma}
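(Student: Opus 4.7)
The plan is to give a direct decision procedure for the question ``is $n \in \spec_{\T^{k'}_{k}}(\varphi)$?'', exploiting the fact that restricting attention to models of size exactly $n$ will let us sidestep the undecidable set $\unc$ that otherwise obstructs the decidability of $\T^{k'}_{k}$ itself.

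First, I will observe that every $\T^{k'}_{k}$-model has cardinality $k$ or $k'$, by the axiom $\psi_{=k}\vee\psi_{=k'}$. Since the hypothesis rules out $n = k$, either $n \notin \{k, k'\}$, in which case $n \notin \spec_{\T^{k'}_{k}}(\varphi)$ for every $\varphi$ and the algorithm returns ``no'' immediately, or $n = k'$. The real work is in the case $n = k'$, to which I now turn.

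For $n = k'$, I will reduce to conjunctions of literals via disjunctive normal form and then split such a conjunction as $\varphi = \varphi' \wedge \varphi_P$, where $\varphi'$ contains only (dis)equality literals between variables and $\varphi_P$ contains only literals built from the predicates $P_j$. The key claim is that $n \in \spec_{\T^{k'}_{k}}(\varphi)$ if and only if (i) $\varphi_P$ is consistent in the sense that no $P_j$ and $\neg P_j$ both occur and at most one positive $P_j$-literal occurs in $\varphi_P$, and (ii) $\minmod_{\Teq}(\varphi') \leq n$. Both are decidable: (i) by inspection, and (ii) because $\Teq$ is shiny. The forward direction is immediate from the axioms of $\T^{k'}_{k}$. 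For the converse, using the smoothness of $\Teq$, I will take a $\Teq$-interpretation $\A$ with $|\dom{\A}| = n$ satisfying $\varphi'$, and extend it to a $\Spn$-structure $\B$ by setting $P_j^{\B}$ to true exactly when $P_j$ occurs positively in $\varphi_P$ (and false otherwise). By (i) this is well-defined; $\B \vDash \varphi$ by construction; and $\B$ satisfies every axiom of $\T^{k'}_{k}$ since $|\dom{\B}| = k'$ validates $\psi_{=k}\vee\psi_{=k'}$, the mutual-exclusion axioms hold by construction, and each axiom $P_m \rightarrow \psi_{=k'}$ holds automatically because $|\dom{\B}| = k'$ independently of the truth value of $P_m$.

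The main obstacle is not a technical difficulty but rather the conceptual recognition that, although $\T^{k'}_{k}$ itself is not decidable (because knowing whether $P_m$ is compatible with a size-$k$ model requires deciding $m \in \unc$), the question of whether a formula has a model of size exactly $k'$ bypasses $\unc$ entirely: any valuation of any $P_m$ is compatible with a size-$k'$ model, so the $\unc$-indexed axioms impose no nontrivial constraint when searching at size $k'$. This is precisely where the hypothesis $n \neq k$ is used, since it forces the only relevant target size to be $k'$.
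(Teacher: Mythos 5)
Your decision procedure is correct, and the structure matches the paper's: trivial when $n \notin \{k,k'\}$, and when $n = k'$ you observe that any satisfiable formula can be realized at size $k'$. However, your closing remark contains a factual error that is worth flagging. You assert that ``$\T^{k'}_{k}$ itself is not decidable'' — this is false, and the paper explicitly proves the opposite (it is one of the assertions of \Cref{theo:testtheories}, verified in the appendix under the $\Testsm$ heading): $\T^{k'}_{k}$ \emph{is} decidable, precisely because satisfiability reduces to checking $P$-literal consistency together with $\minmod_{\Teq}(\varphi') \leq k'$, i.e.\ exactly the criterion you derive. What fails for $\T^{k'}_{k}$ is not decidability but computable \emph{finite} spectra (equivalently, $k$-decidability): determining whether $k \in \spec_{\T^{k'}_{k}}(P_m)$ requires deciding $m \in \unc$, which is what makes the theory useful as a test theory for smoothness. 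The paper's own proof of this lemma is shorter than yours because it invokes the decidability of $\T^{k'}_{k}$ as a black box: since every $\T^{k'}_{k}$-satisfiable formula has a model of size $k'$, $n$-decidability for $n=k'$ is just decidability. You instead re-derive that decidability inline; the mathematics is sound, but the framing misattributes where the undecidable obstruction lives.
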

\begin{proof}
    If $n \neq k'$, then $\T^{k'}_{k}$ has no models of size $n$, so it is trivially $n$-decidable. Otherwise, if $n = k'$, then every $\T^{k'}_{k}$-satisfiable quantifier-free formula has a model of cardinality $n$, and since the theory is decidable we get that it is $n$-decidable.
\end{proof}

\subsection{\tp{Proofs for $\Testg$}{Proofs for the Theory That Tests Gentleness}}\label{sec:prooftestg}

\begin{lemma}
    Let $S$ be an infinite decidable set. Then, $\Testg$ is decidable, has computable finite spectra, and is $n$-decidable.
\end{lemma}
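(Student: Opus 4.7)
The plan is to adapt the decidability argument used for $\Testsm$ in \Cref{sec:testsm}. For a conjunction of literals $\varphi$, write $\varphi = \varphi' \wedge \varphi_P$, where $\varphi'$ consists of the equality and disequality literals and $\varphi_P$ of the $P$-literals. The axioms of $\Testg$ constrain models in three ways: an asserted $P_k$ (if any, and if $F(k)$ is finite) forces the model cardinality to be $\leq F(k)$; the axioms $\neg\psi_{=n}$ for $n \notin S$ forbid finite cardinalities outside $S$; and the axioms $P_i \to \neg P_j$ force mutual exclusion among the $P_k$. The key tools are the shininess of $\Teq$ (giving computable $\minmod_{\Teq}$ and smoothness), decidability of $S$, and decidability of $\{(i,j) \in \No^2 : F(i) \geq j\}$.

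\textbf{Decidability.} Given $\varphi$, declare it unsatisfiable whenever $\varphi_P$ is trivially inconsistent (contains both $P_k$ and $\neg P_k$, or contains $P_i$ and $P_j$ for $i \neq j$) or whenever $\varphi'$ is $\Teq$-unsatisfiable. Otherwise compute $m \coloneqq \minmod_{\Teq}(\varphi')$ and then $s_0 \coloneqq \min\{s \in S : s \geq m\}$; $s_0$ exists because $S$ is infinite (hence unbounded in $\No$) and is computable because $S$ is decidable. If $\varphi_P$ contains no positive $P_k$, any $\Teq$-model of $\varphi'$ of size $s_0$ (which exists by smoothness of $\Teq$), expanded to $\Spn$ by setting every $P_i$ to false, yields a satisfying $\Testg$-interpretation. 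Otherwise $\varphi_P$ asserts a single $P_k$ (up to repetition), and we test whether $F(k) \geq s_0$ using decidability of $\{(i,j) : F(i) \geq j\}$. If so, expand a $\Teq$-model of $\varphi'$ of size $s_0$ by making $P_k$ true and all other $P_i$ false; this satisfies every axiom of $\Testg$. If not, then $F(k) < s_0$ is finite, and the minimality of $s_0$ in $S$ above $m$ ensures no $s \in S$ lies in $[m, F(k)]$, while the infinite cardinality cannot satisfy $\psi_{\leq F(k)}$; hence $\varphi$ is $\Testg$-unsatisfiable.

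\textbf{Computable finite spectra and the main obstacle.} For computable finite spectra, given $\varphi$ and $n \in \No$, we have $n \in \spec_{\Testg}(\varphi)$ if and only if $\varphi_P$ is not trivially inconsistent, $n \in S$, $n \geq \minmod_{\Teq}(\varphi')$, and, whenever $\varphi_P$ asserts some $P_k$, $F(k) \geq n$. Each condition is decidable, so $\Testg$ has computable finite spectra; $n$-decidability for any fixed $n$ follows immediately, since it is exactly the decidability of $n \in \spec_{\Testg}(\varphi)$. The one subtlety throughout is that determining $F(k) = \aleph_0$ is undecidable, so we cannot directly case-split on whether the constraint coming from $P_k$ is vacuous. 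The algorithm avoids this by only comparing $F(k)$ against specific finite thresholds (either $s_0$ for the decidability proof, or $n$ for the spectrum check), both of which fall under the decidable predicate $\{(i,j) : F(i) \geq j\}$; choosing $s_0$ as the smallest element of $S$ above $m$ collapses what might look like an unbounded search through $S$ into a single decidable comparison.
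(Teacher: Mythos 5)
Your proposal is correct and takes essentially the same approach as the paper: reduce to cubes, case-split on the positive $P$-literal, use the smoothness and computable $\minmod$ of $\Teq$, and observe that $\spec_{\Testg}(\varphi)$ is (essentially) $[\minmod_{\Teq}(\varphi'), F(k)] \cap (S \cup \{\aleph_0\})$, so membership of any $n \in \No$ reduces to the decidable checks $n \geq \minmod_{\Teq}(\varphi')$, $F(k) \geq n$, and $n \in S$. The only cosmetic difference is that you precompute $s_0 = \min\{s \in S : s \geq m\}$ and do a single comparison against $F(k)$, whereas the paper iterates over $i$ until a witness is found or $i > F(k)$; both searches terminate for the same reason ($S$ is infinite), so this is a reorganization rather than a different argument.
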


\begin{proof}
\begin{enumerate}
    \item Let $\varphi$ be a conjunction of literals, and $\varphi^{\prime}$ be the result of removing all $P$-literals from $\varphi$.
    Of course, if $\varphi$ contains both $P_{n}$ and $\neg P_{n}$, or $P_{m}$ and $P_{n}$ for $m\neq n$, it is not $\Testg$-satisfiable, so we may assume $\varphi$ has at most one positive $P$-literal.

    If $\varphi$ contains no positive $P$-literal at all, we state $\varphi$ is $\Testg$-satisfiable if and only if $\varphi^{\prime}$ is $\Teq$-satisfiable:
    indeed, given any $\Teq$-interpretation $\A$ with cardinality in $S$ satisfying $|\dom{\A}|\geq \minmod_{\Teq}(\varphi^{\prime})$ (we can always find one as $\Teq$ is smooth), we make it into a $\Testg$-interpretation by making all $P_{n}$ false, and of course it still satisfies $\varphi^{\prime}$.
    Notice this also implies 
    \[\spec_{\Testg}(\varphi)=[\minmod_{\Teq}(\varphi^{\prime}),\aleph_{0}]\cap  (S \cup \{\aleph_0\}):\]
    of course the problem of whether $i\in \spec_{\Testg}(\varphi)$, for any $i\in \No$, is then decidable.

    If $\varphi$ contains the literal $P_{n}$, we state that $\varphi$ is $\Testg$-satisfiable if and only if \[[\minmod_{\Teq}(\varphi^{\prime}), F(n)]\cap (S \cup \{\aleph_0\})\neq\emptyset:\]
    we can check if $i \in \No$ is in that set by first checking if $\minmod_{\Teq}(\varphi^{\prime})\leq i$, if $i\leq F(n)$ (what can be done algorithmically by the definition of $F$), and then if $i\in S$ (which is a decidable set).
    Indeed, if $\A$ is a $\Teq$-interpretation that satisfies $\varphi$ with $|\dom{\A}|\in [\minmod_{\Teq}(\varphi^{\prime}), F(n)]\cap (S \cup \{\aleph_0\})$ (remember $\Teq$ is smooth, so we have interpretations satisfying $\varphi$ for all cardinalities greater than or equal to $\minmod_{\Teq}(\varphi^{\prime})$), we turn it into a $\Testg$-interpretation, that still satisfies $\varphi$, by making $P_{n}$ true in it, and all other $P_{m}$, for $m\neq n$, false.
    It is clear that in this case $\spec_{\Testg}(\varphi) = [\minmod_{\Teq}(\varphi^{\prime}), F(n)]\cap (S \cup \{\aleph_0\})$. Since $S$ is infinite, $\varphi$ is $\Testg$-satisfiable if and only if it has a finite model, so we can decide whether $\varphi$ is $\Testg$-satisfiable by checking whether $i \in \No$ is in $\spec_{\Testg}(\varphi)$ for increasing values of $i$ until either we find an $i \in \spec_{\Testg}(\varphi)$ (in which case $\varphi$ is $\Testg$-satisfiable) or we find that $i > F(n)$ (in which case $\varphi$ is $\Testg$-unsatisfiable).
    Thus we have proved not only that $\Testg$ is decidable, but also that it has computable finite spectra.

    \item $\Testg$ was proved to have computable finite spectra in the previous item.

    \item That $\Testg$ is $n$-decidable follows from the fact that it has computable finite spectra. \qedhere
    \end{enumerate}
\end{proof}

\begin{lemma}
    Let $S$ be an infinite decidable set. If $S \in \mathfrak{F}$, then $\Testg$ is \quagen{}. And if $\mathbb{N} \setminus S \notin \mathfrak{F}$, then $\Testg$ is \coquagen{}.
\end{lemma}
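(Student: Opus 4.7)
The plan is to reduce to conjunctions of literals via disjunctive normal form, then apply the spectrum analysis developed in the proof of the previous lemma. Given any quantifier-free formula $\varphi$, write it as $\bigvee_{i} \varphi_{i}$ with each $\varphi_{i}$ a conjunction of literals, so that $\spec_{\Testg}(\varphi)=\bigcup_{i}\spec_{\Testg}(\varphi_{i})$. The case analysis from the proof of the previous lemma shows that each $\spec_{\Testg}(\varphi_{i})$ falls into one of three categories: $(a)$ empty (inconsistent $P$-literals), $(b)$ a finite subset of $\No$ (when $\varphi_{i}$ contains $P_{n}$ with $F(n)\in\mathbb{N}$), or $(c)$ a set of the form $\{s\in S : s\geq k_{i}\}\cup\{\aleph_{0}\}$ for some $k_{i}\in\No$ (when $\varphi_{i}$ has no positive $P$-literal, or contains $P_{n}$ with $F(n)=\aleph_{0}$).

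Taking the union over $i$, either every disjunct is of type $(a)$ or $(b)$, in which case $\spec_{\Testg}(\varphi)$ is a finite subset of $\No$, or at least one disjunct is of type $(c)$; letting $k=\min\{k_{i}\}$ over the $(c)$-disjuncts and collecting the finite contributions into a finite set $F_{0}\subseteq\No$, we obtain
\[
    \spec_{\Testg}(\varphi)=F_{0}\cup\{s\in S:s\geq k\}\cup\{\aleph_{0}\}=A\cup\{\aleph_{0}\},
\]
where $A=F_{0}\cup\{s\in S:s\geq k\}$. Together with computable finite spectra (already established), it remains to verify the filter-membership conditions on $A$.

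For the first claim, assume $S\in\mathfrak{F}$. Since $\mathfrak{F}$ is free, the cofinite set $\{k,k+1,\ldots\}$ lies in $\mathfrak{F}$, so $\{s\in S:s\geq k\}=S\cap\{k,k+1,\ldots\}\in\mathfrak{F}$ by closure under finite intersection. Then $A\supseteq\{s\in S:s\geq k\}$ gives $A\in\mathfrak{F}$ by upward closure, witnessing that $\Testg$ is \quagen{}.

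For the second claim, assume $\mathbb{N}\setminus S\notin\mathfrak{F}$. The key observation is that
\[
    \mathbb{N}\setminus A\subseteq(\mathbb{N}\setminus S)\cup\{1,\ldots,k-1\},
\]
because any $n\in\mathbb{N}\setminus A$ satisfies $n\notin\{s\in S:s\geq k\}$, i.e., either $n\notin S$ or $n<k$. If $\mathbb{N}\setminus A$ were in $\mathfrak{F}$, upward closure would give $(\mathbb{N}\setminus S)\cup\{1,\ldots,k-1\}\in\mathfrak{F}$; intersecting with the cofinite set $\{k,k+1,\ldots\}$ (which is in $\mathfrak{F}$) yields $(\mathbb{N}\setminus S)\cap\{k,k+1,\ldots\}\in\mathfrak{F}$, and one more application of upward closure produces $\mathbb{N}\setminus S\in\mathfrak{F}$, contradicting our hypothesis. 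Hence $\mathbb{N}\setminus A\notin\mathfrak{F}$, and $\Testg$ is \coquagen{}. The specialization to $\Tle^{\No}$ is then immediate since $S=\No\in\mathfrak{F}$ and $\mathbb{N}\setminus\No=\{0\}$ is finite (hence not in the free filter $\mathfrak{F}$). The only mildly delicate step is the coquagen case, where I need to use both freeness of $\mathfrak{F}$ (to get the cofinite tail) and upward closure correctly; the rest is routine set manipulation.
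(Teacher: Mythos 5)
Your proof is correct and takes the same route as the paper's: reduce to the spectrum formula for conjunctions of literals established in the preceding lemma, then verify the filter‑membership conditions using freeness (to obtain the cofinite tail $\{k,k+1,\dots\}$) together with intersection‑ and upward‑closure. The paper's own treatment is terser — it argues only at the level of cubes and dispatches the co‑$\mathfrak{F}$‑QG case with ``essentially identical reasoning'' — whereas you explicitly carry out the disjunctive‑normal‑form decomposition and the set‑algebra, which is a reasonable elaboration rather than a different method; the only trivial imprecision is that you write $\{1,\ldots,k-1\}$ where $\{0,1,\ldots,k-1\}$ is meant (since $0\in\mathbb{N}$), but this finite set plays the same role either way.
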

\begin{proof}
    \begin{enumerate}
        \item Suppose $S \in \mathfrak{F}$. As proved above, the spectra of cubes in $\Testg$ is of the form $[\minmod_{\Teq}(\varphi^{\prime}), F(n)]\cap (S \cup \{\aleph_0\})$, and thus either a finite set of finite cardinalities or $[\minmod_{\Teq}(\varphi^{\prime}), \aleph_0]\cap (S \cup \{\aleph_0\})$, whose finite part is in $\mathfrak{F}$ as this is a free filter. Since $\Testg$ has computable finite spectra we reach the conclusion that it is \quagen{}.

        \item Essentially identical reasoning shows that if $\mathbb{N} \setminus S \notin \mathfrak{F}$, then $\Testg$ is \coquagen{}. \qedhere
    \end{enumerate}
\end{proof}

\subsection{\tp{Proofs for $\Testcfs$}{Proofs for the Theory That Tests Computable Finite Spectra}}

\begin{lemma}
    $\Testcfs$ is decidable, gentle (and thus $\mathfrak{F}$-QG and co-$\mathfrak{F}$-QG for every free filter $\mathfrak{F}$), has computable finite spectra, is infinitely decidable (and thus has computable spectra), and  $n$-decidable.
\end{lemma}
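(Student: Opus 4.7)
The plan is to compute the spectrum of an arbitrary cube $\varphi$ explicitly and then read off all of the listed properties as direct consequences. Write $\varphi'$ for the subconjunction of $\varphi$ consisting of the literals that do not mention any $P_k$, so $\varphi'$ is a cube over the empty signature and can be treated as a formula of $\Teq$. I would split into three cases according to the $P$-literals in $\varphi$. Case A: $\varphi$ contains a contradictory pair $P_n \wedge \neg P_n$, or two distinct positive literals $P_m$ and $P_n$ with $m \neq n$ (the latter forcing both $\psi_{=m}$ and $\psi_{=n}$, hence unsatisfiable); then $\spec_{\Testcfs}(\varphi)=\emptyset$. Case B: $\varphi$ contains exactly one positive literal $P_n$; every model has cardinality exactly $n$, and conversely any $\Teq$-model of $\varphi'$ of size $n$ extends to a $\Testcfs$-model by setting $P_n$ true and every other $P_k$ false (so all negative $P$-literals in $\varphi$ are automatically satisfied), yielding $\spec_{\Testcfs}(\varphi) = \{n\}\cap\spec_{\Teq}(\varphi')$. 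Case C: $\varphi$ has no positive $P$-literal; setting every $P_k$ to false makes all axioms hold vacuously, so no cardinality constraint is imposed and $\spec_{\Testcfs}(\varphi)=\spec_{\Teq}(\varphi') = [\minmod_{\Teq}(\varphi'),\aleph_0]$ when $\varphi'$ is $\Teq$-satisfiable, and $\emptyset$ otherwise.

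Since $\Teq$ is shiny (in particular its minimal model function is computable and its spectra are decidable), each of these spectra can be computed algorithmically, which already gives decidability. For gentleness we output $(\emptyset,\textbf{true})$ in Case A and in the empty subcase of Case B, $(\{n\},\textbf{true})$ in the nonempty subcase of Case B, and $(\{1,\ldots,\minmod_{\Teq}(\varphi')-1\},\textbf{false})$ in Case C. Computable finite spectra and $n$-decidability are immediate consequences, since in each case the full spectrum is determined by the returned pair. Infinite decidability is also immediate: $\aleph_0 \in \spec_{\Testcfs}(\varphi)$ iff we are in Case C and $\varphi'$ is $\Teq$-satisfiable. Computable spectra follows by combining computable finite spectra with infinite decidability.

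Finally, every gentle theory is $\mathfrak{F}$-QG and co-$\mathfrak{F}$-QG for any free filter $\mathfrak{F}$: a gentle spectrum is either a finite subset of $\No$ (matching the first clause of both definitions), or of the form $A\cup\{\aleph_0\}$ with $A = \mathbb{N}\setminus S$ cofinite in $\mathbb{N}$; since $\mathfrak{F}$ is free it contains every cofinite set, so $A\in\mathfrak{F}$ (giving $\mathfrak{F}$-QG), while $\mathbb{N}\setminus A = S$ is finite and thus excluded from $\mathfrak{F}$ (giving co-$\mathfrak{F}$-QG). No step is especially delicate; the one point that requires care is that the axiomatization of $\Testcfs$, unlike several other theories in \Cref{tab:testtheories}, omits the explicit clauses $P_i\rightarrow\neg P_j$, so one must verify separately that two distinct positive $P$-literals are mutually incompatible and that every infinite $\Testcfs$-model must falsify every $P_n$.
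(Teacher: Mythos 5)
Your proposal is correct and follows essentially the same route as the paper's proof: compute the spectrum of a cube $\varphi$ by splitting on its $P$-literals, observing that the spectrum is always either $\emptyset$, a singleton $\{n\}$, or $[\minmod_{\Teq}(\varphi'),\aleph_0]$, and then read off gentleness and the remaining properties directly. You are slightly more careful than the paper in spelling out the explicit gentle output pair in each case and in flagging that the mutual exclusion of distinct positive $P$-literals in $\Testcfs$ follows from the $\psi_{=m}\wedge\psi_{=n}$ conflict rather than from an explicit $P_i\rightarrow\neg P_j$ axiom, which is a genuine (if small) point of care the paper glosses over with ``we already know.''
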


\begin{proof}
\begin{enumerate}
    \item Decidability will follow from gentleness.
    \item To prove $\Testcfs$ is gentle we will show that the spectra of cubes is either finite and can be output computably, or cofinite and its complement can be output computably:
    since any quantifier-free formula is equivalent to a disjunction of cubes, and the spectrum of a disjunction is the union of the spectra of the components, we will be then done.
        
    Let $\varphi$ be a conjunction of literals, and $\varphi^{\prime}$ be the conjunction of all literals in $\varphi$ that are not $P$-literals.
    If $\varphi$ contains both $P_{n}$ and $\neg P_{n}$, or $P_{m}$ and $P_{n}$ for $m\neq n$, we already know it is not $\Testcfs$-satisfiable, so assume $\varphi$ has at most one positive $P$-literal.
        
    If it contains none we state $\spec_{\Testcfs}(\varphi)=[\minmod_{\Teq}(\varphi^{\prime}),\aleph_{0}]$ (if $\varphi^{\prime}$ is $\Teq$-satisfiable, otherwise of course $\varphi$ is not $\Testcfs$-satisfiable), which is cofinite and whose complement is computable:
    one direction is obvious as any $\Testcfs$-interpretation is a $\Teq$-interpretation.
    Conversely, if $n\in[\minmod_{\Teq}(\varphi^{\prime}),\aleph_{0}]$, as $\Teq$ is smooth we can find a $\Teq$-interpretation $\A$ that satisfies $\varphi^{\prime}$;
    we make it into a $\Testcfs$-interpretation by making all $P_{n}$ false, so all axioms are vacuously satisfied.
        
    If $\varphi$ contains as its only positive $P$-literal $P_{n}$ we state that $\spec_{\Testcfs}(\varphi)=\{n\}$ if $\minmod_{\Teq}(\varphi^{\prime})\leq n$ (otherwise $\varphi$ is clearly not $\Testcfs$-satisfiable).
    Indeed, if $\minmod_{\Teq}(\varphi^{\prime})\leq n$, by using the smoothness of $\Teq$ we can get a $\Teq$-interpretation $\B$ that satisfies $\varphi^{\prime}$ with $|\dom{\B}|=n$;
    we then turn it into a $\Testcfs$-interpretation by making $P_{n}$ true and all $P_{i}$ for $i\neq n$ false, and of course the resulting interpretation will satisfy $\varphi$.
    \item $\Testcfs$ is $\mathfrak{F}$-QG since it is gentle (see \Cref{prop-inclusion}).
    \item $\Testcfs$ is co-$\mathfrak{F}$-QG since it is gentle.
    \item $\Testcfs$ has computable finite spectra since it is gentle.
    \item $\Testcfs$ is infinitely decidable since it is gentle.
    \item $\Testcfs$ is $n$-decidable since it has computable finite spectra. \qedhere
    \end{enumerate}
\end{proof}

\subsection{\tp{Proofs for $\Tinfty$}{Proofs for a Theory That Tests for Infinite Models}}

\begin{lemma}
    $\Tinfty$ is decidable, has computable finite spectra, is infinitely decidable (and thus has computable spectra), and is smooth.
\end{lemma}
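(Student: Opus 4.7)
The key observation is that $\Tinfty$ has only infinite models, so for every quantifier-free $\Sigma_1$-formula $\varphi$ we have $\spec_{\Tinfty}(\varphi) \subseteq \{\aleph_0\}$. This essentially trivializes three of the four claims; the only substantive content is the decidability claim, which will reduce to the decidability of $\Teq$.

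The plan is as follows. First, I will establish the reduction: a quantifier-free $\Sigma_1$-formula $\varphi$ is $\Tinfty$-satisfiable if and only if it is $\Teq$-satisfiable. The ``only if'' direction is immediate since $\Teq$ has no axioms. For the ``if'' direction, given a $\Teq$-interpretation $\A \vDash \varphi$, I will build a $\Tinfty$-interpretation $\B$ by taking the disjoint union of $\dom{\A}$ with a countably infinite set of fresh elements and extending the variable assignment by keeping $x^{\B} = x^{\A}$ for every $x \in \vars(\varphi)$. Since the signature $\Sigma_1$ is empty (containing only equality), this extension preserves satisfaction of the quantifier-free formula $\varphi$, and $\B$ is clearly infinite, so $\B$ is a $\Tinfty$-interpretation. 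From this reduction and the decidability of $\Teq$ (it is shiny, hence decidable), decidability of $\Tinfty$ follows.

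Second, for computable finite spectra, note that for every $n \in \No$ we have $n \notin \spec_{\Tinfty}(\varphi)$, since all $\Tinfty$-interpretations are infinite. Hence the algorithm that always answers ``no'' correctly decides finite spectrum membership. Third, for infinite decidability, we must decide whether $\aleph_0 \in \spec_{\Tinfty}(\varphi)$; by the reduction above this coincides with $\Tinfty$-satisfiability of $\varphi$, which is decidable. Computable spectra then follow by the identity $\class{CS} = \class{CFS} \cap \class{ID}$.

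Finally, for smoothness, suppose $n \in \spec_{\Tinfty}(\varphi)$ and $m \geq n$ with $m, n \in \N$. Since every $\Tinfty$-model is infinite, necessarily $n = \aleph_0$, and then the only element of $\N$ that is $\geq \aleph_0$ is $\aleph_0$ itself, so $m = \aleph_0 = n \in \spec_{\Tinfty}(\varphi)$. There is no real obstacle here; the only point that requires a small argument is preservation of quantifier-free satisfaction when extending the domain in the decidability reduction, and this is immediate because $\Sigma_1$ has no nontrivial function or predicate symbols.
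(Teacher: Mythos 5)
Your proof is correct, and it differs from the paper's in an interesting way. The paper's own proof of this lemma is essentially a list of citations: it delegates decidability of $\Tinfty$ to \cite{FroCoS2025} (which proves decidability for theories over empty signatures with finitely many sorts) and smoothness to \cite{CADE}, treating only the spectrum facts directly. You instead give self-contained arguments for both: for decidability you reduce $\Tinfty$-satisfiability to $\Teq$-satisfiability by padding a $\Teq$-model of $\varphi$ with fresh elements (valid because $\Sigma_1$ is empty, so adding elements cannot violate a quantifier-free formula), and for smoothness you observe that since $\spec_{\Tinfty}(\varphi) \subseteq \{\aleph_0\}$, the implication ``$n\in\spec$ and $m\geq n$ implies $m\in\spec$'' is vacuously or trivially true. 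Both arguments are sound. The paper's approach is more modular and keeps the appendix short; yours is more illuminating for a reader who does not want to chase citations, since it makes transparent exactly why an empty-signature theory restricted to infinite models is so well-behaved. One small point worth tightening: in the infinite-decidability step, the phrase ``by the reduction above this coincides with $\Tinfty$-satisfiability'' elides the intermediate fact that $\aleph_0 \in \spec_{\Tinfty}(\varphi)$ iff $\varphi$ is $\Tinfty$-satisfiable, which uses L\"owenheim--Skolem to drop from an arbitrary infinite model down to a countable one; the paper invokes this explicitly and you should too, even though it is your own observation that all spectra are contained in $\{\aleph_0\}$ that makes the whole lemma easy.
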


\begin{proof}
    \begin{enumerate}
        \item \cite{FroCoS2025} proves $\Tinfty$, as a theory over an empty signature with finitely many sorts, is decidable.
        \item $\Tinfty$ has computable finite spectra as it contains no finite interpretations.
        \item $\Tinfty$ is infinitely decidable, and therefore has computable spectra, as $\spec_{\Tinfty}(\varphi)=\{\aleph_{0}\}$ for all quantifier-free $\Tinfty$-satisfiable formula $\varphi$ by \Cref{LowenheimSkolem}.
        \item \cite{CADE} proved that $\Tinfty$ is smooth. \qedhere
    \end{enumerate}
\end{proof}

\subsection{\tp{Proofs for $\Testsi$}{Proofs for the Theory That Tests Stable Infiniteness}}

\begin{lemma}
    $\Testsi$ is decidable, smooth (and thus stably infinite), and finitely witnessable (and thus polite).
\end{lemma}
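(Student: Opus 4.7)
The plan is to reduce all three properties to known properties of $\Teq$, which is shiny and therefore decidable, smooth, and finitely witnessable. The key observation is that the signature $\Spn$ of $\Testsi$ contains only nullary predicates, so every term is a variable and the only nontrivial cardinality constraint the axioms impose is the lower bound $n+1$ (triggered only by positive $P_m$ literals with $m \in \unc$).

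For decidability, I would first observe that a cube $\varphi$ is immediately $\Testsi$-unsatisfiable if it contains $P_i \wedge P_j$ for $i \neq j$ or $P_i \wedge \neg P_i$. Otherwise, writing $\varphi'$ for the conjunction of non-$P$-literals of $\varphi$, I claim $\varphi$ is $\Testsi$-satisfiable if and only if $\varphi'$ is $\Teq$-satisfiable. The nontrivial direction uses smoothness of $\Teq$ to produce a $\Teq$-model of $\varphi'$ of size $\max(\minmod_{\Teq}(\varphi'), n+1)$, which I then turn into a $\Testsi$-interpretation by setting the $P_m$-values according to $\varphi$ (and all other $P_i$ to false); every axiom of $\Testsi$ is satisfied, because $P_m \rightarrow \psi_{\geq n+1}$ holds thanks to the size bound regardless of whether $m \in \unc$. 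Thus decidability of $\Testsi$ reduces to that of $\Teq$, and extends to arbitrary quantifier-free formulas via DNF.

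For smoothness, given a $\Testsi$-model $\A$ of $\varphi$ and a cardinal $m \geq |\dom{\A}|$, I would invoke smoothness of $\Teq$ to extend $\dom{\A}$ to a $\Teq$-model $\B$ of $\varphi$ of size $m$, keeping all predicate interpretations unchanged. The exclusion axioms are preserved since the $P_i$ interpretations do not change, and if $P_k^{\A}$ is true for some $k \in \unc$ then $|\dom{\A}| \geq n+1$, so $|\dom{\B}| \geq n+1$ as well. Stable infiniteness is then immediate by taking $m = \aleph_0$.

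For finite witnessability, I would define $\wit(\varphi) = \varphi \wedge \bigwedge_{i=1}^{n+1}(y_i = y_i)$ with fresh variables $y_1, \ldots, y_{n+1}$. The first condition of the definition holds immediately since the added conjuncts are tautological. For the second condition, suppose $\wit(\varphi)$ is $\Testsi$-satisfiable; I convert $\varphi$ to DNF and pick a $\Testsi$-satisfiable cube $\varphi_0$. Setting $\ell = \minmod_{\Teq}(\varphi_0')$ and $k = \max(\ell, n+1)$, smoothness of $\Teq$ produces a $\Teq$-model of $\varphi_0'$ of size $k$; the variables of $\varphi_0$ cover exactly $\ell$ elements, and since $k - \ell \leq n+1$, the $n+1$ fresh variables $y_i$ (together with any unused variables from $\vars(\varphi) \setminus \vars(\varphi_0)$) can be assigned to cover the remaining $k - \ell$ elements. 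Setting the $P_m$ interpretations to match $\varphi_0$ then yields a $\Testsi$-model of $\wit(\varphi)$ whose domain equals the set of variable interpretations of $\wit(\varphi)$, as required. Politeness follows from finite witnessability together with smoothness. The main obstacle is designing the witness to work uniformly in spite of the undecidability of $\unc$: the construction cannot branch on whether a specific $P_m$ has $m \in \unc$, but padding with exactly $n+1$ fresh variables and always producing models of size at least $n+1$ sidesteps this issue, since the size bound is sufficient for the axiom $P_m \rightarrow \psi_{\geq n+1}$ whether or not $m \in \unc$.
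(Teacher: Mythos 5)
Your proof is correct and follows essentially the same approach as the paper's: the decidability and smoothness arguments match the paper's reduction to $\Teq$, and the finite witnessability argument uses the same witness $\wit(\varphi) = \varphi \wedge \bigwedge_{i=1}^{n+1}(x_i = x_i)$ with the same ``pad to size at least $n+1$'' idea to sidestep the undecidability of $\unc$. The only cosmetic difference is in how the second witnessability condition is verified: the paper starts from an arbitrary $\Testsi$-model of $\varphi$ and trims/pads its domain to $\vars(\varphi)^{\A} \cup X$, whereas you build a model from scratch of size $\max(\minmod_{\Teq}(\varphi_0'), n+1)$ via DNF and smoothness of $\Teq$ — both are fine, and the underlying construction of the model is the same.
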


\begin{proof}
\begin{enumerate}
    \item $\varphi^{\prime}$ is again obtained from a quantifier-free formula $\varphi$ by removing all of its $P$-literals.
    We can then state that $\varphi$ is $\Testsi$-satisfiable if and only if $\varphi^{\prime}$ is $\Teq$-satisfiable (and $\varphi$ does not contain $P_{i}$ and $\neg P_{i}$, or $P_{i}$ and $P_{j}$ for $i\neq j$, of course);
    one direction of this equivalence is obvious, so focus on the non-trivial one.
    Given $\Teq$ is smooth, we take an infinite $\Teq$-interpretation $\A$ that satisfies $\varphi^{\prime}$;
    we make it into a $\Testsi$-interpretation $\B$ by making $P_{i}^{\B}$ true only if $P_{i}$ is a positive literal in $\varphi$;
    $\B$ then indeed satisfies $\psi_{\geq n+1}$, of course, and $\varphi$.
    \item Now we prove $\Testsi$ is smooth, which also implies it is stably infinite.
    Take a quantifier-free formula $\varphi$, a $\Testsi$-interpretation $\A$ that satisfies $\varphi$, and a $\kappa\geq |\dom{\A}|$;
    let $X$ be a set of cardinality $\kappa-|\dom{\A}|$ disjoint from $\dom{\A}$.
    We define an interpretation $\B$ by making $\dom{\B}=\dom{\A}\cup X$ (so we have $|\dom{\B}|=\kappa$);
    $P_{i}^{\B}=P_{i}^{\A}$ for all $i$ (so if $P_{m}$ is true for an $m\in \unc$, $\A$ satisfies $\psi_{\geq n+1}$, and so does $\B$ as $|\dom{\B}|\geq|\dom{\A}|$, meaning $\B$ is a $\Testsi$-interpretation);
    and $x^{\B}=x^{\A}$ for all variables $x$, meaning $\B$ satisfies $\varphi$.
    \item To show that $\Testsi$ is finitely witnessable, consider the function $\wit$ such that, given a quantifier-free formula $\varphi$ with a single positive $P$-literal $P_n$, 
    \[\wit(\varphi)=\varphi\wedge\bigwedge_{i=1}^{n+1}x_{i}=x_{i},\]
    where $x_{1}$ through $x_{n+1}$ are fresh variables.
    Of course $\varphi$ and $\Exists{\overarrow{x}}\wit(\varphi)$ are $\Testsi$-equivalent, as $\varphi$ and $\wit(\varphi)$ are themselves already $\Teq$-equivalent.
    Now take a $\Testsi$-interpretation $\A$ that satisfies $\varphi$:
    if $|\vars(\varphi)^{\A}|<n+1$, take a set $X$ disjoint from $\dom{\A}$ with $|X|=n+1-|\vars(\varphi)^{\A}|$;
    otherwise, let $X=\emptyset$.
    We define an interpretation $\B$ by making:
    $\dom{\B}=\vars(\varphi)^{\A}\cup X$ (so $|\dom{\B}|\geq n+1$);
    $P_{m}^{\B}=P_{m}^{\A}$ (so in particular $\B$ is a $\Testsi$-interpretation);
    $x^{\B}=x^{\A}$ for every $x\in\vars(\varphi)$ (so $\B$ satisfies $\varphi$); and let $x_i^\B$ be such that
    $\{x_{1},\ldots,x_{n+1}\}\to \vars(\varphi)^{\A} \cup X$ is a surjective map if $X$ is not empty (so $\vars(\wit(\varphi))^{\B}=\dom{\B}$),
    and let $x_i^\B$ be chosen arbitrarily otherwise. \qedhere
\end{enumerate}
\end{proof}

\subsection{\tp{Proofs for $\Teqn$}{Proofs for a Theory That Tests n-Decidability}}

\begin{lemma}
    $\Teqn$ is decidable and $n$-shiny.
\end{lemma}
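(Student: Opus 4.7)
The plan is to dispatch the two claims separately, since $n$-shininess will be essentially immediate once decidability is in hand. First I would argue decidability by reducing quantifier-free $\Teqn$-satisfiability to arrangement checking. Since $\Teqn$ is axiomatized by $\psi_{=n}$ over the empty signature $\Sigma_{1}$, every atomic formula is of the form $x = y$, so any quantifier-free formula $\varphi$ is purely a Boolean combination of equalities between its finitely many variables. I would then claim that $\varphi$ is $\Teqn$-satisfiable if and only if there exists an arrangement $\delta_{V}$ on $V \coloneqq \vars(\varphi)$ such that $\varphi \wedge \delta_{V}$ is propositionally consistent and $\delta_{V}$ induces at most $n$ equivalence classes on $V$.

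For the ``only if'' direction, any $\Teqn$-interpretation $\A$ satisfying $\varphi$ induces an arrangement via $x E y$ iff $x^{\A} = y^{\A}$, whose number of classes is at most $|\dom{\A}| = n$. For the ``if'' direction, given such a $\delta_{V}$ with $k \le n$ classes, one constructs a $\Teqn$-interpretation of size exactly $n$ by injecting the $k$ classes into a fixed $n$-element domain and using the remaining $n - k$ elements as padding that no variable names; because $\Sigma_{1}$ has no non-equality symbols, this interpretation automatically satisfies $\varphi$. Since $V$ is finite there are only finitely many arrangements, and each check is trivial, giving a decision procedure.

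For $n$-shininess, every $\Teqn$-interpretation has domain of size exactly $n$, so $\spec_{\Teqn}(\varphi) \subseteq \{n\}$ for every quantifier-free $\varphi$; in particular, whenever $\varphi$ is $\Teqn$-satisfiable (which we can decide by the previous paragraph) we have $\spec_{\Teqn}(\varphi) = \{n\}$, matching case $t = 0$ of the definition of $n$-shininess. The algorithm required by the definition simply checks satisfiability and returns $(0, 1)$ (the second coordinate being irrelevant when $t = 0$).

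I do not anticipate any real obstacle: the entire argument is a direct unpacking of the definitions, and the only nontrivial ingredient is the standard observation that cubes over the empty signature are solved by enumerating arrangements, with the cardinality constraint $\psi_{=n}$ handled by bounding the number of equivalence classes.
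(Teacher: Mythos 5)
Your proof is correct, and it is more self-contained than the paper's. For decidability, the paper simply cites an external reference~\cite{FroCoS2025} for the fact that any theory over an empty signature is decidable, whereas you unwind that fact into a concrete decision procedure: enumerate arrangements on $\vars(\varphi)$, keep those consistent with $\varphi$, and check that some consistent arrangement has at most $n$ equivalence classes (which is exactly the condition under which the arrangement can be realized in a domain of size $n$, padding with unnamed elements). This is the standard argument lurking behind the cited result, so the two routes coincide at the idea level; yours just makes the algorithm explicit, which is arguably more informative here since the lemma is only one short step away from Fontaine's lemma anyway. The $n$-shininess argument---every interpretation has size exactly $n$, so every nonempty spectrum is $\{n\}$ and the algorithm always returns $t = 0$---is essentially identical to the paper's. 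One cosmetic remark: you phrase the side condition as ``$\varphi \wedge \delta_V$ is propositionally consistent''; since $\delta_V$ fixes the truth value of every atom over $V$, this just means $\delta_V$ satisfies $\varphi$ as a Boolean assignment, which is what you use in the ``if'' direction, so there is no gap.
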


\begin{proof}
    \begin{enumerate}
        \item That $\Teqn$ is decidable follows from the fact it is defined over an empty signature~\cite{FroCoS2025}.
        \item That $\Teqn$ is $n$-shiny follows from the facts that this theory is decidable, and has only interpretations of cardinality $n$. \qedhere
    \end{enumerate}
\end{proof}

\subsection{\tp{Proofs for $Th_{\T}$}{Proofs for the Theory That, With the Minimal Model Function, Calculates the Spectra}}

\begin{lemma}
    Let $\T$ be a theory with computable finite spectra. Then, $Th_{\T}$ is decidable and has computable finite spectra.
\end{lemma}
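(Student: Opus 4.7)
The approach is to reduce to cubes via DNF, then exploit the mutual-exclusion axioms $P_{\varphi,n}\to\neg P_{\varphi',n'}$ of $Th_{\T}$, which force every satisfiable cube to contain at most one positive $P$-literal (cubes containing $P_{\varphi,n}$ together with $\neg P_{\varphi,n}$, or two distinct positive $P$-literals, are immediately unsatisfiable). Fix a cube $\gamma$, let $\chi$ be its equality/disequality part, and assume $\chi$ is $\Teq$-satisfiable (otherwise $\gamma$ is trivially $Th_{\T}$-unsatisfiable). If $\gamma$ has no positive $P$-literal, then setting every $P_{\varphi',n'}$ to false satisfies all axioms vacuously, so $\spec_{Th_{\T}}(\gamma)=\spec_{\Teq}(\chi)$. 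If $\gamma$ has a single positive literal $P_{\varphi,n}$, the axioms force any $Th_{\T}$-model of $\gamma$ to have cardinality exactly $s_{\varphi,n}$ when finite, and infinite cardinality when $s_{\varphi,n}=\aleph_0$; conversely, any $\Teq$-model of $\chi$ of such a cardinality extends to a $Th_{\T}$-model of $\gamma$ by turning on $P_{\varphi,n}$ and nothing else.

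The main obstacle is that, from $\T\in\class{CFS}$, we can decide $\spec_\T(\varphi)\cap\No$ pointwise but cannot in general tell whether the complement of this set is finite, so we cannot directly check whether $s_{\varphi,n}$ is finite or equals $\aleph_0$. The remedy is a bounded search: compute $m\coloneqq\minmod_{\Teq}(\chi)$ (available because $\Teq$ is shiny), and enumerate $k=1,\ldots,m$ using the computable finite spectra of $\T$ to count the number $r$ of $k$'s with $k\notin\spec_\T(\varphi)$. If $r\geq n$, then the $n$-th such $k$ is exactly $s_{\varphi,n}\leq m$, and $\gamma$ is $Th_{\T}$-satisfiable iff $s_{\varphi,n}=m$ (otherwise the required model size falls strictly below $\minmod_{\Teq}(\chi)$). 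If $r<n$, then $s_{\varphi,n}>m$ (either a larger finite value or $\aleph_0$), and in either subcase smoothness of $\Teq$ supplies a $\Teq$-model of $\chi$ of the required cardinality, so $\gamma$ is $Th_{\T}$-satisfiable. This decides $Th_{\T}$ on cubes, and hence on arbitrary quantifier-free formulas via DNF.

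For computable finite spectra, we decide, given a cube $\gamma$ and $k\in\No$, whether $k\in\spec_{Th_{\T}}(\gamma)$. When $\gamma$ has no positive $P$-literal, this reduces to $k\in\spec_{\Teq}(\chi)$, which depends only on whether $k\geq m$. When $\gamma$ has a unique positive $P_{\varphi,n}$, it holds iff $k=s_{\varphi,n}$ and $k\in\spec_{\Teq}(\chi)$; the equation $k=s_{\varphi,n}$ is decidable because it amounts to $k\notin\spec_\T(\varphi)$ together with $|\{i\in[1,k-1]:i\notin\spec_\T(\varphi)\}|=n-1$, both checkable since $\T\in\class{CFS}$. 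Taking the union of these per-cube checks across a DNF decomposition then yields the spectrum algorithm for arbitrary quantifier-free formulas, completing the proof.
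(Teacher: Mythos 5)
Your proof is correct and follows essentially the same route as the paper's: reduce to cubes with at most one positive $P$-literal, handle the $P$-free case by vacuous falsification of all $P_{\varphi,n}$, and in the remaining case compare $s_{\varphi,n}$ against $\minmod_{\Teq}$ of the equality part via a bounded search that only queries the computable finite spectra of $\T$ up to that threshold (avoiding the need to know a priori whether $s_{\varphi,n}$ is finite). Your explicit counting of $r=|\{k\le m : k\notin\spec_\T(\varphi)\}|$ is just a mild rephrasing of the paper's iterative check for whether some $m<\minmod_{\Teq}$ equals $s_{\varphi,n}$; the spectrum-membership test via $k\notin\spec_\T(\varphi)$ together with a count of smaller gaps matches the paper as well.
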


\begin{proof}
    We define $P$-literals, in this case, as those of the form $P_{\varphi,n}$ or $\neg P_{\varphi,n}$.
    As before, since $P_{\varphi,n}\rightarrow\neg P_{\varphi^{\prime},n^{\prime}}$ for $(\varphi,n)\neq (\varphi^{\prime},n^{\prime})$, every conjunction of literals containing $P_{\varphi,n}$ and $\neg P_{\varphi,n}$, or $P_{\varphi,n}$ and $P_{\varphi^{\prime},n^{\prime}}$ for $(\varphi,n)\neq (\varphi^{\prime},n^{\prime})$, will immediately be $Th_{\T}$-unsatisfiable.

    So assume $\phi$ is a conjunction of literals containing at most one positive $P$-literal $P_{\varphi,n}$, and let $\phi'$ be the formula without this $P$-literal (if there is one).
    If $\phi$ actually contains no positive $P$-literals, it is $Th_{\T}$-satisfiable if and only if $\phi^{\prime}$ is $\Teq$-satisfiable, and $\spec_{Th_{\T}}(\phi)=[\minmod_{\Teq}(\phi^{\prime}),\aleph_{0}]$.
    Indeed, take a $\Teq$-interpretation that satisfies $\phi^{\prime}$ of any cardinality in this interval, and turn it into a $Th_{\T}$-interpretation by making all $P_{\varphi,n}$ false.

    If $\phi$ contains the literal $P_{\varphi,n}$, we recursively check the elements of $\spec_{\T}(\varphi) \cap \No$ by using that this set is decidable:
    while $m<\minmod_{\T}(\phi^{\prime})$ (we may just as well assume $\phi^{\prime}$ is $\Teq$-satisfiable) we test whether $m$ is the $n$th element of $\No\setminus\spec_{\T}(\varphi)$, and if it is then $\phi$ is $Th_{\T}$-unsatisfiable;
    if not, we increase $m$.
    When $m=\minmod_{\T}(\phi^{\prime})$ we know $s_{\varphi,n}\geq \minmod_{\T}(\phi^{\prime})$, regardless of whether $\spec_{\T}(\varphi) \cap \No$ has elements greater than or equal to $m$, and thus $\phi$ is $Th_{\T}$-satisfiable.
    We of course then have $\spec_{Th_{\T}}(\phi)=\{s_{\varphi,n}\}$:
    we can then decide whether an $i \in \No$ is in this spectrum by checking whether $s_{\varphi,n}=i$;
    that is, we test every cardinality in $[1,i]$ and list the ones that are not in $\spec_{\T}(\varphi)$, seeing if $i$ happens to be the $n$th one.
\end{proof}

\section{Complete Test Theories}

\subsection{Proof of Theorem~\ref{thm-shiny-uniform}}

\thmshinyuniform*
\begin{proof}
    The idea is to create a single decidable theory $\T'$ that plays the role of each of the test theories $\{\Tle^{\No},\Testcfs\} \cup \{\Testsm : m,n \in \No,\; m<n\}$ simultaneously. Let $\T'$ have predicates $P_n$ and $Q_n$ for each $n \in \No$ and predicates $R_{m,n,k}$ for each $m,n,k \in \No$ with $m < n$. Then, let $\T'$ be axiomatized by
    \begin{align*}
        &\{P_{n}\rightarrow\psi_{=n} : n\in\No\}\cup \\
        &\{Q_{n}\rightarrow\psi_{\leq F(n)} : F(n)\in\mathbb{N}\}\cup\{Q_{i}\rightarrow\neg Q_{j} : i\neq j\} \cup \\
        &\{R_{m,n,k} \rightarrow \psi_{=m} \lor \psi_{=n} : m,n,k \in \No,\; m<n\} \cup \{R_{m,n,k}\rightarrow \psi_{=n} : m,n,k \in \No,\; m<n,\; k \in \unc\} \cup \\
        &\{R_{i,j,k} \rightarrow  \lnot R_{i',j',k'} : (i,j,k) \neq (i',j',k')\} \cup \\
        &\{P_m \rightarrow \lnot Q_n : m,n \in \No\} \cup \\
        &\{P_\ell \rightarrow \lnot R_{m,n,k} : \ell,m,n,k \in \No,\; m < n\} \cup \\
        &\{Q_\ell \rightarrow \lnot R_{m,n,k} : \ell,m,n,k \in \No,\; m < n\},
    \end{align*}
    where $F$ and $\unc$ are as in the definition of $\Tle^{\No}$ and $\Testsm$ respectively (in particular, $\unc$ is undecidable).

    We claim that $\T'$ is decidable. Let $\varphi$ be a conjunction of literals, and suppose first that $\varphi$ contains a positive literal of the form $P_n$. Then, either $\varphi$ contains a positive literal of the form $Q_m$ or $R_{i,j,k}$, in which case it is $\T'$-unsatisfiable, or it does not, in which case we reduce to the decision problem for $\Testcfs$. Similarly, if $\varphi$ contains a positive literal of the form $Q_m$ or $R_{i,j,k}$, then we reduce to the decision problems for $\Tle^{\No}$ and $\Testsm$ respectively.

    At this point, the rest of the argument is as in the proof of \Cref{thm-shiny-tight}.
\end{proof}

\subsection{Proof of Theorem~\ref{thm-si-complete}}

\thmsicomplete*
\begin{proof}
    The idea is to create a single decidable theory $\T'$ that plays the role of each of the test theories $\{\Testsi : n \in \No\}$ simultaneously. Let $\T'$ have predicates $P_{n,k}$ for each $n,k \in \No$, and let it be axiomatized by
    \[
        \{P_{n,k}\rightarrow \psi_{\geq n+1} : k\in \unc\}\cup\{P_{i,j}\rightarrow\neg P_{i',j'} : (i,j)\neq (i',j')\}.
    \]
    The point is that, rather than having a different theory $\Testsi$ for each $n$, we can choose $n$ using our predicate $P_{n,k}$.

    The theory $\T'$ is decidable and stably infinite in virtue of the corresponding results for $\Testsi$, and we now proceed as in the proof of \Cref{thm-f-si}.
\end{proof}

\subsection{Proof of Theorem~\ref{thm-id-complete}}

\thmidcomplete*
\begin{proof}
    The idea is to create a single decidable theory $\T'$ that plays the role of each of the test theories $\{\Testcfs\} \cup \{\Testsm : m,n \in \No,\; m<n\}$ simultaneously. We proceed similarly to the proof of \Cref{thm-shiny-uniform}. Let $\T'$ have predicates $P_n$ for each $n \in \No$ and $Q_{m,n,k}$ for each $m,n,k \in \No$ with $m<n$. Then, let $\T'$ be axiomatized by
    \begin{align*}
        &\{P_{n}\rightarrow\psi_{=n} : n\in\No\}\cup \\
        &\{Q_{m,n,k} \rightarrow \psi_{=m} \lor \psi_{=n} : m,n,k \in \No,\; m<n\} \cup \{Q_{m,n,k}\rightarrow \psi_{=n} : m,n,k \in \No,\; m<n,\; k \in \unc\} \cup \\
        &\{Q_{i,j,k} \rightarrow  \lnot Q_{i',j',k'} : (i,j,k) \neq (i',j',k')\} \cup \\
        &\{P_\ell \rightarrow \lnot Q_{m,n,k} : \ell,m,n,k \in \No,\; m < n\},
    \end{align*}
    where $\unc$ is as in the definition of $\Testsm$ (namely, $\unc$ is undecidable).

    We claim that $\T'$ is decidable and infinitely decidable. Let $\varphi$ be a conjunction of literals, and suppose first that $\varphi$ contains a positive literal of the form $P_n$. Then, either $\varphi$ contains a positive literal of the form $Q_{i,j,k}$, in which case it is $\T'$-unsatisfiable, or it does not, in which case deciding whether $\varphi$ is $\T'$-satisfiable reduces to the decision problem for $\Testcfs$-satisfiability. Also, deciding whether $\aleph_0 \in \spec_{\T'}(\varphi)$ reduces to the corresponding problem for $\Testcfs$ in this case. Similarly, if $\varphi$ contains a positive literal of the form $R_{i,j,k}$, we appeal to the decidability and infinite decidability results for $\Testsm$.

    At this point, the rest of the argument is as in the proof of \Cref{thm-f-id}.
\end{proof}

\subsection{Proof of Theorem~\ref{thm-cs-complete}}

\thmcscomplete*
\begin{proof}
    The idea is to create a single decidable theory $\T'$ that plays the role of each of the test theories $\Testcfs$ and $\T_{\infty}$ simultaneously. Let $\T'$ have predicates $P_n$ for each $n \in \N$ (rather than merely each $n \in \No$). Then, let $\T'$ be axiomatized by
    \[
        \{P_{n}\rightarrow\psi_{=n} : n\in\No\}\cup\{P_{\aleph_0} \rightarrow \psi_{\geq n} : n\in\No\}.
    \]
    The point is that, compared to $\Testcfs$, we now have an extra predicate to assert that the domain is infinite.

    The theory $\T'$ is decidable and has computable spectra in virtue of the corresponding results for $\Testcfs$ and $\T_{\infty}$, and we now proceed as in the proof of \Cref{thm-f-cs}.
\end{proof}

\subsection{Proof of Theorem~\ref{thm-n-decidable-complete}}

\thmndecidablecomplete*
\begin{proof}
    The idea is to create a single decidable theory $\T'$ that plays the role of each of the test theories $\{\Tle^{\No},\Testcfs\} \cup \{\T^{k'}_{k} : k,k' \in \No,\; k<k',\; k \neq n\}$ simultaneously. We proceed similarly to the proof of \Cref{thm-shiny-uniform}. Let $\T'$ have predicates $P_m$ and $Q_m$ for each $m \in \No$ and predicates $R_{i,j,k}$ for each $i,j,k \in \No$ with $i < j$ and $i \neq n$. Then, let $\T'$ be axiomatized by
    \begin{align*}
        &\{P_{m}\rightarrow\psi_{=m} : m\in\No\}\cup \\
        &\{Q_{m}\rightarrow\psi_{\leq F(m)} : F(m)\in\mathbb{N}\}\cup\{Q_{i}\rightarrow\neg Q_{j} : i\neq j\} \cup \\
        &\{R_{i,j,k} \rightarrow \psi_{=i} \lor \psi_{=j} : i,j,k \in \No,\; i<j,\; i \neq n\} \cup \{R_{i,j,k}\rightarrow \psi_{=j} : i,j,k \in \No,\; i<j,\; k \in \unc\} \cup \\
        &\{R_{i,j,k} \rightarrow  \lnot R_{i',j',k'} : (i,j,k) \neq (i',j',k')\} \cup \\
        &\{P_i \rightarrow \lnot Q_j : i,j \in \No\} \cup \\
        &\{P_\ell \rightarrow \lnot R_{i,j,k} : \ell,i,j,k \in \No,\; i<j,\; i \neq n\} \cup \\
        &\{Q_\ell \rightarrow \lnot R_{i,j,k} : \ell,i,j,k \in \No,\; i<j,\; i \neq n\},
    \end{align*}
    where $F$ and $\unc$ are as in the definition of $\Tle^{\No}$ and $\T^{k'}_{k}$ respectively (in particular, $\unc$ is undecidable).

    We claim that $\T'$ is decidable and $n$-decidable. Let $\varphi$ be a conjunction of literals, and suppose first that $\varphi$ contains a positive literal of the form $P_m$. Then, either $\varphi$ contains a positive literal of the form $Q_\ell$ or $R_{i,j,k}$, in which case it is $\T'$-unsatisfiable, or it does not, in which case deciding whether $\varphi$ is $\T'$-satisfiable reduces to the decision problem for $\Testcfs$-satisfiability. Also, deciding whether $n \in \spec_{\T'}(\varphi)$ reduces to the corresponding problem for $\Testcfs$ in this case. Similarly, if $\varphi$ contains a positive literal of the form $Q_\ell$ or $R_{i,j,k}$, we appeal to the decidability and $n$-decidability results for $\Tle^{\No}$ and $\T^{j}_{i}$ (when $i \neq n$) respectively.
    
    At this point, the rest of the argument is as in the proof of \Cref{thm-f-n-decidable}.
\end{proof}

\section{Proof of Theorem~\ref{thm-gentle-not-uniform}}

\thmgentlenotuniform*

Throughout the following, fix a decidable $\Sigma$-theory $\T'$ with computable finite spectra. Our goal is to construct a decidable theory $\T$ such that $\T \sqcup \T'$ is decidable but $\T$ is not gentle.

Given a set $S \subseteq \No$, let $\T_S$ be the theory over the empty signature axiomatized by $\{\lnot \psi_{=n} : n \notin S\}$. We will carefully construct a set $S$ so that $\T_S$ and $\T_S \sqcup \T'$ are decidable, but $\T_S$ is not gentle. Let us first list the properties we want $S$ to satisfy; then we will actually construct a set with those properties.

\begin{lemma} \label{lem-s-gentle}
    Suppose $S \subseteq \No$ satisfies the following conditions:
    \begin{itemize}
        \item $S$ is not cofinite; and
        \item it is decidable whether a $\Sigma$-formula $\varphi$ satisfies $\spec_{\T'}(\varphi) \cap (S \cup \{\Inf\}) = \emptyset$.
    \end{itemize}
    Then, $\T_S$ and $\T_S \sqcup \T'$ are decidable, but $\T_S$ is not gentle.
\end{lemma}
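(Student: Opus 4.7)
The plan is to verify each of the three conclusions about $\T_S$ separately; all three parts become short computations once Fontaine's lemma and the two hypotheses on $S$ are in hand. I do not anticipate any deep obstacle --- the argument is mostly bookkeeping --- but the most delicate point is matching the conventions in the definition of gentleness with the fact that $\Inf$ is always present in $\spec_{\T_S}(\varphi)$ for every $\Teq$-satisfiable $\varphi$.

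For decidability of $\T_S$: since $\T_S$ is over the empty signature and $\Inf$ is always a valid model size (no axiom rules it out), any $\Teq$-satisfiable quantifier-free $\varphi$ admits an infinite $\Teq$-model, which is automatically a $\T_S$-model. Hence $\T_S$-satisfiability coincides with $\Teq$-satisfiability, which is decidable, so $\T_S$ is decidable regardless of any computability properties of $S$.

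For decidability of $\T_S \sqcup \T'$, I would apply Fontaine's lemma (\Cref{lem-fontaine}) and reduce to deciding, for cubes $\varphi_1$ over $\Sigma$ and $\varphi_2$ over the empty signature, whether $\spec_{\T'}(\varphi_1) \cap \spec_{\T_S}(\varphi_2) = \emptyset$. The spectrum $\spec_{\T_S}(\varphi_2)$ is either empty (if $\varphi_2$ is $\Teq$-unsatisfiable, in which case the intersection is trivially empty) or equals $(S \cup \{\Inf\}) \cap [k, \Inf]$, where $k = \minmod_{\Teq}(\varphi_2)$ is computable. In the nontrivial case I form $\varphi_1' = \varphi_1 \wedge \neq(x_1, \ldots, x_k)$ for fresh $x_i$, so that $\spec_{\T'}(\varphi_1') = \spec_{\T'}(\varphi_1) \cap [k, \Inf]$; the question then becomes whether $\spec_{\T'}(\varphi_1') \cap (S \cup \{\Inf\}) = \emptyset$, which is decided precisely by the second hypothesis on $S$.

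To see that $\T_S$ is not gentle, I take the formula $\top$; its $\T_S$-spectrum is exactly $S \cup \{\Inf\}$. If $\T_S$ were gentle, this set would have to be either a finite subset of $\No$ or the complement in $\N$ of a finite subset of $\No$. The former is ruled out because the spectrum contains $\Inf$; the latter forces $\No \setminus S$ to be finite, i.e.\ $S$ to be cofinite, contradicting the first hypothesis. The subtle point worth flagging is that the presence of $\Inf$ in the spectrum alone does not preclude gentleness --- it only rules out the ``finite'' branch --- so the non-cofiniteness of $S$ is genuinely needed in order to rule out the ``cofinite'' branch.
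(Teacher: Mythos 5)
Your argument is correct and reaches the same three conclusions as the paper's proof, by essentially the same route for the non-gentleness and $\T_S$-decidability parts. The only place where you diverge is the decidability of $\T_S\sqcup\T'$: you route through Fontaine's lemma, reduce to the spectral-intersection problem for cubes, compute $k=\minmod_{\Teq}(\varphi_2)$, and fold the cardinality constraint into $\varphi_1'=\varphi_1\wedge\neq(x_1,\dots,x_k)$ before invoking the hypothesis. The paper instead uses the observation that $\T_S$ lives over the \emph{empty} signature, so $\Sigma\sqcup\Sigma_1=\Sigma$ and $\T_S\sqcup\T'$ is just an ordinary $\Sigma$-theory; an arbitrary quantifier-free $\Sigma$-formula $\varphi$ is then $\T_S\sqcup\T'$-satisfiable iff $\spec_{\T'}(\varphi)\cap(S\cup\{\Inf\})\neq\emptyset$, which is exactly the hypothesis with no Fontaine detour and no fresh-variable trick. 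Both are correct; the paper's version is shorter because it avoids the arrangement/cube machinery that Fontaine's lemma imports, while yours is a more ``generic'' deployment of the combination framework that doesn't exploit the special structure (empty signature) of $\T_S$. Your closing remark on why non-cofiniteness of $S$ is genuinely needed to kill the cofinite branch of gentleness is a worthwhile clarification that the paper states more tersely.
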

\begin{proof}
    First, we have $\spec_{\T_S}(\top) = S \cup \{\Inf\}$, which is neither a finite set of finite cardinalities nor a cofinite set. Thus, $\T_S$ is not gentle.

    Second, we claim that $\T_S$ is decidable. If $\varphi$ is a quantifier-free formula satisfiable in the pure theory of equality, then it has an infinite model, which is also a $\T_S$-interpretation. Thus, $\varphi$ is $\T_S$-satisfiable if and only if it is satisfiable in the pure theory of equality, which is decidable.

    It remains to show that $\T_S \sqcup \T'$ is decidable. Let $\varphi_1$ be a quantifier-free $\Sigma$-formula and $\varphi_2$ be a quantifier-free formula over the empty signature. Our goal is to decide whether $\varphi_1 \land \varphi_2$ is $\T_S \sqcup \T'$-satisfiable. This happens if and only if $\spec_{\T'}(\varphi_1 \land \varphi_2) \cap (S \cup \{\Inf\}) \neq \emptyset$, which is decidable by assumption.
\end{proof}

Now we describe the construction of $S$. Let $\varphi_1, \varphi_2, \dots$ be an enumeration of $\Sigma$-formulas. Then, our set is constructed iteratively by the algorithm in \Cref{alg-main}. We give an intuitive description of the algorithm before proving that it constructs a set $S$ satisfying the properties from \Cref{lem-s-gentle}. The algorithm alternates between processing formulas and processing numbers. The state of the algorithm includes $i \in \No$, which is the index of the next unprocessed formula, and $j \in \No$, which is the next unprocessed number. We also keep track of
\begin{itemize}
    \item $SAT \subseteq \No$, which are the indices of $\Sigma$-formulas $\varphi$ such that our partial construction of $S$ has already guaranteed that $\varphi$ is $\T_S \sqcup \T'$-satisfiable;
    \item $UNSAT \subseteq \No$, which are the indices of $\Sigma$-formulas $\varphi$ such that our partial construction of $S$ has already guaranteed that $\varphi$ is $\T_S \sqcup \T'$-unsatisfiable; and
    \item $PROM \subseteq \No$, which are the indices of $\Sigma$-formulas $\varphi$ such that we promise to make $\varphi$ $\T_S \sqcup \T'$-satisfiable, even though the decisions made by the algorithm so far have not yet fulfilled that promise.
\end{itemize}

\begin{algorithm}[ht]
    \caption{Main loop constructing $S$} \label{alg-main}
    \begin{algorithmic}[1]
        \State let $i,j \gets 1$ \Comment{$i,j \in \No$}
        \State let $S, SAT, UNSAT, PROM \gets \emptyset$ \Comment{$S, SAT, UNSAT, PROM \in \mathbb{P}(\No)$}
        \While{\TRUE}
            \State let $i, SAT, UNSAT, PROM \gets \procform(i,S,SAT,UNSAT,PROM)$
            \State let $j, S, SAT, PROM \gets \procnum(j,S,SAT,PROM)$
        \EndWhile
    \end{algorithmic}
    \end{algorithm}

\begin{algorithm}[ht]
    \caption{$\procform(i,S,SAT,UNSAT,PROM)$} \label{alg-form}
    \begin{algorithmic}[1]
        \If{$\varphi_i$ has a $\T'$-interpretation whose size is in $S$}
            \State let $SAT \gets SAT \cup \{i\}$
            \State let $i \gets i+1$
            \State \Return $i, SAT, UNSAT, PROM$
        \ElsIf{$\varphi_i$ has no $\T'$-interpretations of size at least $j$}
            \State let $UNSAT \gets UNSAT \cup \{i\}$
            \State let $i \gets i+1$
            \State \Return $i, SAT, UNSAT, PROM$
        \Else
            \State let $PROM \gets PROM \cup \{i\}$
            \State let $i \gets i+1$
            \State \Return $i, SAT, UNSAT, PROM$
        \EndIf
    \end{algorithmic}
\end{algorithm}

\begin{algorithm}[ht]
    \caption{$\procnum(j,S,SAT,PROM)$} \label{alg-num}
    \begin{algorithmic}[1]
        \While{$\varphi_k$ has a $\T'$-interpretation of size $j$ for some $k \in PROM$}
            \State let $S \gets S \cup \{j\}$
            \State let $SAT \gets SAT \cup \{k\}$
            \State let $PROM \gets PROM \setminus \{k\}$
            \State let $j \gets j+1$
        \EndWhile
        \State let $j \gets j+1$
        \State \Return $j, S, SAT, PROM$
    \end{algorithmic}
\end{algorithm}

When we process the formula $\varphi_i$, we first check whether we have already made $\varphi_i$ $\T_S \sqcup \T'$-satisfiable, meaning that $\varphi_i$ has a $\T'$-interpretation whose size is in $S$; if so, we add $i$ to $SAT$. Otherwise, we check whether we have already made $\varphi_i$ $\T_S \sqcup \T'$-unsatisfiable, meaning that $\varphi_i$ has no $\T'$-interpretations whose size is in $S$ nor does it have any $\T'$ models size at least $j$ (the first unprocessed number); if so, we add $i$ to $UNSAT$. Otherwise, we promise to make $\varphi_i$ $\T_S \sqcup \T'$-satisfiable, and we add $i$ to $PROM$. Since $\varphi_i$ has now been processed, we increment $i$.

When we process the number $j$, we first check whether some formula that we have promised to satisfy has a $\T'$-interpretation of size $j$; if so, we fulfill that promise by adding $j$ to $S$, updating $SAT$ and $PROM$ accordingly. We increment $j$ and repeat this procedure until we find a $j$ whose inclusion in $S$ is not necessary to fulfill a promise. Since we are not obliged to add $j$ to $S$, we skip it by incrementing $j$ again.

We first justify that our procedure is indeed an algorithm.
\begin{lemma} \label{lem-computable}
    Every step of \Cref{alg-main} is computable.
\end{lemma}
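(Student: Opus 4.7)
The plan is to show computability step by step: first that the state variables are always computable, and then that each of the three non-trivial predicates appearing in \procform{} and \procnum{} is decidable, appealing to the assumed decidability and computable finite spectra of $\T'$.

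I would first observe, by a straightforward induction on the iteration count of the main loop, that $S, SAT, UNSAT, PROM$ remain finite and computably presented throughout the execution. Indeed, \procform{} adds at most one element to one of these sets per call, and the inner while loop of \procnum{} is bounded in length by $|PROM|$ (each iteration strictly shrinks $PROM$ and no branch adds to $PROM$), hence that loop terminates. So termination of each call to \procform{} and \procnum{} is automatic, once we know that the predicates they test are decidable.

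Next I would handle the three decision problems. The check in \procform{} that ``$\varphi_i$ has a $\T'$-interpretation whose size is in $S$'' reduces, since $S$ is finite at the point of the check, to evaluating $n \in \spec_{\T'}(\varphi_i)$ for each $n \in S$; this is decidable because $\T'$ has computable finite spectra. The same observation handles the test ``$\varphi_k$ has a $\T'$-interpretation of size $j$'' in \procnum{}, since $PROM$ is finite and $j$ is a fixed finite number. The remaining check, ``$\varphi_i$ has no $\T'$-interpretations of size at least $j$'', is the one I expect to be the subtlest, because it formally quantifies over all cardinalities $\geq j$ including $\aleph_0$, and $\T'$ is only assumed to have computable \emph{finite} spectra, not to be infinitely decidable.

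The key observation that resolves this obstacle is that ``$\varphi_i$ has a $\T'$-interpretation of size at least $j$'' is equivalent to the $\T'$-satisfiability of $\varphi_i \wedge \neq(x_1,\ldots,x_j)$ for fresh variables $x_1,\ldots,x_j$, which is decidable simply because $\T'$ is decidable; no hypothesis about infinite models of $\T'$ is needed. Combining this with the previous reductions, every branch condition in \procform{} and \procnum{} is effectively decidable, and every set update is an effective operation on finite sets, which completes the verification that \Cref{alg-main} performs only computable steps.
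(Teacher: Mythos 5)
Your proof is correct and follows essentially the same approach as the paper: it exploits that $S$ and $PROM$ remain finite, reduces the spectrum-membership checks to computable finite spectra of $\T'$, and handles the ``size at least $j$'' test via $\T'$-satisfiability of $\varphi_i \wedge \neq(x_1,\ldots,x_j)$ with fresh variables. The only addition is that you also verify termination of the inner while loop, which the paper factors out into a separate lemma (\Cref{lem-terminate}).
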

\begin{proof}
    We only need to check that the conditions of the if statements in \Cref{alg-form} are decidable. We can decide whether $\varphi_i$ has a $\T'$-interpretation whose size is in $S$ because $\T'$ has computable finite spectra and $S$ is a finite set at each stage of the procedure. We can decide whether $\varphi_i$ has a $\T'$-interpretation of size at least $j$ by checking the $\T'$-satisfiability of the formula $\varphi_i \land \neq(x_1,\dots,x_j)$ (where the variables $x_k$ are fresh).
\end{proof}

Next, we show that \Cref{alg-form,alg-num} terminate.
\begin{lemma} \label{lem-terminate}
    \Cref{alg-form,alg-num} terminate for all choices of inputs.
\end{lemma}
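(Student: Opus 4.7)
The plan is to handle the two subroutines separately, since they have very different structures.

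First, I would observe that \Cref{alg-form} contains no loops at all: it is a straight-line cascade of three guarded branches, each of which performs a finite number of set updates and an increment of $i$. Termination therefore reduces entirely to decidability of the two conditions ``$\varphi_i$ has a $\T'$-interpretation whose size is in $S$'' and ``$\varphi_i$ has no $\T'$-interpretations of size at least $j$''. Both of these are explicitly noted as computable in \Cref{lem-computable}, so \Cref{alg-form} halts on every input.

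The substantive content is therefore termination of \Cref{alg-num}. Here, my plan is to exhibit a strictly decreasing termination measure for the \textbf{while} loop: the cardinality $|PROM|$. I would argue that every iteration of the loop body removes exactly one element from $PROM$ (line 4 performs $PROM \gets PROM \setminus \{k\}$, and $k$ was in $PROM$ by the loop guard), so $|PROM|$ strictly decreases by $1$ per iteration. Each iteration performs finitely many computable operations (again appealing to \Cref{lem-computable}), so the only way the loop could fail to terminate is if $PROM$ were infinite.

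The key auxiliary fact, which I would prove by a short induction on the number of top-level iterations of \Cref{alg-main}, is that $PROM$ is always finite whenever $\procnum$ is invoked. Indeed, $PROM$ begins as $\emptyset$ in \Cref{alg-main}, each call to $\procform$ adds at most one element to $PROM$ (in the third branch), and each call to $\procnum$ only removes elements from $PROM$. Thus after any finite number of top-level iterations, $PROM$ is finite. Combined with the strict descent of $|PROM|$, the \textbf{while} loop terminates in at most $|PROM|$ iterations; after exiting, the algorithm performs one more increment of $j$ and returns, completing the proof.

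I do not anticipate any serious obstacle: the algorithm is structured precisely so that finitely many formulas are promised before each numerical processing phase, and the \textbf{while} loop is explicitly designed to drain those promises one at a time. The only minor subtlety is remembering that $\procnum$ terminates not because of any bound on $j$, but because of the monotone shrinkage of $PROM$.
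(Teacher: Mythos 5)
Your proof is correct and takes essentially the same approach as the paper: \Cref{alg-form} is loop-free so termination is immediate, and for \Cref{alg-num} the key observation is that $|PROM|$ strictly decreases with each iteration of the \textbf{while} loop, and $PROM$ is finite. You are slightly more explicit than the paper in justifying the finiteness of $PROM$ via induction on the iterations of \Cref{alg-main}, but this is a detail the paper leaves implicit rather than a genuinely different route.
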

\begin{proof}
    For \Cref{alg-form}, this is trivial. For \Cref{alg-num}, we need to show that the while loop terminates. This is because $PROM$ is a finite set whose size decreases with each iteration of the loop.
\end{proof}

Now, we show that the set $S$ constructed by \Cref{alg-main} satisfies the conditions from \Cref{lem-s-gentle}.

\begin{lemma} \label{lem-not-cofinite}
    The set $S$ constructed by \Cref{alg-main} is not cofinite.
\end{lemma}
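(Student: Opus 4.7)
The plan is to exploit the observation that \emph{every} invocation of $\procnum$ is guaranteed to leave at least one natural number out of $S$. Specifically, consider any single call $\procnum(j,S,SAT,PROM)$. The inner while loop terminates by Lemma~\ref{lem-terminate}, so at some point its condition fails: this means there is some value $j^*$ such that no $k \in PROM$ makes $\varphi_k$ have a $\T'$-interpretation of size $j^*$. At that exit, $j^*$ is \emph{not} added to $S$; the only line of $\procnum$ that adds elements to $S$ lives inside the body of the while loop. The subsequent unconditional statement $j \gets j+1$ then advances $j$ past $j^*$ forever.

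Next I would argue that these ``skipped'' witnesses accumulate. The variable $j$ is monotonically nondecreasing throughout the entire execution of Algorithm~\ref{alg-main}, since neither $\procform$ nor the body of $\procnum$ ever decrements it, and $j$ is strictly incremented at the end of every call to $\procnum$. Hence the skipped values $j^*$ produced by distinct invocations of $\procnum$ are distinct natural numbers. Because the main loop runs forever, there are infinitely many such invocations, and therefore infinitely many distinct elements of $\mathbb{N} \setminus S$.

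From this, $S$ cannot be cofinite. The argument is essentially a bookkeeping observation rather than a deep fact, so the only real obstacle is keeping the state-update semantics of Algorithms~\ref{alg-main}--\ref{alg-num} straight — in particular, confirming that the single ``skip'' in each $\procnum$ call really does escape getting later added to $S$ by some future $\procnum$ call, which is immediate because that future call only adds values $\ge$ its current $j$, and $j$ has already moved past $j^*$.
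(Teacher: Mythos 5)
Your proof is correct and follows essentially the same argument as the paper's, which simply observes that each call to $\procnum$ omits one value of $j$ from $S$; you add helpful bookkeeping detail (identifying the omitted value as the loop-exit index $j^*$ and noting the monotonicity of $j$ across calls to guarantee distinctness), but the underlying idea is identical.
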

\begin{proof}
    Every iteration of the main loop of \Cref{alg-main} calls $\procnum$. For each call to $\procnum$, we omit some $j$ from the set $S$. Thus, $S$ is not cofinite.
\end{proof}

\begin{lemma} \label{lem-decidable-intersection}
    For the set $S$ constructed by \Cref{alg-main}, it is decidable whether a $\Sigma$-formula $\varphi$ satisfies $\spec_{\T'}(\varphi) \cap (S \cup \{\Inf\}) = \emptyset$.
\end{lemma}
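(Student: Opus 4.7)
The plan is to use Algorithm~\ref{alg-main} itself as the decision procedure. Given a $\Sigma$-formula $\varphi$, I would first compute the index $i$ with $\varphi = \varphi_i$, then simulate the main loop through iteration $i$. By \Cref{lem-computable,lem-terminate}, this simulation is effective and halts in finitely many steps. After iteration $i$, exactly one of $i \in SAT$, $i \in UNSAT$, $i \in PROM$ holds; I would return ``empty'' in the second case and ``not empty'' in the other two. The correctness of the procedure is what needs to be established, and the only nontrivial case is $PROM$.

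For the $SAT$ case, at the moment $i$ is placed in $SAT$ we have $\spec_{\T'}(\varphi_i) \cap S \neq \emptyset$; since $S$ only grows thereafter, the final intersection remains nonempty. For the $UNSAT$ case, let $j_i$ denote the value of $j$ at the time $i$ was processed. Being in $UNSAT$ certifies both $\spec_{\T'}(\varphi_i) \cap S = \emptyset$ at that time, and $\spec_{\T'}(\varphi_i) \cap \{k \in \N : k \ge j_i\} = \emptyset$. The latter implies $\Inf \notin \spec_{\T'}(\varphi_i)$ since $\aleph_0 \ge j_i$, and it also implies that no finite model size of $\varphi_i$ lies in $[j_i,\Inf)$. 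Since \procnum{} strictly increases $j$ and only ever inserts values $\ge j_i$ into $S$ after this point, the final $S$ still avoids $\spec_{\T'}(\varphi_i)$, so the intersection with $S \cup \{\Inf\}$ is empty.

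The main obstacle is justifying the $PROM$ case, because one might worry that we must wait indefinitely to see whether $i$ ever migrates to $SAT$. The key observation is that this wait is unnecessary: the very placement of $i$ into $PROM$ certifies that $\varphi_i$ has a $\T'$-interpretation of size $\geq j_i$. If any such interpretation is infinite, then $\Inf \in \spec_{\T'}(\varphi_i) \cap (S \cup \{\Inf\})$. Otherwise $\varphi_i$ has a finite $\T'$-interpretation of some size $k \ge j_i$; since $j$ is strictly increasing and visits every natural number $\ge j_i$, there will come an iteration of the main loop where \procnum{} is called with $j = k$, and as long as $i$ is still in $PROM$ (it would otherwise already be in $SAT$ thanks to some smaller witness being added to $S$ first) the inner while loop will add $k$ to $S$ and move $i$ to $SAT$. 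Either way, the final $S \cup \{\Inf\}$ intersects $\spec_{\T'}(\varphi_i)$, so returning ``not empty'' is correct.

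Combining the three cases, the procedure is total and correct, and it is computable by \Cref{lem-computable,lem-terminate}. This establishes that it is decidable whether $\spec_{\T'}(\varphi) \cap (S \cup \{\Inf\}) = \emptyset$, matching the second hypothesis of \Cref{lem-s-gentle}; together with \Cref{lem-not-cofinite}, this yields \Cref{thm-gentle-not-uniform} via \Cref{lem-s-gentle}.
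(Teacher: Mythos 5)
Your proposal is correct and follows essentially the same approach as the paper: run the construction until $\varphi$ is processed, return ``empty'' iff it lands in $UNSAT$, and argue correctness case by case (with the $PROM$ case splitting on whether $\Inf \in \spec_{\T'}(\varphi)$ and, if not, locating a finite witness $k \ge j_i$ that is eventually considered by \procnum). Your treatment of the $UNSAT$ case (observing that all later additions to $S$ are $\ge j_i$) and your parenthetical about $i$ possibly having already migrated to $SAT$ are slightly more explicit than the paper's version, but the structure and the key observations are identical.
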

\begin{proof}
    The decision procedure is to run \Cref{alg-main} until the formula $\varphi$ is processed by $\procform$. If the index of $\varphi$ is added to $SAT$ or $PROM$, then we claim that $\spec_{\T'}(\varphi) \cap (S \cup \{\Inf\}) \neq \emptyset$; otherwise, if the index of $\varphi$ is added to UNSAT, we claim that $\spec_{\T'}(\varphi) \cap (S \cup \{\Inf\}) = \emptyset$. By \Cref{lem-computable,lem-terminate}, this constitutes an effective procedure. It remains to justify its correctness.

    If $i \in SAT$, then $\varphi_i$ has a $\T'$-interpretation whose size is in $S$ (elements of $S$ are never removed during the algorithm), so $\spec_{\T'}(\varphi_i) \cap (S \cup \{\Inf\}) \neq \emptyset$.
    
    If $i \in PROM$, there are two cases to consider. First, if $\Inf \in \spec_{\T'}(\varphi_i)$, then $\spec_{\T'}(\varphi_i) \cap (S \cup \{\Inf\}) \neq \emptyset$. Otherwise, observe that since $i$ was added to $PROM$, $\varphi_i$ has a $\T'$-interpretation of size at least $j$ (the first unprocessed number at the point $\varphi_i$ was processed). In particular, in this case, $\varphi_i$ has a \emph{finite} $\T'$-interpretation of size at least $j$. Let $j'$ be the size of such a $\T'$-interpretation. Then, when $j'$ is processed by $\procnum$ at a future stage, it will be added to $S$. Thus, $\spec_{\T'}(\varphi_i) \cap (S \cup \{\Inf\}) \neq \emptyset$.

    Finally, if $i \in UNSAT$, then $\varphi_i$ does not have a $\T'$-interpretation whose size is in $S \cap [j-1]$ (where $j$ is the first unprocessed number at the point $\varphi_i$ was processed). Further, $\varphi_i$ does not have a $\T'$-interpretation of size at least $j$. Therefore, $\spec_{\T'}(\varphi_i) \cap (S \cup \{\Inf\}) = \emptyset$.
\end{proof}

\begin{proof}[Proof of \Cref{thm-gentle-not-uniform}]
    This is immediate from \Cref{lem-s-gentle,lem-not-cofinite,lem-decidable-intersection}.
\end{proof}

\section{Proofs of Claims in Example~\ref{ex-cm-cs}} \label{appendix-cm-cs-example}

Here we prove the claimed facts about $\Tgr$, which recall is the theory axiomatized by $\{P_{n}\rightarrow \psi_{\geq F(n)} : n\in\No,\; F(n)\in\No\} \cup \{P_n \rightarrow \psi_{\geq m} : m,n \in \No,\; F(n) = \aleph_0\}$, where $F:\No\rightarrow\No\cup\{\Inf\}$ is such that $\{(m,n) \in {\No}^2 : F(m)\geq n\}$ is decidable, $\{n : F(n)=\aleph_{0}\}$ is undecidable.

\begin{proposition}
    The theory $\Tgr$ is smooth.
\end{proposition}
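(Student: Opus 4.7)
The plan is to use essentially the same construction as the one used in \Cref{sec:testsm} to show that $\Testsi$ is smooth, adapted to the axiomatization of $\Tgr$. Since $\Spn$ has no function symbols and only nullary predicates, extending a $\Tgr$-interpretation to a larger domain is straightforward: the values of the nullary predicates and the variables are unaffected by enlarging the underlying set.

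Concretely, given a quantifier-free formula $\varphi$, a $\Tgr$-interpretation $\A$ with $\A \vDash \varphi$, and a cardinal $\kappa \geq |\dom{\A}|$, I would pick a set $X$ disjoint from $\dom{\A}$ with $|X| = \kappa - |\dom{\A}|$ and define $\B$ by $\dom{\B} = \dom{\A} \cup X$, $P_n^{\B} = P_n^{\A}$ for every $n \in \No$, and $x^{\B} = x^{\A}$ for every variable $x$. Since nullary predicate values and variable assignments coincide with those of $\A$, and $\varphi$ is quantifier-free, it is immediate that $\B \vDash \varphi$.

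It remains to verify that $\B$ is a $\Tgr$-interpretation. For each axiom of the first form, $P_n \rightarrow \psi_{\geq F(n)}$ with $F(n) \in \No$: if $P_n^{\B}$ is true, then $P_n^{\A}$ is true, so $|\dom{\A}| \geq F(n)$ by the fact that $\A \vDash \Tgr$; since $|\dom{\B}| \geq |\dom{\A}|$, we get $|\dom{\B}| \geq F(n)$, so $\B \vDash \psi_{\geq F(n)}$. For each axiom of the second form, $P_n \rightarrow \psi_{\geq m}$ with $F(n) = \aleph_0$: if $P_n^{\B}$ is true, then $P_n^{\A}$ is true, so $|\dom{\A}| \geq m$, and hence $|\dom{\B}| \geq m$.

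There is no real obstacle here; the only thing to be careful about is the case where $|\dom{\A}|$ is already infinite, but the cardinal arithmetic $|X| = \kappa - |\dom{\A}|$ still makes sense (interpreting $X$ as any set such that $|\dom{\A} \cup X| = \kappa$), and the argument goes through unchanged.
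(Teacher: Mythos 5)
Your proof is correct and is essentially the explicit version of the paper's one-line argument: the paper simply observes that $\Tgr$ has an existential axiomatization (and hence its models are preserved under extension of the domain), whereas you carry out the domain-extension construction and axiom verification concretely. Same idea, just unpacked.
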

\begin{proof}
    This is immediate from the fact that $\Tgr$ has an existential axiomatization.
\end{proof}

\begin{proposition}
    The theory $\Tgr$ has computable spectra.
\end{proposition}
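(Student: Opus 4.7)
My plan is to analyze the spectrum of an arbitrary quantifier-free formula $\varphi$ by first reducing to disjunctions of cubes, since $\spec_{\Tgr}(\varphi) = \bigcup_i \spec_{\Tgr}(\psi_i)$ when $\varphi \equiv \bigvee_i \psi_i$. So I fix a cube $\psi$ and split it as $\psi = \psi' \wedge \bigwedge_{n \in N^+} P_n \wedge \bigwedge_{n \in N^-} \neg P_n$, where $\psi'$ is a conjunction of literals in the empty signature and $N^+, N^- \subseteq \No$ are finite. If $N^+ \cap N^- \neq \emptyset$, then $\psi$ is $\Tgr$-unsatisfiable. Otherwise, I claim
\[
    k \in \spec_{\Tgr}(\psi) \iff k \in \spec_{\Teq}(\psi') \text{ and } F(n) \leq k \text{ for every } n \in N^+,
\]
for $k \in \No$, where the inequality $F(n) \le k$ is understood to fail when $F(n) = \aleph_0$; and
\[
    \aleph_0 \in \spec_{\Tgr}(\psi) \iff \psi' \text{ is } \Teq\text{-satisfiable.}
\]

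For the forward direction of both equivalences, any $\Tgr$-model of $\psi$ restricts to a $\Teq$-model of $\psi'$ of the same size, and satisfies every axiom $P_n \rightarrow \psi_{\ge F(n)}$ for $n \in N^+$. For the reverse directions, given a $\Teq$-model $\A$ of $\psi'$ of the stated size, I would extend it to a $\Sigma^{\mathbb{N}}_{P}$-structure by setting $P_n^\A$ true iff $n \in N^+$; this satisfies $\psi$, and it satisfies the $\Tgr$-axioms precisely because $k \ge F(n)$ for every $n \in N^+$ in the finite case (and trivially in the infinite case, since every $F(n)$ is at most $\aleph_0$).

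Given these characterizations, both decision problems reduce to decidable subroutines. Membership in $\spec_{\Teq}(\psi')$ (for any $k \in \N$) is decidable because $\Teq$ is decidable and has computable spectra. The condition $F(n) \le k$ for $k \in \No$ is decidable by the standing hypothesis on $F$ that $\{(m,n) \in \No^2 : F(m) \ge n\}$ is decidable; here this is exactly where the potential pitfall lies, but it does not bite us: even though $\{n : F(n) = \aleph_0\}$ is undecidable, we never need to separate ``$F(n) = \aleph_0$'' from ``$F(n) > k$'' --- the uniform test $F(n) \le k$ handles both at once. Finally, $\Teq$-satisfiability of $\psi'$ is decidable. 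Iterating over the finitely many disjuncts of a DNF of $\varphi$ then gives decision procedures for both $k \in \spec_{\Tgr}(\varphi)$ with $k \in \No$ and $\aleph_0 \in \spec_{\Tgr}(\varphi)$, which together establish that $\Tgr$ has computable spectra.
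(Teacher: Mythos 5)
Your proof is correct and follows essentially the same route as the paper's: reduce to a single cube, split off the $P$-literals, characterize $\spec_{\Tgr}$ in terms of the spectrum of the equality part and the inequalities $F(n) \le k$ over positively occurring $P_n$, and observe that the hypothesis ``$\{(m,n) : F(m) \ge n\}$ decidable'' is exactly what makes the finite case effective. The only cosmetic difference is that the paper phrases the equality-part condition via $\minmod_{\Teq}$ rather than $\spec_{\Teq}(\psi')$; since $\Teq$ is shiny (hence smooth with computable minimal model function) these are interchangeable.
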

\begin{proof}
    We want to compute whether $k \in \spec_{\Tgr}(\varphi)$, where $\varphi$ is a quantifier-free formula and $k \in \N$. Without loss of generality, $\varphi$ is a conjunction of literals. We can then write $\varphi = \varphi_1 \land \varphi_2$, where $\varphi_1$ is a conjunction of literals of the form $P_n$ or $\lnot P_n$, and $\varphi_2$ is a conjunction of equalities and disequalties. If $\varphi_2$ is unsatisfiable in pure first-order logic, then clearly $k \notin \spec_{\Tgr}(\varphi)$. So suppose $\varphi_2$ is satisfiable in pure first-order logic, and let $m$ be the size of its smallest model (which is finite and can be effectively computed). Then, we have $k \in \spec_{\Tgr}(\varphi)$ if and only if $k \ge m$ and $k \ge F(n)$ for each $n \in \No$ such that $P_n$ occurs positively in $\varphi_1$. If $k = \aleph_0$, then this condition is trivially satisfied, so $k \in \spec_{\Tgr}(\varphi)$. Otherwise, if $k \in \No$, our assumptions on $F$ imply that this condition can be effectively checked.
\end{proof}

\begin{proposition}
    The theory $\Tgr$ does not have a computable minimal model function.
\end{proposition}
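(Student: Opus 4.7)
The plan is to reduce the undecidable problem ``does $F(n) = \aleph_0$?'' to the problem of computing the minimal model function of $\Tgr$, and then derive a contradiction from our assumption on $F$.

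First I would examine the spectrum of the atomic formula $P_n$ in $\Tgr$ for a fixed $n \in \No$. By the axiomatization, if $F(n) \in \No$, then the only relevant axiom involving $P_n$ is $P_n \rightarrow \psi_{\geq F(n)}$, so any structure of cardinality at least $F(n)$ can interpret $P_n$ as true (making the remaining axioms vacuous, once we also set the other $P_m$ appropriately), giving $\spec_{\Tgr}(P_n) = \{k \in \N : k \geq F(n)\}$ and hence $\minmod_{\Tgr}(P_n) = F(n)$. On the other hand, if $F(n) = \aleph_0$, then $P_n$ forces $\psi_{\geq m}$ for every $m \in \No$, so $P_n$ has no finite model and $\spec_{\Tgr}(P_n) = \{\aleph_0\}$, giving $\minmod_{\Tgr}(P_n) = \aleph_0$.

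Next I would observe that combining the two cases, $F(n) = \aleph_0$ if and only if $\minmod_{\Tgr}(P_n) = \aleph_0$. Therefore, if $\minmod_{\Tgr}$ were computable, one could decide the set $\{n \in \No : F(n) = \aleph_0\}$ by computing $\minmod_{\Tgr}(P_n)$ and checking whether the result equals $\aleph_0$. But $\{n : F(n) = \aleph_0\}$ is undecidable by the hypothesis on $F$ stated in \Cref{sec:sometests}, which yields the contradiction and completes the proof.

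I do not anticipate a significant obstacle here: the only subtlety is the bookkeeping needed to verify the spectrum computation for $P_n$, namely checking that when $F(n) \in \No$ one really can construct a $\Tgr$-model of $P_n$ of every cardinality $k \geq F(n)$ by setting $P_n$ to true and all other $P_m$ with $F(m) > k$ to false (the remaining $P_m$ being harmless since their axioms are already satisfied). Once this is in hand, the reduction is immediate.
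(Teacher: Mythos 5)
Your proof is correct and takes essentially the same approach as the paper: the paper simply states that computing $\minmod_{\Tgr}(P_n)$ would allow one to decide whether $F(n) = \aleph_0$, which is undecidable by assumption. Your version spells out the spectrum computation for $P_n$ that justifies this reduction, but the core idea is identical.
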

\begin{proof}
    Computing $\minmod_{\Tgr}(P_n)$ would allow us to decide whether $F(n) = \Inf$, which is undecidable by our assumption on $F$.
\end{proof}

\section{Pseudocode for the Old Combination Methods}

In \Cref{sec-pseudocode}, we presented pseudocode for each of the new combination methods we introduced. For completeness, here we give pseudocode for the previously known combination methods, which were discussed in \Cref{sec-known}. Each function in \Cref{alg-old-methods} shows how each combination method allows us to fulfill the requirement of Fontaine's lemma (\Cref{lem-fontaine,alg-fontaine}).

\begin{algorithm}[ht]
    \caption{Pseudocode for the old combination methods. Each function returns whether $\spec_{\T_1}(\varphi_1) \cap \spec_{\T_2}(\varphi_2) \neq \emptyset$, where $\varphi_1$ and $\varphi_2$ are conjunctions of literals over $\Sigma_1$ and $\Sigma_2$ respectively. We assume throughout that $\T_1$ and $\T_2$ are decidable.} \label{alg-old-methods}
    \begin{algorithmic}[1]
        \Require $\T_1$ is shiny
        \Function{shiny}{$\varphi_1, \varphi_2$}
            \If{$\varphi_1$ is $\T_1$-unsatisfiable}
                \State \Return \FALSE
            \EndIf
            \State $k \gets \minmod_{\T_1}(\varphi_1)$
            \State \Return $\varphi_2 \land \neq(x_{1},\ldots,x_{k})$ is $\T_2$-satisfiable \Comment{$x_{1},\ldots,x_{k}$ are fresh}
        \EndFunction

        \Statex

        \Require $\T_1$ and $\T_2$ are stably infinite
        \Function{Nelson--Oppen}{$\varphi_1, \varphi_2$}
            \State \Return $\varphi_1$ is $\T_1$-satisfiable and $\varphi_2$ is $\T_2$-satisfiable
        \EndFunction

        \Statex

        \Require $\T_1$ is gentle and $\T_2$ has computable finite spectra
        \Function{gentle}{$\varphi_1, \varphi_2$}
            \If{$\spec_{\T_1}(\varphi_1)$ is finite}
                \ForAll{$n \in \spec_{\T_1}(\varphi_1)$}
                    \If{$n \in \spec_{\T_2}(\varphi_2)$}
                        \State \Return \TRUE
                    \EndIf
                \EndFor
                \State \Return \FALSE
            \Else \Comment{$\spec_{\T_1}(\varphi_1)$ is cofinite}
                \State $k \gets \max(\mathbb{N} \setminus \spec_{\T_1}(\varphi_1))$
                \ForAll{$n \in \spec_{\T_1}(\varphi_1) \cap [k]$}
                    \If{$n \in \spec_{\T_2}(\varphi_2)$}
                        \State \Return \TRUE
                    \EndIf
                \EndFor
                \State \Return $\varphi_2 \land \neq(x_{1},\ldots,x_{k+1})$ is $\T_2$-satisfiable \Comment{$x_{1},\ldots,x_{k+1}$ are fresh}
            \EndIf
        \EndFunction
    \end{algorithmic}
\end{algorithm}

\section{Proof of Proposition~\ref{prop-inclusion}}

\propinclusion*

\begin{proof}
    We prove each of the claimed inclusions.

    $\class{$n$-decidable} \subseteq \class{}$ and $\class{ID} \subseteq \class{}$: These are clear, since all of the theories are assumed to be decidable.

    $\class{CFS} \subseteq \class{$n$-decidable}$: If $\T \in \class{CFS}$, then we can decide whether $k \in \spec_\T(\varphi)$ for any given $k \in \No$. In particular, we can decide whether $n \in \spec_\T(\varphi)$, so $\T \in \class{$n$-decidable}$

    $\class{\coquagen} \subseteq \class{CFS}$, $\class{CS} \subseteq \class{CFS}$, and $\class{CS} \subseteq \class{ID}$: These are immediate from the definitions.

    $\class{SI} \subseteq \class{ID}$: If $\T \in \class{SI}$, then we have $\aleph_0 \in \spec_\T(\varphi)$ if and only if $\varphi$ is $\T$-satisfiable. Since $\T$ is decidable, we can decide whether $\aleph_0 \in \spec_\T(\varphi)$. Thus, $\T \in \class{ID}$.

    $\class{\quagen} \subseteq \class{\coquagen}$: Suppose $\T \in \class{\quagen}$. Then, $\T$ has computable finite spectra. Also, the spectrum of every quantifier-free formula is either a finite set of finite cardinalities, or $A\cup\{\aleph_{0}\}$ for an $A\in\mathfrak{F}$. But $A\in\mathfrak{F}$ implies $\mathbb{N} \setminus A \notin \mathfrak{F}$. Thus, $\T \in \class{\coquagen}$.

    $\class{gentle} \subseteq \class{\quagen}$: If $\T \in \class{gentle}$, then $\T$ has computable finite spectra, and the spectrum of every quantifier-free formula is either a finite set of finite cardinalities or a cofinite set. Since $\mathfrak{F}$ is a free filter, it contains all cofinite sets. Thus, $\T \in \class{\quagen}$.

    $\class{gentle} \subseteq \class{CS}$ and $\class{SM+CS} \subseteq \class{CS}$: These are immediate from the definitions.

    $\class{SM+CS} \subseteq \class{SI}$: This is because every smooth theory is stably infinite.

    $\class{$n$-shiny} \subseteq \class{gentle}$: If $\T \in \class{$n$-shiny}$, then we can compute an explicit representation of the spectrum of a given quantifier-free formula, and the definition of $n$-shininess implies that this spectrum is either a finite set of finite cardinalities or a cofinite set. Thus, $\T \in \class{gentle}$.

    $\class{shiny} \subseteq \class{$n$-shiny}$: If $\T \in \class{shiny}$, the spectrum of every $\T$-satisfiable quantifier-free formula is of the form $\{m \in \N : m \ge k\}$ for some $k \in \No$ that can be computed. Thus, $\T \in \class{$n$-shiny}$.

    $\class{shiny} \subseteq \class{SM+CS}$: If $\T \in \class{shiny}$, then $\T$ is smooth by definition and the spectrum of every $\T$-satisfiable quantifier-free formula is of the form $\{m \in \N : m \ge k\}$ for some $k \in \No$ that can be computed, which allows us to compute whether a given $n \in \N$ is in the spectrum. Thus, $\T \in \class{SM+CS}$.
\end{proof}

\section{Proof of Proposition~\ref{prop-strict-inclusion}}

We here prove \Cref{prop-strict-inclusion}.
For each set \class{X}, we find a theory $\T$ such that $\T \in \class{X} \setminus \class{Y}$ for all sets $\class{Y}$ such that there is no upward path in \Cref{fig:lattice} from $\class{X}$ to $\class{Y}$ and $(\class{X},\class{Y}) \neq (\class{\coquagen{}},\class{\quagen{}})$.

\subsection{\tp{Examples of Decidable and $n$-Decidable Theories}{Examples of Decidable and n-Decidable Theories}}\label{sec:exdec}

\begin{lemma}
    For all $k \in \No$, $\Th{d}^{k}$ is decidable but not infinitely decidable. It is $n$-decidable if $k=n-1$ and not $n$-decidable for $k=n$; in particular, $\Th{d}^{k}$ does not have computable finite spectra for any $k$.
\end{lemma}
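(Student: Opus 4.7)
My plan is to analyze the satisfiability and spectra of $\Th{d}^{k}$ by syntactic decomposition of conjunctions of literals, separating $P$-literals from the rest, and then invoking properties of $\Teq$ (smoothness and decidability). After the analysis, each of the four claims will follow from a short argument about specific formulas.

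First, I would normalize: any conjunction of literals $\varphi$ containing $P_i \wedge P_j$ ($i\neq j$) or $P_i \wedge \neg P_i$ is $\Th{d}^{k}$-unsatisfiable, so I may assume $\varphi = \varphi' \wedge \lambda$ where $\varphi'$ has no $P$-literals and $\lambda$ is either empty, purely negative, or contains a single positive $P$-literal $P_r$. I would then compute $\spec_{\Th{d}^{k}}(\varphi)$ in each case. If there is no positive $P$-literal, then $\spec_{\Th{d}^{k}}(\varphi) = [\minmod_{\Teq}(\varphi'),\aleph_0]$, by taking any $\Teq$-interpretation of $\varphi'$ and making every $P_i$ false. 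If the positive literal is $P_{2i}$, then the axiom $P_{2i}\to\psi_{\leq F(i)}$ (when $F(i)\in\mathbb{N}$) gives $\spec_{\Th{d}^{k}}(\varphi) = [\minmod_{\Teq}(\varphi'),F(i)]$ (with the convention that $[a,\aleph_0]$ includes $\aleph_0$). If the positive literal is $P_{2i+1}$, then $\spec_{\Th{d}^{k}}(\varphi) = [\max(\minmod_{\Teq}(\varphi'),k+1),\aleph_0]$ when $i\in\unc$ and $[\minmod_{\Teq}(\varphi'),\aleph_0]$ otherwise.

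With this explicit description in hand, the four claims fall out as follows. \textbf{Decidability:} $\varphi$ is $\Th{d}^{k}$-satisfiable iff $\varphi'$ is $\Teq$-satisfiable and, in the $P_{2i}$ case, $F(i)\geq \minmod_{\Teq}(\varphi')$; both checks are effective since $\Teq$ is decidable and $\{(m,n) : F(m)\geq n\}$ is decidable (note that the $P_{2i+1}$ case is always satisfiable when $\varphi'$ is, because $\Teq$ is smooth and we can take an infinite model). \textbf{Not infinitely decidable:} take $\varphi = P_{2i}$; then $\aleph_0\in\spec_{\Th{d}^{k}}(P_{2i})$ iff $F(i)=\aleph_0$, which is undecidable by assumption. \textbf{$(k+1)$-decidability:} in every case above, checking whether $k+1$ lies in the spectrum reduces to checking $\minmod_{\Teq}(\varphi')\leq k+1$ and (when relevant) $F(i)\geq k+1$, both of which are decidable; critically, the dependence on $i\in\unc$ in the $P_{2i+1}$ case disappears because the threshold $k+1$ already dominates the constraint. \textbf{Not $k$-decidable:} take $\varphi = P_{2i+1}$; from the case analysis, $k\in\spec_{\Th{d}^{k}}(P_{2i+1})$ iff $i\notin\unc$, which is undecidable. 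The final assertion (no computable finite spectra) is then immediate since a theory with computable finite spectra would, in particular, be $k$-decidable.

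The main subtlety, and the one I would be careful with, is justifying the spectrum description in the $P_{2i+1}$ case when $i\in\unc$: I need every cardinality in $[\max(\minmod_{\Teq}(\varphi'),k+1),\aleph_0]$ to be realized, which uses smoothness of $\Teq$ together with the freedom to set $P_{2i+1}$ true and all other $P_j$ false so that only the axiom $P_{2i+1}\to\psi_{\geq k+1}$ is active. The rest is bookkeeping via the decidable predicate $F(m)\geq n$.
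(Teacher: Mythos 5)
Your proposal is correct and follows essentially the same route as the paper's proof: normalize to conjunctions of literals with at most one positive $P$-literal, compute the spectrum in each of the three cases (no positive $P$-literal, $P_{2i}$, $P_{2i+1}$) by appealing to the decidability and smoothness of $\Teq$, then read off the four claims from the explicit spectrum description, using $P_{2i}$ to defeat infinite decidability and $P_{2i+1}$ to defeat $k$-decidability. The only stylistic difference is that you write out the full spectrum in each case while the paper tracks only membership of the relevant cardinality; your version is a bit more transparent but the key examples, the use of the predicate $\{(m,n):F(m)\geq n\}$, and the observation that the $\unc$-dependence vanishes at threshold $k+1$ are all identical to the paper's.
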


\begin{proof}
\begin{enumerate}
    \item To show that $\Th{d}^{n}$ is decidable we can assume, as many times before, that $\varphi$ is a conjunction of literals with at most one positive $P$-literal.
    If $\varphi$ contains no positive $P$-literals we state $\varphi$ is $\Th{d}^{n}$-satisfiable if and only if $\varphi^{\prime}$ is $\Teq$-satisfiable, and $n+1$ is in its spectra iff $\minmod_{\Teq}(\varphi^{\prime})\leq n+1$:
    indeed, given any $\Teq$-interpretation that satisfies $\varphi^{\prime}$, we just make all $P_{k}$ false.
    If $\varphi$ contains the literal $P_{2k}$ we just need to check, as in the case of $\Tle$, whether $\minmod_{\Teq}(\varphi^{\prime})\leq F(k)$, and $n+1$ will be in this spectrum iff $\minmod_{\Teq}(\varphi^{\prime})\leq n+1\leq F(k)$.
    If, instead, $\varphi$ contains the literal $P_{2k+1}$, then $\varphi$ is $\Th{d}^{n}$-satisfiable if and only if $\varphi^{\prime}$ is $\Teq$-satisfiable, and $n+1$ is in this spectrum iff $\minmod_{\Teq}(\varphi^{\prime})\leq n+1$:
    just take a $\Teq$-interpretation that satisfies $\varphi^{\prime}$ with cardinality the maximum of $\minmod_{\Teq}(\varphi^{\prime})$ and $n+1$, make $P_{2k+1}$ true and all other $P_{i}$ false.

    \item The proof above, in particular, shows $\Th{d}^{n-1}$ is $n$-decidable.

    \item $n\in\spec_{\Th{d}^{n}}(P_{2k+1})$ if and only if $k\notin \unc$, meaning $\Th{d}^{n}$ is not $n$-decidable nor has computable finite spectra.

    \item $\aleph_{0}\in\spec_{\Th{d}^{n}}(P_{2k})$ if and only if $F(k)=\aleph_{0}$, an undecidable problem, meaning the theory is not infinitely decidable. \qedhere
    \end{enumerate}
\end{proof}

\subsection{Example of an Infinitely Decidable Theory}

\begin{lemma}
    $\T_{n}^{n+1}$ is decidable and infinitely decidable, but is neither $n$-decidable nor stably infinite.
\end{lemma}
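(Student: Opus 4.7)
The plan is to exploit the fact that $\T_{n}^{n+1}$ is a particular instance of the general test theory $\Testsm$ (taking $m=n$ as the lower cardinality and $n+1$ as the upper cardinality), so the decidability portion of the claim can be imported essentially for free from the analysis already carried out in \Cref{sec:testsm}. The axioms $\psi_{=n}\vee\psi_{=n+1}$, $P_{k}\rightarrow\psi_{=n+1}$ for $k\in \unc$, and $P_{i}\rightarrow\neg P_{j}$ for $i\neq j$ confine every model to a domain of size $n$ or $n+1$. From this single structural observation, infinite decidability and failure of stable infiniteness will fall out trivially.

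First I would handle decidability by invoking the lemma in \Cref{sec:testsm} that $\Testsm$ is decidable for all $m<n$, instantiated at the pair $(n,n+1)$. Next, for infinite decidability, I note that since every $\T_{n}^{n+1}$-interpretation has finite domain, $\aleph_{0}\notin \spec_{\T_{n}^{n+1}}(\varphi)$ for every quantifier-free $\varphi$; hence the algorithm that always answers ``no'' decides infinite satisfiability. The same observation immediately refutes stable infiniteness: the quantifier-free formula $\top$ (or indeed any $\T_{n}^{n+1}$-satisfiable cube over the empty sub-signature) has a nonempty spectrum that is contained in $\{n,n+1\}$, so $\aleph_{0}\notin\spec_{\T_{n}^{n+1}}(\top)$.

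The main (though still mild) step is to show that $\T_{n}^{n+1}$ is not $n$-decidable. The plan is a reduction from the undecidable membership problem for $\unc$: given $k\in\No$, I will claim $n\in\spec_{\T_{n}^{n+1}}(P_{k})$ if and only if $k\notin \unc$. If $k\in \unc$, then the axiom $P_{k}\rightarrow\psi_{=n+1}$ forces every model of $P_{k}$ to have size $n+1$, so $n\notin\spec_{\T_{n}^{n+1}}(P_{k})$. Conversely, if $k\notin \unc$, I would construct an interpretation $\A$ with $|\dom{\A}|=n$, $P_{k}^{\A}$ true, and $P_{j}^{\A}$ false for all $j\neq k$; one then checks that every axiom holds (the disjunctive cardinality axiom is satisfied by the first disjunct; the exclusion axioms hold vacuously; and the implicational axioms $P_{j}\rightarrow\psi_{=n+1}$ for $j\in \unc$ hold because their antecedents are false, since $k\notin \unc$). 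A decision procedure for $n$-membership in spectra would thus decide $\unc$, contradicting its undecidability.

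I do not anticipate a real obstacle: the only place requiring care is the verification that the constructed interpretation in the second direction genuinely satisfies every schema of axioms, especially making sure that the hypothesis $k\notin \unc$ is used exactly to vacuously validate the implicational axioms indexed by $\unc$. All other claims reduce to transparent observations about the cardinality restrictions imposed by the axiomatization.
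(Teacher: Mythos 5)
Your proof is correct and follows essentially the same approach as the paper's: decidability via the general $\Testsm$ analysis, infinite decidability and failure of stable infiniteness both from the absence of infinite models, and failure of $n$-decidability via the reduction $n\in\spec_{\T_{n}^{n+1}}(P_{k})\Longleftrightarrow k\notin\unc$. You merely spell out the model construction in more detail than the paper, which states the reduction in a single line.
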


\begin{proof}
\begin{enumerate}
    \item See \Cref{sec:testsm} for the proof of decidability.
    \item $\T_{n}^{n+1}$ is infinitely decidable as it contains no infinite interpretations.
    \item This theory is not $n$-decidable as $n\in\spec_{\T_{n}^{n+1}}(P_{k})$ if and only if $k\notin \unc$.
    \item It is also not stably infinite as it contains no infinite interpretations. \qedhere
\end{enumerate}
\end{proof}

\subsection{Example of a Theory With Computable Finite Spectra}

\begin{lemma}
    $\Th{cfs}$ is decidable and has computable finite spectra, but is neither infinitely decidable nor co-$\mathfrak{F}$-QG for any free filter $\mathfrak{F}$.
\end{lemma}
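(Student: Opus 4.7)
The plan is to analyze $\Th{cfs}$-satisfiability of a conjunction of literals $\varphi$ via the same case analysis used earlier for $\Tle^{\No}$ and $\Testcfs$: since the axioms include $P_{i}\to\lnot P_{j}$ for $i\neq j$, we may assume $\varphi$ contains at most one positive $P$-literal, and split on which (if any) it is. Let $\varphi^{\prime}$ be $\varphi$ with all $P$-literals deleted; I would show that (i) with no positive $P$-literal, $\spec_{\Th{cfs}}(\varphi)=[\minmod_{\Teq}(\varphi^{\prime}),\aleph_{0}]$; (ii) with $P_{1}$ positive, every model must be infinite, so $\spec_{\Th{cfs}}(\varphi)=\{\aleph_{0}\}$ whenever $\varphi^{\prime}$ is $\Teq$-satisfiable; and (iii) with $P_{k}$ positive for $k\geq 2$, $\spec_{\Th{cfs}}(\varphi)=[\minmod_{\Teq}(\varphi^{\prime}),F(k)]$, irrespective of whether $F(k)$ is finite (treating $[\minmod_{\Teq}(\varphi^{\prime}),\aleph_{0}]$ as the range when $F(k)=\aleph_{0}$). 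In each case, $\Teq$-satisfiability of $\varphi^{\prime}$ and the minimal model $\minmod_{\Teq}(\varphi^{\prime})$ are computable, and the only extra check is whether $F(k)\geq \minmod_{\Teq}(\varphi^{\prime})$, which is decidable by the standing assumption on $F$; this yields decidability.

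For computable finite spectra, I would use the same case analysis and observe that checking whether a given $n\in\No$ lies in the spectrum reduces to comparing $n$ with $\minmod_{\Teq}(\varphi^{\prime})$ and, in case (iii), to the decidable predicate $F(k)\geq n$. For the failure of infinite decidability, the test formula is $P_{k}$ for $k\geq 2$: by case (iii), $\aleph_{0}\in\spec_{\Th{cfs}}(P_{k})$ iff $F(k)=\aleph_{0}$, and $\{k : F(k)=\aleph_{0}\}$ is undecidable by assumption on $F$ (the restriction to $k\geq 2$ differs from the full set by at most one element, and so remains undecidable).

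For the co-$\mathfrak{F}$-QG claim, I would exhibit a single formula that violates the definition for every free filter simultaneously, namely $P_{1}$. By case (ii), $\spec_{\Th{cfs}}(P_{1})=\{\aleph_{0}\}$. This set is not a finite set of finite cardinalities, and the only way to write it as $A\cup\{\aleph_{0}\}$ with $A\subseteq\mathbb{N}$ is $A=\emptyset$; but then $\mathbb{N}\setminus A=\mathbb{N}\in\mathfrak{F}$ for every filter $\mathfrak{F}$, contradicting the requirement $\mathbb{N}\setminus A\notin\mathfrak{F}$. Hence $\Th{cfs}$ is not co-$\mathfrak{F}$-QG for any free filter. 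No serious obstacle is anticipated: the positive claims are routine case analyses patterned on those already carried out for $\Testg$ and $\Testcfs$, and the negative claims are witnessed by the two explicit formulas $P_{k}$ ($k\geq 2$) and $P_{1}$.
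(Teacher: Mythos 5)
Your proposal is correct and follows essentially the same case analysis as the paper's proof: split on which (if any) positive $P$-literal the cube contains, giving spectra $[\minmod_{\Teq}(\varphi'),\aleph_0]$, $\{\aleph_0\}$, and $[\minmod_{\Teq}(\varphi'),F(k)]$ in the three cases, and then read off decidability, computable finite spectra, the failure of infinite decidability via $P_k$ ($k\ge2$), and the failure of co-$\mathfrak{F}$-QG via $P_1$. The only differences are cosmetic: you explicitly unpack why $\{\aleph_0\}$ violates the co-$\mathfrak{F}$-QG definition (the forced choice $A=\emptyset$ gives $\mathbb{N}\setminus A=\mathbb{N}\in\mathfrak{F}$), and you note that undecidability of $\{k : F(k)=\aleph_0\}$ survives the restriction to $k\ge2$, a detail the paper leaves implicit.
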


\begin{proof}
\begin{enumerate}
    \item Let $\varphi$ be a conjunction of literals with at most one positive $P$-literal.
    If $\varphi$ has no positive $P$-literals we proceed as usual, $\varphi$ being $\Th{cfs}$-satisfiable iff $\varphi^{\prime}$ is $\Teq$-satisfiable;
    and, in that case, $\spec_{\Th{cfs}}(\varphi) = [\minmod_{\Teq}(\varphi^{\prime}),\aleph_{0}]$.

    If $\varphi$ contains $P_{1}$, again it is $\Th{cfs}$-satisfiable iff $\varphi^{\prime}$ is $\Teq$-satisfiable, but this time we have $\spec_{\Th{cfs}}(\varphi)$ equal to $\{\aleph_{0}\}$.
    Finally, if $\varphi$ contains $P_{k}$ for $k\geq 2$, then it is $\Th{cfs}$-satisfiable iff $\varphi^{\prime}$ is $\Teq$-satisfiable and $\minmod_{\Teq}(\varphi^{\prime})\leq F(k)$, what can be tested algorithmically;
    not only that, but $i \in \No$ is in $\spec_{\Th{cfs}}(\varphi)$ iff $\minmod_{\Teq}(\varphi^{\prime})\leq i\leq F(k)$, what is decidable.
    Indeed, given a $\Teq$-interpretation of cardinality $i$ that satisfies $\varphi^{\prime}$, we turn it into a $\Th{cfs}$-interpretation by making $P_{k}$ true, and all other $P_{j}$ for $k\neq j$ false.

    \item The proof above also shows that $\Th{cfs}$ has computable finite spectra.

    \item $\Th{cfs}$ is not infinitely decidable as the spectrum of $P_{k}$, for $k\geq 2$, contains $\aleph_{0}$ if and only if $F(k)=\aleph_{0}$.

    \item $\Th{cfs}$ is also not \coquagen{} as $\spec_{\Th{cfs}}(P_{1})=\{\aleph_{0}\}$. \qedhere
\end{enumerate}
\end{proof}

\subsection{Example of a Theory With Computable Spectra}

\begin{lemma}
    $\Th{cs}$ is decidable and has computable spectra, but is neither stably infinite nor co-$\mathfrak{F}$-QG for any free filter $\mathfrak{F}$.
\end{lemma}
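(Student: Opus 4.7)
The plan is to exploit the very restrictive model theory of $\Th{cs}$. First I would observe that in any $\Th{cs}$-interpretation $\mathcal{A}$, we have $|\dom{\mathcal{A}}| \in \{1, \aleph_0\}$: if $P^{\mathcal{A}}$ holds, the axiom $P \rightarrow \psi_{=1}$ forces $|\dom{\mathcal{A}}|=1$; otherwise, the axioms $\neg P \rightarrow \psi_{\geq m}$ for all $m$ force $|\dom{\mathcal{A}}|=\aleph_0$. In particular, $\spec_{\Th{cs}}(\varphi) \subseteq \{1, \aleph_0\}$ for every quantifier-free $\varphi$.

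For decidability and computable spectra, I would reduce to $\Teq$-satisfiability in the standard way. Given a conjunction of literals $\varphi$, write $\varphi = \varphi' \wedge \psi$ where $\psi$ consists of $P$-literals and $\varphi'$ is a conjunction of equalities and disequalities. If $\psi$ contains both $P$ and $\neg P$, then $\varphi$ is $\Th{cs}$-unsatisfiable. Otherwise, $1 \in \spec_{\Th{cs}}(\varphi)$ iff $\psi$ does not contain $\neg P$ and $\varphi'$ is satisfiable in a one-element model (equivalently, $\varphi'$ contains no disequality $\neg(x=y)$ between distinct variables); and $\aleph_0 \in \spec_{\Th{cs}}(\varphi)$ iff $\psi$ does not contain $P$ and $\varphi'$ is $\Teq$-satisfiable (since $\Teq$ is smooth, this suffices for having an infinite model). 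Both checks are decidable, so we can compute the full spectrum and also decide satisfiability (since $\varphi$ is $\Th{cs}$-satisfiable iff at least one of $1, \aleph_0$ is in its spectrum).

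For the negative properties, the witnesses are immediate. The formula $P$ is $\Th{cs}$-satisfiable with $\spec_{\Th{cs}}(P) = \{1\}$, so $\aleph_0 \notin \spec_{\Th{cs}}(P)$, refuting stable infiniteness. For the co-$\mathfrak{F}$-QG claim, consider $\neg P$: we have $\spec_{\Th{cs}}(\neg P) = \{\aleph_0\}$. This is not a finite set of finite cardinalities (it contains $\aleph_0$), so for $\Th{cs}$ to be co-$\mathfrak{F}$-QG we would need $\{\aleph_0\} = A \cup \{\aleph_0\}$ for some $A \subseteq \mathbb{N}$ with $\mathbb{N} \setminus A \notin \mathfrak{F}$. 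The only candidate is $A = \emptyset$, giving $\mathbb{N} \setminus A = \mathbb{N}$, which lies in every filter. Hence $\Th{cs}$ is not co-$\mathfrak{F}$-QG for any free filter $\mathfrak{F}$.

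There is no real obstacle here; the lemma is essentially a direct verification once the observation $\spec_{\Th{cs}}(\varphi) \subseteq \{1,\aleph_0\}$ is in hand. The only mildly delicate point is recognizing that the singleton $\{\aleph_0\}$ is the ``universal'' counterexample to co-$\mathfrak{F}$-QG: because $\mathbb{N}$ is the top element of every filter, no free filter can exclude the complement of $\emptyset$, and this single formula simultaneously defeats \emph{every} choice of $\mathfrak{F}$.
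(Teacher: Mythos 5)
Your proof is correct and takes essentially the same route as the paper: compute the spectrum of each cube by reducing to $\Teq$-satisfiability (noting every model has size $1$ or $\aleph_0$), then use $P$ and $\neg P$ as witnesses against stable infiniteness and co-$\mathfrak{F}$-QG respectively, with the key observation that $\mathbb{N}$ belongs to every filter.
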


\begin{proof}
\begin{enumerate}
    \item For the signature $\Sp$ the $P$-literals will be $P$ and $\neg P$.
    Let $\varphi$ be a conjunction of literals. If $\varphi$ contains neither $P$ not $\neg P$, then it is $\Th{cs}$-satisfiable iff $\phi^{\prime}$ is $\Teq$-satisfiable, and its spectrum is $\{1,\aleph_{0}\}$ if $\minmod_{\Teq}(\varphi^{\prime})=1$ and $\{\aleph_{0}\}$ otherwise; indeed, take a $\Teq$-interpretation of the appropriate cardinality satisfying $\phi^{\prime}$ and set $P$ to true if the cardinality is 1 and false if the cardinality is $\aleph_0$. Similarly, if $\varphi$ contains $\neg P$, it is $\Th{cs}$-satisfiable iff $\phi^{\prime}$ is $\Teq$-satisfiable, and then its spectrum is just $\{\aleph_{0}\}$.
    If $\varphi$ contains $P$, then it is $\Th{cs}$-satisfiable iff $\minmod_{\Teq}(\varphi^{\prime})=1$, and in that case its spectrum is just $\{1\}$;
    indeed, given a $\Teq$-interpretation satisfying $\varphi^{\prime}$ with cardinality $1$, we turn it into a $\Th{cs}$-interpretation by making $P$ true.

    \item As any quantifier-free formula is a conjunction of literals, its spectrum is the union of the spectra of its cubes, which as we saw above are computable.
    We thus get $\Th{cs}$ has computable spectra.

    \item The spectrum of $P$ is $\{1\}$, so $\Th{cs}$ is not stably infinite.

    \item The spectrum of $\neg P$ is $\{\aleph_{0}\}$, so $\Th{cs}$ is not \coquagen. \qedhere
\end{enumerate}
\end{proof}

\subsection{Example of a Stably Infinite Theory}

\begin{lemma}
    $\Th{si}$ is decidable and stably infinite, but it is not $n$-decidable for any $n \in \No$.
\end{lemma}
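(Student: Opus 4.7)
The plan is to dispose of decidability and stable infiniteness in one swoop by showing that all the test axioms $P_n \to \psi_{\geq m}$ can be satisfied ``for free'' if we always use an infinite model, and then to exploit the fact that the cardinality behavior of $P_k$ encodes membership in $\unc$ to rule out $n$-decidability.

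First I would reduce to the case of conjunctions of literals. Let $\varphi$ be such a conjunction, and let $\varphi'$ denote the subformula obtained by deleting every $P$-literal. If $\varphi$ contains $P_i \land P_j$ for some $i \neq j$, or $P_i \land \neg P_i$, then the axioms $P_i \to \neg P_j$ make $\varphi$ immediately $\Th{si}$-unsatisfiable. So assume $\varphi$ contains at most one positive $P$-literal. I would then claim: $\varphi$ is $\Th{si}$-satisfiable if and only if $\varphi'$ is $\Teq$-satisfiable, which is decidable. Indeed, given any $\Teq$-interpretation $\A$ satisfying $\varphi'$, smoothness of $\Teq$ lets us take $|\dom{\A}| = \aleph_0$; then expand $\A$ to a $\Spn$-interpretation $\B$ by setting $P_k^{\B}$ true exactly when $P_k$ occurs positively in $\varphi$. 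All axioms $P_k \to \psi_{\geq m}$ are then satisfied (either because $P_k^{\B}$ is false, or because $\B$ is infinite), and the axioms $P_i \to \neg P_j$ hold because at most one $P_k$ is true.

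The same construction immediately gives stable infiniteness: every $\Th{si}$-satisfiable cube has an infinite model, and by reasoning on disjunctive normal form this extends to all quantifier-free $\Th{si}$-satisfiable formulas (alternatively, cite \Cref{LowenheimSkolem}). So decidability and stable infiniteness both fall out of this single construction.

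Finally, fix any $n \in \No$; the hard part (though it is not truly hard here) is pinpointing a family of formulas whose spectra encode $\unc$. I would use the cubes $P_k$ for $k \in \No$. If $k \in \unc$, then $\Th{si}$ contains $P_k \to \psi_{\geq m}$ for every $m \in \No$, forcing $\spec_{\Th{si}}(P_k) = \{\aleph_0\}$ by compactness; in particular $n \notin \spec_{\Th{si}}(P_k)$. If $k \notin \unc$, then only the disjointness axioms $P_k \to \neg P_j$ constrain $P_k$, so the construction of the previous paragraph (applied with an arbitrary finite or infinite base interpretation of the empty signature) yields $\Th{si}$-models of $P_k$ of every cardinality; in particular $n \in \spec_{\Th{si}}(P_k)$. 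Thus an algorithm deciding ``$n \in \spec_{\Th{si}}(P_k)$?'' on input $k$ would decide the complement of $\unc$, contradicting the undecidability of $\unc$. Hence $\Th{si}$ is not $n$-decidable for any $n \in \No$.
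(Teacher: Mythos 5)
Your proof is correct and follows essentially the same route as the paper's: reduce to cubes with at most one positive $P$-literal, satisfy $\varphi'$ in an infinite $\Teq$-model and extend by setting $P_k$ true exactly for the positive $P$-literals of $\varphi$ (giving both decidability and stable infiniteness at once), and then observe that $n \in \spec_{\Th{si}}(P_k)$ if and only if $k \notin \unc$, so $n$-decidability would decide $\unc$. The only cosmetic difference is that you spell out the compactness/DNF details that the paper leaves implicit.
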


\begin{proof}
    \begin{enumerate}
        \item To see that $\Th{si}$ is decidable, again we can look at conjunctions of literals $\varphi$ with at most one positive $P$-literal.
        We then state that $\varphi$ is $\Th{si}$-satisfiable iff $\varphi^{\prime}$ is $\Teq$-satisfiable.
        Indeed, given an infinite interpretation of $\varphi^{\prime}$, we transform it into a $\Th{si}$-interpretation by making $P_{m}$ true if and only if it occurs as a positive literal in $\varphi$.
        \item Our construction in the proof of the decidability of $\Th{si}$ shows this theory is stably infinite.
        \item $\Th{si}$ is not $n$-decidable, for any $n\in\No$, as $n\in\spec_{\Th{si}}(P_{m})$ iff $m\notin \unc$.\qedhere
    \end{enumerate}
\end{proof}

\subsection{Examples for Quasi-Gentleness and Co-Quasi-Gentleness}

\begin{lemma}
    $\Tle^{\No}$ is decidable and \quagen{} but not infinitely decidable.
\end{lemma}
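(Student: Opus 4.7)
The plan is to separate the three claims and appeal to the existing lemmas on $\Testg$ (proved in \Cref{sec:prooftestg}) for the positive properties, then construct an explicit obstruction for infinite decidability using the formula $P_n$.

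First, observe that $\Tle^{\No}$ is exactly $\Testg$ with $S=\No$. Since $\No$ is an infinite decidable subset of $\No$, the earlier lemma for $\Testg$ yields that $\Tle^{\No}$ is decidable. For \quagen, notice that any free filter $\mathfrak{F}$ contains every cofinite subset of $\mathbb{N}$, and $\mathbb{N}\setminus\No=\{0\}$ is finite, so $\No\in\mathfrak{F}$. Hence the earlier lemma (the ``if $S\in\mathfrak{F}$ then $\Testg$ is \quagen'' direction) applies and gives $\Tle^{\No}\in\class{\quagen}$.

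For the negative claim, I would argue by direct reduction from the undecidable set $\{n : F(n)=\aleph_0\}$. Consider the formula $\varphi_n \coloneqq P_n$. I claim $\aleph_0\in\spec_{\Tle^{\No}}(\varphi_n)$ if and only if $F(n)=\aleph_0$. For the forward direction, if $F(n)\in\mathbb{N}$ then the axiom $P_n\rightarrow\psi_{\leq F(n)}$ forces every model of $\varphi_n$ to have cardinality at most $F(n)$, hence to be finite; contrapositively, an infinite model forces $F(n)=\aleph_0$. For the converse, if $F(n)=\aleph_0$, then the axiom $P_n\rightarrow\psi_{\leq F(n)}$ is simply absent from the axiomatization, and we can build an infinite $\Tle^{\No}$-interpretation $\A$ by taking $\dom{\A}=\mathbb{N}$, setting $P_n^{\A}$ true and $P_m^{\A}$ false for $m\neq n$: all remaining axioms are vacuously satisfied (the disjointness axioms because only one positive $P$-literal holds, and the other $P_m\rightarrow\psi_{\leq F(m)}$ because $P_m^{\A}$ is false). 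Thus, if there were an algorithm deciding whether $\aleph_0\in\spec_{\Tle^{\No}}(\varphi)$, running it on $\varphi_n$ would decide $\{n : F(n)=\aleph_0\}$, contradicting our assumption on $F$. So $\Tle^{\No}\notin\class{ID}$.

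No step is a real obstacle here; the only thing to be careful about is that, when specializing $\Testg$ to $S=\No$, the axioms $\{\neg\psi_{=n}:n\notin \No\}$ impose no nontrivial constraint (since only $n=0$ lies outside $\No$, and no interpretation has cardinality $0$), so the model construction witnessing $\aleph_0\in\spec_{\Tle^{\No}}(P_n)$ when $F(n)=\aleph_0$ goes through unhindered.
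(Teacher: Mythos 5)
Your proof is correct and follows essentially the same route as the paper's: invoke the already-proved facts about $\Testg$ (from \Cref{sec:prooftestg}) for decidability and quasi-gentleness, and then reduce the undecidable problem $\{n : F(n)=\aleph_0\}$ to infinite-decidability via the spectra of the formulas $P_n$. You simply spell out in more detail why $\No\in\mathfrak{F}$ and the explicit model construction for the backward direction of the reduction, both of which the paper leaves implicit.
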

\begin{proof}
\begin{enumerate}
    \item As proven in \Cref{sec:prooftestg}, the theory $\Tle^{S}$, for a decidable $S \in \mathfrak{F}$, is decidable and \quagen{}. In particular, $\Tle^{\No}$ is decidable and \quagen{}.
    \item We have $\aleph_0 \in \spec_{\Tle^{\No}}(P_n)$ if and only if $F(n) = \aleph_0$, so $\Tle^{\No}$ is not infinitely decidable. \qedhere
\end{enumerate}
\end{proof}

\begin{lemma}
    Suppose $A \subset \mathbb{N}$ is a decidable set such that $A \notin \mathfrak{F}$ and $\mathbb{N} \setminus A \notin \mathfrak{F}$. Then, $\Tle^A$ is decidable and \coquagen{} but neither \quagen{} nor infinitely decidable.
\end{lemma}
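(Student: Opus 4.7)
The plan is to verify each of the four claims in turn, most of which reduce immediately to facts already proved about $\Testg$. As a preliminary observation I would note that $A$ is necessarily infinite: since $\mathfrak{F}$ is a free filter, it contains every cofinite set, so $\mathbb{N}\setminus A\notin\mathfrak{F}$ forces $\mathbb{N}\setminus A$ to be coinfinite, i.e.\ $A$ is infinite. This is what is needed to invoke the earlier decidability proof for $\Testg$.

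For the first two claims I would simply appeal to the previously established results on $\Testg$: with $S \coloneqq A$ infinite and decidable, the earlier lemma gives that $\Tle^A$ is decidable, and the subsequent lemma gives that $\Tle^A$ is \coquagen{} precisely because $\mathbb{N}\setminus A\notin\mathfrak{F}$.

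For the failure of \quagen{}ness I would use $\varphi \coloneqq \top$. The only axioms of $\Tle^A$ that exclude cardinalities are $\{\lnot\psi_{=n} : n\notin A\}$, while every positive predicate $P_n$ can be set to false, so $\spec_{\Tle^A}(\top) = A\cup\{\aleph_0\}$. This set is infinite (hence not a finite set of finite cardinalities), so for $\Tle^A$ to be \quagen{} we would need $A\in\mathfrak{F}$, contradicting the hypothesis.

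For the failure of infinite decidability I would analyze the spectrum of $P_n$. If $F(n)\in\mathbb{N}$, the axiom $P_n\rightarrow\psi_{\leq F(n)}$ forces $\spec_{\Tle^A}(P_n)\subseteq[1,F(n)]\cap A$, which does not contain $\aleph_0$. If $F(n)=\aleph_0$, no such axiom is imposed and an argument analogous to the one for $\Testg$ shows $\spec_{\Tle^A}(P_n)=A\cup\{\aleph_0\}$, which does contain $\aleph_0$. Hence $\aleph_0\in\spec_{\Tle^A}(P_n)$ if and only if $F(n)=\aleph_0$, which is undecidable by the defining assumption on $F$; so $\Tle^A$ is not infinitely decidable. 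There is no real obstacle here: once the witnesses $\top$ and $P_n$ are identified, each claim follows by direct inspection of the axiomatization of $\Tle^A$ and the earlier $\Testg$ lemmas.
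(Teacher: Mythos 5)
Your proof is correct and follows essentially the same route as the paper's: invoke the established $\Testg$ lemmas for decidability and co-$\mathfrak{F}$-QG, use $\spec_{\Tle^A}(\top)=A\cup\{\aleph_0\}$ with $A\notin\mathfrak{F}$ to refute $\mathfrak{F}$-QG, and use $\aleph_0\in\spec_{\Tle^A}(P_n)\iff F(n)=\aleph_0$ to refute infinite decidability. Your preliminary observation that $A$ must be infinite (since $\mathfrak{F}$ is free and $\mathbb{N}\setminus A\notin\mathfrak{F}$) is a small but genuine improvement: the paper leaves this implicit even though the $\Testg$ lemmas it cites require $S$ to be infinite.
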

\begin{proof}
\begin{enumerate}
    \item As proven in \Cref{sec:prooftestg}, the theory $\Tle^{S}$, for a decidable $S$ such that $\mathbb{N} \setminus S \notin \mathfrak{F}$, is decidable and \coquagen{}. In particular, $\Tle^A$ is decidable and \coquagen{}.
    \item We have $\spec_{\Tle^A}(\top) = A \cup \{\aleph_0\}$, and $A \notin \mathfrak{F}$, so $\Tle^A$ is not \quagen{}.
    \item We have $\aleph_0 \in \spec_{\Tle^A}(P_n)$ if and only if $F(n) = \aleph_0$, so $\Tle^A$ is not infinitely decidable. \qedhere
\end{enumerate}
\end{proof}

\subsection{Example of a Gentle Theory}

\begin{lemma}
    If $n \ge 2$, then $\Tleqn$ is decidable and gentle, but neither stably infinite nor $m$-shiny for any $m \in \No$.
\end{lemma}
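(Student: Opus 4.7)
The plan is to establish the four properties separately by exploiting one unifying observation: since every model of $\Tleqn$ is a $\Teq$-interpretation of cardinality at most $n$, for every quantifier-free formula $\varphi$ one has $\spec_{\Tleqn}(\varphi) = \spec_{\Teq}(\varphi) \cap [1,n]$. Combining this with the smoothness of $\Teq$ (which gives $\spec_{\Teq}(\varphi) = [\minmod_{\Teq}(\varphi), \aleph_0]$ whenever $\varphi$ is $\Teq$-satisfiable) and with the fact that $\Teq$ is shiny (so that $\minmod_{\Teq}(\varphi)$ is computable) will yield $\spec_{\Tleqn}(\varphi) = [\minmod_{\Teq}(\varphi), n]$, which is essentially the only ingredient the rest of the proof needs.

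From this identity, decidability is immediate: $\varphi$ is $\Tleqn$-satisfiable exactly when $\minmod_{\Teq}(\varphi) \le n$, which is a decidable test. For gentleness, I would note that the spectrum $[\minmod_{\Teq}(\varphi), n]$ is always a finite subset of $\No$ whose explicit representation can be computed uniformly in $\varphi$; thus every spectrum of a $\Tleqn$-satisfiable formula is a finite set of finite cardinalities, placing $\Tleqn$ firmly in $\class{gentle}$.

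For the two negative properties, I plan to exhibit the single witness $\varphi = \top$. The formula $\top$ is $\Tleqn$-satisfiable with $\spec_{\Tleqn}(\top) = [1,n]$, and $\aleph_0 \notin [1,n]$ shows directly that $\Tleqn$ is not stably infinite. To rule out $m$-shininess for every $m \in \No$, I would use the assumption $n \ge 2$ (so $|[1,n]| \ge 2$, ruling out the singleton form $\{m\}$) and observe that the two remaining forms $\{m\} \cup \{k \in \N : k \ge j\}$ and $\{k \in \N : k \ge j\}$ in the definition of $m$-shininess both contain $\aleph_0$, which $[1,n]$ does not. No significant obstacle is anticipated: everything reduces to the bookkeeping identity $\spec_{\Tleqn}(\varphi) = \spec_{\Teq}(\varphi) \cap [1,n]$ together with the shininess of $\Teq$.
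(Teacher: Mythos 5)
Your proposal is correct and follows essentially the same route as the paper: both rest on the identity $\spec_{\Tleqn}(\varphi) = [\minmod_{\Teq}(\varphi), n]$ (derived via smoothness of $\Teq$), conclude gentleness from the fact that this set is always a finite computable subset of $\No$, and dispatch the two negative claims by inspecting $\spec_{\Tleqn}(\top) = [1,n]$. The one cosmetic difference is for plain decidability: the paper simply cites the fact that theories over an empty signature are decidable, whereas you derive it from shininess of $\Teq$ via the test $\minmod_{\Teq}(\varphi) \le n$ (which is fine, modulo first checking $\Teq$-satisfiability of $\varphi$ so that $\minmod_{\Teq}$ is defined).
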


\begin{proof}
\begin{enumerate}
    \item $\Tleqn$ is a $\Sigma_{1}$-theory, an empty signature with finitely many sorts, so it is decidable by \cite{FroCoS2025}.
    \item We state that, if $\varphi$ is a $\Tleqn$-satisfiable quantifier-free formula, then $\spec_{\Tleqn}(\varphi)$ equals the interval $[\minmod_{\Teq}(\varphi), n]$.
    Of course any $i\in \spec_{\Tleqn}(\varphi)$ must satisfy $i\leq n$;
    furthermore, for any $\minmod_{\Teq}(\varphi)\leq i\leq n$, using that $\Teq$ is smooth we can find a $\Teq$-interpretation $\A$ that satisfies $\varphi$ with $|\dom{\A}|=i$, which is therefore a $\Tleqn$-interpretation as well. Since $[\minmod_{\Teq}(\varphi), n]$ is a a finite set of finite cardinalities for which we can compute an explicit representation, $\Tleqn$ is gentle.
    \item $\Tleqn$ is not stably infinite, since it has no infinite models.
    \item We have $\spec_{\Tleqn}(\top) = [1, n]$, which is a finite set but not a singleton, which is impossible for an $m$-shiny theory. \qedhere
\end{enumerate}
\end{proof}

\subsection{Example of a Smooth Theory With Computable Spectra}

\begin{lemma}
    $\Tinfty$ is decidable, smooth and has computable spectra, but it is not co-$\mathfrak{F}$-QG for any free filter $\mathfrak{F}$.
\end{lemma}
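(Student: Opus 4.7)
The plan is to handle the four claims separately, with the first three being quick appeals to prior results and the last being the only substantive observation.

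First, decidability, smoothness, and computable spectra of $\Tinfty$ were all already established in the lemma about $\Tinfty$ proven earlier in the appendix (invoking \cite{FroCoS2025} for decidability and \cite{CADE} for smoothness). I would simply cite that lemma. In more detail: by the L\"{o}wenheim--Skolem theorem applied to the axiomatization $\{\psi_{\geq n} : n \in \No\}$, for every quantifier-free $\Tinfty$-satisfiable formula $\varphi$ we have $\spec_{\Tinfty}(\varphi) = \{\aleph_0\}$, so the spectrum is trivially computable.

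The only new content is showing that $\Tinfty$ is not co-$\mathfrak{F}$-QG for any free filter $\mathfrak{F}$. Here I would use $\varphi = \top$ (or equivalently, any $\Tinfty$-satisfiable quantifier-free formula) as the witness. Then $\spec_{\Tinfty}(\top) = \{\aleph_0\}$. For $\Tinfty$ to be co-$\mathfrak{F}$-QG, this spectrum must either be a finite set of \emph{finite} cardinalities (it is not, since its sole element is $\aleph_0$) or be of the form $A \cup \{\aleph_0\}$ for some $A \subseteq \mathbb{N}$ with $\mathbb{N} \setminus A \notin \mathfrak{F}$. The only way to write $\{\aleph_0\} = A \cup \{\aleph_0\}$ with $A \subseteq \mathbb{N}$ is $A = \emptyset$, giving $\mathbb{N} \setminus A = \mathbb{N}$; but $\mathbb{N} \in \mathfrak{F}$ by the very definition of a filter, contradicting the requirement.

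The main (and essentially only) obstacle is making sure the co-$\mathfrak{F}$-QG definition is applied correctly: one must observe that $\{\aleph_0\}$ cannot be written as a ``finite set of finite cardinalities'' and that the only decomposition $A \cup \{\aleph_0\}$ has $A = \emptyset$, forcing $\mathbb{N} \setminus A = \mathbb{N} \in \mathfrak{F}$. The rest of the proof is bookkeeping.
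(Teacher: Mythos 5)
Your proposal is correct and takes essentially the same approach as the paper: the first three claims are dispatched by citing the same prior results, and the co-$\mathfrak{F}$-QG failure is established by observing that $\spec_{\Tinfty}(\varphi) = \{\aleph_0\}$ for any $\Tinfty$-satisfiable quantifier-free $\varphi$. The paper states the final step more tersely; you usefully spell out that the only decomposition $\{\aleph_0\} = A \cup \{\aleph_0\}$ with $A \subseteq \mathbb{N}$ forces $A = \emptyset$, hence $\mathbb{N} \setminus A = \mathbb{N} \in \mathfrak{F}$, violating the requirement.
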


\begin{proof}
    \begin{enumerate}
        \item \cite{FroCoS2025} proves $\Tinfty$ is decidable, since it is over an empty signature.
        \item \cite{CADE} proves $\Tinfty$ is smooth.
        \item Clearly the spectrum of any $\Tinfty$-satisfiable quantifier-free formula is exactly $\{\aleph_{0}\}$, so it has computable spectra.
        \item Because the spectrum of a quantifier-free $\Tinfty$-satisfiable formula is $\{\aleph_{0}\}$, this theory cannot be \coquagen. \qedhere
    \end{enumerate}
\end{proof}

\subsection{\tp{Example of an $n$-Shiny Theory}{Example of an n-Shiny Theory}}

\begin{lemma}
    $\Th{n-s}^{n}$ is decidable and $n$-shiny, but it is not stably infinite.
\end{lemma}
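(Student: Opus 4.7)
The plan is to establish the three properties separately: decidability by reduction to the pure theory of equality, $n$-shininess by analyzing spectra of cubes and then of disjunctions, and failure of stable infiniteness by exhibiting a witness formula.

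First, I would put any quantifier-free formula $\varphi$ into disjunctive normal form and analyze cubes. Given a cube $\psi$, split it as $\psi = \psi' \land \ell$, where $\psi'$ contains no $P$-literals and $\ell$ is either $P$, $\lnot P$, or empty (ignoring trivially unsatisfiable cubes containing both $P$ and $\lnot P$). Using that $\Teq$ is smooth with $\spec_{\Teq}(\psi') = \{m \in \N : m \geq \minmod_{\Teq}(\psi')\}$ (when $\psi'$ is satisfiable), I would show:
\begin{itemize}
    \item if $\ell$ is $P$, then $\spec_{\Th{n-s}^{n}}(\psi) = \{n\}$ when $\minmod_{\Teq}(\psi') \leq n$ and is empty otherwise;
    \item if $\ell$ is $\lnot P$, then $\spec_{\Th{n-s}^{n}}(\psi) = \{m \in \N : m \geq \max(n,\minmod_{\Teq}(\psi'))\}$;
    \item if $\ell$ is absent, then $\spec_{\Th{n-s}^{n}}(\psi) = \{n\} \cup \{m \in \N : m \geq \max(n,\minmod_{\Teq}(\psi'))\}$ when $\minmod_{\Teq}(\psi') \leq n$ and $\{m \in \N : m \geq \minmod_{\Teq}(\psi')\}$ otherwise.
\end{itemize}
In each case, the spectrum is computable from $\minmod_{\Teq}(\psi')$, which is itself computable since $\Teq$ is shiny. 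This yields decidability (a formula is satisfiable iff some disjunct has nonempty spectrum).

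Second, for $n$-shininess, observe that every cube spectrum computed above has one of the three required forms: $\{n\}$, $\{n\} \cup \{m \geq k\}$, or $\{m \geq k\}$. The spectrum of $\varphi$ is the union of cube spectra, and the family of sets of the three admitted forms is closed under finite unions, since $\{m \geq k_1\} \cup \{m \geq k_2\} = \{m \geq \min(k_1,k_2)\}$ and adjoining $\{n\}$ either lands inside or produces a set of the second form. By computing each cube spectrum and merging, we obtain a computable function that returns the correct form along with the relevant $k$, establishing $n$-shininess.

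Finally, for the failure of stable infiniteness, note that $\spec_{\Th{n-s}^{n}}(P) = \{n\}$ by the first case above (with $\psi' = \top$), so $P$ is $\Th{n-s}^{n}$-satisfiable but has no infinite model, contradicting stable infiniteness. The only mildly subtle step is the book-keeping for merging cube spectra in the $n$-shiny argument, but this is purely combinatorial and poses no real obstacle.
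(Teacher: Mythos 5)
Your proof is correct and follows essentially the same approach as the paper: reduce to cubes, compute cube spectra using the smoothness and computable minimal model function of $\Teq$, then recombine. The paper merges your first and third cases (collapsing $\{n\}\cup[\max(n,m),\aleph_0]$ to $[\max(n,m),\aleph_0]$ outright since $\{n\}$ is absorbed when $m\leq n$), and it leaves the disjunctive-normal-form reduction and the closure of the admitted spectrum shapes under finite unions implicit; you spell both out. Nothing is wrong or missing.
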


\begin{proof}
    \begin{enumerate}
        \item Take a conjunction of literals $\phi$.
        If it doesn't contain $P$, then $\phi$ is $\Th{n-s}^{n}$-satisfiable iff $\phi^{\prime}$ is $\Teq$-satisfiable (where $\phi^{\prime}$ is the formula without its $P$-literal if there is one), and its spectrum equals 
        \[[\max(\minmod_{\Teq}(\phi^{\prime}),n),\aleph_{0}].\]
        
        If $\phi$ contains $P$ (and does not contain $\neg P$), then it is $\Th{n-s}^{n}$-satisfiable iff $\phi^{\prime}$ is $\Teq$-satisfiable and $\minmod_{\Teq}(\phi^{\prime})\leq n$, and its spectrum equals $\{n\}$.
        From this we conclude that $\Th{n-s}^{n}$ is $n$-shiny.

        \item $\Th{n-s}^{n}$ is, however, not stably infinite, as $P$ has no infinite $\Th{n-s}^{n}$-interpretations.
    \end{enumerate}
\end{proof}

\subsection{Example of a Shiny Theory}

\begin{lemma}
    $\Tgeqn$ is decidable and shiny.
\end{lemma}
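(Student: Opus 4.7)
The plan is to verify the three defining properties of shininess directly: smoothness, the finite model property, and a computable minimal model function, while noting that decidability follows from the fact that $\Tgeqn$ is defined over the empty signature $\Sigma_1$, hence from the general decidability result for such theories cited as \cite{FroCoS2025} (and already invoked for $\Tleqn$, $\Teqn$, and $\Tinfty$).

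The key observation driving every property is that the spectrum of a quantifier-free $\Tgeqn$-satisfiable formula $\varphi$ is exactly
\[
    \spec_{\Tgeqn}(\varphi) = [\max(\minmod_{\Teq}(\varphi), n), \aleph_0].
\]
This is because a $\Tgeqn$-interpretation is nothing more than a $\Teq$-interpretation of cardinality at least $n$, and $\Teq$ is smooth (as proven in \cite{shiny}), so any cardinality at least $\minmod_{\Teq}(\varphi)$ witnessing $\varphi$ in $\Teq$ also witnesses it in $\Tgeqn$ provided the cardinality is at least $n$. Conversely, any $\Tgeqn$-witness of $\varphi$ is a $\Teq$-witness of size at least $n$.

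From this characterization, each property is immediate. First, smoothness is direct: if $k \in \spec_{\Tgeqn}(\varphi)$ and $m \geq k$, then $m \geq \max(\minmod_{\Teq}(\varphi), n)$, so $m \in \spec_{\Tgeqn}(\varphi)$. Second, the finite model property holds because $\max(\minmod_{\Teq}(\varphi), n) \in \No$ whenever $\varphi$ is $\Tgeqn$-satisfiable, providing a finite witness. Third, the minimal model function is computable as
\[
    \minmod_{\Tgeqn}(\varphi) = \max(\minmod_{\Teq}(\varphi), n),
\]
which is computable since $\minmod_{\Teq}$ is computable (again by \cite{shiny}, which establishes that $\Teq$ is shiny) and $n$ is fixed.

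There is essentially no obstacle here: the only step requiring any care is the claim about the shape of $\spec_{\Tgeqn}(\varphi)$, which in turn relies on smoothness of $\Teq$ to lift a minimal $\Teq$-model of $\varphi$ up to size $n$ when $\minmod_{\Teq}(\varphi) < n$. Everything else is a direct unfolding of definitions.
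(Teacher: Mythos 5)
Your proof is correct but takes a more explicit route than the paper's. The paper simply cites \cite{FroCoS2025} for decidability (empty signature) and \cite{LPAR} for shininess, whereas you reduce the problem to known facts about $\Teq$ by exhibiting the spectrum identity $\spec_{\Tgeqn}(\varphi) = [\max(\minmod_{\Teq}(\varphi), n), \aleph_0]$ and reading off smoothness, the finite model property, and computability of $\minmod_{\Tgeqn}$ directly. Your argument is sound: the identity holds because a $\Tgeqn$-interpretation is exactly a $\Teq$-interpretation of size at least $n$, and smoothness of $\Teq$ lets you realize every cardinality at or above $\minmod_{\Teq}(\varphi)$. The one thing to keep in mind is that your route is not entirely self-contained either -- it leans on the shininess of $\Teq$ from \cite{shiny} (for both the computability of $\minmod_{\Teq}$ and its finiteness) -- but this is a citation the paper already makes elsewhere, so it is available. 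The trade-off is clear: your version is longer but gives a transparent, checkable derivation with an explicit formula for the minimal model function, which could be useful downstream; the paper's version is terser, outsourcing the verification to a prior reference at the cost of opacity.
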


\begin{proof}
    \begin{enumerate}
        \item \cite{FroCoS2025} proves $\Tgeqn$ is decidable, since it is over an empty signature.
        \item \cite{LPAR} proves it is also shiny.
    \end{enumerate}
\end{proof}

\end{document}